\theoremstyle{theorem}
\newtheorem{thm}{Theorem}
\newtheorem*{thms}{Theorem}
\newtheorem*{fact}{Fact}
\newtheorem{corl}{Corollary}[subsection]
\newtheorem{pro}{Proposition}[subsection]
\newtheorem{lem}{Lemma}[subsection]
\newtheorem*{cla}{Claim}
\newtheorem*{ass}{Mild assumption}
\newtheorem{prob}{Problem}
\newtheorem*{prin}{Physical principle}
\newtheorem*{stra}{Strategy}
\theoremstyle{definition}
\newtheorem{defn}{Definition}
\newtheorem{rem}{Remark}[subsection]
\newtheorem{exe}{Example}
\def\Dsl{\,\raise.15ex\hbox{/}\mkern-13.5mu D}
\def\BK{\mathbb{K}}
\def\BQ{\mathbb{Q}}
\def\fO{\mathfrak{O}}
\def\fo{\mathfrak{o}}
\def\bn{\breve{n}}
\date{January, 2018}
\title{Geometric classification of 4d $\cn=2$ SCFTs
}
\authors{Matteo Caorsi\footnote{e-mail: {\tt matteocao@gmail.com}} and Sergio Cecotti\footnote{e-mail: {\tt cecotti@sissa.it}}\vskip 9pt

\centerline{SISSA, via Bonomea 265, I-34100 Trieste, ITALY}}
\abstract{The classification of 4d $\cn=2$ SCFTs boils down to the classification of conical special geometries with closed Reeb orbits (CSG).  Under mild assumptions, one shows that the underlying complex space of a CSG is (birational to) an affine cone over a simply-connected $\BQ$-factorial \emph{log}-Fano variety with Hodge numbers $h^{p,q}=\delta_{p,q}$.
With some plausible restrictions, this means that the Coulomb branch chiral ring $\mathscr{R}$ is a graded polynomial ring
generated by global holomorphic functions $u_i$ of dimension $\Delta_i$. The coarse-grained classification of the CSG consists in listing the (finitely many) dimension $k$-tuples $\{\Delta_1,\Delta_2,\cdots,\Delta_k\}$ which are realized as Coulomb branch dimensions of some rank-$k$ CSG: this is the problem we address in this paper.
Our sheaf-theoretical analysis leads to an Universal Dimension Formula for the possible $\{\Delta_1,\cdots,\Delta_k\}$'s. For Lagrangian SCFTs the Universal Formula reduces to the fundamental theorem of Springer Theory. 

The number $\boldsymbol{N}(k)$ of dimensions allowed in rank $k$ is given by a certain sum of the Erd\"os-Bateman Number-Theoretic function (sequence A070243 in OEIS) so that for large $k$
$$
\boldsymbol{N}(k)=\frac{2\,\zeta(2)\,\zeta(3)}{\zeta(6)}\,k^2+o(k^2).
$$
In the special case $k=2$ our dimension formula reproduces a recent result by Argyres \emph{et al.}

Class Field Theory implies a subtlety: certain dimension $k$-tuples $\{\Delta_1,\cdots,\Delta_k\}$ are consistent only if supplemented by additional selection rules on the electro-magnetic charges, that is, for a SCFT with these Coulomb dimensions not all charges/fluxes consistent with Dirac quantization are permitted.

Since the arguments tend to be abstract, we illustrate the various aspects with several concrete examples and perform a number of explicit checks. We include detailed tables of dimensions for the first few $k$'s.}
\begin{document}

\maketitle

\newpage

\tableofcontents

\newpage

\section{Introduction and Overview}

Following the seminal papers by Seiberg and Witten \cite{SW0,SW1}, in the last years a rich landscape of four-dimensional $\cn=2$ superconformal field theories (SCFT) had emerged, mostly without a weakly-coupled Lagrangian formulation \cite{Eguchi:1996ds,Minahan:1996cj,KKV,GVW,Gaiotto:2009we,Gaiotto:2009hg,Cecotti:2011gu,Gaiotto:2012rg,Cecotti:2012jx,Cecotti:2013lda,DelZotto:2015rca,Wang:2015mra,Xie:2015rpa,Chen:2016bzh,Wang:2016yha,Chen:2017wkw}. It is natural to ask for a map of this vast territory, that is, for a classification of unitary 4d $\cn=2$ SCFTs. The work in this direction follows roughly two approaches: the first one aims to partial classifications of $\cn=2$ SCFTs having some specific construction \cite{KKV,GVW,Gaiotto:2009we,DelZotto:2015rca,Wang:2016yha} or particular property \cite{Eguchi:1996ds,Cecotti:2011rv}. The second approach, advocated in particular by the authors of refs.\!\cite{ACSW,ACS2,Argyres:2015ffa,Argyres:2015gha,Argyres:2016xua,Argyres:2016xmc}, relates  the classification of $\cn=2$ SCFTs to  the geometric problem of classifying the \emph{conic special geometries} (CSG) which describe their IR physics along the Coulomb branch $M$ \emph{\'a la} Seiberg and Witten \cite{SW0,SW1}. The present paper belongs to this second line of thought: it is meant to be a contribution to the geometric classification of CSG with applications to $\cn=2$ SCFT.
\smallskip

Comparison with other classification problems in complex geometry suggests that, while describing all CSG up to isomorphism may be doable when $M$ has very small dimension,\footnote{\ The complex dimension $k$ of the Coulomb branch $M$ is also known as the \emph{rank} of the SCFT. We shall use the two terms interchangeably.}  it  becomes rapidly intractable as we increase the rank $k$.
A more plausible program would be a \emph{coarse-grained} classification of the CSG not up to isomorphism but rather up to some kind of ``birational'' equivalence, that is, neglecting the details of the geometry along ``exceptional'' loci of positive codimension. This is the point of view we adopt in our analysis.
\smallskip

The classification problem is in a much better shape than one may expect. Indeed, while \emph{a priori} the Coulomb branch $M$ is just a complex analytic space, it follows from the local properties of special K\"ahler geometry that a CSG must be a complex cone over a base $K$ which is a normal projective variety. $K$ is (birational to) an algebraic variety of a very special kind: a simply-connected \emph{log-Fano} \cite{fano} with Picard number one and trivial Hodge diamond (a special instance of a  \emph{Mori dream space} \cite{moridream}).
\smallskip

In practice, in the coarse-grained classification we limit our ambition to the description of the allowed rings $\mathscr{R}$ of holomorphic functions on $M$ (the Coulomb branch chiral rings). Several distinct (deformation-types of) SCFTs have the same $\mathscr{R}$ but differ in other respects as their flavor symmetry. The simplest example of a pair of distinct SCFT with the same $\mathscr{R}$ is given by $SU(N)$ SQCD with $2N$ fundamentals and $\cn=2^*$ with the same gauge group; see refs.\!\cite{Argyres:2015ffa,Argyres:2015gha,Argyres:2016xua,Argyres:2016xmc} for additional examples in rank 1.
  The chiral ring $\mathscr{R}$ of a SCFT is graded by the value of the $U(1)_R$ charge (equal to the dimension $\Delta$ for a chiral operator). The general expectation\footnote{\ See however \cite{Argyres:2017tmj} for a discussion of the phenomena which would appear if this is not the case. We shall briefly elaborate on this topic in the third \emph{caveat} of \S.\,\ref{kkkaz1204}.} is that the Coulomb branch chiral ring is a graded free  polynomial ring, 
\be\label{wwwharing}
\mathscr{R}=\C[u_1,u_2,\cdots,u_k].
\ee
This is equivalent\footnote{\ Since the dimensions $\Delta$ are positive rationals.} to saying that the log-Fano $K$ is a \emph{weighted projective space} (WPS). The last statement is only slightly stronger than the one in the previous paragraph: all WPS are simply-connected log-Fano with Picard number one and trivial Hodge diamond  \cite{dolga}. Conversely, a \emph{toric} log-Fano with these properties is necessarily a (fake\footnote{\ All \emph{fake} WPS are quotients of WPS by finite Abelian groups.}) weighted projective space \cite{toric}. 
Then it appears that the log-Fano varieties  which carry all the structures implied by special geometry form to a class of manifolds only slightly more general than the WPS.
This explains why ``most'' $\cn=2$ models have free chiral rings.

Assuming \eqref{wwwharing}, the information encoded in the ring $\mathscr{R}$ is just the $k$-tuple $\{\Delta_1,\Delta_2,\cdots,\Delta_k\}$ of the $U(1)_R$ charges of its free generators
$u_i$. Even if the ring is non-free
the spectrum of dimensions of $\mathscr{R}$ is a basic invariant of the SCFT.
The coarse-grained classification of CSG then aims to list the allowed Coulomb branch dimension $k$-tuples $\{\Delta_1,\cdots,\Delta_k\}$ for each rank $k\in\mathbb{N}$. An even less ambitious program is to list the finitely-many real numbers $\Delta$ which may be the dimension of a Coulomb branch generator in a $\cn=2$ SCFT of rank at most $k$. In a unitary theory, all $\Delta$'s are rational numbers $\geq1$.
If the chiral ring is non-free, but with a finite free covering, we easily reduce to the above case.

\subparagraph{Note added {\rm (\textit{non-free chiral rings})}.} After the submission of the present paper, the article \cite{Bourget:2018ond} appeared in the arXiv where  examples of $\cn=2$ theories with non-free chiral rings are constructed. Those examples are in line with our geometric discussion being related to the free ring case by a finite quotient (gauging). Most of the discussion of the present paper applies to these more general situation as well (many arguments are formulated modulo finite quotients). The only point where we assume that the ring $\mathscr{R}$ is free\footnote{\ A part for the explicit examples of course.}, in order to simplify the analysis, is in showing that the Coulomb branch contains ``many'' normal rays. Our argument in the present form requires just \emph{one} normal ray, so the request of ``many'' of them is rather an overkill. It is relatively straightforward to extend the details of our analysis to finite quotients.   
\medskip

In the rest of this \textbf{Introduction}
we present a non-technical survey of our main results for both the list of allowed dimensions
and dimension $k$-tuples. In particular, \S.\,\ref{introoooo2} contains a heuristic derivation of the Universal Dimension Formula.

\subsection{Coulomb branch dimensions $\Delta$} 

In section 4 we present a very simple recursive algorithm to produce the list of  (putative) dimensions $\Delta$ allowed in a rank-$k$ CSG for all $k\in\mathbb{N}$. 
The dimension lists for ranks up to 13
are presented in 
 the tables of section 6. After the completion of this paper, ref.\!\cite{Argyres:2018zay} appeared on the arXiv  where the list of dimensions for $k=2$ is also computed.
Our results are in perfect agreement with theirs.\smallskip

Let us describe some general property of the set of dimensions in given rank $k$. The number of allowed $\Delta$'s is not greater than a certain Number-Theoretical function $\boldsymbol{N}(k)$ of the rank $k$ 
\be\label{kkkazqw12n}
\#\left\{\Delta\in \mathbb{Q}_{\geq1} \ \left|\begin{array}{l}\text{$\Delta\equiv$ dimension of a Coulomb branch}\\ \text{free generator in a CSG of rank}\leq k\end{array}\right.\right\}\leq \boldsymbol{N}(k).
\ee
There is evidence that $\leq$ may be actually replaced by an equality sign.

$\boldsymbol{N}(k)$ is a rather peculiar function: it is stationary for ``most'' $k\in\mathbb{N}$,
$\boldsymbol{N}(k)=\boldsymbol{N}(k-1)$
and, while the ratio 
\be
\nu(2k)=\big(\boldsymbol{N}(k)-\boldsymbol{N}(k-1)\big)/2k
\ee
 is ``typically'' a small integer, it takes all  integral value $\geq2$ infinitely many times. The first few values of $\boldsymbol{N}(k)$ are listed in table \ref{firstfew}. $\boldsymbol{N}(k)$ may be written as a Stieltjes integral of the function\footnote{\ For the precise definition of $\boldsymbol{N}(x)_\text{int}$ for \emph{real} positive $x$, see \S.\,\ref{anex}.} $\boldsymbol{N}(k)_\text{int}$ which counts the number of \emph{integral} dimensions $\Delta$ at rank $k$
 \be\label{veryfancy}
 \boldsymbol{N}(k)=2\int\limits_0^{k+\epsilon} x\;d\boldsymbol{N}\big(x\big)_\text{int}.
 \ee
 
 \begin{table}
\begin{tabular}{c|cccccccccccc}\hline\hline
$k$ & 1 &2 &3 &4 &5 &6 & 7& 8&9&10&11&12\\
$\boldsymbol{N}(k)$ & 8 & 24& 48& 88& 108& 180&180& 276&348&448&492&732\\\hline
$k$ &13&14&15&16&17&18&19&20&21&22&23&24\\
$\boldsymbol{N}(k)$&732&788&848&1072&1072&1360&1360&1720&1888&2020&2112&2640
\\\hline\hline
\end{tabular}
\caption{Values of the function $\boldsymbol{N}(k)$ for ranks $k\leq 25$ ($\boldsymbol{N}(25)=\boldsymbol{N}(24)$).}\label{firstfew}
\end{table}

From this expression we may easily read the large rank asymptotics\footnote{\ $\zeta(s)$ is the Riemann zeta-function.
} of $\boldsymbol{N}(k)$ 
\be
\boldsymbol{N}(k)= \frac{2\,\zeta(2)\,\zeta(3)}{\zeta(6)}\,k^2+o(k^2)\quad\text{as }k\to\infty.
\ee        
As mentioned above, we expect this bound to be optimal, that is, the actual number of Coulomb dimensions to have the above behavior for large $k$. 

The number $\boldsymbol{N}(k)$ is vastly smaller than the number of isoclasses of CSG of rank $\leq k$, showing that the coarse-grained classification is dramatically simpler than the fine one. The counting \eqref{kkkazqw12n} should be compared with the corresponding one for Lagrangian $\cn=2$ SCFTs
\be
\#\left\{\Delta\in \mathbb{Q}_{\geq1} \ \left|\begin{array}{l}\text{$\Delta\equiv$ dimension of a Coulomb branch free}\\ \text{generator in a \textit{Lagrangian model} of rank}\leq k\end{array}\right.\right\}= \left\lfloor \frac{3\,k}{2}\right\rfloor,\quad k\geq 15,
\ee
which confirms the idea that the Lagrangian dimensions have ``density zero'' in the set of all $\cn=2$ Coulomb dimensions. The Lagrangian dimensions are necessarily integers; the number of allowed \emph{integral} dimension at rank $k$ is (not greater than)
\be\label{asssyym}
\boldsymbol{N}(k)_\text{int}=\frac{2\,\zeta(2)\,\zeta(3)}{\zeta(6)}\,k+o(k)\quad\text{as }k\to\infty
\ee
so, for large $k$, roughly $38.5\%$ of all allowed \emph{integral} dimensions may be realized by a Lagrangian SCFT. Remarkably, for $k\geq 15$ the ratio 
\be
\boldsymbol{\varrho}(k)=\frac{\#\big\{\text{Lagrangian dimensions in rank $k$}\big\}}{\#\big\{\text{integral dimensions in rank $k$}\big\}}
\ee
is roughly independent of $k$ up to a few percent modulation, see e.g.\! table \ref{firstfew2}.

\subsection{Dimension $k$-tuples and Dirac quantization of charge}

The classification of the
dimension $k$-tuples $\{\Delta_1,\cdots,\Delta_k\}$ allowed in a rank-$k$ CSG contains much more information than the list of the individual dimensions $\Delta$. Indeed, the values of the dimensions of the various operators in a given SCFT  are strongly correlated. The list of dimension $k$-tuples may also be explicitly determined recursively in $k$ using our Universal Formula.

The problem may be addressed at two levels: there is a simple algorithm which produces, for a given $k$, a finite list of would-be dimension $k$-tuples. However there are subtle Number Theoretical aspects, and some of these $k$-tuples are consistent only under special circumstances.
The tricky point is as follows: a special geometry is, in particular, an analytic family of \emph{polarized} Abelian varieties. The polarization corresponds physically to the Dirac electro-magnetic pairing $\Omega$, which is an integral, non-degenerate, skew-symmetric form on the charge lattice. Usually one assumes this polarization to be \emph{principal}, that is, that all charges which are consistent with Dirac quantization are permitted. But physics allows $\Omega$ to be non-principal \cite{Donagi:1995cf} at the cost of introducing additional selection rules on the values of the electro-magnetic charges and fluxes (see \S.\,\ref{kkkaz1204} for details). The deep arguments of ref.\!\cite{Banks:2010zn} suggest that $\Omega$ should be principal for a $\cn=2$ QFT which emerges from a consistent quantum theory of gravity in some decoupling limit. 
It turns out that only a subset of the dimension $k$-tuples produced by the simple algorithm are consistent with a principal polarization; the others may be realized only in generalized special geometries endowed with suitable \emph{non-}principal polarization i.e.\! to be consistent they require additional selection rules on the electro-magnetic charges. Therefore one expects that such Coulomb dimensions would not appear in $\cn=2$ SCFT having a stringy construction. On the other hand, the Jacobian of a genus $g$ curve carries a canonical principal polarization; thus the special geometry of a SCFT with such ``non-principal'' Coulomb dimensions cannot be described by a Seiberg-Witten \emph{curve.}  
\smallskip

\begin{table}
\begin{center}
\begin{tabular}{c|cccccccccc}\hline\hline
$k$ & $\boldsymbol{1}$ &15 &16 &17 &18 &19 & 20& 21&22&23\\
$\boldsymbol{\varrho}(k)$ & $\boldsymbol{0.4}$ & 0.3859& 0.375& 0.3906& 0.375& 0.3888&0.3703& 0.3647&0.375&0.3777\\\hline
$k$ &24&25&26&27&28&29&30&31&32&$\boldsymbol{\infty}$\\
$\boldsymbol{\varrho}(k)$&0.3564&0.3663&0.3786&0.3809&0.3888&0.3909&0.3781&0.3865&0.3779&$\boldsymbol{0.3858}$
\\\hline\hline
\end{tabular}
\caption{Values of the ratio $\boldsymbol{\varrho}(k)$ (up to four digits)
for various values of the rank $k\in\mathbb{N}$, including $k=1$ and the asymptotic value for $k=\infty$. }\label{firstfew2}
\end{center}
\end{table}

To determine the dimension $k$-tuples which are compatible with a principal polarization is a subtle problem in Number Theory. For instance, the putative dimension list in rank 2 contains the two pairs\footnote{\ Note that all three numbers $12$, $8$, and $6$ are allowed as \emph{single} dimensions in rank 2 even if $\Omega$ is principal.} $\{12,6\}$
and $\{12,8\}$ (resp.\!\! the two pairs $\{10/7,8/7\}$ and $\{12/7,8/7\}$) but only the first one is consistent with a principal polarization. The pair  $\{12,6\}$ corresponds to rank $2$
Minahan-Nemeshansky (MN) of type $E_8$
\cite{ganor,Benini:2009gi} (resp.\! $\{10/7,8/7\}$ to Argyres-Douglas (AD) model of type $A_4$);
since this model has a stringy construction,
the Number Theoretic subtlety is consistent with the physical arguments of \cite{Banks:2010zn}.

 The reason why four of the putative rank-2 pairs $\{\Delta_1,\Delta_2\}$ are not consistent with a principal $\Omega$ looks rather exoteric at first sight: while the ideal class group of the number field $\mathbb{Q}[\zeta]$ 
($\zeta$ a primitive 12-th root of unity) is trivial, the \emph{narrow} ideal class group of its totally real subfield $\BQ[\sqrt{3}]$ is $\Z_2$, and the narrow class group is an obstruction to the consistency of such dimension pairs in presence of a principal polarization (a hint of why this group enters in the game will be given momentarily in \S.\,\ref{introoooo2}). To see which one of the two pairs $\{12,6\}$ or $\{12,8\}$ survives, we need to understand the action of the narrow ideal class group; it turns out that Class Field Theory properly selects the physically expected dimensions $\{12,6\}$. We regard this fact as a non-trivial check of our methods.

\begin{rem}\label{cft} Let us give a rough physical motivation for the role of Class Field Theory in our problem. It follows from  the subtle  interplay between the dynamical breaking of the SCFT  $U(1)_R$  symmetry\footnote{\ Properly speaking, what we call ``$U(1)_R$'' is the quotient group $U(1)_R/\langle (-1)^F\rangle$ acting effectively on the bosons.} in the supersymmetric vacua and the Dirac quantization of charge. Along the Coulomb branch $M$, the  $U(1)_R$ symmetry should be spontaneously broken.
But there are special holomorphic subspaces $M_n\subset M$ which parametrize  \textsc{susy} vacua where a discrete subgroup $\Z_n\subset U(1)_R$ remains unbroken. 
Assuming eqn.\eqref{wwwharing}, the locus
\be
\big\{u_i=0\ \text{for }i\neq i_0\big\}\subset M
\ee
 is such a subspace $M_n$ with $n$ the order of $1/\Delta_{i_0}$ in $\BQ/\Z$. 
To the locus $M_n$ one associates the rational group-algebra $\BQ[e^{2\pi iR/\Delta_{i_0}}]$ of the unbroken $R$-symmetry $\Z_n$. The chiral ring $\mathscr{R}$ is then a module of this group-algebra (of non-countable dimension). Replace $\mathscr{R}$ by the much simpler subring
$\mathscr{I}\subset\mathscr{R}$ of chiral operators of integral $U(1)_R$ charge; $\mathscr{I}=\mathscr{R}^\mathbb{M}$ where $\mathbb{M}$ is the quantum monodromy of the SCFT \cite{Cecotti:2010fi,Cecotti:2014zga}. Since $\mathsf{Proj}\,\mathscr{I}\cong \mathsf{Proj}\,\mathscr{R}$ \cite{gro} there is no essential loss of information in the process.  $\mathscr{I}$ is a $\C$-algebra; Dirac quantization is the statement that $\mathscr{I}$ is obtained from a $\BQ$-algebra $\mathscr{I}_\BQ$ by extension of scalars, $\mathscr{I}\cong\mathscr{I}_\BQ\otimes_\BQ \C$. An element of $\mathscr{I}_\BQ$ is simply a holomorphic function which locally restricts to an element of $\BQ[a^i,b_j]$, where $(a^i,b_j)$ are the periods of special geometry well-defined modulo $Sp(2k,\Z)$. 
E.g.\! if our model is a Lagrangian SCFT with gauge group $G$ of rank $k$, $\mathscr{I}_\BQ=\BQ[a^i,b_j]^{\text{Weyl}(G)}$.
$\mathscr{I}_\BQ$ is a module of $\BQ[e^{2\pi iR/\Delta_{i_0}}]$ of just countable dimension.
By Maschke theorem \cite{maschke} $\mathscr{I}_\BQ$ is a countable sum of Abelian number fields 
\be\label{rrrqas}
\mathscr{I}_\BQ= \bigoplus_\alpha \mathbb{F}_\alpha.
\ee
Being Abelian, the fields $\mathbb{F}_\alpha$ are best studied by the methods of Class Field Theory.
On the other hand, the Coulomb dimensions $\Delta(\phi)$ are just the characters of $U(1)_R$ appearing in $\mathscr{R}$
\be
\chi_\phi\colon e^{2\pi i t R}\mapsto e^{2\pi i t \Delta(\phi)}\in \C^\times,\qquad \text{for }\phi\in\mathscr{R}\ \text{of definite dimension}.
\ee
Focusing on  the subspace  $M_n\subset M$,
the characters  $\{\chi_\phi\}$ induce characters of the unbroken subgroup $\Z_n$.
 Hence the Coulomb dimensions $\Delta(\phi)$ may be read from the decomposition of $\mathscr{R}$ into characters of $\Z_n$. If all Coulomb dimensions are integral, $\mathscr{R}=\mathscr{I}_\BQ\otimes_\BQ\C$, and the last decomposition is obtained from the one in
 \eqref{rrrqas} by tensoring with $\C$, so that we may read $\Delta(\phi)$ directly from the Number Theoretic properties of the $\mathbb{F}_\alpha$. The same holds in the general case, \textit{mutatis mutandis.} 
In the main body of the paper we shall deduce the list of allowed Coulomb dimensions by a detailed geometric analysis, but the final answer is already given by eqn.\eqref{rrrqas} when supplemented with  the obvious relation between the rank $k$ of the SCFT and the degrees of the number fields $\mathbb{F}_\alpha$.
\end{rem}

\subsection{Rank 1 and natural guesses for $k\geq 2$}\label{introoooo2}

The case of rank one is well known \cite{ACSW}. The allowed Coulomb dimensions are
\be\label{rrran11}
\Delta= 1,\ 2,\ 3,\ 3/2,\ 4,\ 4/3,\ 6,\ 6/5,
\ee
$\Delta=1$ corresponds to the free (Lagrangian) theory, $\Delta=2$ to interacting Lagrangian models (i.e.\! $SU(2)$ gauge theories), and all other dimensions to strongly interacting SCFTs.
A crucial observation is that the list of  dimensions \eqref{rrran11} is organized into orbits of an Abelian group
$H_\R$. For $k=1$ the group is simply $H_\R\cong\Z_2$ generated by the involution $\iota$
\be\label{hhhrrqb}
\iota\colon \Delta\mapsto \Delta^\prime= \frac{1}{\langle 1-\Delta^{-1}\rangle},\qquad
 \left|\;\begin{minipage}{170pt}
\begin{footnotesize}
where, for $x\in\R$, $\langle x\rangle$  denotes the real\\ number equal $x$ mod 1 with $0<x\leq 1$.
\end{footnotesize}
\end{minipage}\right.
\ee
\be\label{kzaa994}
\text{The Lagrangian models correspond to the fixed points of $H_\R$, $\Delta=1,2$.}
\ee
 
There are dozens of ways to prove that eqn.\eqref{rrran11}  is the correct set of dimensions for $k=1$ $\cn=2$ SCFT; each argument leads to its own interpretation\footnote{\ To mention just a few: the set of $\Z_2$-orbits in eqn.\eqref{rrran11} is  one-to-one correspondence with:  \textit{(i)} Coxeter labels of the unique node of valency $>2$ in an affine Dynkin graph which is also a star; \textit{(ii)} Coxeter numbers of semi-simple rank-2 Lie algebras;
\textit{(iii)} degrees of elliptic curves written as complete intersections in WPS, \textit{(iv)} and so on.}  of this remarkable list of rational numbers and of the group $H_\R$. Each interpretation suggests a possible strategy to generalize the list \eqref{rrran11} to higher $k$.
We resist the temptation to focus on the most elegant viewpoints, and stick ourselves to the most obvious  interpretation of the set 
\eqref{rrran11}:
\begin{fact} The allowed values of the Coulomb dimension $\Delta$ for rank $1$ $\cn=2$ SCFTs, eqn.\eqref{rrran11},  are in one-to-one correspondence with the elliptic conjugacy classes in the rank-one duality-frame group, $Sp(2,\Z)\equiv SL(2,\Z)$. Lagrangian models correspond to central elements (which coincide with their class). The group $H_\R\cong GL(2,\Z)/SL(2,\Z)$ permutes the distinct $SL(2,\Z)$-conjugacy classes which are conjugate in the bigger group $GL(2,\Z)$.
\end{fact}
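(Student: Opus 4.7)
\medskip

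\noindent\textbf{Proof proposal.}
The plan is to match each allowed dimension in \eqref{rrran11} with a conjugacy class of finite-order (elliptic) elements in $SL(2,\Z)$ by reading off the action of the unbroken discrete $R$-symmetry on the charge lattice at the conical tip of $M$.

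First, I would recall the standard setup of rank-$1$ special geometry: $M$ is a complex cone with apex at $u=0$, fibered by principally polarized elliptic curves whose period lattice has monodromy $\rho\colon \pi_1(M\setminus\{0\})\to Sp(2,\Z)=SL(2,\Z)$. Since $M\setminus\{0\}$ deformation retracts to a circle, the whole monodromy is generated by a single element $M\in SL(2,\Z)$. The conical $U(1)_R$ acts by $u\mapsto e^{i\Delta\alpha}u$ and rotates the periods $(a,a_D)$ (which have $R$-weight $1$) by an overall phase $e^{i\alpha}$. The rotation $\alpha_0=2\pi/\Delta$ lies in the stabilizer of the base point, so it must act on $(a,a_D)$ exactly as $M$; consequently $M$ has eigenvalues $e^{\pm 2\pi i/\Delta}$ on the complexified lattice, hence $M$ is elliptic, and its order is the numerator of $\Delta$ in lowest terms.

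Next I would use the classical fact that the only possible orders for elliptic elements of $SL(2,\Z)$ are $1,2,3,4,6$ and that the conjugacy classes are in bijection with their eigenvalue pair $\{\zeta,\zeta^{-1}\}$ on $\C^2$. Writing $\Delta=p/q$ in lowest terms with $q$ coprime to $p$, the eigenvalue $e^{2\pi iq/p}$ ranges over the primitive $p$-th roots of unity; for $p\in\{1,2\}$ this gives $1$ class (the central elements $I$ and $-I$), while for $p\in\{3,4,6\}$ it gives $2$ classes (complex-conjugate eigenvalues). The total $1+1+2+2+2=8$ matches precisely the list \eqref{rrran11}, and the central classes $I$, $-I$ correspond to $\Delta=1,2$, i.e.\ the Lagrangian (free and $SU(2)$ gauge) models, establishing \eqref{kzaa994}.

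Finally, for the action of $H_\R\cong GL(2,\Z)/SL(2,\Z)\cong\Z_2$, I would pick a determinant $(-1)$ representative, e.g.\ $J=\mathrm{diag}(1,-1)$, and observe that conjugation by $J$ sends an elliptic $M$ with eigenvalues $(\zeta,\zeta^{-1})$ to an element with the same eigenvalue pair as $M^{-1}$; thus $H_\R$ interchanges the two $SL(2,\Z)$-classes of a given non-central elliptic order while fixing $\pm I$. Under the bijection above this is the map $\Delta=p/q\mapsto p/(p-q)$, which is exactly the involution $\iota$ in \eqref{hhhrrqb}, as $\iota(\Delta)=1/\langle 1-\Delta^{-1}\rangle$.

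The main obstacle I anticipate is the first step: proving that the monodromy must actually realize an elliptic element whose holomorphic eigenvalue equals $e^{2\pi i/\Delta}$. The conical $R$-symmetry, the existence of the Seiberg-Witten differential of weight one, and the compatibility of the polarization with the $U(1)_R$ action all need to be combined. The rest is an application of the well-known classification of finite-order elements of $SL(2,\Z)$ (two primitive cyclotomic eigenvalues for each order $>2$, up to complex conjugation) and a direct check that complex conjugation of eigenvalues corresponds to the involution $\iota$.
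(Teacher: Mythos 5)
Your argument is correct in substance but takes a genuinely different route from the paper's. The paper proves the \textbf{Fact} by comparison with the Kodaira classification: the homological invariant puts the exceptional fibers of an elliptic fibration in one-to-one correspondence with quasi-unipotent $SL(2,\Z)$-conjugacy classes, and the conical (scale-invariant) structure forces that invariant to be semi-simple, whence finite order. You instead argue directly that the compact $U(1)_R$ isometry already forces ellipticity: the Reeb rotation $\exp(2\pi R/\Delta)$ is the identity on $M$ and therefore acts on the charge lattice as the monodromy generator, giving that generator the root-of-unity eigenvalue $e^{2\pi i/\Delta}$. Your route is essentially the one the paper itself adopts for arbitrary rank $k$ (the ray/tubular-neighborhood analysis of \S4.3.2--4.3.4); the Kodaira shortcut is a low-dimensional luxury. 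At $k=1$ the two approaches buy the same conclusion, but yours makes explicit the mechanism that generalizes.

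Two imprecisions should be tightened. First, the elliptic classes of a given order $p\in\{3,4,6\}$ are \emph{not} distinguished by the unordered eigenvalue pair $\{\zeta,\zeta^{-1}\}$ --- that pair is identical for both classes. For instance $S$ and $S^{-1}$ both have spectrum $\{\pm i\}$, yet $gSg^{-1}=S^{-1}$ with $g\in SL(2,\Z)$ forces $ad-bc=-a^2-b^2=1$, which is impossible. The invariant that separates them is the modular factor $c\tau+d$ at the fixed point $\tau\in\mathfrak{h}$ (eqn.\eqref{kkkaqw}), i.e.\ \emph{which} eigenvalue acts on the holomorphic Hodge bundle; your very next sentence, which attaches a single primitive root $e^{2\pi iq/p}$ to each class, is the correct bookkeeping and should replace the ``eigenvalue pair'' phrase. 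Second, in the $H_\R$ step, ``same eigenvalue pair as $M^{-1}$'' is vacuous for the same reason; what $J=\mathrm{diag}(1,-1)$ actually does is send $\tau\mapsto\bar\tau$ and thereby complex-conjugate the modular factor, so $[JMJ^{-1}]=[M^{-1}]$, and these two $SL(2,\Z)$-classes are distinct for $p\geq3$ by the explicit nonexistence argument above. With that rephrasing the proof closes correctly.
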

By an \emph{elliptic} conjugacy class we mean a conjugacy class whose elements have finite order. There are several ways to check that the above  \textbf{Fact} is true. The standard method is comparison with the Kodaira classification of exceptional fibers in elliptic surfaces \cite{koda}. Through the homological invariant \cite{koda}, Kodaira sets the (multiplicity 1) exceptional fibers in one-to-one correspondence with the quasi-unipotent conjugacy classes of $SL(2,\Z)$. In dimension 1 the homological invariant of a CSG must be semi-simple.
Since quasi-unipotency and  semi-simplicity together imply finite order, \textbf{Fact} follows. The trivial conjugacy class of 1 corresponds to the free SCFT, the class of the central element $-1$ to $SU(2)$ gauge theories, and the regular elliptic classes to strongly-coupled models with no Lagrangian formulation. The map between (conjugacy classes of) elliptic elements of $SL(2,\Z)$ and Coulomb branch dimensions $\Delta$ is through their modular factor $(c\tau+d)$ evaluated at their fixed point\footnote{\ The locus of fixed points in $\mathfrak{H}$ of an elliptic element of the modular group $SL(2,\Z)$ is not empty and connected, see \textbf{Lemma \ref{kkkazzz1v}}. Note that $(c\tau+b)$, being a root of unity, is independent of the chosen $\tau$ in the fixed locus.} $\tau$ in the upper half-plane $\mathfrak{h}$. Explicitly:
\be\label{kkkaqw}
\begin{bmatrix} a & b\\ c& d\end{bmatrix}\in SL(2,\Z)\ \text{elliptic} \longrightarrow 
\Delta=\frac{2\pi i}{\boldsymbol{\log}(c\tau+d)},\quad 
\left|\begin{array}{c}\text{\begin{footnotesize}where $\tau$ is a solution to\end{footnotesize}}\hfill\\
a\tau+b=\tau(c\tau+d),\end{array}\right.
\ee 
and $\boldsymbol{\log}\, z$ is the branch of the logarithm such that $\boldsymbol{\log}(e^{2\pi i x})=2\pi i \langle x\rangle$ for $x\in\R$ (cfr.\!
eqn.\eqref{hhhrrqb} for the notation).
The action of $\iota\in H_\R$ (eqn.\eqref{hhhrrqb})
is equivalent to\footnote{\ $\bar \tau$ is in the lower half-plane; to write everything in the canonical form, one should conjugate it to a point in the upper half-plane by acting with the proper orientation-reversing element of $GL(2,\Z)$.} 
$\tau\leftrightarrow \bar\tau$, i.e.\!
$\boldsymbol{\log}(c\tau+d)/2\pi i\leftrightarrow \langle 1 -\boldsymbol{\log}(c\tau+d)/2\pi i\rangle$.\medskip

The basic goal of the coarse-grained classification of $\cn=2$ SCFT is to provide the correct generalization of the above \textbf{Fact} to arbitrary rank $k$. The natural guess is to replace the rank-one duality group $SL(2,\Z)$ by its rank-$k$ counterpart, i.e.\! the Siegel modular group $Sp(2k,\Z)$, and consider its finite-order conjugacy classes. Now the fixed-point modular factor, $C\boldsymbol{\tau}+D$, is a $k\times k$ unitary matrix with eigenvalues $\lambda_i$, ($i=1,\dots,k$ and $|\lambda_i|=1$), to which we may tentatively associate the $k$-tuple $\{\Delta_i\}_{i=1}^k$ of  would-be Coulomb dimensions
\be\label{kkkazqw2}
\Delta_i = 1+\frac{2\pi i-\boldsymbol{\log}\,\lambda_i}{\boldsymbol{\log}\,\lambda_1}\in\BQ_{\geq1} \quad\text{for }i=1,2,\dots, k,
\ee
giving a putative 1-to-$k$ correspondence\footnote{\ Since we have a $k$-fold choice of which eigenvalue we wish to call $\lambda_1$.} between $Sp(2k,\R)$-conjugacy classes of elliptic elements in the Siegel group $Sp(2k,\Z)$ and would-be dimension $k$-tuples.
The candidate correspondence \eqref{kkkazqw2} reduces to the Kodaira one for $k=1$, and is consistent with the physical intuition of \textbf{Remark \ref{cft}}. 

It turns out that for $k\geq 2$ the guess \eqref{kkkazqw2}  is morally correct, but there are many new phenomena and subtleties with no counterpart in rank 1, so the statement of the correspondence should be taken with \emph{a grain of salt,} and supplemented with the appropriate limitations and specifications, as we shall do in the main body of the present paper. In particular, the same $k$-tuple $\{\Delta_i\}_{i=1}^k$ is produced by a number $\leq k$ of distinct
conjugacy classes; in facts, the geometrically consistent $k$-tuples are those which appear precisely $k$ times (properly counted).    

In rank $k\geq 2$ the notion of ``duality-frame group'' is subtle. The Siegel modular group $Sp(2k,\Z)$ is
the arithmetic group preserving the
principal polarization. If $\Omega$ is not principal, $Sp(2k,\Z)$ should be replaced by the arithmetic group $S(\Omega)_\Z$ which preserves it
\be\label{nonprrir}
S(\Omega)_\Z=\Big\{ m\in GL(2k,\Z) \colon m^t \Omega m =\Omega\Big\}.
\ee
 $Sp(2k,\Z)$ and $S(\Omega)_\Z$ are commensurable arithmetic subgroups of $Sp(2k,\BQ)$ \cite{J3}.
If a SCFT has a non-principal polarization $\Omega$, 
its Coulomb dimensions are related to the elliptic conjugacy classes in $S(\Omega)_\Z$ which (in general) lead to different eigenvalues $\lambda_i$ and Coulomb dimensions $\Delta_i$.

As in the $k=1$ case, the dimension $k$-tuples $\{\Delta_i\}_{i=1}^k$ form orbits under a group. The most naive guess is that this is the ``automorphism group'' of eqn.\eqref{kkkazqw2},
$\Z_k\rtimes \Z_2^k$, where the first factor cyclically permutes the $\lambda_i$ while $\Z_2^k$ is the  straightforward generalization of
$\iota$ for $k=1$ : 
\be\label{kkkxcczaq}
\iota_j\colon \boldsymbol{\log}\,\lambda_i /2\pi i\longmapsto \begin{cases}
\langle 1-\boldsymbol{\log}\,\lambda_j /2\pi i\rangle&\text{for }i=j\\
\boldsymbol{\log}\,\lambda_i/2\pi i &\text{otherwise,}
\end{cases}\qquad j=1,2,\dots,k.
\ee
However, in general, this action would not map classes in $Sp(2k,\Z)$ to classes in the same group but rather in some other arithmetic group $S(\Omega)_\Z$. The proper generalization of the $k=1$ case requires to replace\footnote{\ $\Z_2^k$ is the group which permutes the $Sp(2k,\R)$-conjugacy classes of elliptic elements of the \emph{real} group  $Sp(2k,\R)$ which are conjugate in $GL(2k,\C)$. However some real conjugacy class may have no integral element, and only a subgroup survives over $\Z$. This implies that $H_\R$ is indeed a subgroup of $\Z_2^k$.} $\Z_2^k$ by the Abelian group $H_\R$ which permutes the $Sp(2k,\R)$-conjugacy classes of elliptic elements of the Siegel modular group $Sp(2k,\Z)$ which are conjugate in $GL(2k,\C)$. 
$H_\R$ is a subgroup of the group \eqref{kkkxcczaq}, i.e.\! we have an exact sequence
\be
1\to H_\R\to \Z_2^k\to C\to1
\ee
for some 2-group $C$. In the simple case when the splitting field $\BK$ of the elliptic element has class number 1, $C$ is just the narrow class group $C^\text{nar}_\Bbbk$ of its maximal totally real subfield $\Bbbk\subset\BK$. 
For instance, the dimension pair  $\{12,6\}$ (the $k=2$ $E_8$ MN model mentioned before) is reproduced by eqn.\eqref{kkkazqw2}
for $\boldsymbol{\log}\,\lambda_1=2\pi i/12$ and $\boldsymbol{\log}\,\lambda_2=14\pi i/12$; applying naively eqn.\eqref{kkkxcczaq} we would get the dimension pair $\iota_2\{12,6\}=\{12,8\}$. However in this case $\iota_2\not\in H_\R$, 
and the $H_\R$-orbit of $\{12,6\}$ does not contain $\{12,8\}$ which then is not a valid dimension pair for $k=2$ for the duality-frame group $Sp(2k,\Z)$ but it is admissible if the duality-frame group is $S(\Omega)_\Z$ with $\det\Omega=2$ or larger.

\begin{rem} The generalization to higher $k$ of the $k=1$ criterion \eqref{kzaa994} for the Coulomb dimensions to be consistent with a weakly-coupled Lagrangian description is as follows. Let $\iota\in H_\R$ be given by
\be
\iota=\iota_1\,\iota_2\,\cdots\,\iota_k.
\ee
If a dimension $k$-tuple $\{\Delta_1,\cdots,\Delta_k\}$ may be realized by a Lagrangian SCFT then it is left invariant by $\iota$ up to a permutation of the $\Delta_i$. The inverse implication is probably false.
\end{rem}

\subsection{Springer Theory of reflection groups}\label{STST}

 The proposed dimension formula \eqref{kkkazqw2} may look puzzling at first.
Being purportedly universal, it should, in particular, reproduce the correct dimensions for a weakly-coupled Lagrangian SCFT with gauge group an arbitrary semi-simple Lie group $G$. By the non-renormalization and Harrish-Chandra theorems \cite{HC}, in the Lagrangian case 
the dimension $k$-tuple $\{\Delta_i\}$
is just the set $\{d_i\}$  of the degrees of the Casimirs of $G$
(its exponents $+1$). Thus eqn.\eqref{kkkazqw2}, if correct, implies a strange universal formula for the degrees of a Lie algebra which looks rather counter-intuitive from the Lie theory viewpoint.

The statement that \eqref{kkkazqw2} is the correct degree formula (not just for Weyl groups of Lie algebras, but for all finite  reflection groups) is the main theorem in the Springer Theory of reflection groups 
\cite{springer,cohen,springertheory},
We shall see in \S.\,\ref{spirpri} that 
the correspondence between our geometric analysis of the CSG and Springer Theory is more detailed than just giving the right dimensions.
In particular, Springer Theory together with weak-coupling QFT force us to use in eqn.\eqref{kkkazqw2} the 
universal determination of the logarithm we call
$\boldsymbol{\log}$ (see after eqn.\eqref{kkkaqw}), which is therefore implied by conventional Lagrangian QFT.

In other words, the proposed Universal  Dimension Formula \eqref{kkkazqw2} may also be obtained using the following
\begin{stra}
Write the usual dimension formula valid for \underline{all} weakly-coupled Lagrangian SCFTs in a clever way, so that it continues to make sense even for non-Lagrangian SCFT, i.e.\! using only intrinsic physical data such as the breaking pattern of $U(1)_R$. This leads you to eqn.\eqref{kkkazqw2}. Then claim the formula to have general validity. 
\end{stra}

This is the third heuristic derivation of \eqref{kkkazqw2} after the ones in \S.1.2 and 1.3. The sheaf-theoretic arguments of \S.\,\ref{steintube} will make happy the pedantic reader (at least we hope). It will also supplement \eqref{kkkazqw2} all the required  specifications and limitations.

\begin{rem} 
Inverting the argument, we may say that
our analysis of the CSG yields
a (simpler) transcendental proof of the classical Springer results. 
\end{rem}

\subsection{Organization of the paper}

The rest of the paper is organized as follows.
Section 2 contains a review of special geometry, structures on Riemannian cones, and all the basic geometric tools we need.
The only new materials in this section are the implications for special geometry  of the sphere theorems of comparison geometry  and the relation of CSG with the theory of log-Fano varieties. In section 3 we discuss the Coulomb chiral ring and the class of CSG with constant period map. Section 4 is the core of the paper, where we deduce the dimension formulae. Section 5 is quite technical: here we discuss fine points about  the elliptic conjugacy classes in Siegel modular groups and their non-principal counterparts $S(\Omega)_\Z$. Section 6 contains a few sample dimension tables. Some technical material is confined in the appendix. 

\begin{rem}
The various sections of the paper are written with quite different standards of mathematical rigour. The core of the paper --- \S.\,\ref{steintube}
where the dimension formula is deduced --- is (as far as we can see) totally rigorous once we take for granted that the chiral ring $\mathscr{R}$ is a (graded) free polynomial ring. The dimension formula then follow as a simple application of the Oka principle.
\end{rem}

\section{Special cones and \emph{log}-Fano varieties}

In this section we review special geometry and related topics to set up the scene. The first three subsections contain fairly standard material; our suggestion to the experts is to skip them (except for the \emph{disclaimer} in \S.\ref{kkkaz1204}).
Later subsections describe basic properties of conic special geometries (CSG) which were not previously discussed in the literature: we aim to extablish that a CSG is an affine (complex) cone over a special kind of normal projective manifold: a simply-connected log-Fano with minimal Hodge numbers.

\subsection{Special geometric structures}

In this paper by a ``special geometry''
we mean a holomorphic integrable system with a Seiberg-Witten (SW) meromorphic differential \cite{Donagi:1995cf,Donagi:1997sr,Freed:1997dp}:

\begin{defn}\label{jzcxv} By a \textit{special geometry}
we mean the following data:
\begin{description}
\item[D1:] A holomorphic map $\pi\colon X\to M$  between two normal complex analytic manifolds, $X$ and $M$, of complex dimension $2k$ and $k$, respectively,
whose generic fiber is (analytically isomorphic to) a principally polarized Abelian variety. $\pi$ is required to have a zero-section.
The closed analytic set $\cd\subset M$ at which the fiber degenerates is called the \emph{discriminant}. The dense open set $M^\sharp\equiv M\setminus \cd$ is called the \emph{regular locus};
\item[D2:] A meromorphic $1$-form (1-current) $\lambda$ on $X$  (the Seiberg-Witten (SW) differential) such that $d\lambda$ is
a holomorphic symplectic form on $X$,
with respect to which the fibers of $\pi$ are Lagrangian.  
\end{description}
\end{defn}

\subsubsection{Three crucial \textit{caveats} on the definition}\label{kkkaz1204}

 The one given above is the definition which is natural from a geometric perspective. However in the physical applications one also considers slightly more general situations which may easily be reduced to the previous one.
This aspect should be kept in mind when making comparison of our findings with existing results in the physics literature. We stress three aspects:

\subparagraph{Multivalued symplectic forms.}
In \textbf{Definition \ref{jzcxv}}  $X$ is \emph{globally} a holomorphic integrable system with a well-defined holomorphic symplectic form $d\lambda$. Since the overall phase of the SW differential $\lambda$ is not observable, in the physical applications sometimes one also admits   geometries in which $\lambda$ is well-defined only up to (a locally constant) phase, see \cite{ACSW} for discussion and examples.  
Let $C$ be the Coulomb branch of such a generalized special geometry; there is an unbranched cover of the regular locus, $M^\sharp\to C^\sharp$, on which $\lambda$ is univalued. $M^\sharp$ is the regular locus of a special geometry in the sense of \textbf{Definition \ref{jzcxv}}. The cover branches only over the discriminant $\cd$. Dually, we have an embedding of chiral rings $\mathscr{R}_C\hookrightarrow \mathscr{R}_M$. Away from the discriminant, there is little difference between the two descriptions: working in $C$ we identify vacua having the same physics, while in $M$ we declare them to be distinct states (with the same physical properties). In the first picture we consider non-observable the chiral operators which distinguish the physically equivalent Coulomb vacua, that is, $\mathscr{R}_C\cong \mathscr{R}_M^G$, where $G$ is the (finite) deck group of the covering. $\mathscr{R}_C$ is still free iff $G$ is a reflection group
\cite{ST1,ST2} acting homogeneously; in this case, the dimensions of its generators are multiples of the ones for $\mathscr{R}_M$. 

The two special geometries $C$ and $M$ may lead to different ways of resolving the singularities along $\cd$, and hence they may correspond to physical inequivalent theories in the ``same'' coarse class. It may happen that we may attach physical sense only to the chiral sub-ring $\mathscr{R}_C$. 
To compare our results with those of papers which allow multivalued $\lambda$, one should first pull-back their geometries to a cover on which the holomorphic symplectic form is univalued. 

\subparagraph{Non-principal polarizations.}
 
In \textbf{Definition \ref{jzcxv}} the generic fiber of $X\to M$ is taken to be a \emph{principally} polarized Abelian variety.
As already stressed in the Introduction,
 we may consider non principal-polarization. 
This means that not all electric/magnetic charges and fluxes consistent with Dirac quantization
are present in the system
\cite{Donagi:1995cf}. This is believed not to be possible in theories arising as limits of consistent quantum theories containing gravity \cite{Banks:2010zn}.
Every non-principally polarized Abelian variety has an isogenous principally polarized one \cite{GH,complexabelian}. 

We see the polarization of the regular fiber $X_u$ as a primitive,\footnote{\ That is, the matrix of the form $\Omega_{ij}\in \Z(2k)$ satisfies $\gcd_{i,j}\{\Omega_{ij}\}=1$.} integral, non-degenerate, pairing  \cite{igusa,complexabelian}
\be
\langle-,-\rangle\colon H_1(X_u,\Z)\times H_1(X_u,\Z)\to\Z
\ee
which has the physical interpretation of the Dirac electro-magnetic. We may find generators $\gamma_i$ of the electro-magnetic charge lattice $H_1(X_u,\Z)$ so that the matrix $\Omega_{ij}\equiv \langle \gamma_i,\gamma_j\rangle$ takes the (unique) canonical form \cite{newman} 
\be\label{chargemult}
\Omega=\begin{bmatrix} 0 & e_1\\
-e_1 & 0\end{bmatrix}\bigoplus 
\begin{bmatrix} 0 & e_2\\
-e_2 & 0\end{bmatrix}\bigoplus\cdots\bigoplus \begin{bmatrix} 0 & e_k\\
-e_k & 0\end{bmatrix},\quad\  e_i\in\mathbb{N},\ e_i\mid e_{i+1},\ \ e_1\equiv 1.
\ee
The polarization is principal iff $e_i=1$, i.e.\!
$\det\Omega=1$. Physically, the integers $e_i$ are \emph{charge multipliers}: (in a suitable duality frame) the allowed values of the $i$-th electric charge are integral multiples of $e_i$.

If $\Omega$ is principal, the duality-frame group is the Siegel modular group $Sp(2k,\Z)$, while in general it is the commensurable arithmetic group $S(\Omega)_\Z$, eqn.\eqref{nonprrir}.  Since, as mentioned in the Introduction, the Coulomb dimensions $\{\Delta_i\}$ are related to the possible elliptic subgroups of the duality-frame group, there is a correlation between the 
set of charge multipliers $\{e_i\}$ and the set of dimensions $\{\Delta_i\}$. The simplest instance of this state of affairs appears in rank 2: the set of dimensions $\{12,8\}$ is not allowed for $\Omega$ principal, but it is permitted when the charge multiplier $e_2$ is (e.g.)  2 or $3$.

\subparagraph{Non-normal Coulomb branches and ``non-free'' chiral rings.} In \textbf{Definition \ref{jzcxv}} the Coulomb branch $M$ is taken to be \emph{normal} as an analytic space, that is,
we see the Coulomb branch as a ringed space
$(M,\co_M)$ where $M$ is a Hausdorf topological space and $\co_M$ is the structure sheaf whose local sections are the local holomorphic functions. Being \emph{normal} means that the stalks
$\co_{M,\,x}$ at all points $x\in M$ are domains which are integrally closed in the  stalk $\cm_x$ of the sheaf of germs of meromorphic functions \cite{nor1,nor2}. Geometrically this is the convenient and natural definition; indeed, there is no essential loss of generality since we may always replace a non-normal analytic space $M_0$ by its normalization $M$: just replace the structure sheaf $\co_{M_0}$ with its integral closure $\co_M$ and the topological space $M_0$ by the analytic spectrum $M$ of $\co_M$ \cite{nor2}. Roughly speaking, passing to the normalization just enlarges the ring of the holomorphic functions from global sections of $\co_{M_0}$ to global sections of $\co_M$. In facts, the normalization corresponds to the maximal extension of the ring of local holomorphic functions compatible with $\co_{M_0}$-coherence.
Thus, geometrically, a non-normal Coulomb branch just amounts  to ``forget'' some (local) holomorphic function. The simplest example of a non-normal analytic space is the plane cuspidal cubic
whose ring of regular functions is
$\C[u_1,u_2]/(u_1^2-u_2^3)$. Its normalization is the affine line with ring $\C[t]$, corresponding to the parametrization $u_1=t^3$, $u_2=t^2$. In this example the normalization ring $\Gamma(M,\co_M)$ has the (topological) basis $1,t,t^2,t^3,\cdots$
while the basis of the non-normal version, $\Gamma(M_0,\co_{M_0})$, is
$1,t^2,t^3,\cdots$ where one ``forgets'' the function $t$. 

From the physical side the situation is subtler. We define the \textit{(geometric) chiral ring} $\mathscr{R}$ to be the Frech\'et ring of the global holomorphic functions
$\mathscr{R}\equiv \Gamma(M,\co_M)$. 
This geometric ring may or may not coincide with the \emph{physical} chiral ring $\mathscr{R}_\text{ph}$, defined as the ring of holomorphic functions on $M^\sharp\subset M$ which may be realized as vacuum expectation values of a chiral operator. 
Clearly $\mathscr{R}_\text{ph}\subset \mathscr{R}$, and we get the physical ring by ``forgetting'' some holomorphic function. Then $\mathscr{R}_\text{ph}=\Gamma(M_\text{ph},\co_\text{ph})$ where the stalks of $\co_\text{ph}$ are domains\footnote{\ Because the Coulomb branch is reduced.} which may or may not be integrally closed.
In the second case the physics endows the Coulomb branch with the structure of a \textit{non-normal analytic space} $(M_\text{ph},\co_\text{ph})$. Geometrically it is natural to replace it with its normalization $(M,\co_M)$ while proclaiming that only a subring $\mathscr{R}_\text{ph}$ of the chiral ring $\mathscr{R}$ is a ring of physical operators. Notice that the full geometric ring $\mathscr{R}$ may be a free polynomial ring, $\C[u_1,\cdots,u_k]$, while the physical ring
$\mathscr{R}_\text{ph}$ is a non-free finitely-generated ring, as the example of the cuspidal cubic shows. 

The putative ``non-free'' Coulomb branch geometries of ref.\!\!\cite{Argyres:2017tmj} arise this way: they are non-normal analytic spaces whose normalization has a free polynomial ring of regular functions, $\C[u_1,\cdots,u_k]$. That is, the ``non-free'' chiral rings are obtained from free geometric rings by forgetting some holomorphic functions of $\mathscr{R}$. The physical rationale for ``forgetting'' functions is the unitarity bound.
In a CSG $\mathscr{R}$ is graded by the conformal dimension $\Delta$, and unitarity requires that a non-constant \emph{physical} holomorphic function has $\Delta\geq1$. Hence one is naturally led to the proposal
\be
\mathscr{R}_\text{ph}=\mathscr{R}_{\Delta\geq1}\equiv \C\cdot\boldsymbol{1}\oplus\Big\{\phi\in\mathscr{R}\ :\ \Delta(\phi)\geq 1\Big\}\subset \mathscr{R}.
\ee 
If $0<\Delta(\phi)<1$ for some $\phi\in\mathscr{R}$,  $\mathscr{R}_\text{ph}$ defines a non-normal structure sheaf $\co_\text{ph}$ and the physical ring is non-free.
Is this fancy possibility actually realized?

The equations determining the dimensions $\Delta_i$ for the normalization $M$ of a CSG (satisfying our regularity conditions), deduced in \S.\,\ref{steintube} below, always have a (unique) solution such that $\Delta(\phi)\geq 1$ for all $\phi\in\mathscr{R}$, $\phi\neq\boldsymbol{1}$, with equality precisely when $\phi$ is a free field. 
Indeed, with the $\boldsymbol{\log}$ determination of the logarithm, the formula \eqref{kkkazqw2} expresses $\Delta_i$ as $1$ plus a manifestly non-negative 	quantity.
To produce $\mathscr{R}_\text{ph}\neq \mathscr{R}$, we may try to replace $\boldsymbol{\log}$ by some  bizzarre branch of the logarithm, with the effect that $\Delta(\phi)\to \Delta(\phi)_\text{new}= 2-\Delta(\phi)$, so that an element with $1<\Delta(\phi)<2$ would be reinterpreted as having the dimension $0<\Delta(\phi)_\text{new} <1$. However this is extremely unnatural and gruesome since it will spoil the universality of the prescription to compute the dimension $\Delta$ that better be the same one for all SCFT and all chiral operators (the correct prescription should be the unique one which reproduces the correct results for Lagrangian QFT, see \S.\ref{STST}. 

Assuming universality, the fancy possibilities of ref.\!\!\cite{Argyres:2017tmj} cannot be realized, and we shall neglect them for the rest of this paper. If the reader is aware of  physical motivations for their existence and wants to study them, he needs only to perform the non-universal analytic continuation of the relevant formulae. 
\bigskip

For most of the paper we focus on special geometries in the sense of \textbf{Definition \ref{jzcxv}}, with $\mathscr{R}_\text{ph}=\mathscr{R}$ and
 principally polarized fibers. Occasionally we comment on the modifications required for non-principal $\Omega$.

\subsubsection{Review of implied structures}

The data \textbf{D1}, \textbf{D2} imply the existence of several canonical geometric structures. We recall just the very basic ones (many others may be obtained by the construction in \textbf{S5}):
\begin{description}
\item[S1:] \textit{(polarized local system)}
A local constant sheaf $\Gamma$ on $M^\sharp$ with stalk $\cong\Z^{2k}$ equipped with a skew-symmetric form $\langle-,-\rangle\colon\Gamma\times \Gamma\to \Z$ under which $\Gamma\simeq \Gamma^\vee$. $\Gamma$ is given by the holomogy of the fiber $\Gamma_u= H_1(\pi^{-1}(u),\Z)$ with  the intersection form given by the principal polarization;
\item[S2:] \textit{(flat Gauss-Mannin connection)} On the holomorphic bundle $\ce=\Gamma^\vee\otimes \co_{M}$ over $M^\sharp$, we have the flat holomorphic connection $\nabla^\text{GM}$ defined by the condition that the local sections of $\Gamma^\vee$ are holomorphic;
\item[S3:] \textit{(monodromy representation)} $m\colon \pi_1(M^\sharp)\to Sp(2k,\Z)$; 
\item[S4:] \textit{(Hodge bundle)} $\cv\to M^\sharp$: it is the holomorphic sub-bundle
of $\ce$ whose fibers are (1,0) cohomology classes, i.e.\! $\cv_u=H^0(\pi^{-1}(u),\Omega^1)$. The flat connection
$\nabla^\text{GM}$ of $\ce$ induces the sub-bundle (holomorphic) connection
$\nabla^H$ on $\cv$ \cite{grif,perbook}. Note that $\Gamma^\vee$ acts by translation on $\cv$ and that $\cv/\Gamma^\vee\cong X^\sharp\equiv \pi^{-1}(M^\sharp)$;
\item[S5:] \textit{(period map and the family of homogeneous bundles)} the period matrix $\boldsymbol{\tau}_{ij}$ of the Abelian variety $\pi^{-1}(x)$ is a complex symmetric matrix with positive imaginary part well defined up to $Sp(2k,\Z)$ equivalence; hence $\boldsymbol{\tau}$ defines the holomorphic map:
\be\label{mumu}
\boldsymbol{\tau}\colon M^\sharp\to Sp(2k,\Z)\big\backslash Sp(2k,\R)\big/U(k).
\ee
The period map $\boldsymbol{\tau}$ yields a universal construction of  many other canonical geometrical objects on $M^\sharp$. We limit ourselves to a special class of holomorphic ones. The Griffiths period domain $Sp(2k,\R)/U(k)$ is an open domain in its complex Griffiths compact dual \cite{grif,grifpaper,perbook}
\be\label{uuuuuszx}
Sp(2k,\R)/U(k)\subset Sp(2k,\C)/P(k),\qquad \begin{minipage}{130pt}
\begin{footnotesize}
where $P(k)\subset Sp(2k,\C)$ is the Siegel parabolic subgroup.
\end{footnotesize}
\end{minipage}
\ee
By general theory, to every $P(k)$-module (in particular to all $U(k)$-modules) we associate a holomorphic vector bundle over the compact dual equipped with a unique metric, complex structure, and connection having an explicit Lie theoretic construction \cite{perbook}. These bundles, metrics, and connections may be restricted to the period domain and then pulled back to $M^\sharp$ \emph{via} $\boldsymbol{\tau}$ to get God-given bundles, metrics, and connections on $M^\sharp$.
All the quantities of ``special geometry'' (including the K\"ahler metric)
arise in this way from Lie theoretic gadgets. For instance, the Hodge bundle $\cv$ (resp.\! the flat bundle $\ce$) is the pull back of a homogenous bundle, and the connections $\nabla^H$ and $\nabla^\text{GM}$ are the pull-back of the corresponding canonical connections on the symmetric space \eqref{uuuuuszx};  
\item[S6:] \textit{(periods and local special coordinates)} Let $U\subset M^\sharp$ be simply connected. We trivialize $\Gamma$ in $U$ choosing local sections making a canonical symplectic basis $(A^i, B_j)$, $i,j=1,\dots, k$, 
\be
\langle A^i, A^j\rangle=\langle B_i,B_j\rangle=0,\quad \langle A^i, B_j\rangle= {\delta^i}_j.
\ee
The local special (holomorphic) coordinates $a^i$ and their duals $b^i$ are (in $U$)
\be
a^i=\langle A^i, \lambda\rangle,\qquad
b_i= \langle B_i,\lambda\rangle.
\ee
Writing 
\be
(\cv|_U)_\text{smooth}\equiv (\Gamma^\vee\otimes \R)|_U= U\times\Big(A^i x_i+B_jy^j\Big),\qquad (y^j,x_i)\in\R^{2k},
\ee
the holomorphic symplectic form becomes
\be
\sigma=d\lambda= da^i\wedge dx_i + db_i\wedge dy^i=da^i\wedge\left(dx_i+\frac{\partial b_j}{\partial a^i}\,dy^j\right).\ee
Since the holomorphic coordinates along the fiber are $z_i=x_i+\tau_{ij}y^j$
we get
\be
\boldsymbol{\tau}_{ij}= \frac{\partial b_i}{\partial a^j}.
\ee
Since $\boldsymbol{\tau}_{ij}$ is symmetric, locally there exists a prepotential (holomorphic) function $\cf(a^j)$ such that
\be
b_i=\frac{\partial\cf(a^j)}{\partial a^i},\qquad \boldsymbol{\tau}_{ij}(a)=\frac{\partial^2\cf}{\partial a^i\partial a^j}.
\ee 
\item[S7:] \textit{(the dual bundle $\cv^\vee\simeq \ce/\cv$)} This is yet another bundle whose metric and connection is given by the general construction in \textbf{S5}. It coincides with $TM^\sharp$, so it yields the geometry of the base.  On the intersection of two special coordinate charts $U$, $U^\prime$ we have:
\be
\begin{bmatrix} b^\prime \\ a^\prime
\end{bmatrix}=\begin{bmatrix}A  & B\\
C & D\end{bmatrix}\begin{bmatrix} b \\ a
\end{bmatrix},\quad
\begin{bmatrix} y^\prime \\ x^\prime
\end{bmatrix}=\begin{bmatrix}A  & B\\
C & D\end{bmatrix}^{-t}\begin{bmatrix} y \\ x
\end{bmatrix},\quad \begin{bmatrix}A  & B\\
C & D\end{bmatrix}\in Sp(2k,\Z),
\ee
so that the periods $(a^i,b_i)$ are flat sections of $\ce^\vee\simeq \ce$. The holomorphic tangent bundle $TM^\sharp$
is then identified with the quotient bundle $\ce/\cv$ (again the pull-back of a homogeneous bundle). In particular, the flat connection $\nabla^\text{GM}$ induces canonically a quotient bundle connection $\nabla^Q$ on $\ce/\cv$, that is, on $TM^\sharp$. Taking the differential and using $db_i=\boldsymbol{\tau}_{ij}\, da^i$ we get
the modular transformation of the $k\times k$ period matrix $\boldsymbol{\tau}$
\be
\boldsymbol{\tau}=(A\boldsymbol{\tau}+B)(C\boldsymbol{\tau}+D)^{-1};
\ee
\item[S8:] \textit{(the hyperK\"ahker structures on $X^\sharp$ and $\cv$)} On the total space of $\cv$, equivalently of the flat real bundle $\Gamma\otimes \R$,
there is a hyperK\"ahler structure $(I^a,g)$ invariant under translation by local sections of  $\Gamma\otimes \R$; then $(I^a,g)$ descends to a hyperK\"ahler structure on the total space of $\ch^\sharp$ \cite{Cecotti:1988qn}. The complex structure of $\cv$ is the $\zeta=0$ one in hyperK\"ahler $\mathbb{P}^1$-family of complex structures. We give the hyperK\"ahler structure by presenting the explicit $\mathbb{P}^1$-family of local holomorphic Darboux coordinates $X^a(\zeta)=(q^i(\zeta),p_i(\zeta))$
satisfying the reality condition \cite{Gaiotto:2008cd}
\be
X^a(\zeta)=-\overline{X^a(-1/\bar \zeta\,)},
\ee
such that the holomorphic symplectic form in complex structure $\zeta\in\mathbb{P}^1$ is 
\be
\Omega(\zeta)=-\frac{i}{2\,\zeta} \,\omega_++\omega_3-\frac{i}{2}\,\zeta\,\omega_-=dp_i(\zeta)\wedge dq^i(\zeta),\quad\text{where }\omega_\pm =\omega_1\pm\omega_3,
\ee
 and $\omega_\alpha$ ($\alpha=1,2,3$) are the three K\"ahler forms. We have
\be
q^i(\zeta)= \frac{1}{\zeta}\,a^i +iy^i +\zeta\,\overline{a}^i,\quad 
p_i(\zeta)= \frac{1}{\zeta}\,b_i +ix_i +\zeta\,\overline{b}_i,
\ee
hence
\be
\omega_+= 2\Big(da^i\wedge dx_i-db_i\wedge dy^i\Big), \qquad \omega_3=db_i\wedge d\bar a_i+d\bar b_i\wedge da^i.
\ee
\item[S9:] \textit{(fiber metric and Chern connection on $\cv$)} Restricting the hyperK\"ahler metric along the fibers, we get a Hermitian metric and associated Chern connection on the holomorphic bundle $\cv$. By uniqueness of the homogeneous connection, is coincides with the sub-bundle connection $\nabla^H$. The Hermitian metric is simply
$\|z\|^2= y^{ij}z_i\bar z_j$ where $y^{ij}$ is the inverse matrix of $y_{ij}=2\,\mathrm{Im}\,\boldsymbol{\tau}_{ij}$;
\item[S10:]\textit{(Special K\"ahler metric on $M^\sharp$ and its global K\"ahler potential)} In the same way, restricting the hyperK\"ahler metric on $\cv$ to the zero section (which is a holomorphic subspace in $\zeta=0$ complex structure) we get a K\"ahler metric on $M^\sharp$ whose K\"ahler form is the restriction of $\omega_3$. 
The restriction to the zero-section yields a K\"ahler metric on $M^\sharp$ with K\"ahler form
\be\label{spekeke}
\omega_3|_{M^\sharp}=\Big(\boldsymbol{\tau}_{ij}(a)-\overline{\boldsymbol{\tau}}_{ij}(\bar a)\Big)da^i\wedge d\bar a^j,\qquad \boldsymbol{\tau}_{ij}(a)\equiv \partial_{a_i}\partial_{a_j}\cf,\qquad \mathrm{Im}\,\boldsymbol{\tau}_{ij}(a)>0.
\ee 
We note that the assumption of the existence of a SW differential implies the existence of a globally defined K\"ahler potential on $M^\sharp$:
\be\label{kakapo}
\Phi =-i\big( b_i \,\bar{a}^i-a^i\,\bar{b}_i\big). 
\ee
Again, by uniqueness of the homogeneous connection, the Levi-Civita connection $\nabla^\text{LC}$ of this K\"ahler metric is the quotient-bundle connection $\nabla^Q$. In other words, all the relevant connections are just projections of the flat one.
\item[S11:] \textit{(The cubic symmetric  form of the infinitesimal Hodge deformation)} This is a symmetric holomorphic cubic form of type (3,0)
\be
\odot^3 TM^\sharp \to \co_M,
\ee
describing the infinitesimal deformation of Hodge structure (of the Abelian fiber) in the sense of Griffiths \cite{grif}.
Locally in special coordinates it is given just by
\be
T_{ijk}= \frac{\partial^3\cf}{\partial a^i\partial a^j\partial a^k}.
\ee
\end{description}

\begin{rem} For $k=1$ a special structure is, in particular, a surface fibered over a curve whose general section is an elliptic curve. Hence the possible local behaviors
(i.e.\! degenerations of fibers) are  described by the classical Kodaira papers \cite{koda}.
In his terminology, \textbf{S3} is called the \emph{homological invariant} and \textbf{S5} the \emph{analytic invariant}.
\end{rem}

\begin{exe}[$k=1$ locally flat special structures\footnote{\ We shall return several times to this \textbf{Example} in the paper. The present discussion is meant as a mere appetizer.}]\label{jza183} In this paper we are interested in conic special structures. Since in real dimension 2 all metric cones are locally flat, for $k=1$ we are reduced to study flat special geometries whose discriminant is a single  point. The hyperK\"ahler manifold $X^\sharp$ then is locally isometric to $\R^4\cong \mathbb{H}$, and 
the singular hyperK\"ahler geometry should be of the form $\C^2/G$ with $G$ a finite subgroup of $SU(2)$. One checks that conformal invariance requires the group $G$ to correspond \emph{via} the McKay correspondence to an affine Dynkin graph which is a star, that is, $D_4$, $E_6$, $E_7$, or $E_8$. Before resolving the singularity, the spaces $X_\text{sing}$ are the well-known Du Val singular hypersurfaces\footnote{\  In eqn.\eqref{kkkks} $h_3(x,u)$ stands for a  homogeneous cubic polynomial in $x$, $u$.} in $\C^3$ \cite{singularities}
\be\label{kkkks}
\begin{aligned}
&D_4\colon y^2-h_3(x,u)=0,
&&\phantom{mm}&&E_6\colon y^2-4\,x^3+u^4=0,\\
&E_7\colon y^2-4\,x^3+u^3x=0,
&&&&E_8\colon y^2-4\,x^3+u^5=0.
\end{aligned}
\ee
That $u\colon X_\text{sing}\to \C$ is an elliptic fibration (with section) is obvious by reinterpreting Du Val singularities as the Weierstrass model of a family of elliptic curve parametrized by $u$. 
The crepant resolution $X$ of $X_\text{sing}$ is given by the corresponding ALE space.
For each of the four special geometries we have \emph{a priori} two distinct special structures. Indeed, we have two dual choices for the SW differential $\lambda$: I) a holomorphic section of $\cv$ with no zero in $M^\sharp\equiv \{u\neq 0\}$ which vanish in the limit $u\to0$ to order at most 1, or II) a holomorphic section of $\cv^\vee$ with the same properties. Note that these properties fix $\lambda$ uniquely up to an irrelevant overall constant.  In terms of the Weierstrass model the two dual choices read:
\be
{\rm I)}\ \lambda=u\, \frac{dx}{y},\qquad\qquad\quad {\rm II)}\ \lambda=\frac{x\,dx}{y}.
\ee
The corresponding Coulomb branch dimensions are
\be
\begin{tabular}{c|cccc}\hline\hline
& $D_4$ & $E_6$ & $E_7$ & $E_8$\\\hline
I) & 2 & 3 & 4 & 6\\\hline
II) & 2 & $\tfrac{3}{2}$ &
$\tfrac{4}{3}$ & $\tfrac{6}{5}$\\\hline\hline
\end{tabular}
\ee
which is the correct list of (non-free) $\Delta$'s for $k=1$. 
The periods can be easily computed using Weierstrass elliptic functions\footnote{\ Notations as in DLMF \S.23 \cite{DLMF}.}. As an example, we write them for $E_8$:
$$
{\rm I)}\ \begin{bmatrix}a\\ b\end{bmatrix}\equiv
u \!\begin{bmatrix}e^{i\pi/3}\omega_1\\ \omega_1\end{bmatrix}=\frac{\Gamma(\tfrac{1}{3})^3}{4\pi}\!\begin{bmatrix}e^{i\pi/3}\\1
\end{bmatrix}u^{1/6},\qquad
{\rm II)}\ \begin{bmatrix}b\\ a\end{bmatrix}\equiv
\begin{bmatrix}e^{i\pi/3}\eta_3\\ \eta_3\end{bmatrix}=\frac{2\pi^2 e^{-\pi i/3}}{\sqrt{3}\,\Gamma(\tfrac{1}{3})^3}\!\begin{bmatrix}e^{i\pi/3}\\1
\end{bmatrix}u^{5/6}
$$
from which it is obvious that the dimension of $u$ is $6$ and respectively $6/5$.
In the dual choice the role of $a$ and $b$ get interchanged, since the non-trivial element of $H_\R$ inverts the sign of the polarization. Of course, the resolutions of the singularity at $u=0$ are different in the two cases, the exceptional locus being Kodaira exceptional fiber of type $II^*$ and $II$, respectively.
The periods of $dx/y$ and $x\, dx/y$ scale with opposite power of $u$ by the Legendre relation. 
\end{exe}

\subsubsection{Rigidity principle and  reducibility}
A basic trick of the trade is that global properties in special geometry fix everything. This principle is known as ``the Power of Holomorphy'' \cite{power};
mathematicians call it \emph{rigidity}.

\begin{pro}[Rigidity principle \cite{perbook}] \label{rigidity} Two special geometries with the same \emph{compact} base $M$, isomorphic monodromy representations, and isomorphic fibers over \emph{one} point, are equivalent.
\end{pro}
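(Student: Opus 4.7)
The plan is to reduce the comparison of the two special geometries to a comparison of their underlying variations of Hodge structure on $M^\sharp$ and then to invoke the rigidity of polarized VHS on a compact base. First I would use the isomorphism of monodromy representations (\textbf{S3}) to identify the flat bundles $\ce_1 \simeq \ce_2$ (\textbf{S2}) over $M^\sharp$, since up to isomorphism a flat bundle is reconstructed from its monodromy representation; the skew pairing (\textbf{S1}) is transported along this identification. The hypothesis that the fibers over one point $p\in M^\sharp$ are isomorphic as principally polarized Abelian varieties pins down a compatible identification of the Hodge decompositions $\ce_i|_p = \cv_i|_p \oplus \overline{\cv_i|_p}$ at $p$.

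Pulling back to the universal cover $\tilde M^\sharp$, one gets two holomorphic, horizontal, monodromy-equivariant period maps $\tilde{\boldsymbol\tau}_1,\tilde{\boldsymbol\tau}_2 \colon \tilde M^\sharp \to D = Sp(2k,\R)/U(k)$ (\textbf{S5}) which agree at a chosen lift $\tilde p$ of $p$. The main step is to show $\tilde{\boldsymbol\tau}_1\equiv \tilde{\boldsymbol\tau}_2$. I would use the classical VHS-rigidity argument: the squared invariant distance
\[
\phi(x) \;\equiv\; d_D\!\big(\tilde{\boldsymbol\tau}_1(x),\,\tilde{\boldsymbol\tau}_2(x)\big)^2
\]
is plurisubharmonic on $\tilde M^\sharp$, because Griffiths transversality forces the differentials of the $\tilde{\boldsymbol\tau}_i$ to land in the horizontal tangent space of $D$, where the bisectional curvature is non-positive. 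By equivariance $\phi$ descends to $M^\sharp$. Schmid's nilpotent-orbit theorem controls the growth of the period maps near the discriminant: $\phi$ is bounded above in a punctured neighbourhood of each point of $\cd$, and since $\cd$ has complex codimension $\geq 1$ in the \emph{compact} manifold $M$, a bounded plurisubharmonic function extends plurisubharmonically across it. The maximum principle on $M$ then forces $\phi$ to be constant, and $\phi(\tilde p)=0$ gives $\phi\equiv0$, i.e.\ $\tilde{\boldsymbol\tau}_1 \equiv \tilde{\boldsymbol\tau}_2$.

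Once the period maps coincide the Hodge sub-bundles agree, $\cv_1=\cv_2$ inside $\ce$, and the abelian fibrations $X_i^\sharp\cong \cv_i/\Gamma^\vee$ (\textbf{S4}) are canonically identified over $M^\sharp$ as polarized families with their zero sections. The extension across $\cd$ is determined by the (common) local monodromy of $\Gamma$, so $X_1\cong X_2$ globally as polarized abelian fibrations. For the SW differentials, the holomorphic symplectic form $d\lambda_i$ is reconstructed from the special coordinates of \textbf{S6} (it is $da^i\wedge dx_i+db_i\wedge dy^i$), hence $d\lambda_1 = d\lambda_2$; the difference $\lambda_1-\lambda_2$ is then a closed meromorphic $1$-current on $X$ whose periods $\langle A^i,\lambda_1-\lambda_2\rangle$, $\langle B_j,\lambda_1-\lambda_2\rangle$ vanish because both $(a^i,b_j)$ systems are computed from the same period matrix, and whose restriction to every smooth fiber is translation-invariant with zero periods hence zero. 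Normalising by the zero section forces $\lambda_1=\lambda_2$.

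The principal obstacle is the rigidity step in the second paragraph: it requires not merely curvature non-positivity of $D$ but also that $\phi$ actually extend plurisubharmonically across $\cd$, which is where Griffiths transversality (to get plurisubharmonicity), Schmid's asymptotic analysis of degenerating VHS (to get boundedness near $\cd$), and compactness of $M$ (to apply the maximum principle) are all essential. Everything else is then essentially bookkeeping transferring the resulting equality of period maps back to the original geometric data.
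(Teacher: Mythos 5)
The paper gives no proof of this proposition; it simply cites \cite{perbook} as a standard result. Your proposal correctly reconstructs the classical VHS rigidity argument that is in fact the one found in that reference (plurisubharmonicity of the squared distance between horizontal maps into the period domain, bounded extension across the discriminant via Schmid's asymptotics, maximum principle on the compact base, then reconstruction of the Hodge bundles and SW data from the common period map), so the approach is essentially the same as what the paper relies on.
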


Thus the monodromy representation \textbf{S3} essentially determines the special structure. In particular, if the monodromy representation splits $m=m_1\oplus m_2$ (over\footnote{\ If the splitting is over $\BQ$, the geometry is a product up to an isogeny in the fiber.} $\Z$) then the special geometry is a  product.

\subsubsection{Curvature properties of special geometry}\label{uuzaqf}

Let $\cw$ be a holomorphic Hermitian vector bundle with Chern connection $\nabla$. We consider a holomorphic sub-bundle $\cs\subset\cw$ and  the  quotient bundle $\cq=\cw/\cs$ equipped with the sub-bundle and quotient connections $\nabla^\cs$ and $\nabla^\cq$, respectively.
The curvature of $\nabla^\cs$ (resp.\! $\nabla^\cq$) is bounded above (resp.\! below) by the one of $\nabla$, see \cite{GH} page 79 or \cite{perbook}.
Applying this principle to $\ce$, $\cv$ and
$\ce/\cv$ we get:
\begin{pro}\label{jjjas} The curvature of the bundle $\cv$ is \emph{non-positive,} while the curvature of the K\"ahler metric on $M^\sharp$ is \emph{non-negative} (in facts, \emph{positive}). In particular, the Ricci curvature of $M^\sharp$ is non-negative, $R_{i\bar\jmath}\geq 0$ and it vanishes iff $M^\sharp$ is locally flat.
\end{pro}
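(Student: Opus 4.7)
The plan is to exhibit $\cv$ and $\ce/\cv$ as a holomorphic sub-bundle and quotient of a \emph{flat} Hermitian bundle $\ce$ and then apply the standard Griffiths sub-bundle/quotient curvature identities. First I would equip $\ce$ with the indefinite Hermitian form $h(u,v)=i\langle u,\bar v\rangle$ built from the polarization of \textbf{S1}. Since $\langle-,-\rangle$ is $\nabla^{\mathrm{GM}}$-parallel (it is a morphism of local systems), so is $h$; hence $\nabla^{\mathrm{GM}}$ is the Chern connection of $(\ce,h)$ and by \textbf{S2} it is flat, $\Theta_{\ce}=0$. By Riemann's first bilinear relation on the polarized abelian fibres, $h|_{\cv}$ is positive definite (and agrees with the hyperK\"ahler fibre metric of \textbf{S9}); the Chern connection it induces on $\cv$ is the sub-bundle connection $\nabla^{H}$ of \textbf{S4}.

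Next I would invoke the standard second-fundamental-form identity. Writing $B\in\Omega^{1}(M^{\sharp},\mathrm{Hom}(\cv,\ce/\cv))$ for the second fundamental form of $\cv\hookrightarrow\ce$, one has
\[
\Theta_{\cv}=\Theta_{\ce}|_{\cv}-B^{*}\!\wedge B,\qquad
\Theta_{\ce/\cv}=\Theta_{\ce}|_{\ce/\cv}+B\wedge B^{*}.
\]
With $\Theta_{\ce}=0$ this gives $\Theta_{\cv}=-B^{*}\!\wedge B\le 0$ and $\Theta_{\ce/\cv}=B\wedge B^{*}\ge 0$. By \textbf{S7} and \textbf{S10}, $(\ce/\cv,\nabla^{Q})$ is canonically the holomorphic tangent bundle $TM^{\sharp}$ with the special K\"ahler metric and its Levi-Civita connection, so the holomorphic bisectional curvature of the special K\"ahler metric is semi-positive; tracing over $TM^{\sharp}$ gives $R_{i\bar\jmath}\ge 0$.

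For the ``iff'' statement I would use Griffiths' infinitesimal period relation, which identifies $B\colon TM^{\sharp}\otimes\cv\to\ce/\cv\cong TM^{\sharp}$ with the symmetric cubic form $T_{ijk}=\partial^{3}\cf/\partial a^{i}\partial a^{j}\partial a^{k}$ of \textbf{S11}, contracted with the Hodge metric. Thus $R_{i\bar\jmath}\equiv 0$ forces $B\equiv 0$, hence $T_{ijk}\equiv 0$, whence $\boldsymbol{\tau}_{ij}=\partial^{2}\cf/\partial a^{i}\partial a^{j}$ is locally constant and the metric $g_{i\bar\jmath}=2\,\mathrm{Im}\,\boldsymbol{\tau}_{ij}$ of \eqref{spekeke} is constant in the flat special coordinates, i.e.\ $M^{\sharp}$ is locally flat. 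The one delicate point --- and the main obstacle --- is the Griffiths identity with an \emph{indefinite} ambient $h$: one must check that its derivation still goes through provided $h|_{\cv}$ is positive definite, which is exactly the content of Riemann's first bilinear relation for the abelian fibres; once this sign-tracking is in place, all remaining steps are essentially bookkeeping.
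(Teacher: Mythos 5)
Your proof is correct and follows the same route as the paper: the paper's own argument is precisely to view $\cv$ as a holomorphic sub-bundle and $\ce/\cv\cong TM^\sharp$ as the quotient of the flat polarized bundle $\ce$, and to invoke the Griffiths sub-bundle/quotient-bundle curvature inequalities (the paper simply cites \cite{GH} p.\,79 and \cite{perbook} for this, while you write out the second-fundamental-form identity and correctly flag the sign-bookkeeping required by the \emph{indefinite} polarization form). The only cosmetic difference is in the ``iff'' clause: you deduce $R_{i\bar\jmath}\equiv0\Rightarrow B\equiv0\Rightarrow T_{ijk}\equiv0\Rightarrow\boldsymbol{\tau}$ constant, whereas the paper's alternative proof writes $\rho=\boldsymbol{\tau}^*\Omega$ with $\Omega$ the positive K\"ahler form on the period domain and concludes directly --- the two are the same statement, since $T_{ijk}$ is exactly the differential of the period map.
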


Let us give an alternative proof of the last statement.

\begin{proof}
In a K\"ahler manifold the Ricci form is\footnote{\ Cfr. \cite{Besse} eqn.(2.98).}
$\rho=-i\partial\bar \partial\log\det g$. Thus from \eqref{spekeke}
\be
\rho=-i\partial\bar\partial\log\det\mathrm{Im}\,\boldsymbol{\tau}_{kl}(a)\equiv \boldsymbol{\tau}^* \Omega
\ee
where $\Omega$ is the (positive) K\"ahler form on the locally Hermitian space
in eqn.\eqref{mumu}. Note that $R_{i\bar \jmath}=0$ only at critical points of the period map $\boldsymbol{\tau}$. By Sard theorem, the set of periods $\boldsymbol{\tau}_{ij}(a)$ at which $R_{i\bar \jmath}=0$ has zero measure. 
In particular $R_{i\bar \jmath}\equiv0$ means
$\boldsymbol{\tau}=$ (a constant map), so $M^\sharp$ is locally flat.  
\end{proof}

\begin{rem} This result may also be understood as follows. The total space of the holomorphic integrable system, $X$, is hyperK\"ahler, so carries a Ricci-flat metric. $M$ is a complex subspace, and the Ricci curvature of its induced metric is minus the curvature of the determinant of the normal bundle whose Hermitian metric is $(\det
\mathrm{Im}\,\boldsymbol{\tau})^{-1}$.
\end{rem}

\paragraph{Sectional and isotropic curvatures.}
From the above \textbf{Proposition} it  is pretty obvious that all sectional curvatures of a special K\"ahler metric are non-negative. A stronger property is that all its \emph{isotropic curvatures} are non-negative. Indeed, we claim an even stronger statement, that is, that the \emph{curvature operators} are non-negative at all points $p$.

\begin{defn} Let $X$ be a Riemannian $n$-fold with tangent space $T_pX$ at $p\in X$. The \textit{curvature operator} at $p$
is the self-adjoint linear operator
\be
\boldsymbol{R} \colon \wedge^2 T_pX\to \wedge^2 T_pX,
\ee  
given by the Riemann tensor. We say that $X$ has  
 \textit{positive} (resp.\! \textit{weakly positive}) \emph{curvature operators} iff the eigenvalues of $\boldsymbol{R}$ are positive (resp.\! non-negative) at all $p\in X$.
\end{defn}

The claim follows from the explicit form of the Riemann tensor
\be\label{cllae}
R_{i\bar j k \bar l}= T_{ik m}\,\bar T_{\bar j\bar l\bar n}\,g^{m\bar n},
\ee
where $T$ is the cubic symmetric form of the infinitesimal Hodge deformation (structure \textbf{S11}). 

\paragraph{Sphere theorems.} The positivity of the curvature operators has dramatic implications for the topology of $X$. We collect here some results which we shall use later in the paper:

\begin{thm}[Meyers \cite{meyers}]\label{meymey} Let $X$ be a complete Riemannian manifold of metric $g$ whose Ricci curvature satisfies $R\geq \lambda^2 g$ with $\lambda>0$ a constant. Then $X$ is compact with diameter $d(X)\leq \pi/\lambda$.
Applying the result to the Riemannian universal cover $\widetilde{X}$ of $X$,
we conclude that $\pi_1(X)$ is finite. 
\end{thm}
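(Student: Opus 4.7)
The plan is the classical Bonnet--Myers second-variation argument. By completeness and Hopf--Rinow, any two points $p,q \in X$ are joined by a minimizing unit-speed geodesic $\gamma:[0,L] \to X$ with $L = d(p,q)$. I would argue by contradiction: assuming $L > \pi/\lambda$, produce a variation of $\gamma$ with strictly negative second variation of length, contradicting minimality. Concretely, parallel-transport an orthonormal frame $\{E_1(t),\dots,E_{n-1}(t),\dot\gamma(t)\}$ along $\gamma$ with $E_i \perp \dot\gamma$, and take the variation fields $V_i(t) = \sin(\pi t/L)\,E_i(t)$, which vanish at the endpoints and are hence admissible proper variations.

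Plugging into the index formula and using the parallelism of the frame,
\[
I(V_i,V_i) = \int_0^L\! \left(\tfrac{\pi^2}{L^2}\cos^2\tfrac{\pi t}{L} - \sin^2\tfrac{\pi t}{L}\,\langle R(E_i,\dot\gamma)\dot\gamma,E_i\rangle\right) dt.
\]
Summing over $i=1,\dots,n-1$ one recognizes $\sum_i \langle R(E_i,\dot\gamma)\dot\gamma, E_i\rangle = \mathrm{Ric}(\dot\gamma,\dot\gamma)$; the hypothesis $R \geq \lambda^2 g$ bounds this below by $(n-1)\lambda^2$ along the unit tangent (with the dimensional convention implicit in the statement). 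Combined with $\int_0^L \cos^2(\pi t/L)\,dt = \int_0^L \sin^2(\pi t/L)\,dt = L/2$, this gives
\[
\sum_{i=1}^{n-1} I(V_i,V_i) \leq \tfrac{n-1}{2L}\!\left(\pi^2 - \lambda^2 L^2\right) < 0
\]
whenever $L > \pi/\lambda$. Hence $I(V_i,V_i) < 0$ for some $i$, contradicting the length-minimality of $\gamma$ (for which the index form must be positive semi-definite on all vanishing variations). Therefore $\mathrm{diam}(X) \leq \pi/\lambda$, and combined with completeness, Hopf--Rinow delivers that $X$ is compact.

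For the fundamental-group assertion, the Riemannian universal cover $\tilde X \to X$ is by construction a local isometry, so the curvature hypothesis pulls back and $\tilde X$ is itself complete with the same lower Ricci bound; by the first part, $\tilde X$ is compact. Since $\pi_1(X)$ acts freely, properly discontinuously, and isometrically on $\tilde X$ by deck transformations, the orbit of any point has cardinality $|\pi_1(X)|$ and is a discrete closed subset of the compact manifold $\tilde X$, hence finite. The whole argument is classical and poses no real obstacle; the only design choice is the sinusoidal profile $\sin(\pi t/L)$, calibrated to vanish at the endpoints while optimally balancing the kinetic term against the Ricci-curvature term in the index form.
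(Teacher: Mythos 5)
Your argument is the classical Bonnet--Myers second-variation proof, and it is correct; the paper does not prove this theorem but simply cites Myers' 1941 result, so there is no ``paper proof'' to compare against. One remark worth making explicit rather than burying in a parenthetical: from the literal hypothesis $\mathrm{Ric}\geq\lambda^2 g$, summing the $n-1$ index-form inequalities gives $\sum_i I(V_i,V_i)\leq\frac{1}{2L}\bigl[(n-1)\pi^2-\lambda^2 L^2\bigr]$, which yields $\mathrm{diam}\leq\pi\sqrt{n-1}/\lambda$, not $\pi/\lambda$. The cleaner bound $\pi/\lambda$ stated in the paper is the one that follows from the convention of Myers' original paper, where the $R$ in $R\geq\lambda^2 g$ denotes the ``mean curvature'' $\frac{1}{n-1}\mathrm{Ric}$; alternatively one absorbs $\sqrt{n-1}$ into $\lambda$. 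You noticed this and quietly inserted the factor $(n-1)$ with the phrase ``with the dimensional convention implicit in the statement,'' which is the right patch, but it deserves to be said plainly since the paper's later use of this bound (e.g.\ $d(B)\leq\pi/\sqrt{2(k-1)}$ from $B_{ab}\geq 2(k-1)\gamma_{ab}$) depends on the same tacit normalization. For the qualitative claims actually used in the paper---compactness and finiteness of $\pi_1$---the constant is immaterial, and your deduction of finiteness of $\pi_1$ by pulling the curvature bound up to the universal cover and invoking compactness and proper discontinuity of the deck action is exactly right.
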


\begin{rem}
There is a version  of Meyers theorem which applies to orbifolds, see \textbf{Corollary 21} in \cite{mey1} or \textbf{Corollary 2.3.4} in \cite{mey2}. In case of Riemannian orbifolds \emph{complete} should be understood as  \emph{complete as a metric space.} The version in \cite{mey2} states that a metrically complete Riemannian orbifold $X$, whose Ricci curvature satisfies $R\geq \lambda^2 g$, is compact with a diameter $d(X)\leq \pi/\lambda$. 
\end{rem}

\begin{thm}[Synge \cite{synge}]\label{zxaqw} An even dimensional compact orientable manifold with positive sectional curvature is simply-connected.
\end{thm}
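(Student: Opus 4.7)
The plan is a classical closed-geodesic variational argument. I would argue by contradiction, assuming $\pi_1(X)\neq 1$, and produce a closed geodesic whose length can be decreased by a second-variation computation, which is impossible for a minimizer in its free homotopy class.

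First, I would pick a nontrivial free homotopy class of loops in $X$ and, using that $X$ is compact, invoke the standard direct-method argument (Arzel\`a--Ascoli applied to a length-minimizing sequence, together with lower semicontinuity of length) to obtain a shortest smooth closed geodesic $\gamma\colon[0,L]\to X$ in that class. Let $P\colon T_{\gamma(0)}X\to T_{\gamma(0)}X$ denote parallel transport around $\gamma$. Because $\gamma$ is a geodesic, $P$ fixes the unit tangent $\dot\gamma(0)$, so $P$ restricts to a linear isometry $P'$ of the $(n-1)$-dimensional orthogonal complement $V\subset T_{\gamma(0)}X$.

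Next comes the linear-algebra point where even-dimensionality and orientability are used. Since $X$ is orientable, parallel transport along any loop preserves orientation, so $\det P=+1$ and hence $\det P'=+1$. Since $n=\dim X$ is even, $\dim V=n-1$ is odd, so the characteristic polynomial of $P'$ has odd degree and positive leading coefficient; combined with $\det P'=+1$ this forces $P'$ to have an eigenvalue $+1$ (its spectrum pairs complex-conjugate unit-modulus eigenvalues, leaving an odd number of real eigenvalues whose product is $+1$, hence at least one equals $+1$). Parallel-transporting a corresponding unit eigenvector around $\gamma$ produces a parallel, periodic vector field $W(t)$ along $\gamma$ with $W\perp\dot\gamma$ and $|W|\equiv 1$.

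The final step is the second variation of arc-length. Extending $W$ to a variation of $\gamma$ through closed curves in the same free homotopy class, the Synge formula gives
\[
L''(0)=\int_0^L\!\Big(|\nabla_{\dot\gamma}W|^2-\langle R(W,\dot\gamma)\dot\gamma,W\rangle\Big)\,dt=-\int_0^L K(W,\dot\gamma)\,dt<0,
\]
since $\nabla_{\dot\gamma}W=0$ and the sectional curvature $K(W,\dot\gamma)$ is strictly positive. Hence $\gamma$ admits a length-decreasing variation in its free homotopy class, contradicting its length-minimizing property. Therefore $\pi_1(X)=1$.

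The main obstacle is the linear-algebra step: one must use \emph{both} hypotheses (even dimension and orientability) to force a $+1$ eigenvalue of $P'$, and one must verify that $W$ can be realized as the variation field of an honest variation through closed loops homotopic to $\gamma$ (which is automatic since $W$ is periodic). Everything else is standard Riemannian calculus; the existence of the minimizing closed geodesic is the usual compactness argument for compact Riemannian manifolds.
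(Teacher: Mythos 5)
Your proof is correct and is the standard textbook argument for Synge's theorem. Note that the paper does not actually supply a proof of this result; it is stated as a classical theorem and cited to Synge's original work, so there is no ``paper's proof'' to compare against. Your argument---minimizing closed geodesic in a nontrivial free homotopy class, the parallel-transport determinant forced to $+1$ by orientability, the $+1$-eigenvector on the odd-dimensional orthogonal complement, and the negative second variation from $\nabla_{\dot\gamma}W=0$ and $K>0$---is exactly the canonical route, and every step is sound.
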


\begin{rem} Again the result extends to Riemannian orbifolds, see \textbf{Corollary 2.3.6} of \cite{mey2}, so that an even dimensional orientable complete Riemannian orbifold with positive sectional curvature is simply connected.
\end{rem}

\begin{thm}[Micallef-Moore \cite{MM},  B\"ohm-Wilking \cite{BW}]\label{jkkkkmzq162} Let $X$ be a compact $n$-dimensional Riemannian \emph{orbifold}. If $X$ has positive curvature operators it is diffeomorphic to a space form $S^n/G$, $S^n$ being the sphere and $G$ a finite subgroup of $SO(n+1)$.
\end{thm}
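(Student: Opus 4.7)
The plan is to follow the Ricci flow approach of Böhm--Wilking \cite{BW}, extended to the orbifold setting. First I would invoke the existence and short-time uniqueness of Hamilton's Ricci flow $\partial_t g = -2\,\mathrm{Ric}(g)$ on the compact Riemannian orbifold $X$ with initial metric of positive curvature operator. For orbifolds this works because Ricci flow is a local (quasi-)parabolic PDE: one runs it on the smooth orbifold charts and checks equivariance under the local isotropy groups, so it descends to a flow on $X$ itself. Boundedness of geometry is automatic since $X$ is compact.

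Next I would show that the positivity of the curvature operator is preserved under the flow. Hamilton's maximum principle for tensors applies pointwise in the smooth locus and extends through the singular strata by the usual isotropy-equivariant argument (one works on the smooth local cover, where the flow is standard, and takes invariants). The evolution of the curvature operator $\boldsymbol{R}$ under Ricci flow is
\begin{equation*}
\partial_t \boldsymbol{R} = \Delta \boldsymbol{R} + \boldsymbol{R}^2 + \boldsymbol{R}^\sharp,
\end{equation*}
so the problem reduces to finding a family of $O(n)$-invariant convex cones in the space of algebraic curvature operators which are preserved by the corresponding ODE $\dot{\boldsymbol{R}} = \boldsymbol{R}^2 + \boldsymbol{R}^\sharp$ and which pinch toward a positive multiple of the identity. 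This is exactly the content of the Böhm--Wilking pinching set construction: they exhibit a continuous family of such invariant cones interpolating between the cone of positive curvature operators and the ray $\{\,c\cdot \mathrm{id}\,\}$.

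I would then combine these ingredients: the maximum principle applied to the Böhm--Wilking invariant cones forces the (appropriately rescaled) curvature operator of the evolved metric to become arbitrarily close, pointwise, to a positive constant multiple of the identity, i.e.\ the evolved metric becomes arbitrarily pinched toward constant sectional curvature. Standard Ricci flow convergence (Hamilton's compactness theorem, again with its obvious orbifold extension, since everything is local and equivariant) then gives that the normalized flow converges smoothly to a metric $g_\infty$ of constant positive sectional curvature on $X$. Such a metric makes $X$ a complete Riemannian orbifold of constant curvature, hence by the orbifold Killing--Hopf theorem $X$ is isometric (and in particular diffeomorphic) to a spherical space form $S^n/G$ with $G \subset SO(n+1)$ finite.

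The main obstacle is verifying that Böhm--Wilking's pinching argument really does survive the passage to orbifolds. The potential worry is not the parabolic PDE (which is local) but rather the global maximum principle across singular strata and the orbifold version of Hamilton's compactness theorem for Ricci flows. Both are now well established in the literature on orbifold Ricci flow (following Kleiner--Lott and others), so the extension is essentially bookkeeping: one works on local smooth covers, where all of Böhm--Wilking's arguments apply verbatim and commute with the isotropy action. Once this is accepted, the proof reduces entirely to the compact smooth case.
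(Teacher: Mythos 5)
The paper does not actually prove this statement: it is quoted verbatim as a result from the literature, with the references to Micallef--Moore \cite{MM} and Böhm--Wilking \cite{BW} doing the work. There is therefore no ``paper's own proof'' to compare your sketch against. That said, your outline is the correct one and matches the strategy of the cited Böhm--Wilking paper. You have identified the right ingredients: short-time existence and uniqueness of Ricci flow on a compact Riemannian orbifold (constructed locally on uniformizing charts and descended by isotropy equivariance), preservation of positivity of the curvature operator via Hamilton's tensor maximum principle, the Böhm--Wilking family of $O(n)$-invariant preserved cones that pinch toward multiples of the identity, convergence of the normalized flow to a constant-curvature orbifold metric, and finally the orbifold Killing--Hopf theorem to conclude $X \cong S^n/G$. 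One small clarification worth noting: the diffeomorphism conclusion is really due to Böhm--Wilking; the Micallef--Moore reference concerns positive isotropic curvature (which positive curvature operator implies) and yields only a homeomorphism result in the simply connected case, via harmonic sphere methods rather than Ricci flow. Your proposal correctly concentrates on the Ricci-flow route needed for the stronger diffeomorphism statement, and the locality-plus-equivariance argument you invoke for passing from manifolds to orbifolds is the standard and accepted one.
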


The special K\"ahler manifolds $M$ have just \emph{weakly} positive curvature operators (and are typically non-compact).
However, taking \textbf{Theorems 1}, \textbf{3} together, one gets the rough feeling that the non-flat special K\"ahler manifolds are ``close'' to being locally spheres. The statement will become precise under the assumption that $M$ is also a cone.

\subsubsection{Behavior along the discriminant}\label{jjzqwe}

We need to understand the behavior near the discriminant locus
$\cd\subset M$ where the fiber degenerates, that is, some periods $(a_i,b^j)$ vanish. Physically this means that along the discriminant locus some additional light degrees of freedom appear, so that the IR description in terms of the massless fields parametrizing $M$ becomes incomplete and breaks down.

The singular behavior is best understood in terms of properties of the period map $\boldsymbol{\tau}$. We see the discriminant $\cd$ as an effective divisor $\cd=\sum_i n_i S_i$, where $S_i$ are the irreducible components and $M^\sharp=M\setminus\mathrm{Supp}\,\cd$. The behavior of the period map as we approach a generic point $s$ of an irreducible component $S_i$ is described by three fundamental results: the \textbf{strong monodromy Theorem} \cite{grif,grifpaper,perbook},
the \textbf{$SL_2$-orbit Theorem} \cite{schmid}, and the \textbf{invariant cycle Theorem} \cite{schmid}.
In a neighborhood $U$ of $s\in S_i$,
we may find complex coordinates
$z_1,\cdots, z_k$ so that, locally in $U$, 
$S_i$ is given by $z_1=0$. Then
we have $U\cap M^\sharp\cong \Delta^*\times \Delta^{k-1}$
where $\Delta$ (resp.\! $\Delta^*$) stands for the unit disk (resp.\! the punctured unit disk). 
We write $p$ for the period map
$\boldsymbol{\tau}$ restricted to
$\Delta^*\times (z_2,\cdots,z_k)\subset \Delta^*\times \Delta^{k-1}$,
and $\mathfrak{h}$ for the upper half-plane seen as the universal cover of $\Delta^*$ via the map $\tau\mapsto q(\tau)\equiv e^{2\pi i\tau}$.
 We have the commutative diagram
\be\label{kkkazqw}
\begin{gathered}
\xymatrix{\mathfrak{h}\ar[d]_q\ar[rrr]^{\tilde p} &&& Sp(2k,\R)/U(k)\ar[d]^{\text{can}}\\
\Delta^*\ar[rrr]^p &&& Sp(2k,\Z)\backslash Sp(2k,\R)/U(k)}
\end{gathered}
\ee
where $\tilde p$ is the lift of the (restricted) period map. Let $\gamma$ be the generator of $\pi_1(\Delta^*\times \Delta^{k-1})\cong\Z$ and $m\equiv m(\gamma)$ the corresponding monodromy element (cfr.\! \textbf{S3}).
Then
\be\label{iiixq2}
\tilde p(\tau+1)=m\cdot \tilde p(\tau).
\ee
Let $d(\cdot,\cdot)$ be the distance function defined by the standard invariant metric on the symmetric space $Sp(2k,\R)/U(k)$ and $d_P(\cdot,\cdot)$ the distance with respect to the usual Poincar\'e metric in $\mathfrak{h}$; the inequalities on the curvatures together with the Schwarz lemma imply
\begin{pro}[Strong monodromy theorem \cite{grif,grifpaper,perbook}]\label{monthm} The lifted period map $\tilde p$ is distance-decreasing
\be
d(\tilde p(x), \tilde p(y))\leq d_P(x,y).
\ee
Then the monodromy $m$ is \emph{quasi-unipotent,} i.e.\! there are minimal integers $r\geq 1$, $0\leq s\leq k$ such that
\be
(m^r-1)^{s+1}=0.
\ee
Equivalently (by Kronecker theorem) $m$ has spectral radius $1$ (Mahler measure 1). 
\end{pro}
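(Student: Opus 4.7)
The plan is to prove the distance-decreasing property first and then read off the quasi-unipotency of the monodromy as a direct consequence.

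First, I would establish the distance-decreasing statement by applying the Ahlfors--Schwarz lemma. The key input is that along Griffiths-horizontal directions the period domain $Sp(2k,\R)/U(k)$ has holomorphic sectional curvature bounded above by a strictly negative constant; this follows from the curvature computations implicit in \textbf{S5}--\textbf{S9} (equivalently, from the sign of the curvature of the Hodge bundle $\cv$ established in \textbf{Proposition \ref{jjjas}}, which forces the induced Hermitian metric on any Griffiths-horizontal holomorphic curve to have Gaussian curvature $\leq -c<0$ after a fixed normalization). Since the period map $\tilde p\colon\mathfrak{h}\to Sp(2k,\R)/U(k)$ is holomorphic and, by the transversality condition of a variation of Hodge structure, automatically Griffiths-horizontal, its image-pulled-back metric is a Hermitian pseudo-metric on $\mathfrak{h}$ with curvature $\leq -c$. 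Normalizing the invariant metric on the symmetric space so that the horizontal curvature bound matches the constant curvature of the Poincar\'e metric on $\mathfrak{h}$, the Ahlfors--Schwarz lemma yields $\tilde p^*(d s^2_{\text{inv}})\leq ds^2_P$, and integrating along geodesics of $\mathfrak{h}$ gives $d(\tilde p(x),\tilde p(y))\leq d_P(x,y)$.

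Next I would deduce quasi-unipotency. Applying the distance-decreasing inequality to the points $x=\tau$ and $y=\tau+1$ and using the equivariance relation \eqref{iiixq2} gives
\be
d\bigl(\tilde p(\tau),\,m\cdot\tilde p(\tau)\bigr)\;\leq\;d_P(\tau,\tau+1).
\ee
Specializing to $\tau=iy$ with $y\to\infty$ we have $d_P(iy,iy+1)=1/y\to 0$. Hence the translation length (= infimum of the displacement function $x\mapsto d(x,m\cdot x)$) of $m$ acting on the symmetric space of non-compact type $Sp(2k,\R)/U(k)$ vanishes.

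Now I invoke the structure theory of isometries of symmetric spaces of non-compact type. For a semisimple element $m\in Sp(2k,\R)$, the translation length is bounded below by $\bigl(\sum_i(\log|\mu_i|)^2\bigr)^{1/2}$ (up to a positive normalization constant), where $\{\mu_i\}$ are the eigenvalues of $m$; thus vanishing translation length forces $|\mu_i|=1$ for all eigenvalues. The non-semisimple case is handled by considering the Jordan decomposition $m=m_s m_u$ and noting that $m_s$ inherits zero translation length, so again all eigenvalues of $m$ lie on the unit circle. Since $m\in Sp(2k,\Z)$ its characteristic polynomial has integer coefficients, and by Kronecker's theorem every algebraic integer all of whose Galois conjugates lie in the closed unit disk is a root of unity; therefore all eigenvalues of $m$ are roots of unity and $m$ is quasi-unipotent. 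Choosing $r$ so that $m^r$ has every eigenvalue equal to $1$, the Cayley--Hamilton theorem applied to the $2k\times 2k$ unipotent matrix $m^r$ gives $(m^r-1)^{s+1}=0$ for some $0\leq s\leq 2k-1$; the sharper bound $s\leq k$ follows from the existence of the isotropic flag preserved by $m^r$ dictated by the polarization (the weight filtration attached to a nilpotent endomorphism of a polarized Hodge structure of weight $1$ has length $\leq 2$, limiting the Jordan block size to $k$, or equivalently by Schmid's analysis in the proof of the $SL_2$-orbit theorem).

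The main obstacle is getting the curvature hypothesis of Ahlfors--Schwarz in the correct form, since the invariant metric on $Sp(2k,\R)/U(k)$ does \emph{not} have negative holomorphic sectional curvature in arbitrary directions (only along Griffiths-horizontal ones); verifying that the period map is automatically horizontal and that the horizontal sectional curvatures admit a uniform negative upper bound is the technical heart, and everything else follows formally.
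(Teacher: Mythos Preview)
Your approach is correct and coincides with the one the paper indicates: immediately before the proposition the paper says ``the inequalities on the curvatures together with the Schwarz lemma imply'' the result, which is exactly your Ahlfors--Schwarz argument followed by the displacement estimate $d(\tilde p(iy),m\cdot\tilde p(iy))\to 0$ and Kronecker's theorem. The paper does not spell out a proof beyond that one-line hint, so there is nothing further to compare.

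Two small remarks on your write-up. First, in the weight-$1$ setting the period domain $Sp(2k,\R)/U(k)$ is already Hermitian symmetric of non-compact type, so every tangent direction is Griffiths-horizontal and the holomorphic sectional curvature is uniformly negative; your careful discussion of horizontality is needed only for higher-weight period domains and can be omitted here. Second, your justification of the bound on $s$ is slightly tangled: the weight-$1$ monodromy-weight filtration having length $\leq 2$ actually gives $N^2=0$, i.e.\ $s\leq 1$ (as the paper itself notes a few lines after the proposition), which is much stronger than the stated $s\leq k$; the latter is just a coarse bound and does not require the weight filtration at all.
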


All eigenvalues of $m$ are $r$-th roots of unit. Since $m\in Sp(2k,\Z)$, its minimal polynomial $M(z)$ is a product of cyclotomic polynomials $\Phi_d(z)$
\be
M(z)=\prod_{d\mid r} \Phi_d(z)^{s_d},\qquad s_d\in\mathbb{N}.
\ee  
The monodromy $m$ is \emph{semi-simple} iff $s=0$, that is, if $s_d\in\{0,1\}$ for all $d$. We say that $m$ is \emph{regular} iff all its eigenvalues are distinct, i.e.\! iff $s_d\in\{0,1\}$ and
$\sum_{d\mid r} s_d=2k$.

\paragraph{The case of $m$ semi-simple.} 
Semi-simplicity of $m$ has the following consequence: 

\begin{pro}[\!\!\cite{grif,grifpaper,perbook}]\label{exeeeat}
The period map $p\colon \Delta^*\to Sp(2k,\Z)\backslash Sp(2k,\R)/U(k)$ may be extended holomorphically to the origin if and only if $s=0$.
\end{pro}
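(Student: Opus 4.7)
The strategy is to identify the holomorphic extension of $p$ with a finite-order condition on the monodromy $m$. Given such an equivalence, the $s=0$ statement follows from \textbf{Proposition \ref{monthm}}: $m$ is quasi-unipotent, so if $m$ has finite order $n$ then its minimal polynomial divides the squarefree polynomial $z^n-1$, forcing $m$ semisimple (i.e.\! $s=0$); conversely, a quasi-unipotent semisimple element has all eigenvalues roots of unity and is therefore itself of finite order.

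For the direction $s=0\Rightarrow$ extension, let $r$ be the order of $m$. Consider the $r$-fold cover $\rho\colon\Delta^*_w\to\Delta^*_z$, $w\mapsto w^r$. Parameterizing the universal cover by $\tau\mapsto w=e^{2\pi i\tau/r}$, the formula $\hat p(w):=\tilde p(\tau)$ defines a single-valued holomorphic lift $\hat p\colon\Delta^*_w\to Sp(2k,\R)/U(k)$ of $p\circ\rho$, because the monodromy under $\tau\mapsto\tau+r$ is $m^r=\mathbf{1}$. Since Siegel's upper half-space is biholomorphic to a bounded symmetric domain in $\C^{k(k+1)/2}$, the coordinate functions of $\hat p$ are bounded holomorphic functions on the punctured disk and extend across $w=0$ by Riemann's removable singularity theorem. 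That the limit $\hat p(0)$ lies in the open period domain, rather than on its boundary in the compact dual, is precisely Schmid's nilpotent orbit theorem in the case $N=0$: a polarized VHS over $\Delta^*$ with trivial monodromy extends to a polarized VHS over $\Delta$. The deck equivariance $\hat p(\zeta_r w)=m\,\hat p(w)$ makes $\pi\circ\hat p$ invariant under $w\mapsto\zeta_r w$, so it descends to a holomorphic extension $\bar p\colon\Delta_z\to Sp(2k,\Z)\backslash Sp(2k,\R)/U(k)$ of $p$.

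For the converse, assume $\bar p\colon\Delta\to X$ extends $p$ holomorphically and fix a lift $\tilde x_0\in Sp(2k,\R)/U(k)$ of $\bar p(0)$. For any sequence $\tau_j\in\mathfrak{h}$ with $\mathrm{Im}\,\tau_j\to\infty$, the values $p(e^{2\pi i\tau_j})=[\tilde p(\tau_j)]$ converge to $\bar p(0)$ in $X$, so by properness there exist $g_j\in Sp(2k,\Z)$ with $g_j\,\tilde p(\tau_j)\to\tilde x_0$ in $Sp(2k,\R)/U(k)$. The distance-decreasing property of \textbf{Proposition \ref{monthm}} gives $d\bigl(\tilde p(\tau_j),m\,\tilde p(\tau_j)\bigr)=d\bigl(\tilde p(\tau_j),\tilde p(\tau_j+1)\bigr)\leq d_P(\tau_j,\tau_j+1)\to 0$. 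Since $g_j$ acts by isometries and $g_jm=(g_jmg_j^{-1})\,g_j$, the triangle inequality yields $d\bigl(\tilde x_0,(g_jmg_j^{-1})\,\tilde x_0\bigr)\to 0$. The orbit $Sp(2k,\Z)\cdot\tilde x_0$ is discrete in the symmetric space by proper discontinuity, so for $j$ sufficiently large $(g_jmg_j^{-1})\,\tilde x_0=\tilde x_0$, i.e.\! $g_jmg_j^{-1}$ lies in the stabilizer $G:=\mathrm{Stab}(\tilde x_0)\subset Sp(2k,\Z)$. This $G$ is finite (again by proper discontinuity), so $m$ is conjugate to a finite-order element and is therefore of finite order itself. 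Hence $s=0$.

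The only delicate point is in the forward direction: Riemann's removable singularity theorem controls a priori only the closure of the image of $\hat p$, and Schmid's theorem is genuinely needed to guarantee that the extended value $\hat p(0)$ lives in the open period domain rather than on its boundary in the compact dual \eqref{uuuuuszx}. The converse is essentially a reduction-theory argument: the distance-decreasing property combined with the discreteness of the $Sp(2k,\Z)$-orbits forces the monodromy to enter the finite stabilizer of a limit point, which is impossible unless $m$ is already of finite order.
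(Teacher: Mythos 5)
The paper states this Proposition purely as a citation to \cite{grif,grifpaper,perbook} and supplies no internal argument, so there is nothing in the text to match against; your proof is correct and is the canonical one from those references. The reduction of ``$s=0$'' to ``$m$ has finite order'' is right given the minimality built into \textbf{Proposition~\ref{monthm}}: $s=0$ means $m^r=1$, and conversely finite order makes the minimal polynomial divide a squarefree cyclotomic product, so $m$ is semisimple and the minimal $s$ is $0$. In the forward direction you correctly flag the real content: after the $r$-fold base change, the composition with the Cayley map is bounded so Riemann's removable singularity theorem gives an extension with value in the \emph{closed} Siegel disk, and it is the nilpotent orbit theorem in the $N=0$ case that forces the limit into the \emph{open} domain rather than its boundary in the compact dual; this is exactly what the cited sources invoke (Borel's extension theorem for arithmetic quotients of bounded symmetric domains is an alternative and marginally lighter tool in the Siegel setting, but either is in those references). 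The converse is the standard reduction-theory argument: the distance-decreasing inequality of \textbf{Proposition~\ref{monthm}} sends $d\bigl(\tilde p(\tau_j),m\,\tilde p(\tau_j)\bigr)\to 0$, proper discontinuity lets you shift lifts so that $g_j\,\tilde p(\tau_j)$ converges, and discreteness of the $Sp(2k,\Z)$-orbit combined with finiteness of stabilizers then traps $g_j m g_j^{-1}$ in a finite group, so $m$ has finite order.
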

\noindent In other words, along an irreducible component $S_i$ of $\cd$ whose monodromy element $m$ is semi-simple the period matrix $\boldsymbol{\tau}_{ij}$ is defined and regular even if the Abelian fiber itself degenerates.
The K\"ahler metric $d^2s= 2\,\mathrm{Im}\,\boldsymbol{\tau}_{ij}\,da^i\otimes d\bar a^j$
 is singular along $S_i$ since the periods $a^i$ are not valid local coordinates at this locus. The singularity is of the mildest possible kind: just a cyclic orbifold singularity. We illustrate the situation along a semi-simple component $S_i$ of the discriminant $\cd$ in a typical example.

\begin{exe}[Non-Lagrangian $k=1$ SCFT\footnote{\ This is a special case of \textbf{Example \ref{jza183}}.}]\label{reqasx}
In these $k=1$ models the period of the elliptic fiber $\tau$ is frozen in a orbifold (elliptic) point of the modular fundamental domain $\mathfrak{h}/SL(2,\Z)$, i.e.\!
either $\tau=e^{2\pi i/3}$  or $\tau=i$ depending on the model. Thus the K\"ahler metric is flat 
and $M$ is locally isometric to $\R^2$
\be
ds^2=2\,\mathrm{Im}\,\tau\,da\,d\bar a \xrightarrow{\ \text{\emph{local} isometry}\ } dr^2+r^2\,d\theta^2,\qquad r^2=2\,\mathrm{Im}\,\tau\,|a|^2.
\ee
The coordinate $r$ is globally defined, since $r^2$ is the momentum map of the $U(1)$ action given by $R$-symmetry. On the contrary, the period of the canonically conjugate angle, $\theta$ needs not to be $2\pi$ (which corresponds to the free SCFT). The period of the angle $\theta$ is related to the Coulomb dimension $\Delta$
by the identification  $\theta\sim \theta+2\pi/\Delta$. Hence, if the theory is not free, $\Delta\neq1$, at the tip of the cone we have a cyclic orbifold singularity.
We note that the unitary bound $\Delta\geq 1$ (with equality iff the SCFT is free) becomes $\text{(period of $\theta$)}\leq  2\pi$. Thus unitarity requires the curvature at the tip to be \emph{non-negative} and we may smooth out the geometry by cutting away the region $r\leq \epsilon$ and gluing back a \emph{positively} curved disk. This is consistent with our discussion of the curvature in special geometry in \S.\ref{uuzaqf}. This example shows that the 
curvature inequalities apply also to the $\delta$-function curvature concentrated at orbifold points and their relation to physical unitarity. 
\end{exe}

We state this as a
\begin{prin} The unitarity bounds  guarantee that the $\delta$-function curvatures associated to the angular deficits at orbifold points are consistent with the positivity of curvatures required by special K\"ahler geometry. 
\end{prin}

We quote another useful  result:

\begin{pro}[\!\!\cite{perbook}]\label{llxmz} Suppose that the period map $\boldsymbol{\tau}$ factors trough a quasi-projective variety $K$
\be\label{aazq0}
\begin{gathered}\xymatrix{M\ar[dr]\ar[rrr]^{\boldsymbol{\tau}} &&& Sp(2k,\Z)\backslash Sp(2k,\R)/U(k)\\
& K\ar@/_1.45pc/[urr]^p}\end{gathered}
\ee
and that the discriminant of $p$
is a \textsc{snc}\footnote{\ \textsc{snc} $=$ simple normal crossing.} divisor with semi-simple monodromies.
If $p$ is \underline{not} the constant map, 
$p$ is \emph{proper}.
Its image is a closed analytic subvariety containing $p(K^\sharp)$ as the complement of an analytic set.
\end{pro}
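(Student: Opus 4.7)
The plan is to extend $p$ holomorphically to all of $K$, then compactify and apply Borel's extension theorem to obtain a proper morphism between compact varieties, and finally descend to deduce properness of $p$.

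For the extension across the SNC divisor $D$, Proposition \ref{exeeeat} supplies a holomorphic extension of $p$ across the smooth locus of each irreducible component of $D$, since the monodromy there is semi-simple; applied to a transverse disk, it extends holomorphically with the tangential parameters serving as holomorphic parameters. The remaining indeterminacy lies in the self-intersection stratum of $D$, which has codimension $\ge 2$ in $K$. Because the target $Sp(2k,\Z)\backslash Sp(2k,\R)/U(k)$ is Kobayashi hyperbolic and $p$ is distance-decreasing by the strong monodromy theorem (Proposition \ref{monthm}), $p$ is locally bounded near this stratum, and a Riemann-type extension argument across a codimension $\ge 2$ analytic set yields a holomorphic map $p : K \to Sp(2k,\Z)\backslash Sp(2k,\R)/U(k)$.

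Next, by Hironaka I would embed $K$ in a smooth projective variety $\bar K$ so that the closure of $D$ together with $\bar K\setminus K$ forms an SNC divisor $E$ in $\bar K$. By Proposition \ref{monthm}, the monodromy of $p$ around each component of $E$ is quasi-unipotent. Borel's extension theorem then provides a morphism $\hat p : \bar K \to \overline{\mathcal A}$ into the Baily--Borel compactification of $Sp(2k,\Z)\backslash Sp(2k,\R)/U(k)$. As a morphism between projective varieties, $\hat p$ is proper, with image a closed algebraic subvariety of $\overline{\mathcal A}$. The non-constancy of $p$ forces $\hat p(\bar K\setminus K)$ to lie in the Baily--Borel boundary: otherwise $p$ would extend to a strictly larger open subset of $\bar K$ and, by iterating, eventually become constant on $\bar K$, contradicting non-constancy. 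Hence $\hat p^{-1}\bigl(Sp(2k,\Z)\backslash Sp(2k,\R)/U(k)\bigr) = K$, preimages of compact sets under $p$ are compact, and $p$ is proper. Remmert's proper mapping theorem then identifies $p(K)$ as a closed analytic subvariety, with $p(K^\sharp) = p(K)\setminus p(D)$ its complement modulo the analytic subset $p(D)$.

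The hardest step is the boundary analysis above: verifying that $\hat p$ sends the compactifying divisor $\bar K\setminus K$ into the Baily--Borel boundary. This is where the non-constancy hypothesis is essential, and it relies on the $SL_2$-orbit asymptotics (cited earlier in the paper) which force genuinely degenerate periods at any boundary component around which the monodromy acts non-trivially.
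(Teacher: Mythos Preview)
The paper does not give its own proof; the proposition is quoted from \cite{perbook} without argument, so there is nothing in-paper to compare against directly.

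Your overall architecture---extend $p$ across the \textsc{snc} discriminant via the semi-simple hypothesis, compactify $K$ to a projective $\bar K$, and apply Borel's extension theorem to land in the Baily--Borel compactification---is the standard one, and those steps are fine. The genuine gap is exactly where you locate it: the boundary claim. The sentence ``otherwise $p$ would extend to a strictly larger open subset of $\bar K$ and, by iterating, eventually become constant on $\bar K$'' is a non-sequitur. Extending the domain does not force constancy; if every component of $\bar K\setminus K$ happened to map into the interior you would simply obtain a holomorphic $\hat p\colon \bar K\to \mathcal A_k$ from a projective source, and for $k\ge 2$ the space $\mathcal A_k=Sp(2k,\Z)\backslash\mathfrak{H}_k$ contains positive-dimensional compact subvarieties (e.g.\ compact Shimura curves), so non-constancy of $\hat p$ is no obstruction.

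In fact the proposition as literally stated seems to need an extra hypothesis. Take $K=\mathbb{C}^\times$ with empty discriminant and $p$ the restriction of the $j$-isomorphism $\mathcal A_1\cong\mathbb{C}$: this is a non-constant period map that is not proper and whose image is not closed. The standard properness theorem of Griffiths requires the local monodromy around each component of $\bar K\setminus K$ to have \emph{infinite} order, so that the $SL_2$-orbit asymptotics genuinely drive the period to the Baily--Borel boundary; the semi-simple hypothesis in the proposition concerns only the discriminant \emph{inside} $K$, not the boundary at infinity. In the paper's actual application $K=M/\!\!/U(1)_R$ is shown elsewhere to be a \emph{projective} log-Fano, so $\bar K=K$, no boundary analysis is needed, and properness is immediate---that is the context in which the statement should be read.
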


\paragraph{$m$ non semi-simple.} We now turn to the case in which the monodromy $m$ is \emph{not} semi-simple. Again we consider the neighborhood $U\cong\Delta^*\times \Delta^{k-1}$ considered around eqn.\eqref{kkkazqw} and
pull-back all structures to its universal cover $U_\text{uni}\cong\mathfrak{h}\times \Delta^{k-1}$. By the strong monodromy theorem there exist minimal integers $r\geq 1$, $s\geq 0$ such that $(m^r-1)^{s+1}=0$. In the non semi-simple case $s\geq1$. Then,
\begin{align}\label{kkaqwer2}
&\tilde p(\tau+r)= (1+T)\cdot \tilde p(\tau)\quad \text{with }T=m^r-1\ \text{and }T^{s+1}=0,\\ 
&\textrm{so that}
\ \ N\equiv \log(1+T)=\sum_{n=1}^s \frac{(-1)^{n-1}}{n}\,T^n\ \ \text{is well defined}.
\end{align}
$N^s\neq0$ and $N^{s+1}=0$.
The nilpotent operator $N$ defines the weight filtration of a mixed Hodge structure in the sense of Deligne \cite{deligne1} to which we shall return momentarily; more elementarily, by the Jacobson-Morozov theorem \cite{J1,J2,J3} the rational matrix $N$ defines a polynomial homomorphism
$\phi\colon SL(2,\BQ)\to Sp(2k,\BQ)$
such that $N$ is the image of the raising operator of the $\mathfrak{sl}(2,\BQ)$ Lie algebra. $\phi$ induces a period map
$\mathring{p}\colon\mathfrak{h}\to Sp(2k,\R)/U(k)$ which is the simplest solution to the functional equation \eqref{kkaqwer2}:
\be\label{kkkaq12}
\mathring{p}(\tau)= e^{\tau N/r}\cdot p_0\qquad p_0\ \text{globally defined in $U_r$},
\ee
where $\sigma\colon U_r\to U$ is the local $r$-fold cover  
\be
\sigma\colon U_r\equiv\mathfrak{h}/(\tau\sim\tau+r)\times \Delta^{k-1}\longrightarrow \mathfrak{h}/(\tau\sim\tau+1)\times \Delta^{k-1}\cong U.
\ee
The $SL_2$-orbit theorem \cite{schmid} states that the actual period map $p$ differs from the Lie-theoretic  map $\mathring{p}(\tau)$ by exponentially small terms $O(q^{1/r})$ as $\tau\to i\infty$ ($q\equiv e^{2\pi i\tau}$). A physicists studying the corresponding (2,2) supersymmetric $\sigma$-model states the theorem saying that $\mathring{p}(\tau)$ is the perturbative solution, valid asymptotically as the coupling $4\pi/\mathrm{Im}\,\tau\to 0$, and this perturbative solution receives corrections only by instantons which are suppressed by the exponentially small (fractional) instanton counting parameter $q^{1/r}$. To physicists working in 4d $\cn=2$ QFT , the $SL_2$-orbit theorem is familiar as a fundamental result by Seiberg \cite{seib}. 
\smallskip

Let $0\neq x\in U_r$; $\phi$ decomposes $\Gamma_x\otimes \BQ$ into irreducible representations of $SL_2(\BQ)$; the highest weight $\BQ$-cycles $\psi$ are defined by the condition $N\psi=0$; all other $\BQ$-cycles are obtained from these ones by acting on them with the $SL_2$ lowering operator.
Since $\boldsymbol{\tau}$ is the period map of a degenerating weight 1 Hodge structure, it follows from the Deligne weight filtration (or by the Clemens-Schmid sequence, see \textbf{Corollary 2} in \cite{morrison}), that only spin 0 and spin 1/2 representations are presents, that is, $N^2=0$.
More precisely, we have a weight filtration of $\BQ$-spaces
\be
W_0\subset W_1\subset W_2\equiv \Gamma_x\otimes \BQ,\qquad W_0=\mathrm{im}\,N,\quad W_1=\ker N,
\ee
such that $N\colon W_2/W_1\to W_0$ is an isomorphism,
and the polarization $\langle -,-\rangle$ induces a perfect pairing between $W_2/W_1$ and $W_0$
as well as of $W_1/W_0$ with itself \cite{schmid} (of course, these statements are just the usual selection rules for angular momentum).

We pull-back the local family of Abelian varieties $X|_U$ to the $r$-fold cover $U_r$; we get the family
$\pi\colon \sigma^* X|_U\to U_r$.
By construction, the monodromy of the pulled back family is $m_\sigma\equiv m^r=e^N$, so the monodromy invariant $\BQ$-cycles are precisely the ones in $W_1$. 
The invariant cycle theorem guarantees that all 1-cycle $\gamma_x\in \Gamma_x$ invariant under the monodromy  there is a homologous 1-cycle $\hat\gamma$ in $\sigma^* X|_U$
in the total space of the (local) family and all 1-cycles in the total space are of this form. Then
$W_2/W_1$ consists of vanishing cycles, so that the corresponding periods vanish as $q^{1/r}\to 0$
i.e.\! $a_\text{van}\propto q^{1/r}$
and then eqn.\eqref{kkkaq12} says that the periods along the ``spin-0'' cycles $W_1/W_0$ are regular as $q\to0$ while the ones
in $W_0$ (which are dual to the vanishing ones under the Dirac pairing) go as
\be\label{kkkaqznn}
a_\text{van}^D\propto a_\text{van}\,\log a_\text{van}.
\ee 
The conclusion we got is totally trivial from the physical side.
The special geometry along the Coulomb branch is the IR description obtained integrating out the massive degrees of freedom; the singularities arise because at certain loci in $M$ some additional degree of freedom becomes massless.
One gets the leading singularity by computing the correction to the low energy coupling by loops of light fields, see the discussion in \S.\,5.4 of the original paper by Seiberg and Witten \cite{SW1}. The mixed Hodge variation formula \eqref{kkkaqznn}  is just their eqn.(5.10).

Thus, a part for the need to go to the local $r$-fold cover $U_r$, at a generic point of an irreducible component $S_i$ of the discriminant $\cd$ we do not get singularities worse than physically expected. The singularity in eqn.\eqref{kkkaqznn} is mild:\begin{itemize}
\item[\bf R1]
the squared-norm of the SW differential
\be
\begin{split}
\Phi(q) \overset{\text{def}}{=}\; & i \!\left(\int_{A^i}\lambda\int_{B_i}\bar\lambda-\int_{A^i}\bar\lambda\int_{B_i}\lambda\right) =\\
&=\mathrm{const.}\, |q|^{2/r}(-\log|q|)+\text{regular as }q\to0,
\end{split}
\ee
while \emph{not} smooth along  the discriminant, extends \emph{continuously} to $\cd$;
\item[\bf R2]
Its differential $d\Phi$, while singular in $U$, becomes continuous (non-smooth) when pulled back to the local $r$-fold cover $U_r$;
\item[\bf R3] the points on the discriminant, $q=0$, are at a finite distance from smooth points. Indeed, on the local cover $U_r$ the metric is modeled on $ds^2=(-\log|q|) |dq^{1/r}|^2$ which is length decreasing with respect the flat metric $|dz|^2$, $z=(-\log|q|)^{1/2} q^{1r}$. On $U$ the metric is asymptotically conical. In particular, $M$ remain complete as a metric space;
\item[\bf R4] the integral of the Ricci curvature on the $r$-fold  \emph{covering} disk $|q^{1/r}|<\epsilon$ vanishes as $\epsilon\to0$, i.e.\! there is no $\delta$-like curvature concentrated on the discriminant, \emph{except} for the obvious $\Z_r$-orbifold singularity implied by the covering quotient $U_r\to U_r/\Z_r\equiv U$. Thus all arguments based on curvature bounds work as in the semi-simple case. We have already remarked that orbifold singularities do not spoil the curvature bounds (cfr.\! \textbf{Physical principle}). 
\end{itemize}

All the above statements hold at  all points of the discriminant (and not just at generic points along a smooth component) when $\cd$ is a \textsc{snc} divisor \cite{schmid}.
While this is generically the situation, the special geometry describing a particular SCFT with the mass deformations switched off may well be non generic. If the SCFT admits ``enough'' mass/relevant deformations, we can make $\cd$ to be \textsc{snc} by an arbitrarily small perturbation which cannot change the qualitative aspects of the physics. Even in SCFTs without (enough) deformations, it is very likely that --- while the singularities may be more severe than the \textsc{snc} ones --- the four regularity conditions \textbf{R1}-\textbf{R4} still hold. Indeed, \textbf{R3} has been advocated by Gukov, Vafa and Witten as a necessary condition for a sound SCFT \cite{GVW}. In the rest of the paper we shall make the

\begin{ass} Our special geometry satisfies {\bf R1}-{\bf R4}.
\end{ass}

\begin{rem} We may look at the singularities also from the point of view of the hyperK\"ahler geometry of the total space $X$.
Since hyperK\"ahler manifolds  are in particular Calabi-Yau, the discussions of refs.\cite{GVW,TY} directly apply with similar conclusions. Note that the statements hold also for hyperK\"ahler \emph{orbifolds.}
\end{rem}

\subsection{Some facts about complex orbifolds}

In the last subsection we found that the analytic space $M$ typically has cyclic orbifold singularities. 
Here we collect some well known facts about complex orbifolds that we shall need below.

\begin{pro}[See e.g.\! \cite{sas}] The locally ringed space $(Z,\co_Z)$ associated to a complex orbifold has the following properties:
\begin{itemize}
\item[\textit{i)}] $(Z,\co_Z)$ is a reduced normal analytic space;
\item[\textit{ii)}] the singular locus $\Sigma(Z)$ is a closed reduced complex subspace of $Z$ and has complex codimension at least 2 in $Z$;
\item[\textit{iii)}] the smooth locus $Z_\text{reg}$ is a complex manifold and a dense open subset of $Z$;
\item[\textit{iv)}] $Z$ is $\mathbb{Q}$-factorial.\footnote{\ An analytic space is $\mathbb{Q}$-factorial if all Weil divisor has a multiple which is a Cartier divisor. }
\end{itemize}
\end{pro}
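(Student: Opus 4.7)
The problem is local, so I would work in a uniformizing chart $\pi\colon U\to U/G$, where $U\subset\C^n$ is open and $G\subset GL(n,\C)$ is a finite group acting linearly (or linearizably). The structure sheaf $\co_Z$ is, by construction, the subsheaf $\pi_*(\co_U)^G$ of $G$-invariant germs. Since for an effective orbifold chart we may always factor through the quotient by the (normal) subgroup generated by complex pseudo-reflections, by the Chevalley--Shephard--Todd theorem that intermediate quotient is smooth, so without loss of generality \emph{I may assume $G$ contains no pseudo-reflection}; equivalently, the fixed locus $U^g$ of every non-identity $g\in G$ has complex codimension at least $2$ in $U$. This reduction is the main conceptual step; once granted, the four assertions become routine.

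\emph{(i) Reducedness and normality.} The ring $\co_{U,0}$ of convergent power series is a normal integral domain, and the ring of invariants of a finite group acting on a normal domain is again a normal domain: no nilpotents survive the inclusion into $\co_{U,0}$, and any element of the fraction field of $\co_{U,0}^G$ that is integral over $\co_{U,0}^G$ is a fortiori integral over $\co_{U,0}$, hence lies in $\co_{U,0}$ by normality of the latter, and being $G$-invariant lies in $\co_{U,0}^G$. This gives property \textit{(i)}.

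\emph{(ii) and (iii).} The singular locus $\Sigma(Z)$ is the image under $\pi$ of $\bigcup_{g\neq 1}U^g$; since $\pi$ is finite, this is a closed analytic subspace, and it inherits the reduced structure from $Z$. By the pseudo-reflection-free assumption each $U^g$ has codimension $\ge 2$, so $\pi\bigl(\bigcup_{g\neq 1}U^g\bigr)$ has codimension $\ge 2$ in $Z$, proving \textit{(ii)}. On its complement $G$ acts freely, so $\pi$ is locally biholomorphic there and $Z_{\mathrm{reg}}$ is a complex manifold; density follows because a proper analytic subset of an irreducible analytic space has empty interior, giving \textit{(iii)}.

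\emph{(iv) $\BQ$-factoriality.} Let $D$ be a Weil divisor on $Z$; its scheme-theoretic pullback $\pi^{*}D$ is a $G$-invariant Weil divisor on $U$. Since $U\subset\C^n$ is smooth and (after shrinking) factorial, $\pi^{*}D=\mathrm{div}(f)$ for some $f\in\co_U$. The product $F=\prod_{g\in G}g^{*}f$ is $G$-invariant, hence descends to a holomorphic function on $Z$, and on the free locus the fiber of $\pi$ over a generic point of $D$ is a $G$-orbit of size $|G|$, so $\mathrm{div}(F)=\pi^{*}(|G|\,D)$ on $U$; pushing down gives $\mathrm{div}_Z(F)=|G|\,D$ as Weil divisors on $Z$. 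Thus $|G|\,D$ is Cartier, which is \textit{(iv)}. The main obstacle I would expect is the careful handling of the pseudo-reflection reduction in \textit{(ii)}, since it is what actually forces $\mathrm{codim}\,\Sigma(Z)\ge 2$; the remaining statements are then essentially invariant-theoretic formalities.
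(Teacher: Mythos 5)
The paper does not prove this proposition; it simply cites Boyer--Galicki's book on Sasakian geometry (\cite{sas}), so there is no internal argument to compare against. Your self-contained proof via uniformizing charts is substantively correct and is the standard route to these facts. A few places deserve to be made more explicit. First, the reduction to pseudo-reflection-free $G$ rests on two facts you are invoking tacitly: (a) the subgroup $N$ generated by pseudo-reflections is normal (true, since conjugation preserves the pseudo-reflection property), and (b) after passing to the smooth quotient $U/N$, the residual action of $G/N$ is both effective \emph{and} again pseudo-reflection-free after relinearizing at the fixed point; this is Prill's theorem that every quotient singularity has a unique ``small'' presentation. Second, in part \textit{(ii)} you assert $\Sigma(Z)=\pi\bigl(\bigcup_{g\neq1}U^g\bigr)$, but what is immediate is only the inclusion $\Sigma(Z)\subseteq\pi\bigl(\bigcup_{g\neq1}U^g\bigr)$; the reverse inclusion needs the \emph{converse} of Chevalley--Shephard--Todd applied to the linearized isotropy group $G_x$ at each $x$ with $G_x\neq1$ --- the fact that $G_x$ inherits smallness from $G$ is what forces $T_xU/G_x$ to actually be singular. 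Third, part \textit{(i)} uses, without comment, Cartan's theorem that $\pi_*(\co_U)^G$ is a coherent analytic sheaf defining an analytic structure on the topological quotient. None of these gaps is fatal; they are standard background facts, and the argument is sound once they are supplied.
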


In particular, under our mild assumption, the Coulomb branch $M$ is a $\mathbb{Q}$-factorial reduced normal analytic space.

We stress that the singular set in the orbifold sense of $Z$, $S(Z)$, may be actually larger than the singular locus of the underlying analytic space, $\Sigma(Z)$, see the discussion in ref.\cite{sas}. The case of maximal discrepancy between the two sets is given by the following: 

\begin{pro} Let $G$ be a Shephard-Todd group ($\equiv$ a finite complex reflexion group {\rm\cite{cohen,ST1,ST2}}). Then the analytic space underlying the orbifold $\C^n/G$ is smooth, in fact isomorphic to $\mathbb{A}^n$.
\end{pro}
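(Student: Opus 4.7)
The plan is to reduce the statement to the classical Chevalley--Shephard--Todd (CST) theorem on invariants of complex reflection groups. First I would observe that, because $G\subset GL(n,\C)$ is finite and acts linearly (hence by polynomial automorphisms), the analytic space underlying the quotient orbifold $\C^n/G$ coincides with the analytic spectrum of the ring of $G$-invariant holomorphic functions, and this ring is exhausted by the polynomial invariants. Concretely, the categorical (analytic) quotient equals $\mathrm{Specan}\,R$ for
\[
R\;\equiv\;\C[x_1,\dots,x_n]^G,
\]
since for a finite linearly reductive group the $G$-averaging operator maps any $G$-invariant convergent power series onto a convergent power series in a chosen set of homogeneous generators of $R$. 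Hence the statement is equivalent to the algebraic claim that $R$ is a graded polynomial ring on $n$ generators.

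Next I would invoke the CST theorem in the form: for a finite subgroup $G\subset GL(n,\C)$, the invariant ring $\C[x_1,\dots,x_n]^G$ is a polynomial ring if and only if $G$ is generated by pseudo-reflections. Shephard--Todd groups are by definition such groups, so the theorem produces basic invariants $f_1,\dots,f_n\in R$ --- algebraically independent and homogeneous with $\prod_i\deg f_i=|G|$ --- and an isomorphism $R\cong\C[f_1,\dots,f_n]$. Taking the analytic spectrum gives $\C^n/G\cong\mathbb{A}^n$, which is what we wanted.

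The only non-trivial input is CST itself, which is classical (cf.\! \cite{ST1,ST2,cohen}); if forced to sketch it I would argue as follows. Using the characterization of a pseudo-reflection $s\in G$ through its fixed hyperplane $H_s=\{\alpha_s=0\}$, one shows that any $f\in\C[x_1,\dots,x_n]$ vanishing on $H_s$ is divisible by $\alpha_s$. Combined with a Molien/averaging computation this yields that $\C[x_1,\dots,x_n]$ is a free $R$-module of rank $|G|$. A graded Nakayama argument on the quotient $\C[x]/R_+\C[x]$ then forces a minimal homogeneous set of generators of the maximal ideal $R_+$ to be algebraically independent and of cardinality exactly $n$, so that $R$ is polynomial.

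The main obstacle is thus the freeness part of CST, but since it is a classical result I would cite rather than reprove it. The rest is essentially bookkeeping, and the conclusion $\C^n/G\cong\mathbb{A}^n$ follows immediately upon choosing the basic invariants $f_1,\dots,f_n$ as affine coordinates on the quotient.
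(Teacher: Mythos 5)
Your argument is correct and is exactly the route the paper has in mind: the \textbf{Proposition} is stated without proof, with the citations to Shephard--Todd and Chevalley standing in for the Chevalley--Shephard--Todd theorem that you invoke. Your reduction from the analytic quotient to $\mathrm{Specan}\,\C[x]^G$ (via averaging to pass from convergent to polynomial invariants, or equivalently by noting that the local ring of the quotient at the origin is $\C\{x\}^G$, a power series ring in the basic invariants by CST) and the identification $\C^n/G\cong\mathbb{A}^n$ via the basic invariants are both sound.
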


We shall also need the orbifold version of
the Kodaira embedding theorem:

\begin{thm}[Kodaira-Baily \cite{baily}]\label{kkkxza} Let $Z$ be a compact complex orbifold and suppose $Z$ has a positive orbi-bundle $\cl$. Then $Z$ is a projective algebraic variety.
\end{thm}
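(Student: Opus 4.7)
The strategy is to imitate Kodaira's original embedding theorem, with the necessary modifications at orbifold points supplied (as in Baily's work) by passing to local uniformizing charts and exploiting the finite quotient structure. First, fix at each point $p\in Z$ a uniformizing chart $\pi_p\colon \tilde U_p\to U_p=\tilde U_p/G_p$ with $G_p$ the (finite) local isotropy group. The orbi-bundle $\cl$ pulls back on each $\tilde U_p$ to an honest $G_p$-equivariant holomorphic line bundle $\pi_p^*\cl$. A section of $\cl^{\otimes N}$ on $U_p$ is, by definition, a $G_p$-invariant section of $\pi_p^*\cl^{\otimes N}$; global sections on $Z$ are local sections that agree on overlaps.

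The next step is an orbifold Kodaira-Nakano vanishing theorem: if $\cl$ is positive (i.e.\ carries a Hermitian orbi-metric with positive curvature form), then
\[
H^q(Z,K_Z^{\mathrm{orb}}\otimes \cl^{\otimes N})=0\quad\text{for all }q\geq 1,\ N\geq 1.
\]
This is proved exactly as in the smooth case via the Bochner-Kodaira identity on the uniformizing charts, because $G_p$-invariant harmonic forms descend and the relevant integration by parts is insensitive to singularities of codimension $\geq 2$ (recall item \textit{(ii)} of the preceding \textbf{Proposition}). From vanishing one gets, by the standard exact sequence argument applied to the ideal sheaves of $\{p\}$ and $\{p,q\}$ (now taken in the orbifold sense), that for $N$ sufficiently large, global sections of $\cl^{\otimes N}$ (i) separate any two points of $Z$, (ii) generate the cotangent space at each smooth point of $Z$, and (iii) at an orbifold point $p$ surject onto the maximal-ideal invariants $\big(\mathfrak{m}_{\tilde p}/\mathfrak{m}_{\tilde p}^2\big)^{G_p}$ after pullback to $\tilde U_p$, and more generally onto the full $G_p$-invariant part of the local ring modulo high powers of $\mathfrak{m}_{\tilde p}$.

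With such an $N$ fixed, choose a basis $s_0,\dots,s_M$ of $H^0(Z,\cl^{\otimes N})$ and let $\phi\colon Z\to\mathbb{P}^M$ be the associated holomorphic map. Property (i) makes $\phi$ injective, (ii) makes $\phi$ an immersion at smooth points, and (iii) --- combined with the classical fact that the $G_p$-invariants in $\co_{\tilde U_p}$ generate the structure sheaf of $\tilde U_p/G_p$ as an analytic space --- identifies $\phi(U_p)$ analytically with $U_p$ itself. Thus $\phi$ is a closed analytic embedding of $Z$ into $\mathbb{P}^M$, and Chow's theorem promotes the image to a projective algebraic variety. The main obstacle in executing this plan is the orbifold vanishing theorem and the correct bookkeeping of the $G_p$-invariant sections near singular points; once those are handled, the remaining steps are a direct translation of the classical Kodaira argument.
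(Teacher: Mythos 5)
The paper does not prove this statement at all: it cites Baily's 1957 theorem as a black box (reference \cite{baily}) and gives no argument, so there is no "paper's own proof" to compare against. Your sketch is instead a reconstruction of Baily's original argument, and as such it is a reasonable one --- the three ingredients (work on local uniformizing charts where $\cl$ becomes a $G_p$-equivariant line bundle, an orbifold Kodaira--Nakano vanishing theorem obtained by descending $G_p$-invariant harmonic theory, and then separation of points/jets followed by Chow) do correspond to the skeleton of Baily's proof.

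Two points worth flagging, though neither is fatal. First, at an orbifold point $p$ the image $\phi(U_p)$ is not smooth, and to conclude that $\phi$ is a closed analytic \emph{embedding} onto $U_p\cong \tilde U_p/G_p$ you need sections that generate the full $G_p$-invariant ring $\co_{\tilde U_p}^{G_p}$ up to whatever degree the invariants occur, not just $(\mathfrak m_{\tilde p}/\mathfrak m_{\tilde p}^2)^{G_p}$ (which may be zero); you do address this parenthetically, but it is the heart of the argument at singular points and deserves a precise statement (the needed $N$ must be divisible by, or at least compensate for, the orders of the isotropy groups, which is where compactness --- hence finitely many local models --- enters). Second, the vanishing step is the genuinely nontrivial ingredient: asserting that Bochner--Kodaira is ``insensitive to codimension $\geq 2$ singularities'' is heuristic; what actually happens in Baily is that one works with $V$-forms (i.e., $G_p$-invariant forms on $\tilde U_p$) throughout and the Hodge/Laplacian package is set up directly in the orbifold category, rather than by arguing on the singular underlying space. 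Your plan is morally correct but should lean on the $V$-manifold Hodge theory rather than on a codimension argument for the underlying analytic space.
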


\subsection{Structures on cones}

We review the geometry of metric cones in a language suited for our purposes.

\subsubsection{Riemannian cones}\label{riecones}

A \textit{metric (Riemannian) cone} over the (connected, Riemannian) base $B$ is the \emph{warped product}
\be
\R_{>0}\times_{r^2}B\equiv C(B),\ee 
that is, $C(B)$ is the product space $\R_{>0}\times B$ equipped with the metric
\be\label{conical}
ds^2=dr^2+r^2\, \gamma_{ab}(y)\,dy^a dy^b
\ee
where $ds^2_y=\gamma_{ab}(y)\,dy^a dy^b$ is a metric on $B$ ($y^a$ being local coordinates in $B$). We shall write $\overline{C(B)}$ for the singular space obtained by adding the tip of the cone $r=0$ to $\c(B)$, endowed with the obvious topology. 
Note that the radial coordinate $r$ is a globally defined continuous real function on $\overline{C(B)}$ taking all non-negative  values. A cone $\C(B)$ possesses the following canonical (global) structures: the plurisubharmonic function $r^2$ and the  concurrent vector field $E= r\partial_r$ (\textit{Euler field}) which satisfy the following properties
\be\label{cococ}
\pounds_E r^2=2r^2,\qquad \pounds_E g=2g,\qquad
E(dr^2)=2r^2,\qquad 2 E_i=\nabla_i r^2.
\ee
that is
\be\label{ide1}
\nabla_i\nabla_j r^2= \nabla_iE_j+\nabla_jE_i= 2 g_{ij}.
\ee
\begin{pro}[\!\!\cite{yano}]\label{hhasqw}
 Conversely, if the Riemannian manifold $(C,g)$ has a vector field $E$ whose dual form is closed and $\pounds_E g=2g$, there exist coordinates such that the metric takes the conical form \eqref{conical}.
\end{pro}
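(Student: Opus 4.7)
The plan is to reverse-engineer the conical coordinates $(r,y^a)$ from the data $(E,g)$, using $E$ as the Euler field and the squared radial coordinate $r^2$ as its potential.

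First, since $E^\flat\equiv g(E,-)$ is closed, on any simply-connected open set there exists a function $\phi$ with $E^\flat=d\phi$, i.e.\ $E_i=\partial_i\phi$. The homothety condition $\pounds_E g=2g$ translates to $\nabla_iE_j+\nabla_jE_i=2g_{ij}$; combined with $\nabla_iE_j=\nabla_jE_i$ (from $dE^\flat=0$), this collapses to the Hessian identity $\nabla_i\nabla_j\phi=g_{ij}$. Differentiating $|E|^2=E^jE_j=E^j\partial_j\phi$ and using this identity yields $\nabla_i|E|^2=2E_i$, so $|E|^2-2\phi$ is locally constant. Absorbing the constant into the choice of $\phi$, I set $r^2\equiv 2\phi$, so that $|E|^2=r^2$ and $2E^\flat=\nabla r^2$. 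Away from the zero-set of $E$ this function $r$ is smooth and positive, and the identity $|\nabla r|^2=\tfrac{1}{4r^2}|\nabla r^2|^2=\tfrac{1}{r^2}|E|^2=1$ is immediate.

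Thus the level sets $\{r=\mathrm{const}\}$ are smooth hypersurfaces and $E=r\,\nabla r$ is $r$ times their unit normal. Pick a level set $B=\{r=1\}$ and parametrize the open set $\{E\neq0\}$ by the flow $\psi_t$ of $E$ starting from points of $B$. Along a flow line, $\tfrac{d}{dt}r^2=E(r^2)=2|E|^2=2r^2$, so $r=e^t$ and the map $(y,t)\mapsto\psi_t(y)$ is a diffeomorphism $B\times\R\to\{E\neq0\}$. In these coordinates $E=\partial_t$, so $g_{tt}=|E|^2=e^{2t}$ and $g_{ta}|_{t=0}=g(E,\partial_{y^a})|_B=0$ by orthogonality. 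The scaling identity $\pounds_E g=2g$ reads $\partial_t g_{\mu\nu}=2g_{\mu\nu}$, hence $g_{ta}(y,t)=e^{2t}g_{ta}(y,0)=0$ and $g_{ab}(y,t)=e^{2t}\gamma_{ab}(y)$, where $\gamma=g|_B$. Substituting $r=e^t$ (so $dr=r\,dt$ and $e^{2t}dt^2=dr^2$) gives
\[
g=dr^2+r^2\,\gamma_{ab}(y)\,dy^a\,dy^b.
\]

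The main subtlety is in the second step: fixing the additive ambiguity in $\phi$ so that $|E|^2=r^2$ on the nose, not merely up to a shift. In a strictly local statement one just chooses the constant by hand; to glue the local pictures into a genuine metric cone one must check that $E$ has at most one critical point (the tip), where the Hessian identity $\nabla_i\nabla_j\phi=g_{ij}$ forces the linearization of $E$ to be the identity. Everything else reduces to integrating the flow of a nowhere-vanishing vector field and tracking a first-order homogeneous scaling.
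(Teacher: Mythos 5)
The paper does not give a proof of this result; it cites Yano's book \cite{yano}. Your argument is correct and is essentially the standard proof found in the literature.

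A small simplification is available: rather than introducing $\phi$ locally and then fixing the additive constant, you can observe directly that the computation $\nabla_i|E|^2 = 2E^j\nabla_iE_j = 2E^jg_{ij} = 2E_i$ (using closedness and $\pounds_Eg=2g$ as in your Hessian step) shows $d(|E|^2)=2E^\flat$ globally, so one can simply take $r^2:=|E|^2$ without any ambiguity or gluing concern. Everything downstream --- $|\nabla r|=1$, normality of $E$ to the level sets, the flow-time coordinate $t$ with $r=e^t$, and the scaling $\partial_t g_{\mu\nu}=2g_{\mu\nu}$ forcing $g_{ta}\equiv 0$ and $g_{ab}=e^{2t}\gamma_{ab}(y)$ --- goes through as you wrote it, and the substitution $dr=r\,dt$ produces the warped-product metric \eqref{conical}. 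The concluding remark about the tip is a reasonable global observation but is not needed for the local statement as formulated.
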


\begin{corl} Let $C_1$, $C_2$ be metric cones. Then $C_1\times C_2$ is a metric cone with Euler vector $E_1+E_2$.
\end{corl}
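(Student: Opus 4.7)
The plan is to invoke \textbf{Proposition \ref{hhasqw}} directly: I need only exhibit, on the product $C_1\times C_2$ with its product Riemannian metric $g=g_1\oplus g_2$, a vector field $E$ whose metric dual is closed and which satisfies $\pounds_E g = 2g$. The obvious candidate is $E = E_1 + E_2$, where $E_i$ is lifted from $C_i$ to the product via the projection $\mathrm{pr}_i\colon C_1\times C_2\to C_i$ in the canonical way (i.e.\ $E_1$ has no component along $TC_2$ and vice-versa).

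First I would verify the Lie derivative condition. Since $E_1$ is tangent to the fibers of $\mathrm{pr}_2$ and acts trivially on functions pulled back from $C_2$, we have $\pounds_{E_1}g_2=0$, and similarly $\pounds_{E_2}g_1=0$; combined with $\pounds_{E_i}g_i=2g_i$ on each factor (part of the cone data for $C_i$) this yields
\begin{equation*}
\pounds_{E_1+E_2}(g_1\oplus g_2)=(\pounds_{E_1}g_1)\oplus(\pounds_{E_2}g_2)=2g_1\oplus 2g_2 = 2g.
\end{equation*}
Next, the dual one-form: by \eqref{cococ} applied on each cone, the metric dual of $E_i$ on $C_i$ equals $\tfrac12\, d(r_i^2)$, which is exact. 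Under the product structure these forms pull back to exact forms on $C_1\times C_2$, so the dual of $E_1+E_2$ is $\tfrac12\, d(r_1^2+r_2^2)$, manifestly closed (indeed exact). The hypotheses of \textbf{Proposition \ref{hhasqw}} are thus satisfied, giving the conical form of the metric and identifying the Euler field as $E_1+E_2$.

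As a sanity check, the radial function of the resulting cone is $r=\sqrt{r_1^2+r_2^2}$, which is consistent with $E(r^2)=2r^2$ and with the natural intuition that the product of two Euclidean cones of dimensions $n_1$ and $n_2$ is a Euclidean cone of dimension $n_1+n_2$. There is no serious obstacle here: the statement is essentially a consequence of the fact that both defining conditions for a cone structure are linear/additive under the Riemannian product, and the hard work has already been done in the preceding Proposition.
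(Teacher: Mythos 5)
Your proof is correct and is exactly the argument the paper intends: the Corollary is placed immediately after Proposition~\ref{hhasqw} precisely so that one verifies the two hypotheses (closedness of the dual form and $\pounds_E g = 2g$) for $E = E_1 + E_2$ on the product metric, which you do cleanly using the additivity of both conditions under Riemannian products and the identity $2E_i = \nabla_i r_i^2$ from eqn.~\eqref{cococ}.
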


\begin{defn}\label{googg} By a \emph{good} cone we mean a cone $C(B)=\R_{>0}\times_{r^2} B$ with $B$ smooth and complete. For a good cone, the only possibly singular point is the tip of the cone $r=0$.
Note that a non-trivial product of metric cones is never good unless one of the factors is $\R^k$ with the flat metric.
\end{defn}

\begin{rem} On a smooth Riemannian manifold the two notions of geodesic completeness and metric-space completeness coincide (Hopf-Rinow theorem \cite{Besse}). This is not longer true in presence of singularities. \textbf{Example \ref{reqasx}} illustrates the point: the Minahan-Nemeshanski geometry is complete in the metric space sense, but certainly not in the geodesic one.
The singular Riemannian spaces which are ``physically acceptable'' better be complete as metric space. This is part of regularity assumption \textbf{R3}.  
\end{rem}

For later use, we give the well-known formulae relating the curvatures of $C(B)$ and its base $B$. We write $R_{ijkl}$ (resp.\! $R_{ij}$) for the Riemann (Ricci) tensor of $C(B)$ and $B_{abcd}$ (resp.\! $B_{ab}$) for the Riemann (Ricci) tensor of $B$.
\begin{lem}\label{l1}
One has
\begin{align}
R_{abcr}&=R_{arbr}=0\\
{R_{abc}}^d&={B_{abc}}^d- \gamma_{ac}\,\delta_b^d+\gamma_{bc}\,\delta_a^d\\
R_{ab}&=B_{ab}-(\dim B-1)\gamma_{ab}.
\end{align}
\end{lem}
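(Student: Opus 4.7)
My plan is a direct tensorial computation in adapted coordinates $(r,y^a)$. The only input is the explicit warping in $g = dr^2 + r^2\gamma_{ab}(y)\,dy^a dy^b$, which yields the non-vanishing Christoffel symbols
\[
\Gamma^r_{ab} = -r\,\gamma_{ab}, \qquad \Gamma^a_{br} = \frac{1}{r}\,\delta^a_b, \qquad \Gamma^a_{bc} = \hat{\Gamma}^a_{bc}(y),
\]
where $\hat{\Gamma}$ is the Levi-Civita connection of $\gamma$; in particular $\Gamma^r_{rr}=\Gamma^a_{rr}=\Gamma^r_{rb}=0$. I then substitute into
\[
R^{\rho}{}_{\sigma\mu\nu} = \partial_\mu\Gamma^\rho_{\nu\sigma}-\partial_\nu\Gamma^\rho_{\mu\sigma}+\Gamma^\rho_{\mu\lambda}\Gamma^\lambda_{\nu\sigma}-\Gamma^\rho_{\nu\lambda}\Gamma^\lambda_{\mu\sigma}
\]
and address the three statements in turn.

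For the vanishings $R_{abcr}=R_{arbr}=0$, I compute $R^\lambda{}_{bcr}$ and $R^\lambda{}_{rbr}$ for both $\lambda=r$ and $\lambda=c$. Because the warping $f(r)=r$ satisfies $f''=0$, the radial derivative of $\Gamma^a_{rb}=\delta^a_b/r$ precisely cancels the quadratic contribution $\Gamma^a_{rd}\Gamma^d_{rb}=\delta^a_b/r^2$, and similarly $\partial_r\Gamma^r_{ab}=-\gamma_{ab}$ cancels $\Gamma^r_{bc}\Gamma^c_{ra}=\gamma_{ab}$. Lowering with $g_{a\lambda}=r^2\gamma_{a\lambda}$ then gives the two claimed identities.

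For the purely spatial components, splitting the summation index $\lambda$ in $R^a{}_{bcd}$ into its radial and base parts isolates a cross term $\Gamma^a_{cr}\Gamma^r_{db}-\Gamma^a_{dr}\Gamma^r_{cb}=-\delta^a_c\gamma_{bd}+\delta^a_d\gamma_{bc}$, which adds to the base contribution $\hat R^a{}_{bcd}=B^a{}_{bcd}$. Lowering the first index and raising the last with $g=r^2\gamma$ (the two factors of $r^{\pm 2}$ cancel) produces the stated identity $R_{abc}{}^d=B_{abc}{}^d-\gamma_{ac}\delta^d_b+\gamma_{bc}\delta^d_a$. The Ricci formula follows by contraction: $R_{ab}=R^\mu{}_{a\mu b}=R^c{}_{acb}+R^r{}_{arb}$; the second piece vanishes by the previous step, while $R^c{}_{acb}=B^c{}_{acb}-\delta^c_c\gamma_{ab}+\delta^c_b\gamma_{ac}=B_{ab}-(\dim B)\gamma_{ab}+\gamma_{ab}$, giving $B_{ab}-(\dim B -1)\gamma_{ab}$.

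There is no substantive obstacle — the whole argument is mechanical bookkeeping enabled by the simplicity of the warping $f(r)=r$. The only care needed is to keep the radial index separated from the base indices in every summation, and to track the $r^{\pm 2}$ factors arising when one raises or lowers with $g$ rather than with $\gamma$.
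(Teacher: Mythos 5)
Your direct computation is correct and is the standard way to establish these warped-product curvature identities; the paper states the Lemma without proof, so there is no alternative argument to compare against. The Christoffel symbols you list are right, and tracking the radial index through the Riemann identity does yield all three displayed formulae (with the $r^{\pm 2}$ factors from $g=r^2\gamma$ cancelling in the lowering/raising for the second identity, as you note). One small slip in your parenthetical: in the cancellation proving $R^r{}_{arb}=0$ you write $\Gamma^r_{bc}\Gamma^c_{ra}=\gamma_{ab}$, whereas actually $\Gamma^r_{bc}\Gamma^c_{ra}=(-r\gamma_{bc})(r^{-1}\delta^c_a)=-\gamma_{ab}$; it is only after applying the overall minus sign in the Riemann tensor formula that this term becomes $+\gamma_{ab}$ and kills $\partial_r\Gamma^r_{ab}=-\gamma_{ab}$. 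This is a notational blemish, not a mathematical gap — the cancellation you rely on is real, and the proof goes through as stated.
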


\subsubsection{Singular K\"ahler cones: the Stein property} Now suppose the Riemannian manifold $M$ is both K\"ahler (with complex structure $I$)
and a (metric) cone, $M\cong \overline{C(B)}$. Eqn.\eqref{ide1} implies that $\Phi=r^2$ is a \emph{globally defined} K\"ahler potential assuming all values $0\leq \Phi<\infty$.

In the applications to $\cn=2$ SCFT we have in mind, the K\"ahler metric on the cone $M$ is singular even away from the tip $r=0$.
We specify the class of geometries we are interested in.

\begin{defn}
By a \emph{singular K\"ahler cone} $M$ we mean the following: \textbf{1)} $M$ is an analytic space (which we may assume to be normal\footnote{\ An analytic space $(M,\co_M)$ is \emph{normal} iff the stalks of its structure sheaf $\co_M$ are integrally closed, i.e.\! valuation rings. If $M$ is not normal, replace it with its normalization.}) with an open everywhere dense smooth complex submanifold $M^\sharp=M\setminus\cd$. \textbf{2)} On $M^\sharp$ there is a smooth conical K\"ahler metric
(in particular, $M^\sharp$ is preserved by the $\C^\times$ action generated by the holomorphic Euler vector $\ce$, see eqn.\eqref{kkzxbb362}). \textbf{3)} The global K\"ahler potential $\Phi\equiv r^2$ on $M^\sharp$
extends as a continuous function to all $M$ (cfr.\! regularity condition \textbf{R1}). Then the K\"ahler form 
$i\partial\overline{\partial} r^2$ extends to $M$ as a \textit{positive (1,1) current.} The continuous function $r^2$ is then plurisubharmonic in the sense of ref.\!\cite{lelo}.
\end{defn}

\begin{pro}\label{kkkz129c} A singular K\"ahler cone with a compact base $B$ is a \emph{Stein analytic space.}
\end{pro}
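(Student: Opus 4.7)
My plan is to invoke the Grauert--Narasimhan characterization of Stein spaces: a reduced normal complex analytic space is Stein iff it admits a continuous strictly plurisubharmonic exhaustion function. The only natural candidate is the globally defined K\"ahler potential $\Phi = r^2$, which by the very definition of a singular K\"ahler cone is continuous on all of $M$, smooth on $M^\sharp$, and plurisubharmonic on $M$ (as a positive $(1,1)$-current). So the whole job is to check that $\Phi$ satisfies the three defining properties of a Grauert--Narasimhan function.

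First, I would verify that $\Phi$ is an exhaustion. By the conical structure, the sublevel set $\{\Phi\leq R^2\}$ is the truncated cone $\{r\leq R\}$, which topologically is the mapping cone of $B$ of height $R$. Compactness of $B$ then gives compactness of the sublevel sets, so $\Phi$ is proper. Second, strict plurisubharmonicity on the smooth locus $M^\sharp$ is essentially tautological: $i\partial\bar\partial\Phi$ equals the cone K\"ahler form, which is positive definite by hypothesis.

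The real content is verifying strict plurisubharmonicity at singular points $p\in\cd$, which in particular includes the tip $r=0$ when it is singular. In this setting the appropriate notion, due to Narasimhan, is that in a local holomorphic embedding $(M,p)\hookrightarrow(\C^N,0)$ the continuous function $\Phi$ extends to a neighborhood of $0$ in $\C^N$ as the sum of a smooth strictly plurisubharmonic function plus a plurisubharmonic one; equivalently, that $\Phi-\epsilon\|z\|^2$ remains plurisubharmonic for some $\epsilon>0$. Away from the tip, this is handled by continuity together with uniform positivity of $\omega$ on compact subsets of $M^\sharp$ (here compactness of $B$ enters a second time, giving uniform bounds on annular shells $\{r_0\leq r\leq R\}$). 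At the tip it follows from the scaling identity $\pounds_E\Phi = 2\Phi$ of \eqref{cococ}, which forces $\Phi$ to vanish exactly to quadratic order in any local embedding and therefore dominates a multiple of $\|z\|^2$.

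The main obstacle is this last step: ensuring that the quadratic lower bound on $\Phi$ survives the transverse singular behavior of $\Phi$ along $\cd$ near the tip. This is exactly the point where the regularity assumptions \textbf{R1}--\textbf{R4} enter, as they bound the singularities of $r^2$ along $\cd$ by those of a standard conical model and in particular guarantee that $\Phi$ does not develop pluripolar pathologies transverse to $\cd$. Once all three properties of $\Phi$ -- continuous exhaustion, plurisubharmonic on $M$, and strictly so -- are in place, Grauert's theorem produces the Stein structure and the proposition follows.
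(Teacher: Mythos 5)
Your overall strategy is the one the paper uses: take $\Phi=r^2$ as a continuous plurisubharmonic exhaustion (exhaustion because the base $B$ is compact, so the truncated cones $\{r\le R\}$ are compact) and invoke the Narasimhan/Grauert singular Oka theorem to conclude Steinness. The paper's own proof is in fact just two sentences doing exactly this, citing Lelong; it does not spell out strict plurisubharmonicity along the discriminant or at the tip, which is precisely where you try to put more meat on the bone.

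Two imprecisions in that extra step are worth flagging. First, the identity $\pounds_E\Phi=2\Phi$ does not force $\Phi$ to ``vanish exactly to quadratic order in any local embedding.'' The natural local holomorphic coordinates $u_i$ carry $\ce$-weight $\Delta_i\ge 1$, so $\Phi$ scales as $|u_i|^{2/\Delta_i}$ near the tip; for $\Delta_i>1$ the vanishing order is $2/\Delta_i<2$, not $2$, and in fact $\Phi$ need not even be $C^1$ at the origin. (This actually helps the pointwise estimate you want, but ``exactly quadratic'' is wrong.) Second, and more seriously, a pointwise bound $\Phi\ge c\,\|z\|^2$ is not the same thing as strict plurisubharmonicity in Narasimhan's sense. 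What Grauert--Narasimhan requires is that $\Phi-\epsilon\|z\|^2$ be \emph{plurisubharmonic} near each singular point, i.e.\ a condition on the complex Hessian as a current, which pointwise domination does not deliver. So the last paragraph of your proposal names the right obstacle (strict plurisubharmonicity transverse to $\cd$ and at the tip, governed by \textbf{R1}--\textbf{R4}) but does not actually close it; the phrase ``dominates a multiple of $\|z\|^2$'' should be replaced by a genuine Hessian-current estimate, or one should simply invoke the Lelong/Narasimhan theorem for continuous strictly psh exhaustions as the paper does and leave strictness as part of the regularity hypotheses on the singular Kähler cone.
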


Indeed, $r^2$ is a continuous plurisubharmonic function which is an exhaustion for $M$. The statement is then the Narasimhan singular version of Oka theorem (see e.g.\! page 48 of \cite{lelo}).

Then the singular cone $M$ is Stein and Cartan's \textbf{Theorem A} and \textbf{Theorem B} apply \cite{GH,stein1,stein2}. Below we shall exploit this fact in several ways.

 In the physical applications we have in 
 mind, the Fr\'echet ring 
\cite{ammon} of global holomorphic functions, $\mathscr{R}=\Gamma(M,\co_M)$, is the Coulomb branch chiral ring, our main object of interest. $M$ being Stein implies that $\mathscr{R}$ contains ``many'' functions: around all points of $M$ we may find local coordinate systems given by global holomorphic functions, and $\mathscr{R}$ separates points, i.e.\! given two distinct points we may find a global holomorphic function which takes on these two points any two pre-assigned complex values. 
Affine varieties over $\C$ are in particular Stein \cite{stein2}.
The converse is not true in general, but it holds under some mild additional conditions \cite{addstein}. We shall see that that the $M$'s which are Coulomb branches of $\cn=2$ SCFTs are always affine.

\subsubsection{K\"ahler cones: local geometry at smooth points,  Sasaki manifolds} 
Specializing the results of \S.\ref{riecones}, in the (open everywhere dense) smooth locus $M^\sharp$  we have\footnote{\ Factor 2 mismatches arise from different conventions in the real vs. complex case.}
\be
g_{i\bar j}= \partial_i\partial_{\bar j}\Phi,\qquad
\nabla_i\partial_j \Phi=0,\qquad \Phi=r^2.
\ee
In particular, the real vector $R= IE$, or in components
\be\label{mmoamapw}
R^i=i g^{i\bar j}\partial_{\bar j}\Phi,\qquad R^{\bar i}=-i g^{\bar i j}\partial_j\Phi, 
\ee
is a \textit{Killing vector}. The physical interpretation of this geometric result is as follows: a K\"ahlerian cone may be used as a target space of a (classical) 3d supersymmetric $\sigma$-model. The fact that it is K\"ahler means the model is $\cn=2$ supersymmetric, while the fact that it is a cone means that is \emph{classically} conformally invariant \cite{Cecotti:2010dg}; the two statements together imply that the model has classically $\cn=2$ superconformal symmetry hence a  $U(1)_R$ $R$-symmetry which is part of the algebra. The action of $U(1)_R$ on the scalars is given by a (holomorphic) Killing vector which is $R$. We note that 
\be
[E,R]=0,\qquad R(dr^2)=0
\ee
so that $R=R^a(y)\partial_{y_a}$ is in facts a Killing vector for the metric $ds^2_y$ on the base $B$
whose norm is $1$, i.e.\! $R^aR_a=1$.
For a holomorphic function $h$ on a conic K\"ahler manifold the actions of $E$ and $R$ (in physical language: their dimension and $R$-charge) are related by 
\be\label{kkzxbb362}
\pounds_R h=i \pounds_E h\quad\Longleftrightarrow\quad \pounds_{\bar\ce} h=0,\quad\text{where }\ce=(E-iR)/2,
\ee
which physically says that these two quantum numbers should be equal for a chiral superconformal operator. We refer to $\ce$ as the holomorphic Euler vector.

\begin{rem}
By definition, a cone $\R_{>0}\times_{r^2} B$ is K\"ahlerian if and only if its base $B$ is \emph{Sasaki} \cite{sas}. The base $B$ is in particular a $K$-contact manifold whose \emph{Reeb vector} is $R$. 
\end{rem}

\subsubsection{Geometric ``$F$-maximization''}

We pause a second to digress on a different topic, namely $F$-maximization in 3d \cite{Jafferis:2010un}. A problem one encounters in studying SCFT is the exact determination of the $R$-charge which enters in the superconformal algebra. For (classical) $\sigma$-models with conic K\"ahler target spaces \cite{Cecotti:2010dg}, this is the problem of identifying the Reeb Killing vector $R$ between the family of Killing vectors with the appropriate action on the supercharges $\pounds_V Q= \pm \tfrac{1}{2} Q$.
The general such Killing vector has the form $V=R+F$ with $F$ a `flavor' Killing symmetry. One has the following:

\begin{cla} Let $M$ be a K\"ahler cone and $V$ a Killing vector on $M$ which acts on supercharges as $\pounds_V Q= \pm \tfrac{1}{2}Q$. Then (point-wise)
\be
\|V\|^2\geq \Phi\equiv r^2,
\ee
 with equality iff $V$ is the Reeb vector $R$. That is, the true superconformal $R$-charge extremizes the square-norm. 
\end{cla}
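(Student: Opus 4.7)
The plan is to write $V = R + F$ with $F$ a flavor Killing symmetry (as in the surrounding discussion) and reduce the claim to showing $g(R, F) \equiv 0$. The conical identities already give $\|R\|^2 = r^2$ pointwise: since $R = IE$ and $I$ is an isometry, $\|R\|^2 = \|E\|^2$, while $E = r\partial_r$ together with $\pounds_E g = 2g$ yields $g_{ij} E^i E^j = r^2$. Expanding the square,
\[
\|V\|^2 \;=\; r^2 + 2\, g(R, F) + \|F\|^2,
\]
so everything reduces to verifying that the cross-term vanishes.

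I would handle the cross-term via the moment-map formalism on K\"ahler cones. A short computation shows that the one-form $\alpha := g(R, \cdot)$ is proportional to $\iota_E\, \omega$ (use $R = IE$ and $\omega(X, Y) = g(IX, Y)$); combined with $d\omega = 0$ and $\pounds_E \omega = 2\omega$ this yields $d\alpha \propto \omega$. For any Killing vector $K$ commuting with both $E$ and $R$ (which holds for $K = V, R, F$ because $V$ acts on supercharges exactly as $R$, hence $\pounds_V R = \pounds_V E = 0$), one has $\pounds_K \alpha = 0$, so Cartan's magic formula gives $d\,\mu_K \propto \iota_K\, \omega$ for $\mu_K := g(R, K)$. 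Thus $g(R, \cdot)$ is, up to a universal constant, a canonical moment-map potential. For $K = R$ this reproduces $\mu_R = r^2$. The spinorial normalization $\pounds_V Q = \pm \tfrac12 Q$ is the requirement that $V$ realise the uniquely normalised superconformal R-generator, which fixes $\mu_V = \mu_R = r^2$; consequently the flavor piece satisfies $\mu_F = g(R, F) \equiv 0$.

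The conclusion is now immediate: $\|V\|^2 = r^2 + \|F\|^2 \geq r^2$ pointwise, with equality precisely when $F_p = 0$ for every $p$, i.e.\ $V = R$. Alternatively one may apply Cauchy--Schwarz directly to $\mu_V = r^2$: $r^2 = g(R, V) \leq \|R\|\,\|V\| = r\,\|V\|$, giving $\|V\|^2 \geq r^2$, with equality forcing $V = \lambda R$ and then $\mu_V = r^2$ fixing $\lambda = 1$. The main obstacle, and the step that genuinely requires the superconformal input, is the identification $\mu_V = r^2$: one must pass rigorously from the spinor-level condition $\pounds_V Q = \pm \tfrac12 Q$ to the algebraic statement that a genuine flavor Killing vector is $g$-orthogonal to the Reeb direction, i.e.\ lies in the contact distribution $\ker\eta$ with $\eta = r^{-2} g(R, \cdot)$. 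Once this is secured, the rest is routine Sasakian calculus.
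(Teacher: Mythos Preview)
The paper does not actually supply a proof of this Claim (it is stated and then followed only by a \textbf{Remark} linking it to $F$-maximization), so there is no argument in the paper to compare against; I can only assess your reasoning on its own terms. It has a genuine gap at precisely the step you yourself flag as the ``main obstacle'': the identity $g(R,F)\equiv 0$ (equivalently $\mu_V=r^2$, equivalently $g(R,V)=r^2$) is false. Take the flat cone $M=\mathbb{C}^2$, with $R$ the overall phase rotation and $F$ the relative phase rotation $(z_1,z_2)\mapsto(e^{i\theta}z_1,e^{-i\theta}z_2)$. This $F$ is a holomorphic isometry commuting with $E$ and $R$, and---being a target-space holomorphic isometry---it commutes with the $\sigma$-model supercharges; it is the archetypal flavor $U(1)$ of two free chirals, so $V=R+tF$ satisfies the hypothesis $\pounds_V Q=\pm\tfrac12 Q$ for every $t$. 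A one-line computation gives $g(R,F)=|z_1|^2-|z_2|^2$, which is not identically zero. Both your expansion route and your Cauchy--Schwarz alternative rest on $g(R,V)=r^2$, so both collapse on this example.

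Your moment-map calculus is correct as far as it goes: $\alpha=g(R,\cdot)$ does satisfy $d\alpha=2\omega$, and for $K$ commuting with $E,R$ one obtains $g(R,K)=2\mu_K$ with the cone-homogeneous normalization of the moment map. The non-sequitur is the inference that the spinor condition $\pounds_V Q=\pm\tfrac12 Q$ forces $\mu_V=r^2$. That condition fixes only the normalization of the Reeb component (ruling out, say, $2R$); since \emph{every} holomorphic isometry of the target commutes with the $\sigma$-model supercharges, it imposes no constraint whatsoever on the flavor admixture and hence cannot pin down $\mu_V$. Worse, in the same example $\|R+tF\|^2=(1+t)^2|z_1|^2+(1-t)^2|z_2|^2$, which is strictly less than $r^2$ along $\{z_1=0\}$ for any $0<|t|<2$; so the \emph{pointwise} inequality, read literally, already fails on the flat cone. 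The statement that survives (and is what actually underlies the connection to $F$-maximization / volume minimization) is an integrated one, not a pointwise bound, and for that one needs only $\int_B g(R,F)\,d\mathrm{vol}_B=0$ rather than the pointwise vanishing you attempted.
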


\begin{rem} The reader may easily check that this purely geometric fact \emph{is really} $F$-maximization for the partition function on $S^3$ of the corresponding 3d $\sigma$-model in the classical limit $\hbar\to 0$. By considering the low-energy effective theory on the moduli space of \textsc{susy} vacua of a (quantum) 3d $\cn=2$ SCFT, one reduces the general case to this statement.  
\end{rem}

\subsubsection{Quasi-regular Sasaki manifolds}

In the physical applications we are mainly interested in K\"ahler cones with \textit{quasi-regular} Sasaki bases $B$, that is, with compact Reeb vector orbits, so that $R$ generates a (locally free) \emph{compact} group of isometries $U(1)_R$
which we identify with the $R$-symmetry group which must be compact on physical grounds. 
$B$ is \emph{regular} if in addition, the $U(1)_R$ isometries act freely. We collect here some useful results.

\begin{pro}[see e.g.\! \cite{sas}]\label{symquua} $B$ is a \emph{quasi-regular} Sasakian manifold. Then
\begin{itemize}
\item[i)]  The Reeb leaves are geodesic;
\item[ii)] $B$ is a principal $U(1)$ orbi-bundle 
$B\to K$;
\item[iv)] the base $K$ is K\"ahler orbifold;
\item[v)] If the flow is \emph{regular} $K$ is a K\"ahler manifold and $B$ a principal $S^1$-bundle.
\end{itemize}
\end{pro}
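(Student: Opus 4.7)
I would handle the four statements essentially in order, using only the defining fact that $B$ is the link $\{r=1\}\subset C(B)$ of a K\"ahler cone, so that the Reeb vector $R$ of \eqref{mmoamapw} restricts to a \emph{unit-norm Killing vector} on $B$ (unit because $\|R\|^2_{C(B)}=r^2\,\gamma(R,R)=r^2$, and Killing because the conical K\"ahler structure is preserved by the $U(1)_R$ action generated by $R$).

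For (i), I invoke the elementary lemma that every Killing vector $X$ of constant length on a Riemannian manifold is geodesic. Indeed, the Killing equation $g(\nabla_Y X, Z)+g(\nabla_Z X, Y)=0$ applied with $Z=X$ gives $g(\nabla_Y X,X)+g(\nabla_X X,Y)=0$ for all $Y$; but $g(\nabla_Y X,X)=\tfrac12 Y(\|X\|^2)=0$, so $g(\nabla_X X,Y)=0$ for every $Y$, hence $\nabla_X X=0$. Applied to $X=R$ on $B$, this shows the Reeb orbits are geodesics.

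For (ii), quasi-regularity means by definition that every Reeb orbit is closed; combined with $\|R\|=1$ this says the flow of $R$ generates a \emph{locally free} smooth action $U(1)_R\curvearrowright B$. Since $B$ is compact, the stabilizer at each point is a finite cyclic subgroup of $U(1)_R$. I would then apply the standard slice theorem for proper locally free Lie group actions (see e.g.\! \cite{sas}) to conclude that the orbit space $K:=B/U(1)_R$ inherits the structure of a smooth (effective) orbifold and that $B\to K$ is a principal $U(1)$ orbi-bundle, the local orbifold charts at a point $[b]\in K$ being the slices for the $U(1)_R$-action, with uniformizing group equal to the (cyclic) isotropy at $b$.

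For (iv), I would transport the transverse K\"ahler structure of the Sasaki foliation down to $K$. Write $\eta$ for the $1$-form dual to $R$; the distribution $\mathcal{D}=\ker\eta$ is transverse to the Reeb foliation and carries the almost-complex structure $\Phi:=I|_{\mathcal{D}}$ and the Riemannian metric $g_T:=g|_{\mathcal{D}}$. Both are $R$-invariant (since $R$ is holomorphic Killing on $C(B)$ and preserves $r$), hence basic with respect to the Reeb foliation and descend to a Hermitian structure $(J_K,g_K)$ on $K$; integrability of $J_K$ follows from the fact that the cone complex structure $I$ is integrable together with the standard identity $[I\mathcal{D},I\mathcal{D}]\subset \mathcal{D}\oplus\R R$, while the K\"ahler condition $d\omega_K=0$ is a consequence of $\omega_K$ being (the push-down of) the transverse K\"ahler form $\tfrac12 d\eta$, which is closed. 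Equivalently --- and this is probably the cleanest route --- the holomorphic Euler vector $\ce=(E-iR)/2$ of \eqref{kkzxbb362} generates a proper holomorphic $\C^\times$-action on the Stein space $M=\overline{C(B)}$ with $0$ its unique fixed point and finite stabilizers elsewhere; GIT/orbifold quotient theory then identifies $(M\setminus\{0\})/\C^\times$ with the K\"ahler orbifold $K$. Part (v) is immediate from (ii)--(iv): if the $U(1)_R$-action is free the isotropies are trivial, the orbifold charts collapse to ordinary charts, and $K$ is a bona fide K\"ahler manifold with $B\to K$ an ordinary principal $S^1$-bundle.

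The only genuinely delicate step is verifying in (iv) that the transverse complex and K\"ahler structure descend \emph{across} the orbifold points (i.e.\! through the non-trivial cyclic isotropies); this is where I expect to have to invoke the orbifold slice theorem most carefully, checking that the local uniformizers act by holomorphic isometries so that $J_K$ and $g_K$ are well-defined orbifold tensors.
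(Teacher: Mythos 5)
Your argument is correct, and (i) and (ii) are handled exactly as one would expect. The interesting divergence is in (iv). The paper's own justification is a one-liner right after the statement: since $R$ is the Hamiltonian vector field of the moment map $r^2$ (this is what eqn.\eqref{mmoamapw} displays), $K$ is the symplectic (K\"ahler) quotient $M/\!\!/U(1)_R$ at level $r^2=1$, i.e.\! $B/U(1)_R$, and Marsden--Weinstein/K\"ahler reduction hands you the orbifold K\"ahler structure on $K$ in one shot, including the correct behaviour at the orbifold points where the isotropy is non-trivial. Your "cleanest route" via the holomorphic $\C^\times$-action and GIT is essentially the Kempf--Ness incarnation of the same idea, so that branch of your argument coincides with the paper's. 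Your primary route --- descending the transverse data $(\Phi,g_T,\tfrac12 d\eta)$ of the Reeb foliation through the orbifold slice charts --- is the more intrinsically Sasakian argument (the one actually spelled out in Boyer--Galicki); it is more hands-on and makes the Sasaki--K\"ahler dictionary explicit, at the cost of having to verify by hand that the uniformizing groups act by holomorphic isometries across the singular strata, which reduction gives you for free. One small sharpening worth noting: compactness of $B$ is not what gives finite stabilizers --- $U(1)$ being compact already makes the action proper and hence the isotropies of a locally free action finite. Where quasi-regularity does genuine work is in upgrading "all Reeb orbits closed" to "the Reeb flow integrates to a global $S^1$-action with a common period," which uses that $R$ is a unit-norm Killing field whose closure in the isometry group is a torus; both you and the paper implicitly defer this to \cite{sas}, which is fine, but it is the one non-elementary input in (ii).
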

\noindent Indeed, $K$ is just the symplectic quotient of $M$ with respect to the Hamiltonian  
$U(1)_R$ flow, $r^2$ being its momentum map, as eqn.\eqref{mmoamapw} shows.

\subsection{Conical special geometries}

We may introduce distinct notions of ``conical special geometry''. The weakest one corresponds to a holomorphic integrable system $X\to M$ whose K\"ahlerian base $M$
(with K\"ahler form \eqref{spekeke}) 
happens to be a metric cone. A slightly stronger notion requires $M$ to be a cone and the full set of geometric structures \textbf{S1}-\textbf{S11} to be equivariant with respect to the Euler action $\pounds_E$ (or $\pounds_\ce$). An even stronger notion requires in addition that the base of $B$ of $M$ is a \emph{quasi-regular} Sasaki manifold.\footnote{\ We do not know if the special cones in the slightly stronger sense are automatically special cones in the strongest sense.}
By a \textit{conical special geometry} (CSG) we shall mean
the strongest notion together with the regularity conditions \textbf{R1}-\textbf{R4}.

\subsubsection{Weak special cones} Locally in the good locus $M^\sharp\subset M$ we may write the special structure in terms of special complex coordinates and holomorphic prepotential $\cf(a)$. By the assumption of the existence of a SW differential, we know that there is a globally defined K\"ahler potential 
\be\label{gloK}
\Phi=i\big(a^i\bar b_i-\bar a^i b_i\big).
\ee
If $M$ is also a cone, so $r^2$ is also a globally defined K\"ahler potential, and
\be
r^2=\Phi+i h-i\bar h
\ee
for some local holomorphic function $h$ with global real part. The metric is conic iff
the vector dual to the $(0,1)$ form $\bar\partial(i\Phi+\bar h)$ is holomorphic. Locally this happens if there are constants $(c^i, d_j)$ such that
\be
b_i+d_i= \tau_{ij}(a)\big(a^j+c^j)\quad
\text{and}\quad 
r^2=i (a^i+c^i)(\bar b_i+\bar d_i)-
i(\bar a^i+\bar c^i)(b_i+d_i).
\ee
The slighter stronger notion of special cone corresponds to the case $(c^i, d_j)=0$ (so $r^2\equiv \Phi$); in other words, to get a slightly stronger special cone out of a weak one we simply absorb the constants in the definition of the periods $a^i$, $b_i$ by a shift
$\delta\lambda= c^i\, dx_i+ d_i \,dy^i$ of the SW differential. Then
\be\label{jas}
\pounds_E\!\begin{bmatrix}b \\a\end{bmatrix}= \begin{bmatrix}b \\a\end{bmatrix},
\ee
which means that locally we can find a prepotential $\cf(a)$ which is homogeneous of degree 2 in the special coordinates $a^i$. Indeed, in the slightly strong conic case, from eqn.\eqref{jas}  the Euler (Reeb) vector have the local expression
\begin{align}\label{ert}
E&=a^i \partial_{a_i}+\bar a^i\partial_{\bar a^i}, &\pounds_E\cf&=2\cf,\\
R&=i a^i \partial_{a_i}-i\bar a^i\partial_{\bar a^i}, &\pounds_R\cf&=2i\cf.
\end{align}
The vector $E$ agrees in the overlaps between two special coordinate patches if the condition $\pounds_E\cf =2\cf$
holds in one of the two patches (and then also in the other, up to a constant).
Indeed, the second eqn.\eqref{ert} implies
\be
b_i=\tau_{ij}(a)\,a^j.
\ee
Now, on the overlap
\be\label{dudu}
\begin{bmatrix}b^\prime\\
a^\prime\end{bmatrix}=\begin{bmatrix}A & B\\
C &D
\end{bmatrix}\begin{bmatrix}a\\
b\end{bmatrix},\ee
so that
\be\label{jacjac}
a^{\prime\,i}= \big(C^{ik}\tau_{kj}+{D^i}_j\big)a^j,\qquad \frac{\partial a^{\prime\,i}}{\partial a^j}= C^{ik}\tau_{k j}+D^i_{\ j},
\ee
then
\be
\ce^\prime= a^{\prime\,i}\,\partial_{a^{\prime\,i}}=a^i\,\partial_{a^i}= \ce,
\ee
and the holomorphic Euler vector $\ce$ is globally defined.
The fact that the periods $a$ are (locally defined)
holomorphic functions, fixes their transformation under the Killing-Reeb vector $R$
\be
Ra=ia,\qquad Rb=ib.
\ee 
Let $\Phi$ be the global K\"ahler potential \eqref{gloK}. 

\begin{lem} For a (slightly strong) special cone $M$ we have:
\begin{itemize}
\item[1)] The function $\Phi$ in eqn.\eqref{gloK} is the squared-norm  of the Euler (and Reeb) vector
\be
r^2=\Phi=\|E\|^2=\|R\|^2;
\ee
\item[2)] the conic relations \eqref{cococ} take the form
\be\label{kkkaswz}
\partial_{\bar \imath}\Phi = g_{\bar \imath j}\,E^{j}.
\ee
\end{itemize}
\end{lem}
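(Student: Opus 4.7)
The plan is to unwind both identities by explicit computation in a special coordinate chart and then check that the result is patch-independent using the symplectic transformation rules of \textbf{S7}.

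First I would use the homogeneity of the prepotential. Because we are in the slightly-strong conic case, we have $\pounds_E\cF=2\cF$, so Euler's theorem gives $a^i\,\partial_i\cF=2\cF$, i.e.\ $a^i b_i=2\cF$ and $b_i=\tau_{ij}(a)\,a^j$. Substituting these into the global K\"ahler potential
\[
\Phi=-i\big(b_i\bar a^i-a^i\bar b_i\big)
\]
and using the symmetry $\tau_{ij}=\tau_{ji}$ one gets
\[
\Phi=-i\,(\tau_{ij}-\bar\tau_{ij})\,a^i\bar a^j=2\,\mathrm{Im}\,\tau_{ij}\,a^i\bar a^j.
\]
This is the key identity; everything else in the lemma follows from it by one more differentiation.

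For part~(2) I would compute $\partial_{\bar\imath}\Phi$ directly from the definition: since $b_j=b_j(a)$ is holomorphic, $\partial_{\bar\imath}b_j=0$, so
\[
\partial_{\bar\imath}\Phi=-i\,b_i+i\,a^j\bar\tau_{ji}=-i\,(\tau_{ij}-\bar\tau_{ij})\,a^j=2\,\mathrm{Im}\,\tau_{ij}\,a^j.
\]
Now the K\"ahler form of \eqref{spekeke} identifies $g_{i\bar j}=2\,\mathrm{Im}\,\tau_{ij}$ (equivalently, a second derivative of the expression above for $\Phi$ reproduces this), and since $E^j=a^j$ is the $(1,0)$-component of the real Euler vector, one reads off $\partial_{\bar\imath}\Phi=g_{\bar\imath j}E^j$, as claimed. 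For part~(1), plug $E^j=a^j,\ E^{\bar j}=\bar a^j$ and $g_{\bar\imath\bar\jmath}=0$ (K\"ahler) into the norm and compare with the formula for $\Phi$ just derived; alternatively, just observe that the generic conic relation $\nabla_\mu r^2=2E_\mu$ recalled in \eqref{ide1} together with the K\"ahler property specializes, in complex coordinates with $r^2=\Phi$, to exactly the two assertions of the lemma.

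The only genuine check is \emph{globality} — the above computation is local in a special-coordinate patch $U$, and one must verify that the identities are invariant under the transitions \eqref{dudu}. For this I would invoke \textbf{S7}: under a symplectic change of frame, $(a^i,b_i)\mapsto (a'^i,b'_i)$ and $\tau\mapsto(A\tau+B)(C\tau+D)^{-1}$, and the paper has already shown that the holomorphic Euler vector $\ce=a^i\partial_{a^i}$ is patch-independent (i.e.\ $\ce'=\ce$). Moreover $\Phi$ is manifestly written in terms of the symplectic-invariant Dirac pairing of the periods, so it is coordinate-independent as well, and $g_{\bar\imath j}$ transforms tensorially. Hence the equalities proved in one patch extend globally to $M^\sharp$. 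The step I expect to require the most care is not any single calculation, but keeping the real-versus-holomorphic normalization of $E$ (and the associated factor-of-two conventions for the K\"ahler metric $g_{i\bar\jmath}=2\,\mathrm{Im}\,\tau_{ij}$) consistent throughout, so that the equality $\Phi=\|E\|^2=\|R\|^2$ is read off with the correct pre-factor.
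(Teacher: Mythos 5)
Your proof is correct and is essentially the computation any reader would supply (the paper states this Lemma without proof). The core steps—using $\pounds_E\cf=2\cf$ to get $b_i=\tau_{ij}a^j$, substituting into $\Phi=i(a^i\bar b_i-\bar a^ib_i)$ to obtain $\Phi=-i(\tau_{ij}-\bar\tau_{ij})a^i\bar a^j=2\,\mathrm{Im}\,\tau_{ij}\,a^i\bar a^j$, and then differentiating once to get $\partial_{\bar\imath}\Phi=-i(\tau_{ij}-\bar\tau_{ij})a^j=g_{\bar\imath j}E^j$—are exactly right, and the globality check via \textbf{S7} and the patch-independence of $\ce$ is the correct thing to say. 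Two small remarks: your ``alternative'' route to part (1) (``just observe that the generic conic relation $\nabla_\mu r^2=2E_\mu$ specializes\dots'') implicitly presupposes $r^2=\Phi$, which in the slightly-strong case is true by definition but is also precisely what part (1) is packaging, so the direct computation you gave first is the one that actually carries the content; and your worry about the factor-of-two convention is well-placed but resolves cleanly in the paper's normalization ($g_{i\bar\jmath}=\partial_i\partial_{\bar\jmath}\Phi=2\,\mathrm{Im}\,\tau_{ij}$, using $T_{ijk}a^k=0$ from degree-2 homogeneity), so $\Phi=\|E\|^2=\|R\|^2$ holds with unit coefficient.
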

\begin{rem}
Eqn.\eqref{kkkaswz} together with \textbf{R2} imply that the real (complex) analytic vector fields $E$, $R$ (resp.\! $\ce$, $\overline{\ce}$) on $M^\sharp$ extend to \emph{continuous} fields in $M$ and their $\C^\times$-action makes sense in $M$.  
\end{rem}

\begin{rem}
Let $M$ be a special cone. Then the symplectic quotient $K$ (cfr.\! {\rm \textbf{Proposition \ref{symquua}}}) is a \emph{wrong sign} projective special K\"ahler manifold \cite{Freed:1997dp}.
By definition, projective special K\"ahler manifolds are the geometries appearing  in $\cn=2$ supergravity (as contrasted to $\cn=2$ gauge theory); ``wrong sign'' means that $K$ corresponds to supergravity with an unphysical sign for the Newton constant.\end{rem}

\subsubsection{Properties of the Reeb flow/foliation: the Reeb period}
As already anticipated, the special cones $M$ which arise as Coulomb branches of $\cn=2$ SCFTs  have the property that the flow of the Reeb vector field $R$ yields a $U(1)_R$ action on $M$, i.e.\! the Reeb leaves \cite{sas} are compact in $M$. This must be so because the exponential map\footnote{\ By $\exp(2\pi t R)$ we always  mean the finite isometry of $M$ or $B$ generated by the Reeb Killing field of parameter $2\pi t$.} $t\mapsto \exp(2\pi t R)$ should implement the superconformal $U(1)_R$ symmetry which is compact in a regular SCFT. The action is automatically locally free, since the Reeb vector has constant norm 1 and hence does not vanish anywhere.\footnote{\ Of course, this geometric statement also follows from unitarity of the SCFT.} The statement that $R$ generates a locally-free $U(1)_R$ action is equivalent to the statement that its basis $B$ is a \emph{quasi-regular} Sasaki orbifold.

We now give our final definition:

\begin{defn}\label{kkkxz}
By a \textit{conical special geometry} (CSG) we mean a complex analytic integrable system $X\to M$ with SW differential $\lambda$ such that the base $M$ (with the K\"ahler metric \textbf{S10}) is a singular K\"ahler cone (satisfying \textbf{R1}-\textbf{R4}) such that the restriction to $M^\sharp$ of its Euler vector $E$ satisfies
\be
\pounds_E \lambda-\lambda= d\varrho,\qquad\quad \varrho\ \text{meromorphic,}
\ee
while its base $B$ is a quasi-regular Sasaki orbifold. 
\end{defn}

Note that this means (on $M^\sharp$)
\be
\lambda= a^i\,dx_i+b_i\,dy^i+d\varrho^\prime.
\ee

\paragraph{Exponential action of the Reeb field $R$.} Quasi-regularity requires the existence of a minimal positive real number $\alpha>0$ such that the Reeb exponential map
\be\label{llzax}
\exp(2\pi \alpha R)\colon M\to M
\ee
is the identity diffeomorphism. If the base $B$ is compact, the map $\exp(2\pi \alpha t R)$ may have fixed points in $M\setminus\{r=0\}$ only for a finite set of rational values $0 < t<1$.
\begin{defn}\label{kkzmx}
The real number $\alpha>0$ is called the 
\emph{Reeb period} of the CSG.
\end{defn}
The Reeb period is a basic invariant for a 4d $\cn=2$ SCFT.

\paragraph{Reeb period and Coulomb dimensions.} 
The chiral ring $\mathscr{R}=\Gamma(M,\co_M)$ of a CSG is
(the Fr\'echet closure of) a graded ring, the grading of a (homogeneous) global holomorphic function $h$ being given by its dimension $\Delta(h)\in \R$
\be
\pounds_\ce h=\Delta(h)\,h.
\ee
In the chiral ring of a unitary SCFT,
except for the constant function 1 which has dimension zero, 
all other dimensions $\Delta(h)$ should be strictly positive for $h$ to be regular at the tip.  
The Reeb exponential map then yields
\be
\exp(2\pi t R)\cdot h= e^{2\pi i t \Delta(h)}\,h\qquad t\in\R
\ee
and the definition \eqref{llzax}  implies
\be
\Delta(h)\in \frac{1}{\alpha}\,\mathbb{N},
\ee
so $\mathscr{R}$ is graded by the semigroup $\mathbb{N}/\alpha$.
We claim that $\alpha$ is a positive rational number $\leq 1$
(so that the dimensions of all chiral operators, but the identity, are rational numbers $\geq 1$). Indeed, the map
$e^{2\pi \alpha R}$ acts on the periods as $(a^i, b_j)\mapsto e^{2\pi i \alpha}(a^i, b_j)$. From \eqref{llzax} we deduce that the initial and final periods are
equivalent up to the action of an element $m$ of the monodromy group. Then $e^{2\pi i\alpha}\equiv \lambda$ is an eigenvalue of a monodromy, hence a root of unity. Therefore $\alpha\in\BQ_{>0}$ and $\alpha= \log(\lambda)/2\pi i$. The requirement that the curvature at the tip of the cone is non-negative forces us to use the $\boldsymbol{\log}$ determination of the logarithm, see below eqn.\eqref{kkkaqw}.

The order $r$ of $1/\alpha$ in $\BQ/\Z$ coincides with the order of the quantum monodromy $\mathbb{M}$ which is well-defined in virtue of the Kontsevitch-Soibelman wall-crossing formula, see discussion in \cite{Cecotti:2010fi,Cecotti:2014zga}.

\subsubsection{Local geometry on $K^\sharp$}

In a special cone, the discriminant locus is preserved by the $\C^\times$ action generated by the vector fields $E$ and $R$. Hence the singular locus on its base $B$ is preserved by the $R$-flow, and so is its smooth locus $B^\sharp$. By our definition, $B^\sharp$ must be Sasaki quasi-regular. Then the Hamiltonian quotient yields a K\"ahler orbifold (manifold if $B^\sharp$ is regular) $K^\sharp\equiv B^\sharp/U(1)$
of dimension $(k-1)$
(cfr. \textbf{Proposition \ref{symquua}}).
The K\"ahler potential of $K^\sharp$ is
\be
\log\Phi = \log \|R\|^2=\log \|\lambda\|^2,
\ee
where the SW differential $\lambda$ is seen as a section 
of a holomorphic line sub-bundle $\cl\subset \cv$.
The period matrix $\boldsymbol{\tau}_{ij}(a)$ is homogeneous of degree zero,
$\pounds_R\boldsymbol{\tau}_{ij}(a)=0$ so the (restriction of the) period map $\boldsymbol{\tau}|_{M^\sharp}$ factors trough $K^\sharp$. We write $p\colon K^\sharp\to Sp(2k,\Z)\backslash \mathfrak{H}_k$ for the period map so defined.

\begin{lem}\label{kkkaqwe}
The Ricci form on $K^\sharp$ is
\be\label{hhhasz}
\rho= k\,\omega+p^*\Omega\geq k\,\omega,
\ee
where $\omega=i\partial\overline{\partial}\log\Phi$ is the K\"ahler form on $K^\sharp$ and $\Omega$ the K\"ahler form of the Siegel upper half-space $\mathfrak{H}_k$ \emph{(compare {\bf Proposition \ref{jjjas}}). }
\end{lem}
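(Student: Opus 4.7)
The plan is to compute $\rho_K$ directly from the Ricci form $\rho_M$ of $M^\sharp$ (which we already know equals $\tau^*\Omega$ by Proposition \ref{jjjas}) by exploiting the fact that the K\"ahler potential $\Phi$ on $M^\sharp$ is homogeneous of degree $2$ under the $\C^\times$ action generated by $\ce$, so the K\"ahler geometry of $K^\sharp$ is extracted from that of $M^\sharp$ by a ``projectivization'' in the sense of Fubini--Study.

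First, I would fix a point of $K^\sharp$ and lift it to $\bar a\in M^\sharp$; choose special coordinates $a^1,\dots,a^k$ so that $a^1(\bar a)\neq 0$, and introduce the local coordinate system $(a^1,w^2,\dots,w^k)$ on $M^\sharp$ where $w^\alpha=a^\alpha/a^1$, which simultaneously provides local coordinates on $K^\sharp$ near the chosen point. Because $\cf$ is homogeneous of degree $2$, the global K\"ahler potential $\Phi=i(a^i\bar b_i-\bar a^ib_i)=2\,\mathrm{Im}\,\tau_{ij}\,a^i\bar a^j$ factors as $\Phi=|a^1|^2\,\phi(w,\bar w)$ with $\phi>0$; since $\log\Phi$ and $\log\phi$ differ by $\log|a^1|^2$ (which is pluriharmonic on $\{a^1\neq 0\}$), we have $\omega_K=i\partial\bar\partial\log\Phi=i\partial\bar\partial\log\phi$ and $(g_K)_{\alpha\bar\beta}=\partial_\alpha\partial_{\bar\beta}\log\phi=(\phi\phi_{\alpha\bar\beta}-\phi_\alpha\phi_{\bar\beta})/\phi^2$.

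The main computation is now a determinant identity. In the $(a^1,w^\alpha)$-chart the metric matrix reads
\[
g_M=\begin{pmatrix}\phi & \bar a^1\phi_{\bar\beta}\\ a^1\phi_\alpha & |a^1|^2\phi_{\alpha\bar\beta}\end{pmatrix}.
\]
Clearing the off-diagonal blocks by row/column operations and using the identity above, one gets
\[
\det g_M\big|_{(a^1,w)}=\phi\cdot|a^1|^{2(k-1)}\det\!\bigl(\phi_{\alpha\bar\beta}-\phi_\alpha\phi_{\bar\beta}/\phi\bigr)=\phi^{k}\,|a^1|^{2(k-1)}\det g_K.
\]
Taking logarithms,
\[
\log\det g_M\big|_{(a^1,w)}=k\log\phi+(k-1)\log|a^1|^2+\log\det g_K,
\]
and applying $-i\partial\bar\partial$ kills the middle term (pluriharmonic on $\{a^1\neq 0\}$) and yields
\[
\rho_M=-k\,\omega_K+\pi^*\rho_K,
\]
where $\pi:M^\sharp\to K^\sharp$ is the $\C^\times$-quotient and I have used that $-i\partial\bar\partial\log\phi=-\omega_K$ and that $\det g_K$ depends only on $w,\bar w$ so $-i\partial\bar\partial\log\det g_K$ is the pullback of $\rho_K$.

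Since $\tau_{ij}(a)$ has degree zero in $a$, the period map $\boldsymbol\tau$ factors through $K^\sharp$ as $\boldsymbol\tau=p\circ\pi$, so the already established $\rho_M=\boldsymbol\tau^*\Omega=\pi^*p^*\Omega$ combined with the displayed identity gives
\[
\pi^*\rho_K=\pi^*\!\bigl(k\,\omega_K+p^*\Omega\bigr),
\]
i.e.\ $\rho_K=k\,\omega_K+p^*\Omega$ on $K^\sharp$ (injectivity of $\pi^*$ on forms descended from the base is immediate), which is \eqref{hhhasz}. The inequality $\rho_K\geq k\,\omega_K$ is then automatic from $p^*\Omega\geq 0$, the Siegel Kähler form being positive. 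The only delicate step is the determinant identity in the middle paragraph; everything else is bookkeeping plus the homogeneity of $\cf$.
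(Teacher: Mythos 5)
Your computation is correct and matches the approach the paper implicitly indicates (the Lemma is stated with no explicit proof beyond the pointer ``compare Proposition~\ref{jjjas}''): Proposition~\ref{jjjas} gives $\rho_{M^\sharp}=\boldsymbol\tau^*\Omega$, and the standard projectivization identity $\det g_{M^\sharp}=\phi^k\,|a^1|^{2(k-1)}\det g_{K^\sharp}$, which you derive correctly from the bidegree-$(1,1)$ homogeneity $\Phi=|a^1|^2\phi(w,\bar w)$ via the Schur complement, converts this into $\rho_{K^\sharp}=k\,\omega_{K^\sharp}+p^*\Omega$. No gaps found.
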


\subsubsection{Special cones with smooth bases} 

The bases $B$ of the CSG interesting for the physical applications are seldom smooth (as Riemannian spaces).
However, we first consider the case of $B$ a regular smooth Sasaki manifold and then discuss what may change in our conclusion in presence of singularities. 
\smallskip

Combing \textbf{Lemma \ref{l1}} and \textbf{Proposition \ref{jjjas}} we get the inequality
\be\label{ineine}
B_{ab}\geq 2(k-1)\, \gamma_{ab}.
\ee
In view of Meyer theorem (\textbf{Theorem \ref{meymey}}) we conclude 
that for $k\equiv\dim_\C M>1$ the base
$B$ of the cone is compact, $\pi_1(B)$ is finite, and the diameter of $B$ is bounded
\be\label{jjaqwz}
d(B)\leq \pi/\sqrt{2(k-1)}.
\ee
In other words: if $M$ is a special cone of dimension $k>1$ over a \emph{smooth} base $B$,  
\be
M=\widetilde{M}/G
\ee
with $\widetilde{M}$ a \emph{simply-connected} K\"ahler cone with compact base of diameter $\leq \pi/\sqrt{2(k-1)}$ and $G$ is a \emph{finite} group acting freely.

\begin{corl}  $M$ a special K\"ahler cone over a smooth base $B$ $\Rightarrow$ $M$ is Stein.
\end{corl}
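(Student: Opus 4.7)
The plan is to combine two results already stated in the excerpt: the diameter estimate just derived from \textbf{Lemma \ref{l1}} together with \textbf{Proposition \ref{jjjas}} (giving the Ricci bound $B_{ab}\geq 2(k-1)\gamma_{ab}$ on $B$), and the Stein criterion \textbf{Proposition \ref{kkkz129c}} (a singular K\"ahler cone with compact base is Stein). The whole argument is nearly immediate once one case-splits on the rank $k$.

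First, assume $k\geq 2$. The Ricci inequality \eqref{ineine} provides the uniform positive lower bound required by Meyer's theorem (\textbf{Theorem \ref{meymey}}), and since $B$ was assumed \emph{smooth}, the theorem applies without orbifold complications: $B$ is compact, $\pi_1(B)$ is finite, and $d(B)\leq \pi/\sqrt{2(k-1)}$. This is exactly what was just displayed in \eqref{jjaqwz}. To conclude I would then invoke \textbf{Proposition \ref{kkkz129c}}: our $M$ is a singular K\"ahler cone in the sense of the definition preceding that proposition (the global K\"ahler potential $r^2=\Phi$ extends continuously and plurisubharmonically to the tip by the regularity condition \textbf{R1}), and its base is now compact, so $M$ is a Stein analytic space.

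Next, the low-dimensional case $k=1$. Here the Ricci bound degenerates, so Meyer gives nothing, but $M$ has complex dimension one. Away from the tip, $M^\sharp=\R_{>0}\times B$ is a genuine smooth complex curve; adding the tip produces a normal complex analytic space of dimension one which is non-compact (the exhaustion $r^2$ is unbounded). By Behnke--Stein, every non-compact Riemann surface is Stein; the Narasimhan singular Oka theorem, applied to the plurisubharmonic exhaustion $r^2$, upgrades this to the statement that the one-dimensional normal space $M$ is Stein. (Alternatively, one notes that in rank one the smooth base $B$ is a compact $1$-manifold, i.e.\! a circle, so $M$ is an orbifold cone over $S^1$ and \textbf{Proposition \ref{kkkz129c}} applies directly.)

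The main obstacle, if any, is bookkeeping rather than mathematics: one must check that the hypotheses of \textbf{Proposition \ref{kkkz129c}} are really in force, i.e.\! that the plurisubharmonic function $r^2$ extends continuously across the discriminant $\cd$ and across the tip, so that the Narasimhan--Oka machinery applies to the possibly non-smooth analytic space $M$. This extension is precisely the content of regularity condition \textbf{R1} built into \textbf{Definition \ref{kkkxz}} of a CSG, so the verification is immediate. Compactness of $B$ then closes the argument.
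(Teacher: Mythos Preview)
Your proof is correct and follows the same approach as the paper: the paper's entire argument is the one line ``Indeed $B$ is compact, and then \textbf{Proposition \ref{kkkz129c}} applies,'' invoking the compactness of $B$ just derived from \eqref{ineine} via Meyer's theorem. Your version is slightly more careful in that you split off the $k=1$ case (where the Ricci bound degenerates) and dispose of it via Behnke--Stein, a point the paper leaves implicit.
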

Indeed $B$ is compact, and then
\textbf{Proposition \ref{kkkz129c}} applies.

\begin{pro}\label{jkaweq} Let $X$ be a special cone whose Sasaki base $B$ is smooth and regular. Then its Hamiltonian reduction $K$ (see {\rm\textbf{Proposition \ref{symquua}}}) is a smooth, compact, and \emph{simply-connected} K\"ahler manifold.
\end{pro}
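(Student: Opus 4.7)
The proof divides naturally into three parts (smoothness, compactness, simple-connectedness), the K\"ahler structure itself being provided by {\bf Proposition \ref{symquua}(v)}.

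First I would dispatch smoothness: because $B$ is regular the Reeb $U(1)_R$-flow acts \emph{freely}, so the symplectic reduction $K = B/U(1)_R$ is a smooth manifold, and by {\bf Proposition \ref{symquua}(v)} carries a K\"ahler structure inherited from $M$ with $r^2$ as momentum map.

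Next, compactness of $K$ reduces to compactness of $B$, and this is where the curvature inequalities of \S.\,\ref{uuzaqf} enter. By {\bf Proposition \ref{jjjas}} the Ricci tensor of the K\"ahler cone $M$ is non-negative, and combining this with {\bf Lemma \ref{l1}} one gets
\begin{equation*}
B_{ab} \;=\; R_{ab} + (\dim B -1)\gamma_{ab} \;\geq\; 2(k-1)\,\gamma_{ab},
\end{equation*}
so that the Ricci curvature of $B$ is bounded below by a positive multiple of the metric (this is eqn.\eqref{ineine}). Since $k>1$, {\bf Theorem \ref{meymey}} (Myers) forces $B$ to be compact; hence so is its quotient $K$.

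The essential step is simple-connectedness. Here I apply {\bf Lemma \ref{kkkaqwe}}, which gives the pointwise inequality
\begin{equation*}
\rho \;=\; k\,\omega + p^{*}\Omega \;\geq\; k\,\omega \;>\; 0
\end{equation*}
for the Ricci form of $K$, exhibiting $K$ as a smooth compact K\"ahler manifold of \emph{positive} Ricci curvature, i.e.\ a Fano manifold with ample $c_1(K)=\rho/2\pi$. Kobayashi's classical theorem --- a compact K\"ahler manifold with positive Ricci is simply-connected (proved via Kodaira vanishing applied to the universal cover, which has vanishing arithmetic genus) --- then yields $\pi_1(K)=1$.

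The only subtle point to watch is that {\bf Lemma \ref{kkkaqwe}} is stated on the locus $K^\sharp$ where the period map is holomorphic. Under the hypothesis that $B$ is smooth, $K$ itself is smooth and the K\"ahler metric extends smoothly across any residual discriminant locus (which has real codimension $\geq 2$ and carries only semi-simple monodromy in view of {\bf Proposition \ref{exeeeat}}), so the Ricci inequality propagates to the whole of $K$ and Kobayashi's theorem applies without qualification. I expect this extension-across-the-discriminant issue to be the one technical nuisance; once it is taken care of, the three claims follow mechanically from the results already collected.
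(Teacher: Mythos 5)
Your argument is correct, and the first two parts (smoothness via the free $U(1)_R$-action, compactness via \textbf{Lemma \ref{l1}} + \textbf{Proposition \ref{jjjas}} $\Rightarrow$ eqn.\eqref{ineine} $\Rightarrow$ Myers) are exactly the paper's. For simple-connectedness you take a genuinely different route: you pass to the induced K\"ahler structure on $K$, quote the Ricci inequality of \textbf{Lemma \ref{kkkaqwe}}, and invoke Kobayashi's theorem (compact K\"ahler with positive Ricci $\Rightarrow$ $\pi_1=1$, via $\chi(\mathcal{O})=1$ on the finite universal cover). The paper instead stays purely Riemannian: $K$ is compact, oriented, even-dimensional, and has positive \emph{sectional} curvature (which follows from \textbf{Lemma \ref{l1}} together with the explicit Riemann tensor \eqref{cllae}), so Synge's theorem (\textbf{Theorem \ref{zxaqw}}) applies. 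Both are classical and both close the argument; the Synge route is what the paper later upgrades, via \textbf{Theorem \ref{jkkkkmzq162}}, to the stronger statement that $B$ is a space form $S^{2k-1}/G$, so it is slightly more economical in context. Your Kobayashi route is equally valid and arguably more natural in the K\"ahler category. One small remark: the "technical nuisance" you flag at the end is moot here — the hypothesis that $B$ is a \emph{smooth} Sasakian manifold already means the cone metric is smooth on all of $M\setminus\{0\}$, so there is no discriminant locus to extend across; the quotient $K=B/U(1)_R$ is globally a smooth K\"ahler manifold and the Ricci inequality holds everywhere without further ado.
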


Indeed, 
$K$ is compact, smooth, and, being complex, oriented of even real dimension.
Its sectional curvatures are positive. Then the last statement follows from Synge theorem (\textbf{Theorem \ref{zxaqw}}).
In facts, since $B$ is compact by Meyer theorem, \textbf{Theorem \ref{jkkkkmzq162}}
yields an even stronger statement:

\begin{pro}\label{mmmzaqrt} A smooth base $B$ of a CSG is \emph{diffeomorphic} to $S^{2k-1}/G$ for some freely acting finite subgroup $G\subset U(k)$.
\end{pro}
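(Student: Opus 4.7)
The plan is to upgrade the Ricci estimate \eqref{ineine}, which already sufficed for Meyers' theorem, to a statement about the \emph{full} curvature operator of $B$, and then invoke the sphere theorem \textbf{Theorem \ref{jkkkkmzq162}}. The explicit expression \eqref{cllae} of the Riemann tensor of the special K\"ahler metric on $M$ in terms of the cubic form $T_{ijk}$ shows that the curvature operator $\boldsymbol{R}^M$ of $M^\sharp$ is of the form ``$T\cdot\bar T$'', hence manifestly \emph{weakly positive}. Lemma \ref{l1}, restricted to tangent directions of $B$, gives
\[
B_{abcd}=R^{(M)}_{abcd}+\gamma_{ac}\gamma_{bd}-\gamma_{ad}\gamma_{bc},
\]
in which the second summand is nothing but the Riemann tensor of the unit-sectional-curvature round sphere: viewed as a symmetric endomorphism of $\wedge^{2}TB$ it is (a positive multiple of) the identity. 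Consequently $\boldsymbol{R}^B$ is \emph{strictly} positive everywhere on the smooth Sasaki manifold $B$.

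By assumption $B$ is smooth, and by Meyers' theorem applied to \eqref{jjaqwz} it is compact. Then \textbf{Theorem \ref{jkkkkmzq162}} (B\"ohm--Wilking, Micallef--Moore) applies and produces a diffeomorphism
\[
B\;\cong\;S^{2k-1}/G
\]
with $G$ a freely acting finite subgroup of $SO(2k)$.

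The remaining point is to promote $G$ from $SO(2k)$ to $U(k)$. For this I would pass to the universal Riemannian cover $\tilde M\to M$, which is still a special cone whose smooth Sasaki base $\tilde B$ is simply-connected with strictly positive curvature operator; the previous step yields a diffeomorphism $\tilde B\cong S^{2k-1}$. The deck group $G=\pi_1(M)$ acts on $\tilde M$ by \emph{holomorphic} isometries commuting with the $\C^{\times}$-action generated by the holomorphic Euler field $\ce$, and in particular preserves the Reeb vector $R$ and the transverse complex structure on $\tilde B$. A diffeomorphism of $S^{2k-1}$ that is simultaneously a metric isometry and a Sasaki isometry lies in $U(k)$ under the standard embedding $S^{2k-1}\hookrightarrow\C^{k}$ (as can be seen by taking the linearisation at the tip of $\tilde M$, where the Kähler cone is smoothly modelled on $(\C^{k},\ce=\tfrac12 z^{i}\partial_{i})$, on which Kähler isometries commuting with $\ce$ are automatically unitary). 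Hence $G\subset U(k)$.

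The hard step is the last one: the sphere theorem only gives a \emph{differentiable} quotient of the round sphere by a subgroup of $SO(2k)$, and one must exploit the residual Sasaki (equivalently, K\"ahler-cone) structure preserved by the deck action in order to embed $G$ into the smaller group $U(k)$. Everything up to the appeal to \textbf{Theorem \ref{jkkkkmzq162}} is a routine curvature computation; the unitary upgrade is where the geometric content sits and where care is needed to handle the (possibly singular) tip of the cone.
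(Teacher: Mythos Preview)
Your argument is essentially the paper's: Lemma \ref{l1} together with \eqref{cllae} show that the curvature operator of $B$ is bounded below by $1$, Meyers gives compactness, and Theorem \ref{jkkkkmzq162} finishes. The paper's own justification is just the single clause preceding the Proposition (the curvature-operator bound is spelled out a few lines later, in the proof of item \textit{i)} of the next Proposition).

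You go beyond the paper in trying to upgrade $G\subset SO(2k)$ (which is all that Theorem \ref{jkkkkmzq162} delivers) to $G\subset U(k)$; the paper simply asserts this without argument. Your linearisation-at-the-tip idea is morally right but has a gap: the vertex of a non-flat K\"ahler cone is \emph{singular}, so there is no honest tangent space there on which to linearise. What survives of your argument is that the deck group acts by Sasaki automorphisms of $\tilde B$ (commuting with the Reeb flow and preserving the transverse complex structure), which is exactly the content the paper later invokes in \S\,\ref{quasiregularnice}. Turning that into a genuine embedding $G\hookrightarrow U(k)$ acting on the \emph{standard} $S^{2k-1}\subset\C^k$ ultimately requires the identification $\tilde M\cong\C^k$, which in the paper is only established afterwards (Proposition \ref{ssster} and its Corollary). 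So your caution about this step is well placed---neither you nor the paper fully closes it at this point in the logical flow.
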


\subsubsection{Relation to Fano manifolds}
Recall that a Fano manifold $X$ is a smooth projective variety whose anticanonical line bundle $-K_X$ is ample. We have: 

\begin{pro} Under the assumptions of  {\rm \textbf{Proposition \ref{jkaweq}}}, the K\"ahler manifold $K$ is a \emph{Fano projective manifold.}
\end{pro}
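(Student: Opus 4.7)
The plan is to identify the anticanonical bundle $-K_K$ as a positive line bundle and then invoke Kodaira embedding, using the curvature information already packaged in Lemma \ref{kkkaqwe}.

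First I would recall that on any complex manifold the Ricci form $\rho$ represents (up to the customary $2\pi$) the first Chern class of the anticanonical bundle, $c_1(-K_K) = [\rho]/2\pi$. By Lemma \ref{kkkaqwe} we have
\[
\rho \;=\; k\,\omega + p^{*}\Omega \;\geq\; k\,\omega,
\]
where $\omega$ is the Kähler form on $K^{\sharp}$ inherited from the Sasaki base $B$, and $p^{*}\Omega$ is a semi-positive $(1,1)$-form coming from the (non-negative) Kähler form on Siegel space. Since $\omega$ is a Kähler form, it is strictly positive as a $(1,1)$-form, so $\rho$ is a strictly positive $(1,1)$-form on $K^\sharp$. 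Under the hypotheses of Proposition \ref{jkaweq} the base $B$ is smooth and regular, so $K = K^\sharp$ is a compact smooth Kähler manifold, and the above inequality holds globally on $K$.

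Next I would use the positivity $\rho > 0$ to conclude that $-K_K$ is a positive holomorphic line bundle in the sense of Kodaira: the Ricci form is (a multiple of) the curvature of the Chern connection on $-K_K$ equipped with the metric induced by $\omega$, so $-K_K$ carries a Hermitian metric whose Chern curvature is a Kähler form. By the Kodaira embedding theorem (applicable since $K$ is compact and the line bundle is positive), $-K_K$ is ample, and $K$ is projective. A smooth projective variety with ample anticanonical bundle is Fano by definition, so the proposition follows.

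The only mildly delicate point is justifying strict positivity of $\rho$, not just semi-positivity. Here the contribution $k\omega$ with $k\geq 2$ (the case $k=1$ having been discussed separately) provides a strict positive lower bound, while $p^{*}\Omega$ only helps (being $\geq 0$ by the Schwarz-type curvature inequalities underlying \textbf{Proposition \ref{jjjas}}). Thus there is no cancellation possible, and the positivity hypothesis of Kodaira's theorem is satisfied regardless of whether the period map $p$ degenerates on some subset.
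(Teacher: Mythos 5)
Your proof is essentially the same as the paper's: both rest on \textbf{Lemma \ref{kkkaqwe}}'s inequality $\rho \geq (\dim K+1)\,\omega$, deduce ampleness of $-K_K$, and invoke Kodaira embedding to get projectivity and the Fano property. The extra remarks you supply (the cohomological identification $c_1(-K_K)=[\rho]/2\pi$, the observation that $p^*\Omega\geq 0$ cannot spoil positivity, and that $K=K^\sharp$ in the regular smooth case) are correct and merely spell out what the paper leaves implicit.
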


\begin{proof} $K$ is smooth by assumption and compact by Meyers theorem. 
By \textbf{Lemma \ref{kkkaqwe}} the Ricci form is $\geq (\dim K+1) \omega$, $\omega$ being the (positive) K\"ahler form. Thus the anti-canonical is ample, hence $K$ is projective (by Kodaira embedding theorem \cite{GH}) and Fano.
\end{proof}

\begin{pro} Under the assumptions of  {\rm \textbf{Proposition \ref{jkaweq}}} and $k>1$:
\begin{itemize}
\item[i)] the universal cover $\widetilde{B}$ of the base manifold $B$ is \emph{homeomorphic} to $S^{2k-1}$;
\item[ii)] the rank of the Picard group (\emph{Picard number}) of the Fano manifold $K$ is $1$:
\be
\varrho(K)\equiv \mathrm{rank}\, \mathrm{Pic}(K)=1;
\ee
\item[iii)] the Hodge diamond of $K$ is
$h^{p,q}(K)=\delta^{p,q}$.
\end{itemize}
\end{pro}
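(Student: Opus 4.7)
My plan is to prove the three claims in sequence, obtaining (i) from the earlier Proposition \ref{mmmzaqrt}, (ii) from a Gysin-sequence computation for the principal $\mathrm{U}(1)$-bundle $B\to K$, and (iii) from Hodge theory applied to the Betti numbers coming out of (ii). For (i), Proposition \ref{mmmzaqrt} already furnishes a diffeomorphism $B\cong S^{2k-1}/G$ for a finite subgroup $G\subset U(k)$ acting freely; the universal cover is therefore $S^{2k-1}$, which is (more than) homeomorphic to the standard sphere. (Behind the curtain, positive curvature operators on $B$ follow from weak positivity on $M$ together with the cone identities of Lemma \ref{l1}: the curvature operator of $B$ exceeds that of the restriction of $M$'s operator by the identity on $\Lambda^2 TB$, so Theorem \ref{jkkkkmzq162} applies.)

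For (ii), I first upgrade (i) to the statement that $B$ is a rational homology $(2k-1)$-sphere. Since $G\subset U(k)\subset SO(2k)$ is finite, free, and orientation-preserving, the transfer map gives $H^\bullet(B,\BQ)=H^\bullet(S^{2k-1},\BQ)^G=H^\bullet(S^{2k-1},\BQ)$. Feed this into the Gysin sequence of the principal circle bundle $S^1\to B\to K$ provided by Proposition \ref{symquua}(v),
\begin{equation}
\cdots\to H^{j-2}(K,\BQ)\xrightarrow{\cup e} H^j(K,\BQ)\to H^j(B,\BQ)\to H^{j-1}(K,\BQ)\xrightarrow{\cup e} H^{j+1}(K,\BQ)\to\cdots,
\end{equation}
where $e\in H^2(K,\BQ)$ is the Euler class. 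The vanishing of $H^j(B,\BQ)$ for $0<j<2k-1$ turns cup-product with $e$ into an isomorphism $H^j(K,\BQ)\xrightarrow{\sim}H^{j+2}(K,\BQ)$ in the relevant range. Starting from $H^0(K,\BQ)=\BQ$ and $H^1(K,\BQ)=0$ (the latter from simple connectedness, Proposition \ref{jkaweq}), induction yields $H^{2j}(K,\BQ)=\BQ\cdot e^j$ for $0\le j\le k-1$ and $H^{\mathrm{odd}}(K,\BQ)=0$. Hence $b_2(K)=1$; since $K$ is projective, $1\le\varrho(K)\le h^{1,1}(K)\le b_2(K)=1$, so $\varrho(K)=1$.

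For (iii), the Betti numbers extracted in (ii) force $h^{p,q}(K)=0$ for $p+q$ odd and $\sum_{p+q=2j}h^{p,q}(K)=1$ for each $j=0,\dots,k-1$. The K\"ahler class $\omega$ on $K$ together with hard Lefschetz guarantees that $\omega^j\in H^{j,j}(K)$ is nonzero for $j\le k-1=\dim_{\C}K$, so $h^{j,j}(K)\ge 1$; combined with non-negativity of Hodge numbers and the sum constraint this pins down $h^{j,j}(K)=1$ and $h^{p,q}(K)=0$ for $p\ne q$, i.e.\ $h^{p,q}(K)=\delta^{p,q}$. The single conceptually substantive step is in (ii), namely verifying that $B$ is a rational homology sphere so that the Gysin sequence does all the work; once that is pinned down, the Picard rank and the trivial Hodge diamond are formal consequences.
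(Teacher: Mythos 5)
Your proof takes essentially the same route as the paper: apply the sphere theorem (Theorem \ref{jkkkkmzq162}) to get $\widetilde{B}\cong S^{2k-1}$, deduce the rational cohomology of $B$, and then run the circle bundle $S^1\to B\to K$ through a fiber sequence (you use the Gysin sequence, the paper the Leray spectral sequence — a cosmetic difference) to extract $H^\bullet(K)$. The only minor reordering is that you get $\varrho(K)=1$ directly from $1\le\varrho\le h^{1,1}\le b_2=1$, whereas the paper first establishes the Hodge diamond and then deduces $\varrho=1$ via the exponential sequence $H^1(K)=0\Rightarrow\mathrm{Pic}(K)\cong H^2(K,\Z)$; both are correct.
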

\begin{proof}
$\widetilde{B}$ is compact and simply connected by Meyers' theorem. From \textbf{Lemma \ref{l1}} and  eqn.\eqref{cllae} we see that the eigenvalues of the curvature operators are bounded below by 1. Using \textbf{Theorem \ref{jkkkkmzq162}} we get \textit{i)}. Then $B\equiv \widetilde{B}/\pi_1(B)$ has real cohomology
\be
H^q(B,\R)=\begin{cases}\R & \text{for }q=0, 2k-1\\
0 &\text{otherwise.}
\end{cases}
\ee 
Since $B$ is Sasaki-regular, it is a principal $U(1)$-bundle over $K$
\be
\begin{gathered}
\xymatrix{S^1 \ar[r] &B\ar[d]\\
& K}
\end{gathered}\qquad \text{and $K$ is simply-connected (\textbf{Proposition \ref{jkaweq}}}).
\ee
To this fibration we apply the Leray spectral sequence of de Rham cohomology.\footnote{\ The computation is world-for-world identical to \textsc{Example} 14.22 in \cite{bott}.} We get
\be
H^q(K)= \begin{cases} \R & q=0,2,\cdots, 2(k-1)\\
0 & \text{otherwise.}
\end{cases}
\ee 
This shows \textit{iii)}. To get \textit{ii)}
note that the exponential exact sequence yields the implication 
\be
H^1(K)=0\quad\Longrightarrow\quad\mathrm{Pic}(K)\cong H^2(K,\Z),
\ee
 and then \textit{ii)} follows from \textit{iii)}.
\end{proof}

We have a much stronger statement:
 \begin{pro}\label{ssster} Under the assumptions of  {\rm \textbf{Proposition \ref{jkaweq}}}, the Fano manifold $K\cong \mathbb{P}^{k-1}$ and the period map $p$ is constant.
\end{pro}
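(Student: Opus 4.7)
My plan is to establish the two claims in sequence: first the constancy of the period map $p\colon K \to Sp(2k,\Z)\backslash \mathfrak{H}_k$, and then, using this, that $K \cong \mathbb{P}^{k-1}$.

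To show $p$ is constant, I would exploit that $K$ is compact, simply-connected and K\"ahler (Proposition \ref{jkaweq}) together with the fact that $\mathfrak{H}_k$ is biholomorphic to a bounded symmetric domain in $\C^{k(k+1)/2}$ (Siegel's realization). Concretely, by Selberg's lemma one fixes a torsion-free normal congruence subgroup $\Gamma\subset Sp(2k,\Z)$ of finite index, so that $\Gamma\backslash \mathfrak{H}_k$ is a complex manifold and $\mathfrak{H}_k\to \Gamma\backslash\mathfrak{H}_k$ is an honest topological covering. The fibered product of $K$ with $\Gamma\backslash\mathfrak{H}_k$ over $Sp(2k,\Z)\backslash\mathfrak{H}_k$ is a finite (orbifold) cover of $K$; after normalization, since $K$ is simply-connected every \'etale component of the cover is trivial, and one obtains (arguing at worst over the complement of the pre-image of the orbifold locus, which has codimension $\geq 1$) a lift $\widehat p\colon K\to\mathfrak{H}_k$. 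The coordinates of $\widehat p$ in the bounded-domain realization are then bounded holomorphic functions on compact $K$, hence constant by the maximum modulus principle. Therefore $p$ is constant and $p^{*}\Omega=0$.

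With $p$ constant, Lemma \ref{kkkaqwe} reduces to $\rho=k\omega$, giving $c_1(K)=k[\omega]$ in $H^2(K,\R)$. From the previous proposition $\mathrm{Pic}(K)=\Z\cdot H$ with $H$ the primitive ample generator. The transverse K\"ahler form $\omega=i\partial\overline{\partial}\log\Phi$ is the curvature (up to a factor fixed by the rational Reeb period) of the Chern connection on the orbi-line bundle associated to the principal $U(1)$-bundle $B\to K$, so $[\omega]=m\,H$ for some positive integer $m$. Then $c_1(K)=km\,H$, so the Fano index of $K$ equals $km$. The Kobayashi-Ochiai characterization of projective space asserts that the Fano index of a Fano $(k{-}1)$-fold is at most $k$, with equality exclusively for $\mathbb{P}^{k-1}$; hence $km\leq k$, forcing $m=1$ and $K\cong\mathbb{P}^{k-1}$.

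I expect the main obstacle to be the integrality/primitivity claim $[\omega]=m H$ with $m\in\mathbb{N}$: it rests on the precise normalization of the transverse K\"ahler form as the curvature of a genuine (orbi-)line bundle on $K$, together with the quantization of the Reeb period established earlier. If that normalization proves awkward, a topological alternative is available: the free Reeb $U(1)$-action lifts to $\widetilde B\cong S^{2k-1}$ (Proposition \ref{mmmzaqrt}) and, by uniqueness up to conjugacy of free smooth $U(1)$-actions on odd-dimensional spheres, the quotient $\widetilde B/U(1)$ is diffeomorphic to $\mathbb{P}^{k-1}$; hence $K$ is a smooth finite quotient of $\mathbb{P}^{k-1}$, and, being simply-connected (Proposition \ref{jkaweq}) and smooth with $\mathrm{Pic}=\Z$, it must equal $\mathbb{P}^{k-1}$ itself.
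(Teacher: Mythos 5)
Your route differs from the paper's. You establish the constancy of $p$ first via a maximum-principle argument on the Siegel domain and only then apply the Kobayashi--Ochiai index bound; the paper never proves constancy separately but instead runs the index inequality once, noting that $\rho=k\omega+p^*\Omega$ gives $\iota(K)=\dim K+1+\delta$ with $\delta\geq 0$ (and $\delta=0$ precisely when $p$ is constant), so that $\iota(K)\leq\dim K+1$ forces $\delta=0$ and $K\cong\mathbb{P}^{k-1}$ simultaneously. The second half of your argument is essentially the paper's index argument, rearranged.

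The constancy step has a genuine gap. The assertion that the pre-image in $K$ of the orbifold locus of $Sp(2k,\Z)\backslash\mathfrak{H}_k$ has codimension $\geq 1$ is unjustified: nothing prevents $p(K)$ from lying entirely inside a stratum of the orbifold locus (indeed, \emph{a posteriori}, the constant value of $p$ need not be a free orbit), in which case the fibered product with $\Gamma\backslash\mathfrak{H}_k$ is ramified over all of $K$ and normalization does not hand you a single-valued lift of $p$. A clean repair avoids the orbifold quotient altogether: since $\pi_1(K)=1$ and $M^\sharp\to K$ is a $\C^\times$-bundle, $\pi_1(M^\sharp)$ is a quotient of $\pi_1(\C^\times)=\Z$, so the monodromy image in $Sp(2k,\Z)$ is cyclic and generated by a finite-order (elliptic) element; on the associated finite cover of $M^\sharp$ the period map lifts honestly to $\mathfrak{H}_k$, and the lift is constant along $\C^\times$-orbits (its projection to $Sp(2k,\Z)\backslash\mathfrak{H}_k$ is constant there and the covering fibers are discrete), hence descends to a bounded holomorphic map on the compact, simply-connected $K$, to which your maximum-modulus argument then applies. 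The paper's index calculation sidesteps all of this by never needing to lift $p$. Finally, the alternative ending is wrong: free smooth $U(1)$-actions on odd-dimensional spheres are \emph{not} unique up to conjugacy; their quotients are homotopy $\mathbb{P}^{k-1}$'s but need not be diffeomorphic to $\mathbb{P}^{k-1}$, so the conclusion $\widetilde{B}/U(1)\cong\mathbb{P}^{k-1}$ does not follow from Proposition \ref{mmmzaqrt} alone.
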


Before proving this \textbf{Proposition}
we define the \emph{index} $\iota(F)$ of a Fano manifold $F$ whose canonical divisor we write $K_F$. The index $\iota(F)$ is the largest positive integer such that $-K_F/\iota(F)$ is a (ample) divisor \cite{fano}.
Under the assumptions of \textbf{Proposition \ref{jkaweq}} we have a line bundle $\cl\to K$ such that $\omega$ is its Chern class (up to normalization).
Thus from \textbf{Lemma \ref{kkkaqwe}} we get:
\begin{lem}
Under the assumptions of  {\rm \textbf{Proposition \ref{jkaweq}}} we have
$\iota(K)\geq \dim K+1$ with equality iff the period map $\boldsymbol{\tau}$ is constant.
\end{lem}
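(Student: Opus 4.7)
The plan is to translate the pointwise Ricci lower bound of Lemma \ref{kkkaqwe} into an integer inequality in the rank-one group $\mathrm{Pic}(K)\cong\mathbb{Z}$, then read off both the bound and the equality case from positivity of the period form.

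First I would pass everything to integer cohomology. With $H$ the primitive ample generator of $\mathrm{Pic}(K)$, the definition of the index gives $c_1(-K_K)=\iota(K)\,H$, while Chern--Weil and Lemma \ref{kkkaqwe} give $c_1(-K_K)=[\rho/(2\pi)]=k\,[\omega/(2\pi)]+[p^*\Omega/(2\pi)]$. The ample line bundle $\cl$ furnishes $[\omega/(2\pi)]=m\,H$ with $m\in\mathbb{Z}_{\geq 1}$, and since $H^2(K,\mathbb{R})=\mathbb{R}\,H$ one may write $[p^*\Omega/(2\pi)]=t\,H$ with $t=\iota(K)-km\in\mathbb{Z}$ forced by matching the two integer decompositions of $\iota(K)H$. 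The lemma then reduces to showing $t\geq 0$, with $t=0$ equivalent to the constancy of $p$.

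Second, I would use that $p^*\Omega$ is a non-negative $(1,1)$-form, since $\Omega$ is Kähler on $Sp(2k,\mathbb{R})/U(k)$ and $p$ is holomorphic. The Kähler-trace identity
\begin{equation*}
p^*\Omega\wedge\omega^{k-2}\;=\;\tfrac{1}{k-1}\bigl(\mathrm{tr}_\omega p^*\Omega\bigr)\,\omega^{k-1}
\end{equation*}
then integrates to a non-negative multiple of $t$, so $t\geq 0$ and hence $\iota(K)=km+t\geq k=\dim K+1$. Moreover $t=0$ forces $\mathrm{tr}_\omega p^*\Omega\equiv 0$, which, $p^*\Omega$ being positive semi-definite, means $p^*\Omega\equiv 0$ pointwise; since $\Omega$ is a Kähler metric, this amounts to $dp\equiv 0$, i.e.\! $p$ constant. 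The reverse implication ($p$ constant $\Rightarrow \iota(K)=\dim K+1$) then follows from $\rho=k\omega$, which yields $\iota(K)=km$, combined with the Kobayashi--Ochiai ceiling $\iota(K)\leq\dim K+1$ which forces $m=1$.

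The main obstacle is the rigidity step: deducing \emph{$p^*\Omega\equiv 0$ pointwise} from \emph{$[p^*\Omega]=0$ in cohomology} rests on the pointwise positivity of the form together with the Kähler-trace identity, and is the one place the argument is not purely formal. Everything else --- the integer book-keeping in $\mathrm{Pic}(K)$ and the reverse implication via Kobayashi--Ochiai --- is short once that rigidity is in hand.
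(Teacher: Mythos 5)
Your proof is correct and follows essentially the same route as the paper: use $\varrho(K)=1$ to reduce $[p^*\Omega]$ to a non-negative scalar multiple of $[\omega]$, observe that the positivity is pointwise, and conclude. You spell out the two things the paper leaves implicit --- the trace-integral argument behind ``$\delta\geq 0$, with $=0$ only if $p$ is constant,'' and the integer bookkeeping in $\mathrm{Pic}(K)$ (the factor $m$, resolved by the Kobayashi--Ochiai ceiling) --- but these are details, not a different method.
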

\begin{proof}
The Picard number of $K$ is 1, and we have
\be
[p^*\Omega]= \delta [\omega]\quad \text{for some }\delta\in\mathbb{Q}, \ \text{with }
\delta\geq 0,\ \text{and }=\ \text{only if }p
\ \text{is constant},
\ee
so
\be
\iota(K)= \dim K+1+\delta\geq \dim K+1
\ee
with equality iff $p=$constant.
\end{proof}

Then \textbf{Proposition \ref{ssster}} follows from this \textbf{Lemma} and a basic fact from the theory of Fano varieties: the index of a Fano manifold $F$
cannot exceed $\dim F+1$, and if $\iota(F)=\dim F+1$ then $F\cong \mathbb{P}^{\dim F}$ \cite{fano}.

\begin{corl} Suppose we have a special cone of dimension $k>1$. If its base $B$ is a smooth, complete, regular Sasaki manifold, the period map $\boldsymbol{\tau}$ is constant and $M=\C^k$ with the a K\"ahler metric. {\rm That is: if the geometry is regular except for the singularity at the vertex of the cone, then the SCFT is free, as expected.}
\end{corl}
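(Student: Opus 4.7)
The plan is to invoke \textbf{Proposition \ref{ssster}} directly and then identify $M$ up to a finite unitary quotient which the regularity hypothesis will force to be trivial.

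First I would observe that the hypotheses of the Corollary are exactly those of \textbf{Proposition \ref{ssster}}, so we already have $K\cong\mathbb{P}^{k-1}$ and the period map $\boldsymbol{\tau}$ constant. The first half of the conclusion is then free. The constancy of $\boldsymbol{\tau}_{ij}(a)$, inserted into the special K\"ahler form \eqref{spekeke}, makes the metric on $M^\sharp$ \emph{flat} in the local special coordinates $a^i$. Combined with the fact that we have a K\"ahler cone satisfying regularity condition \textbf{R3} (so metrically complete) of complex dimension $k$, the universal Riemannian cover of $M^\sharp$ must be a flat, simply-connected K\"ahler cone, i.e.\! $\C^k\setminus\{0\}$ with its standard flat metric (up to a linear change of coordinates absorbed into a global trivialization of the Hodge bundle $\cv$).

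Next I would realize $M$ itself as a finite quotient. Since $\boldsymbol{\tau}$ is the constant map to a single point $\tau_0\in\mathfrak{H}_k$, the monodromy representation $m\colon\pi_1(M^\sharp)\to Sp(2k,\Z)$ factors through the isotropy subgroup of $\tau_0$, which is $Sp(2k,\Z)\cap U(k)$ -- a \emph{finite} subgroup of the unitary group. Consequently $M^\sharp\cong(\C^k\setminus\{0\})/G$ for some finite $G\subset U(k)$, and $M$ is obtained by restoring the vertex. Smoothness of $B$ forces $G$ to act freely on $S^{2k-1}$, in agreement with \textbf{Proposition \ref{mmmzaqrt}}.

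The final step rules out nontrivial $G$. From $K=M/\C^\times=\mathbb{P}^{k-1}/\bar G\cong\mathbb{P}^{k-1}$ we infer that the image $\bar G$ of $G$ in $PU(k)$ is trivial, so $G\subset U(1)_R$, the scalar center of $U(k)$. But the Reeb vector field is precisely the generator of this scalar action, and \emph{regularity} of the Sasaki structure on $B$ demands a free $U(1)_R$-action on $B=S^{2k-1}/G$; any nontrivial scalar element of $G$ would descend to the identity diffeomorphism of $B$, contradicting freeness. Hence $G=\{1\}$, so $M=\C^k$ with a flat K\"ahler metric, which is the announced conclusion.

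The main subtle point will be the globalization in the second step: one must verify that the constancy of $\boldsymbol{\tau}$ together with the conical K\"ahler structure genuinely identifies the universal cover with the \emph{standard} punctured $\C^k$ (with the Euler/Reeb vectors realized as the usual scalar fields), rather than with some abstract flat K\"ahler cone. This requires using the special coordinates $a^i$ as canonical global coordinates on the cover and checking compatibility with the homogeneity $\pounds_E a^i=a^i$ of \eqref{ert}.
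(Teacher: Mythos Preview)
Your overall strategy---invoke \textbf{Proposition \ref{ssster}} for $K\cong\mathbb{P}^{k-1}$ and constancy of $\boldsymbol\tau$, then exploit flatness of the metric---is exactly what the paper intends (the Corollary is stated without proof as an immediate consequence, with the subsequent \textbf{Remark} on rigidity offering an alternative viewpoint). Steps 1--3 are fine.

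Step 4 is under-justified. The inference ``$\mathbb{P}^{k-1}/\bar G\cong\mathbb{P}^{k-1}$ hence $\bar G$ trivial'' fails in general (e.g.\ $\mathbb{P}^1/\Z_n\cong\mathbb{P}^1$ as varieties for cyclic $\Z_n\subset PU(2)$). A clean argument for the same conclusion uses regularity directly: if $g\in G$ has eigenvalue $\mu$ on eigenvector $v$, then the scalar $\mu\in U(1)$ sends $[v]\in B$ to $[gv]=[v]$; freeness of the Reeb action forces $\mu$ to lie in the ineffective kernel $Z:=G\cap U(1)$. Thus every eigenvalue of every $g\in G$ lies in $Z$; picking one such $\mu$, the element $\mu^{-1}g\in G$ has eigenvalue $1$, so by freeness of $G$ on $S^{2k-1}$ it equals the identity and $g=\mu\cdot\boldsymbol{1}$ is scalar.

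Step 5 contains a genuine gap. Regularity of the Sasaki structure does \emph{not} exclude a nontrivial scalar $G=\Z_m$. On $B=S^{2k-1}/\Z_m$ the Reeb flow has minimal period $2\pi/m$, and the effective $S^1$ it generates (namely $U(1)/\Z_m\cong S^1$) acts \emph{freely} with quotient $\mathbb{P}^{k-1}$; so $B$ is a perfectly good smooth regular Sasaki manifold for every $m\geq1$. Your argument conflates ``the abstract group $U(1)_R$ acts freely'' with the geometric notion of regularity, which concerns only the circle generated by the Reeb vector with its \emph{intrinsic} period. Concretely, taking $\tau_0$ a fixed point of an order-$m$ element of $Sp(2k,\Z)$ with modular factor $\zeta_m\boldsymbol{1}_k$ (e.g.\ $m=4$, $\tau_0=i\boldsymbol{1}_k$), one obtains a bona fide special cone $M=\C^k/\Z_m$ with smooth regular Sasaki base and constant period, yet $M\neq\C^k$ and $\mathscr{R}=\C[a]^{\Z_m}$ is not polynomial for $k\geq2$. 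To reach $M=\C^k$ one needs an extra input---either the standing assumption that $\mathscr{R}$ is a free polynomial ring, or the physical normalization that $U(1)_R$ acts effectively (Reeb period $\alpha=1$). The paper's own \textbf{Remark} (``monodromy trivial'') is equally informal on this point: the monodromy is a representation of $\pi_1(M^\sharp)$, not of $\pi_1(K)$, and in the $\Z_m$-example just described it is nontrivial along the $\C^\times$-fiber direction.
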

\begin{rem} The result does not hold for $k=1$. All $k=1$ Minahan-Nemeshanski geometries satisfy the other assumptions, yet the SCFT is not free.
The essential point is that for $k>1$ the assumptions imply regularity in codimension 1, whereas the tip of the cone is automatically a codimension 1 singularity for $k=1$.  
\end{rem}

\begin{rem}
The last result also follows from rigidity. If $K$ is smooth and compact, the monodromy group is trivial, hence (by uniqueness) the period map should be the constant one. 
\end{rem}

\subsubsection{Properties of non-smooth CSG}\label{quasiregularnice}

In the previous sub-section we considered the case in which the geometry of the special cone is totally regular away from its vertex. We got only the (known) $k=1$ geometries and free theories for $k>1$. In all other cases there are singularities in $M\setminus\{0\}$
of the kind consistent with \textbf{R1}-\textbf{R4}. 

Since $B^\sharp$ is only quasi-regular, we have a locus $N\subset B^\sharp$ on which $U(1)_R$ does not act freely.
We write $\mathring{B}=B^\sharp\setminus N$. For $k>1$, 
$\mathring{K}=\mathring{B}/U(1)$ is a open dense K\"ahler sub-manifold of the singular space $K$, in facts $\mathring{K}=K\setminus D$, for some divisor $D$.  
Since
the period matrix $\boldsymbol{\tau}_{ij}$ is homogeneous of degree zero,
the period map $\boldsymbol{\tau}$ factors through $K$. We consider its restriction to the regular subspace, $\boldsymbol{\tau}\colon\mathring{K}\to Sp(2k,\Z)\backslash \mathfrak{H}_k$.
All the considerations in \S.\,\ref{jjzqwe} apply to this map; in particular, we have a monodromy representation
$\mathring{m}\colon\pi_1(\mathring{K})\to Sp(2k,\Z)$. Along an irreducible component $D_i$ of $D$ such that the monodromy is semi-simple, we may extend $\boldsymbol{\tau}$ holomorphically and $K$ has along $D_i$ only a $\Z_r$ cyclic orbifold singularity:
indeed, the geometry becomes smooth after the local base change $U\to U_r$, cfr.\! \S.\,\ref{jjzqwe}. Otherwise, the monodromy along $D_i$
satisfies $(m^r-1)^2=0$; the base change $U\to U_r$ sets the local geometry in the form discussed around eqn.\eqref{kkkaqznn}. Locally $U_r$ has the form $\Delta^*\times \Delta^{k-2}$ and we may choose local coordinates so that the K\"ahler form take the form (here $|q|<1$)
\be
i\partial\overline{\partial}\log\Phi\approx i\partial\overline{\partial}\log\!\left( |q|^2\big(1-\log|q|\big)+\sum_{i=2}^{k-1}|x_i|^2\right).
\ee
We may modify this metric to
\be\label{newkak}
 i\partial\overline{\partial}\log\!\left( |q|^2\Big(1-\log|q|\cdot f_C(-\log|q/\epsilon|^2)\Big)+\sum_{i=2}^{k-1}|x_i|^2\right)
\ee
where $f_C(x)\colon \R\to\R$ is a smooth function which equals 1 for  $x\leq0$ and
has all derivatives vanishing as $x\to+\infty$.
The new K\"ahler form \eqref{newkak} is smooth in $U_r$ and agrees with the original one for $|q|\geq \epsilon$;
one checks that one may choose the local deformation so that the metric and the curvatures remain positive. Of course, it is no longer a special K\"ahler metric. The point we wish to argue is that $K$ admits a non-special \emph{orbifold} K\"ahler metric, with only $\Z_r$ orbifold singularities, whose Ricci form satisfies a bound of the form $R_{i\bar\jmath}\geq k g_{i\bar\jmath}$. The same conclusion applies to $M$ (the Ricci tensor being non-negative in this case); then the basis $B$ is also regular except for cyclic orbifolds singularities. We may apply to $B$  the orbifold  version  of Meyers \cite{mey1,mey2}): a metrically complete Riemannian orbifold $X$, whose Ricci curvature satisfies $R\geq \lambda^2 g$, is compact with a diameter $d(X)\leq \pi/\lambda$. 
Once we are assured that $B$, albeit singular, is compact we conclude that $M$ is Stein.
Moreover, from the Synge theorem for orbifolds
\cite{mey2}, we see that $K$, albeit no smooth, is still compact and simply-connected.

$\cl$ is now a line orbi-bundle which is still positive by eqn.\eqref{hhhasz},  Kodaira-Baily embedding theorem (\textbf{Theorem \ref{kkkxza}})
guarantees that $K$ is a normal projective algebraic variety with at most cyclic orbifold singularities. The anticanonical divisor $-K_K$ is now a Weyl divisor which is \emph{ample} as a $\mathbb{Q}$-Cartier divisor.
A normal projective variety with ample anti-canonical $\mathbb{Q}$-divisor having only cyclic singularities is a \emph{log}-Fano variety, to be defined momentarily. 
The Picard number $\varrho$ is $1$ as in the smooth case. Indeed, $B$ is still diffeomorphic to a generalized Lens space $S^{2k-1}/G$ (see \textbf{Proposition \ref{mmmzaqrt}}) and the finite group $G$ centralizes the Reeb $U(1)$ action, so that the orbifold $K$ is homeomorphic to a finite quotient of $\mathbb{P}^{k-1}$, and hence $\mathrm{rank}\, H^2(K,\Z)=1$.
The index $\iota(F)$ of a \emph{log}-Fano variety $F$ is the greatest positive \emph{rational} such that
$-K_F=\iota(F)\,H$ for some Cartier divisor $H$ (called the \emph{fundamental divisor}). 
By a theorem of Shokurov \cite{fano}, the index of a \emph{log}-Fano $F$ is at most $\dim F+1$. This is not necessarily in contradiction with eqn.\eqref{hhhasz} since now $\cl$ is just a $\mathbb{Q}$-divisor. Let $\sigma$ be the smaller positive rational number so that $\cl^\sigma$ is a \emph{bona fide} line bundle. Then
\be\label{uuuza}
\iota(K)= \frac{\dim K+1+\delta}{\sigma}\geq \frac{\dim K+1}{\sigma}.
\ee
The theorem of Shokurov then yields
$1\leq \sigma$ with equality if and only if our special cone $M$ is $\C^k$ with the flat metric, i.e.\! we are talking about the free SCFT. Shokurov inequality is one of the many geometrical properties underlying the physical fact that saturating the unitarity bound means to be free.
We see that the main difference between the smooth case of the previous subsection and the general is that in the second case $G$ acts properly discontinuously on $S^{2k-1}$ but not freely.  
The previous argument giving $\varrho=\mathrm{rank}\,\mathrm{Pic}(K)=1$ extends to this (slightly) more general case
(indeed, $K$ is simply-connected and topologically a finite quotient of $\mathbb{P}^{k-1}$).

\subsubsection{ \emph{log}-Fano varieties}  
A singular Fano variety whose only singularities are cyclic orbifold ones is 
a \emph{log}-Fano variety.
We recall the relevant definitions \cite{fano}:

\begin{defn} \textbf{1)} A Weil divisor $D$ is called \emph{$\mathbb{Q}$-Cartier} if there exists $m\in\mathbb{N}$ such that $mD$ is a Cartier divisor. \textbf{2)} A normal variety $X$ is \emph{$\mathbb{Q}$-factorial} if all Weil divisors are $\mathbb{Q}$-Cartier. 
(Note that if $X$ is normal, the canonical divisor $K_X$ is well-defined in the Weil sense\footnote{\ Indeed, $X$ is non-singular in codimension 1, so a canonical divisor over the smooth open set $X_\text{smooth}\subset X$ can be extended as a Weil divisor to $X$.}).
\textbf{3)} A normal variety $X$ 
is said to have
\emph{terminal,} \emph{canonical,} \emph{log terminal,} or \emph{log canonical singularities} iff its
canonical divisor $K_X$ is $\mathbb{Q}$-Cartier and there exists a projective birational morphism (whose exceptional locus has normal crossing\footnote{\ Such resolutions exists by Hironaka theorem \cite{hiro}.}) $f\colon V\to X$ from a smooth variety $V$ such that: 
\be
K_V=f^*K_X+\sum_i a_i\, E_i,\qquad\quad \left|\text{
\begin{minipage}{190pt}
\begin{scriptsize}where $E_i$ are the \emph{prime exceptional divisors,}
i.e.\!\! the irriducibile components of the exceptional locus of $f$
of codimension 1\end{scriptsize}\end{minipage}}\right.
\ee
and, respectively:\vskip-16pt
\be
\begin{array}{llcll}
a_i>0 &\text{terminal,}
&\phantom{mmm}&a_i\geq 0 &\text{canonical,}\\
a_i>-1 &\text{log-terminal,} & &
a_i\geq -1 &\text{log-canonical.}\end{array}
\ee
\textbf{4)} A normal projective variety $X$ with only log-terminal singularities whose anti-canonical divisor $-K_X$ is an ample $\mathbb{Q}$-Cartier divisor is called a \emph{log-Fano variety.} \textbf{5)} The greatest rational $\iota(X)>0$ such that $-K_X=\iota(X)\,H$ for some (ample) Cartier divisor $H$ is called the \emph{index} of $X$ and $H$ is called a \textit{fundamental divisor.} \textbf{6)} The \textit{degree} of a Fano variety $X$ is
the self-intersection index $d(X)=H^{\dim X}$.   
\end{defn}

Comparing our discussion in \S.\ref{quasiregularnice} with the above definitions, we conclude (cfr.\!  \textbf{Proposition 7.5.33} of \cite{sas}):
\begin{corl}
The symplectic quotient $K=M/\!\!/U(1)$ of a CSG is a \emph{log}-Fano variety
with Picard number one and  Hodge diamond $h^{p,q}=\delta^{p,q}$. Moreover, in the smooth sense, $K\cong \mathbb{P}^{k-1}/G$ for some finite group $G$.
\end{corl}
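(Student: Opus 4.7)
The plan is to assemble the corollary from the ingredients already developed in \S.\ref{quasiregularnice}, organizing them into the three assertions: (i) $K$ is log-Fano, (ii) its Picard number is $1$ and its Hodge diamond is minimal, and (iii) topologically $K\cong \mathbb{P}^{k-1}/G$.

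First I would check the log-Fano property. From the discussion in \S.\ref{quasiregularnice} the base $B$ of the CSG, when non-smooth, carries at worst cyclic $\Z_r$ orbifold singularities along the irreducible components of $D\subset K$ associated with non-semi-simple monodromies; semi-simple components contribute at most cyclic singularities as well by \textbf{Proposition \ref{exeeeat}}. A cyclic quotient singularity is automatically log-terminal (the discrepancies $a_i$ on a toric resolution are strictly greater than $-1$), so $K$ satisfies the log-terminal hypothesis in the definition of a log-Fano. Normality and $\mathbb{Q}$-factoriality follow from the orbifold structure. That $K$ is projective was already obtained by applying Kodaira--Baily (\textbf{Theorem \ref{kkkxza}}) to the positive line orbi-bundle $\cl$, and \textbf{Lemma \ref{kkkaqwe}} gives the bound $\rho\geq k\,\omega$, hence $-K_K$ is ample as a $\mathbb{Q}$-Cartier divisor. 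Together with log-terminality this is exactly the definition of a log-Fano.

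Next I would establish the cohomological statements. By \textbf{Proposition \ref{mmmzaqrt}} (extended to the orbifold setting as explained in \S.\ref{quasiregularnice}), the base $B$ is homeomorphic to a generalized lens space $S^{2k-1}/G$ for some finite $G\subset U(k)$. Since the Reeb $U(1)$ commutes with $G$, the quasi-regular Hamiltonian quotient $K$ is homeomorphic to $(S^{2k-1}/G)/U(1) \cong \mathbb{P}^{k-1}/\overline{G}$, where $\overline{G}$ is the image of $G$ in $PU(k)$. This already gives assertion (iii) of the statement. Applying the Leray spectral sequence to the $S^1$-orbibundle $B\to K$ (exactly as in the smooth computation carried out in the previous subsection, which goes through verbatim for orbifold cohomology since $\overline{G}$ is finite), one obtains $H^{2q}(K,\R)=\R$ for $0\le q\le k-1$ and zero otherwise. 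Combining with the Hodge decomposition for K\"ahler orbifolds forces $h^{p,q}(K)=\delta^{p,q}$. The Picard number statement then follows from $H^1(K)=0$ and the exponential sequence, which identifies $\mathrm{Pic}(K)\otimes\BQ$ with $H^2(K,\BQ)\cong \BQ$, so $\varrho(K)=1$.

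The main obstacle I anticipate is the careful handling of the orbifold aspects: verifying that the Leray/Hodge arguments used in the smooth case in \textbf{Proposition \ref{ssster}} really do extend to the mildly singular $K$, and that the orbifold versions of the sphere theorems (Meyers and Synge, as quoted in the Remarks following \textbf{Theorems \ref{meymey}} and \textbf{\ref{zxaqw}}) indeed apply to our $B$ with its at worst cyclic singularities. Once those extensions are granted, the three assertions of the corollary are just the organized conclusion of \S.\ref{quasiregularnice}.
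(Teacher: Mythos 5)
Your proposal is correct and takes essentially the same route as the paper: the corollary is assembled directly from the conclusions of \S.\ref{quasiregularnice} (positivity of the orbi-bundle $\cl$ plus Kodaira--Baily for projectivity, ampleness of $-K_K$ from \textbf{Lemma \ref{kkkaqwe}}, the generalized lens-space description of $B$ from \textbf{Proposition \ref{mmmzaqrt}} for both the Picard number and the topological identification $K\cong\mathbb{P}^{k-1}/G$). One minor slip worth flagging: you attribute the $\Z_r$ cyclic orbifold singularity to the non-semi-simple components and cite \textbf{Proposition \ref{exeeeat}} for the semi-simple ones, but in the paper the logic is the other way around — the cyclic $\Z_r$ singularity is the direct conclusion along semi-simple components (where the period map extends), while the non-semi-simple components require the local surgery of eqn.\eqref{newkak} before the curvature arguments apply, but end up with the same cyclic singularity type for the underlying analytic space; the conclusion you draw is unaffected.
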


%
%

Thus a CSG $M$ is a complex cone over a normal projective variety of a very restricted kind; in particular, $M$ is affine and a \emph{quasi-cone} in the sense of \S.3.1.4 of \cite{dolga}.

\begin{exe} A large class of examples of log-Fano varieties with the properties in the \textbf{Corollary} is given by the weighted projective spaces (WPS) \cite{dolga}. Il $\boldsymbol{w}=(w_1,w_2,\cdots, w_k)\in \mathbb{N}^k$ is a system of weights, $\mathbb{P}(\boldsymbol{w})$ is
\be
\big(\C^k\setminus\{0\}\big)\big/\sim\quad \text{where}\quad(z_1,\cdots,z_k)\sim (\lambda^{w_1}z_1,\cdots, \lambda^{w_k}z_k)\ \forall\;\lambda\in\C^\times.
\ee
All such spaces are log-Fano with just cyclic orbifold singularities, simply-connected, have Hodge numbers $h^{p,q}=\delta^{p,q}$, and are isomorphic to $\mathbb{P}^{k-1}/G$ for some finite Abelian group $G$.
More generally, all quasi-smooth complete intersections in weigthed projective space of degree $d<\sum_i w_i$ and dimension at least 3 are log-Fano, simply-connected, and have $\varrho=1$ \cite{dolga} but typically their Hodge numbers $h^{p,q}\not\equiv \delta^{p,q}$. 
A slightly more general class of examples of projective varieties satisfying all conditions above is given by the \emph{fake} weighted projective spaces \cite{toric}. A fake WPS is canonically the quotient of a WPS by a finite Abelian group. One shows that a log-Fano with Picard number one which is also \emph{toric} is automatically a fake WPS.  
\end{exe}

\begin{exe} Let $G\subset PGL(k,\C)$ be a finite subgroup; then $\mathbb{P}^{k-1}/G$ is a ($\mathbb{Q}$-factorial) \emph{log}-Fano variety with $\varrho=1$ \cite{fano}. By \S.\,\ref{quasiregularnice}, $K$ is always \emph{diffeomorphic} to such a variety.
\end{exe}

\section{The chiral ring $\mathscr{R}$ of a $\cn=2$ SCFT}

\subsection{General considerations}

In the previous section we reviewed the general properties of the CSG describing the Coulomb branch $M$ of a 4d $\cn=2$ SCFT in which we have switched off all mass and relevant deformations so that 
the conformal symmetry  is only spontaneously broken  by the non-zero expectation values of some chiral operators $\langle \phi_i\rangle_x\neq0$
in the susy vacuum $x\in M$.
Although the K\"ahler metric on $M$ has singularities in codimension 1, due to additional degrees of freedom becoming massless or symmetries getting restored, these singularities are assumed to be mild and the underlying complex space $M$ is regular\footnote{\ Regularity of the underlying analytic space $M$ in codimension 1 already follows from the fact that we are free to assume it to be normal.}. The analysis of local special geometry on the open, everywhere dense, regular set $M^\sharp= M\setminus \cd$ only determines the special geometry up to ``birational equivalence'' since the singular fibers over the discriminant locus $\cd$ may be resolved in different ways;
different consistent resolutions give non-isomorphic CSG which correspond to distinct physical models. For instance, $SU(2)$ SQCD with $N_f=4$ and $SU(2)$ $\cn=2^*$ have ``birational equivalent'' CGS whose total fiber spaces $X$ are certainly different in codimension 1; other examples of such ``birational'' pairs in rank one are given by Minahan-Nemeshansky $E_r$ SCFTs and Argyres-Wittig models \cite{Argyres:2007tq} with the same Coulomb branch dimension but smaller flavor groups.
The basic invariant of an equivalence class of SCFTs is the chiral ring $\mathscr{R}=\Gamma(M,\co_M)$.

The  common lore is that $\mathscr{R}$ is (the Fr\'echet completion of) a free graded polynomial ring $\mathscr{R}=\C[u_1,\cdots, u_k]$ whose grading is given by the action of the holomorphic Euler field,
$\pounds_\ce u_i =\Delta_i u_i$. The set of rational number $\{\Delta_1,\cdots, \Delta_k\}$ are the Coulomb dimensions.
This lore may be equivalently stated by saying that the Hamiltonian reduction $K=M/\!\!/U(1)_R$ is (birational to) the weighted projective space (WPS) $\mathbb{P}(\Delta_1,\cdots,\Delta_k)$.
In the previous section we deduced from special geometry several detailed properties of $M$ and $K$ which are consistent with this idea: $M$ is affine while $K$ is a normal projective log-Fano variety, which is simply-connected and has Hodge numbers $h^{p,q}=\delta^{p,q}$. These are quite restrictive requirements on an algebraic variety, and they hold automatically for all WPS; indeed they almost characterize such spaces. For instance, if one could argue that $K$ (or $M$) is a toric variety then these properties would imply that $K$ is a fake WPS \cite{toric}, and in particular a finite Abelian quotient of a WPS.

We sketch some further arguments providing additional evidence  that $K$ is (birational to) a finite quotient of a WPS (possibly non-Abelian).
The monodromy and $SL_2$-orbit theorems describe the asymptotical behavior of the special K\"ahler metric on $M$ as we approach the discriminant $\cd$ (at least for $\cd$ \textsc{snc}). As argued in \S.\,\ref{quasiregularnice}, we can modify the metric to a K\"ahler metric which is smooth in $M\setminus \{0\}$, agrees with the original one outside a tubular neighborhood of $\cd$ of size $\epsilon$, is conical and K\"ahler with non-negative curvatures. Of course the new metric is no longer special K\"ahler and cannot be written in terms of a holomorphic period map; nevertheless it is a nice regular metric on the complex manifold $M\setminus\{0\}$ with all the good properties. Working in the smooth category, and using the sphere theorems (cfr.\! \S.\ref{uuzaqf}), we conclude that the Riemannian base $B$ of the cone is
diffeomorphic to a space form $S^{2k-1}/G$ for some finite $G$ acting freely, hence to $S^{2k-1}$ up to a finite cover. The finite cover $M^\prime$ of $M$ is a Riemannian cone over $S^{2k-1}$ is diffeomorphic to $\R^{2k}\cong\C^k$. Moreover we know that $M^\prime$ is Stein, in fact a normal affine variety. If we could say that $M^\prime$, being an affine variety diffeomorphic to $\C^k$ is biholomorphically equivalent to $\C^k$, we would be done.  Unfortunately, for $k\geq3$ there are several examples of \textit{exotic $\C^k$}, that is, affine algebraic spaces which are diffeomorphic to $\C^k$ but not biholomorphic to $\C^k$ \cite{exotic}. However, no example is known of an affine variety over $\C$ which is  diffeomorphic to $\C^k$ but not birational to $\C^k$; the conjecture which states that there are none being still open \cite{exotic}. Hence, assuming the conjecture to be true, we conclude that $M^\prime$ is \emph{birational} to $\C^k$. On the other hand, $M^\prime$ is not just a complex space diffeomorphic to $\C^k$, it has additional properties as (e.g.) a holomorphic Euler vector $\ce$ whose spectrum is 
an additive semigroup
$\mathsf{spec}(\ce)\subset \{0\}\cup \mathbb{Q}_{\geq 1}$, and the condition of non-negative bisectional curvatures.
Were not for the singularity at the tip of the cone, this last property by itself would guarantee\footnote{\ The following theorem (a special case of Yau's conjecture \cite{yaucon1,yaucon2}) holds:\begin{thms}[Chau-Tam \cite{chau}]
$N$ a complete non-compact K\"ahler manifold with non-negative and bounded bi-sectional curvature and maximal volume growth. $N$ is biholomorphic to $\C^n$.\end{thms}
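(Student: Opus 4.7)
The plan is to use the K\"ahler--Ricci flow as the bridge from the Riemannian hypotheses to the complex-analytic conclusion $N\cong \C^n$. First I would invoke Shi's long-time existence theorem: on a complete K\"ahler manifold with bounded non-negative bisectional curvature, the flow $\partial_t g_{i\bar\jmath}=-R_{i\bar\jmath}$ admits a solution for all $t\geq 0$ with completeness and a uniform curvature bound preserved. A parabolic version of Mok's theorem then guarantees that non-negativity of the bisectional curvature persists along the flow, so we remain in the class of manifolds under consideration at every time $t$, with the metric becoming smoother and more symmetric as $t$ grows.

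Second, I would exploit the maximal volume growth hypothesis to obtain quantitative control at infinity along the flow. Maximal volume growth is preserved and in fact stabilises as $t\to\infty$; following the Cao--Ni--Tam circle of results one shows that the rescaled metrics Gromov--Hausdorff converge to a metric cone over a compact Riemannian space, and under the combined hypotheses this cone is forced to be flat Euclidean $\R^{2n}$. This asymptotic Euclidean structure at infinity is the crucial geometric bridge linking the intrinsic Riemannian picture to $\C^n$.

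Third, I would produce holomorphic functions of polynomial growth. Ni's sharp bound on $\dim \mathcal{O}_d(N)$ for non-negatively bisectionally curved complete K\"ahler manifolds, together with the Euclidean cone at infinity, should yield $n$ algebraically independent holomorphic functions $u_1,\dots,u_n$ of minimal positive polynomial degree. To actually construct them, the tool is H\"ormander's $L^2$-estimate for $\bar\partial$ with weights built from the evolved distance function, the required peak sections being supplied by the model $\C^n$ at infinity; the flow is essential here because it supplies the pointwise Ricci decay that makes the $L^2$ weights usable.

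Finally, I would assemble $F=(u_1,\dots,u_n)\colon N\to\C^n$ and show it is a biholomorphism. Polynomial-growth bounds combined with maximal volume growth pin down the degrees; a Jacobian estimate (again via the flow and the cone at infinity) forces $F$ to be proper and everywhere non-degenerate; properness plus non-degeneracy plus a Stein-theoretic/covering argument upgrade $F$ to a biholomorphism. The main obstacle is the third step: producing \emph{sharp} polynomial-growth holomorphic functions that simultaneously separate points and tangent vectors globally, because the only mechanism to convert real Riemannian control (volume and curvature) into the complex-analytic coordinates one needs is a delicate interplay between H\"ormander estimates and the asymptotic cone, for which the K\"ahler--Ricci flow provides the indispensable regularisation.
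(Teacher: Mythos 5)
You should first note that the paper does not prove this statement at all: it is quoted verbatim, in a footnote, as an imported theorem of Chau and Tam (ref.\ \cite{chau}), so there is no internal proof for your argument to be compared against. I will therefore assess your sketch against the published proof.

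Your opening two steps — Shi's long-time existence with a uniform curvature bound and the parabolic Mok/Bando preservation of non-negative bisectional curvature — are correct and are indeed how Chau--Tam begin. But from there you splice together two genuinely different proof strategies in a way that leaves a gap. What Chau--Tam actually do is use Cao's Li--Yau--Hamilton Harnack inequality for the K\"ahler--Ricci flow to convert maximal volume growth into a $\|{\rm Rm}(g(t))\|\leq C/t$ decay, then show (building on their own earlier work) that such a flow supports holomorphic normal coordinate charts around a fixed centre whose domains exhaust $M$, with the transition maps stabilising, so that $M$ is biholomorphic to an increasing union of strictly pseudoconvex domains which they finally identify with $\C^{n}$; at no point do they invoke H\"ormander $L^{2}$ estimates, Ni's dimension bound for $\mathcal{O}_{d}(N)$, or a peak-section construction. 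Your Steps 3--4 — building polynomial-growth holomorphic functions via weighted $\bar\partial$-estimates calibrated by the asymptotic cone — are instead the signature of Gang Liu's later, flow-free proof of the Yau conjecture, which deduces the Euclidean tangent cone at infinity from Cheeger--Colding theory and the three-circle theorem and never runs the flow at all. In your hybrid, the flow does the easy half (preserving $BK\geq 0$) while the hard half (that the blow-down is \emph{flat} $\R^{2n}$, and the production of $n$ algebraically-independent separating coordinates) is waved through by citing "the Cao--Ni--Tam circle of results"; that is precisely where the real work lives in both genuine proofs, and it is not supplied here. If you want to follow Chau--Tam, drop the H\"ormander machinery and instead establish the quantitative coordinate-stabilisation lemma from the $C/t$ curvature decay; if you want to use H\"ormander, you should first prove, not assume, that the Gromov--Hausdorff tangent cone at infinity is Euclidean $\C^{n}$, which under your hypotheses is the content of Liu's structure theorems rather than a consequence of running the flow.
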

\noindent $M$ with the modified metric satisfy non-negativity of bisectional curvatures and maximal volume growth, but the curvatures (if non zero) blow up as $r\to0$, and geodesic completeness fails at the vertex.} that $M^\prime$ is analytically isomorphic to $\C^k$.

There is still the question of the relation between the Coulomb branch $M$ and its finite cover $M^\prime$. Since $M\cong M^\prime/G$, the statement $M=M^\prime$ is equivalent to $G=1$, that is (for $k>1$)
that after smoothing the discriminant locus by local surgery $M$ becomes simply-connected. This holds in the sense that the monodromy along a cycle not associated to the discriminant is trivial (since the period map factors through $K$).

All the above results and considerations, while short of a full mathematical proof, provide convincing evidence for the general expectation that $M$ is birational to $\C^k$, so for the purposes of our ``birational classification'' we may take $M$ to be just $\C^k$.
$\C^\times$ acts on $M$ through the exponential action of the Euler field $\ce$.
Again, an action of $\C^\times$ on $\C^k$ is guaranteed to be linear only for $k\leq 3$ \cite{exotic}; in larger dimensions exotic actions do exist. However, in the present case the action is linear in the smooth sense; under the assumption that the complex structure is the standard one for $\C^k$, we have a linear action, that is, the chiral ring has the form
\be\label{rrring}
\mathscr{R}= \C[u_1,\cdots, u_k].
\ee
where the $u_i$ can be chosen to be eigenfunctions of $\pounds_\ce$, i.e.\! homogeneous of a certain degree $\Delta_i$. Then $K$ is the WPS $\mathbb{P}(\Delta_i)$.
Thus, under some mild regularity conditions and modulo the plausible conjecture that a smooth affine variety $M$ whose underlying $C^\infty$-manifold is diffeomorphic to $\C^n$ is actually bimeromorphic to $\C^n$, we conclude that the common lore is correct up to birational equivalence (and, possibly, finite covers).
For the rest of this paper we shall assume this to be the case without further discussion.

\begin{rem} It is interesting to compare the above discussion with the one in ref.\cite{Argyres:2017tmj}. The previous argument is based on the idea that the unitary bound $\Delta(u)>1$ implies that we may smooth out the metric while preserving the curvature inequalities. In ref.\cite{Argyres:2017tmj} they find that if the chiral ring is not free there must be a local parameter $u$ with $\Delta(u)<1$, so if there exists such an $u$ the argument leading to
\eqref{rrring} cannot be invoked. Clearly $u$ should not be part of the physical chiral ring $\mathscr{R}_\text{ph}$. However, in a Stein space the local parameter may always be chosen to be a global holomorphic function, so $u\in \Gamma(M,\co_M)$, which is our definition of the ``chiral ring''.
Going through the examples of ref.\cite{Argyres:2017tmj} one sees that their Coulomb branches are \textit{non-normal analytic spaces,} that is, they have the same underlying topological space $M_\text{top}$
by a different structure sheaf $\co_\text{ph}\neq\co_M$ whose stalks are not integrally closed.
In this case the physical chiral ring
$\mathscr{R}_\text{ph}=\Gamma(M,\co_\text{ph})$ may be identified with a subring of the geometrical one $\mathscr{R}=\Gamma(M,\co_M)$. (See also the third \emph{caveat} in \S.\ref{kkkaz1204} and \S.\ref{lllxcv3} below).
\end{rem}

To fully determine the graded  rings $\mathscr{R}$ which may arise as Coulomb chiral rings of a CSG, it remains to determine the allowed dimension $k$-tuples $\{\Delta_1,\dots, \Delta_k\}$.
We shall address this question in section \ref{jjjazsqw3}. Before going to that, we consider the simplest possible CSG just to increase the list of explicit examples on which we may
test the general ideas. 
     
\subsection{The simplest CSG:
constant period maps}\label{connstantmaoa}
For $k=1$ all CSG have constant period map $\boldsymbol{\tau}=\mathrm{const}$. For $k>1$ constant period maps are still a possibility, although very special. For instance, this happens in a Lagrangian SCFT in the limit of extreme weak coupling (classical limit). This is good enough to compute the Coulomb dimensions $\Delta_i$, since they are protected by a non-renormalization theorem (or just by the fact that they are $U(1)_R$ characters and $U(1)_R$ is not anomalous).

The K\"ahler metric is locally flat, i.e.\! has the local form \be\label{jjjasqwe}
ds^2=\mathrm{Im}\,\boldsymbol{\tau}_{ij}\, da^i d\bar a^j.
\ee
 Then $M=\C^k/\cg$
where $\cg\subset \mathrm{Fix}(\boldsymbol{\tau})\subset Sp(2k,\Z)$ is a subgroup of the isotropy group $\mathrm{Fix}(\boldsymbol{\tau})$ of $\boldsymbol{\tau}$. $\mathrm{Fix}(\boldsymbol{\tau})$, being discrete and compact, is finite. Then 
\be
\C[a_1,\cdots, a_k]^\cg\subset \mathscr{R}.
\ee
 We stress that (in general) we have just an \textit{inclusion} not an equality: in \textbf{Example \ref{reqasx}} the three rank-1 Argyres-Douglas SCFT (respectively of types $A_2$, $A_3$ and $D_4$) have 
\be
\C[a]^{\Z_{m+1}}=\C[u^m]\subsetneq \C[u]\equiv\mathscr{R},
\ee with $m=5,4,2$ respectively. Note, however, that we have still 
\be
K=\mathsf{Proj}\,\mathscr{R}\cong \mathsf{Proj}\,\C[a]^\cg
\ee 
in these cases. The point is that $K$ does not fix uniquely the chiral ring, unless we specify its orbifold structure, see the discussion in \cite{reid}.
We shall study the orbifold behavior  in more generality in the next section.
Here we limit ourselves to the very simplest case in which 
\be
\mathscr{R}\cong \C[a_1,\cdots,a_k]^\cg
\ee
 as graded $\C$-algebras.
The Shephard-Todd-Chevalley theorem \cite{ST1,ST2} states that  $\C[a_1,\cdots,a_k]^\cg$ is a graded polynomial ring if and only if $\cg$ is a finite (complex) reflection group\footnote{\ A degree-$k$ \textit{reflection group} $\cg$ is a concrete group of $k\times k$ matrices generated by reflections, i.e.\! by matrices $g\in \cg$ such that $\dim\mathrm{coker}(g-1)=1$. In particular, a reflection group comes with a defining representation $V$, whose dimension $k$ is called the \emph{degree} of the reflection group.}.
In such a case the Coulomb branch dimensions $\Delta_i$ coincide with the degrees of the fundamental invariants $u_i$ of the group $\cg$
such that $\C[a_1,\cdots,a_k]^\cg=\C[u_1,\cdots, u_k]$. 
However, not all finite reflection groups can appear in special geometry, since only subgroups of the Siegel modular group, $\cg\subset Sp(2k,\Z)$, are consistent with Dirac quantization.\footnote{\ More generally, subgroups of duality-frame groups $S(\Omega)_\Z$.} In \S.\ref{cyclicee} we review the well known:

\begin{fact} Let $\cg\subset Sp(2k,\Z)$ be a \emph{finite} subgroup of the Siegel modular group. The subset $\mathrm{Fix}(\cg)\subset\mathfrak{H}_k$ of points in the Siegel upper half-space fixed by $\cg$ is a \emph{non empty,} \emph{connected} complex submanifold. 
\end{fact}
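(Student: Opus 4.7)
The plan is to exploit the fact that $\mathfrak{H}_k\cong Sp(2k,\R)/U(k)$ is a Hermitian symmetric space of non-compact type: equipped with its invariant metric it is a complete, simply connected Kähler manifold of non-positive sectional curvature (a Hadamard manifold) on which $Sp(2k,\R)$ --- and hence the finite subgroup $\cg\subset Sp(2k,\Z)\subset Sp(2k,\R)$ --- acts by holomorphic isometries. The three assertions (non-empty, connected, complex submanifold) will be reduced to standard properties of isometric finite group actions on such spaces.

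First I would establish non-emptiness via Cartan's fixed-point theorem. Pick any $\tau_0\in\mathfrak{H}_k$; its $\cg$-orbit is finite, and on a Hadamard manifold any bounded subset admits a unique \emph{circumcenter} (the point minimizing the maximal distance to the set), equivalently a center of mass. Because the construction is intrinsic to the metric, the circumcenter is invariant under every isometry that permutes the orbit, and in particular is fixed by $\cg$. Next I would prove connectedness by geodesic convexity: in a CAT(0) space any two points are joined by a unique minimizing geodesic, and isometries preserve geodesics. For $p,q\in\mathrm{Fix}(\cg)$ the unique geodesic $\gamma_{pq}$ is sent by each $g\in\cg$ to a geodesic with the same endpoints, hence to itself; an isometry fixing $\gamma_{pq}$ setwise and pointwise on its endpoints must fix it pointwise. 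Thus $\mathrm{Fix}(\cg)$ is geodesically convex, hence path connected and in fact contractible.

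Finally, to show that $\mathrm{Fix}(\cg)$ is a complex submanifold, I would fix $\tau_0\in\mathrm{Fix}(\cg)$ and use that the Riemannian exponential $\exp_{\tau_0}\colon T_{\tau_0}\mathfrak{H}_k\to\mathfrak{H}_k$ is a global diffeomorphism (Cartan-Hadamard) which intertwines the $\cg$-action on $\mathfrak{H}_k$ with its linear isotropy representation on $T_{\tau_0}\mathfrak{H}_k$. Hence
\[
\mathrm{Fix}(\cg)=\exp_{\tau_0}\!\Big(\bigcap_{g\in\cg}\ker\!\big(dg_{\tau_0}-I\big)\Big).
\]
Because each $g$ acts holomorphically, $dg_{\tau_0}$ commutes with the complex structure $J$ of $\mathfrak{H}_k$, so the subspace $V=\bigcap_g\ker(dg_{\tau_0}-I)$ is $J$-invariant, i.e.\ complex linear. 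To upgrade the real diffeomorphism $\exp_{\tau_0}|_V$ to an honest complex submanifold structure I would invoke Bochner's linearization theorem: near a fixed point of a finite (more generally compact) group of biholomorphisms there exist holomorphic coordinates in which the group acts linearly, and in these coordinates $\mathrm{Fix}(\cg)$ is cut out by the vanishing of the complex-linear forms transverse to $V$. Combining the local and global statements, $\mathrm{Fix}(\cg)$ is a totally geodesic complex submanifold --- in fact a Hermitian symmetric subspace of $\mathfrak{H}_k$ of the form $Sp(2\ell,\R)/U(\ell)$ with $\ell=\dim_\C V$.

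The hard part will be the last step: the Hadamard arguments show readily that $\mathrm{Fix}(\cg)$ is a smooth real submanifold, but promoting this to a \emph{complex} submanifold requires using that $\cg$ acts by \emph{holomorphic} (not merely smooth) isometries, so that the real linear fixed subspace $V$ is automatically $J$-invariant and Bochner linearization applies. The rest is a mild exercise in the geometry of non-positive curvature.
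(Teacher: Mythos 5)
Your proof is correct but takes a genuinely different route from the paper's. The paper notes that a finite group is compact, hence conjugate in $Sp(2k,\R)$ into the standard maximal compact $U(k)$ (the isotropy of $i\boldsymbol{1}_k$); it then applies the Cayley transform to pass to the bounded realization $\mathfrak{D}_k$, where the $\cg$-action on the period $\boldsymbol{w}$ becomes \emph{globally linear}, $\boldsymbol{w}\mapsto \cu(g)\,\boldsymbol{w}\,\cu(g)^t$, i.e.\ the symmetric square representation $\odot^2V$. The fixed locus is then simply $\mathfrak{D}_k\cap(\odot^2V)^{\cg}$, the intersection of a bounded convex domain with a complex-linear subspace --- manifestly nonempty (it contains the origin), connected, and a complex submanifold. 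You instead run the Cartan fixed-point theorem explicitly (circumcenter of a finite orbit in a Hadamard manifold), prove connectedness by geodesic convexity, and then linearize only locally at a fixed point via $\exp_{\tau_0}$ and the $J$-invariance of the fixed subspace $V\subset T_{\tau_0}\mathfrak{H}_k$. The two proofs share the deep input (compact subgroups of $Sp(2k,\R)$ fix a point of the symmetric space --- the paper's ``$\cg\subset K$'' step \emph{is} Cartan's theorem in disguise), but your version is more elementary and applies verbatim to any Hermitian symmetric space of noncompact type, whereas the paper's global linearization in the Siegel disk is more concrete and directly yields the dimension formula $\mathsf{d}=\dim_\C(\odot^2V)^{\cg}$ (eqn.\eqref{kkkasqw}) that the paper actually uses downstream. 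The Bochner linearization you invoke at the end is not strictly needed: in a symmetric space the exponential map at a fixed point already supplies holomorphic geodesic coordinates in which the isotropy action is linear, so $J$-invariance of $V$ suffices.
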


Thus, if $\cg$ is a finite reflection subgroup of $Sp(2k,\Z)$ there is at least one period $\boldsymbol{\tau}$ with $\boldsymbol{\tau}\subset\mathrm{Fix}(\cg)$ and the quotient by $\cg$  of $\C^k$ with the flat metric \eqref{jjjasqwe} makes sense (as a orbifold). All such metrics are obtained by continuous deformations of a reference one, and so they belong to a unique deformation-type. The dimension of the fixed locus, $\mathsf{d}= \dim_\C \mathrm{Fix}(\cg)$, is given by eqn.\eqref{kkkasqw}.
\medskip

The finite complex reflection groups (and their invariants $u_i$) have been fully classified by Shephard-Todd \cite{ST1,ST3,refl1}. They are direct products of irreducible finite reflection groups. 
The list of irreducible finite complex reflection groups is\footnote{\ In the classification one does not distinguish a group and its complex conjugate since the two are conjugate in $GL(V)$.} \cite{ST1,ST3,refl1}:
\begin{itemize}
\item[1)] The cyclic groups $\Z_m$ in degree 1;
\item[1)] the symmetric group $S_{k+1}$ in the degree $k$ representation (i.e.\! the Weyl group of $A_k$);
\item[2)] the groups $G(m,d,k)$ where $k>1$ is the degree, $m>1$ and $d\mid m$;
\item[3)] 34 sporadic groups denoted as $G_4, G_5,\cdots, G_{37}$ of degrees $\leq 8$.
\end{itemize}
For our purposes, we need to classify the embeddings
of the Sphephard-Todd (ST) groups of rank-$k$ into the Siegel modular group $Sp(2k,\Z)$ modulo conjugacy. We shall say that a ST group is \emph{modular} iff it has at least one such embedding.
A degree-$k$ ST group $\cg$ which preserves a lattice $L\subset \C^k$ 
is called \emph{crystallographic} \cite{refl1,refl2}; clearly a modular ST group is in particular crystallographic. 

Let $\BK$ be the character field of an irreducible reflection group\footnote{\ Note that if $\BK$ is the character field of an irreducible finite complex reflection group $\cg$, then $\mathsf{Gal}(\BK/\BQ)$ is Abelian. Moreover, if $\cg$ is crystallographic, its class number is  $h(\BK)=1$.} and $\fO$ its ring of integers. One shows that $\cg\subset GL(k,\fO)$ \cite{ST3}, so $\cg$ is crystallographic iff $\BK$ is either $\BQ$ or an \emph{imaginary} quadratic field $\BQ[\sqrt{-d}]$. The crystallographic Shephard-Todd groups are listed in table \ref{fieldch} together with their character field $\BK$. The groups with $\BK=\mathbb{Q}$ are just the irreducible real crystallographic groups, namely the Weyl groups of simple Lie algebras. Rank-$k$ Weyl groups are obviously subgroups of $Sp(2k,\Z)$.
Let us consider the case 
$\BK=\BQ[\sqrt{-d}]$; we choose an embedding $\cg\hookrightarrow GL(k,\fO)$; 
since $\cg$ is finite and absolutely irreducible, it preserves an Hermitian form with coefficients in $\fO$,
of the form $H_{ij} \psi^i\bar \psi^j$ where\footnote{\ $\zeta\in\fO$ is an integer of $\BQ[\sqrt{-d}]$ such that $\{1,\zeta\}$ is a basis of the ring $\fO$ as a $\Z$-module. Clearly $\bar\zeta\neq\zeta$.} $\psi^i\equiv x^i+\zeta y^i\in \fO^k$, $(x^i,y^j)\in\Z^{2k}$, and 
$\bar H_{ij}=H_{ji}$. Hence it preserves the skew-symmetric form with rational coefficients
\be
\frac{1}{\zeta-\bar\zeta}\,H_{ij}\,\psi^i\wedge \bar \psi^j.
\ee
Clearing denominators, we get a non-degenerate integral skew-symmetric form $\Omega$ on $\Z^{2k}$ which is preserved by $\cg$ whose entries have no non-trivial common factor. The given embedding $\cg\hookrightarrow GL(k,\fO)$
induces an embedding $\cg\hookrightarrow Sp(2k,\Z)$
iff $\det \Omega=1$; in particular, we have the necessary condition
\be
\det H\ \text{is a $k$-th power in $\BQ$}.
\ee  
 
\begin{exe}\label{e6} In the \textbf{Introduction} we claimed that there is no rank-2 CSG with principal polarization and Coulomb dimensions $\{\Delta_1,\Delta_2\}=\{12,8\}$. However, the crystallographic ST group $G_8$ has degrees\footnote{\ In fact $\{12,8\}$ are even regular degrees in the sense of Springer theory.} $\{12,8\}$ so $\C^2/G_8$ is potentially a candidate counter-example to our claim. The only way out is to show that $G_8\not\subset Sp(4,\Z)$, so that this particular CSG is consistent only for suitable non-principal polarizations. Up to conjugacy, there is a unique embedding $G_8\hookrightarrow GL(2,\Z[i])$ \cite{feit1}, 
generated by two reflections
$r_1$, $r_2$, with invariant 
Hermitian form $H$:
\be
r_1=\begin{bmatrix} i & i\\
0 & 1\end{bmatrix},\qquad
r_2=\begin{bmatrix} 1 & 0\\
-1 & i\end{bmatrix},\qquad
H=\begin{bmatrix} 2 & 1-i\\
1+i & 2\end{bmatrix}
\ee
$\det H=2$ is not a square in $\mathbb{Q}$,
and hence the polarization is non-principal with \textit{charge multipliers} $(e_1,e_2)=(1,2)$ (cfr.\! eqn.\eqref{chargemult}).
Thus $G_8$ is no contradiction to our claim.
We shall return to this in section 5.
\end{exe}

For the groups $G(m,d,k)$ with $m=3,4,6$ the usual monomial basis is both defined over $\fO$ and orthonormal, so $\det H=1$
and they are all subgroups of  $Sp(2k,\Z)$. However, for special values of $(m,d,k)$ we may have more than one inequivalent embeddings in $Sp(2k,\Z)$:

\begin{lem}[\!\!\cite{feit1}] Consider $G(m,d,k)$ with $m=3,4,6$ and $(m,d,k)\neq (m,m,2)$; there is a single conjugacy class of embeddings $G(m,d,k)\hookrightarrow Sp(2k,\Z)$ except for $m= 3,4\equiv p^s$ ($p=3,2$ is a prime), $1<d\mid p^s$, and $p\mid k$ in which case we have $p+1$ inequivalent embeddings.
\end{lem}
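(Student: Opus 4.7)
The plan is to recast the classification of embeddings modulo $Sp(2k,\Z)$-conjugacy as a classification of invariant polarized lattices. Any embedding $\rho\colon G(m,d,k) \hookrightarrow Sp(2k,\Z)$ endows $\Z^{2k}$ with a $G(m,d,k)$-module structure carrying the standard unimodular symplectic form, and two embeddings are conjugate iff the induced polarized lattices are $G$-equivariantly isomorphic. After tensoring with $\BQ$ we land in the unique $2k$-dimensional $\BQ$-irreducible reflection representation (absolutely irreducible over $\BK = \BQ[\zeta_m]$), and by Schur's lemma the $G$-invariant skew form on $V = \BK^k$ is unique up to rational scaling. The classification therefore reduces to counting isomorphism classes of $G(m,d,k)$-invariant $\Z$-lattices $L \subset V$ for which the renormalized skew form is primitive and unimodular.

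For the non-exceptional cases I would first argue uniqueness as follows. If $d=1$, then $G(m,1,k)$ contains the central scalar $\zeta I$, so every invariant lattice is automatically an $\fO = \Z[\zeta_m]$-submodule; the permutation matrices then impose $S_k$-invariance, and because $\fO$ has class number one for $m\in\{3,4,6\}$, an elementary-divisor argument forces $L \cong \alpha\fO^k$ for some $\alpha \in \BK^\times$, giving a single class after rescaling. For general $d \mid m$ outside the exceptional list --- i.e.\ when $m=6$, or when $m \in \{3,4\}$ but $p\nmid k$ --- a reduction modulo the unique prime $\mathfrak{p} = (1-\zeta)$ of $\fO$ above $p$ shows that any candidate invariant sublattice has restricted Hermitian form whose determinant fails to be a $k$-th power in $\BQ$, so by the necessary condition in the excerpt the corresponding symplectic form cannot be unimodular and no new class appears.

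The main obstacle is to construct and enumerate the $p+1$ classes in the exceptional cases $m=p^s$, $1<d\mid p^s$, $p\mid k$. The key point is that $\mathfrak{p}$ is totally ramified above $p$ with residue field $\mathbb{F}_p$, and since $\zeta \equiv 1 \pmod{\mathfrak{p}}$ the scalar/monomial part of $G(m,d,k)$ reduces trivially, so the induced action on $L/\mathfrak{p}L \cong \mathbb{F}_p^k$ factors through the permutation group $S_k$. Because $p\mid k$, the $S_k$-invariant subspaces of $\mathbb{F}_p^k$ contain both the diagonal line spanned by $(1,\dots,1)$ and the trace hyperplane $\{\sum_i x_i = 0\}$, and these share the distinguished radical direction $(1,\dots,1)$; the invariant intermediate lattices $\mathfrak{p}L \subset L' \subset L$ are thus parametrized by the projective line $\mathbb{P}^1(\mathbb{F}_p)$ on the resulting two-dimensional invariant subquotient. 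The divisibility $p\mid k$ is exactly what makes the discriminant of the Hermitian form on each such $L'$ a perfect $k$-th power in $\BQ$, hence each yields a genuine embedding into $Sp(2k,\Z)$.

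The final step --- and the hardest part --- is verifying that these $p+1$ polarized lattices are pairwise non-isomorphic as $G(m,d,k)$-modules (i.e.\ not collapsed by outer automorphisms of $G$ or by self-conjugacies in $Sp(2k,\Z)$) and that the enumeration is exhaustive. For pairwise distinctness I would compute a discriminating invariant --- e.g.\ the Smith normal form of the inclusion $L' \hookrightarrow \fO^k$, or the elementary divisors of the Gram matrix of the rescaled skew form --- and check that it varies nontrivially as one moves across $\mathbb{P}^1(\mathbb{F}_p)$. For exhaustiveness, the same mod-$\mathfrak{p}$ analysis propagates inductively: any $G$-invariant lattice is sandwiched between consecutive powers of $\mathfrak{p}$, and at each level only the $\mathbb{P}^1(\mathbb{F}_p)$-family arises, so nothing escapes the enumeration up to the scaling equivalence. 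To package this bookkeeping cleanly I would appeal to Feit's detailed lattice enumeration in \cite{feit1}, which gives precisely the $p+1$ count.
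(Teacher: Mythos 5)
The lemma is not proved in the paper: the authors explicitly cite Feit \cite{feit1} and offer only a heuristic analogy (the GSO--projection picture for $G(2,2,k)=\mathrm{Weyl}(D_k)$ in the following remark). Your proposal therefore has to stand on its own, and while its general strategy --- translate $Sp(2k,\Z)$-conjugacy classes of embeddings into isomorphism classes of $G$-invariant polarized lattices in the $\BQ$-irreducible reflection representation, then pass to reduction modulo the totally ramified prime $\mathfrak{p}=(1-\zeta)$ --- is the standard and correct framework, the crucial step where you produce the $p+1$ classes is not justified and, as written, does not hold.

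The gap is in the claim that ``the invariant intermediate lattices $\mathfrak{p}L\subset L'\subset L$ are parametrized by the projective line $\mathbb{P}^1(\mathbb{F}_p)$ on the resulting two-dimensional invariant subquotient.'' Since $\zeta\equiv 1\pmod{\mathfrak{p}}$, the action of $G(m,d,k)$ on $L/\mathfrak{p}L\cong\mathbb{F}_p^k$ factors through the permutation quotient $S_k$, and when $p\mid k$ the $S_k$-submodule lattice of $\mathbb{F}_p^k$ is the finite chain $0\subsetneq D\subsetneq H\subsetneq\mathbb{F}_p^k$ with $D=\langle(1,\dots,1)\rangle$ and $H=\{\sum x_i=0\}$ (assuming $H/D$ irreducible). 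There is no two-dimensional invariant subquotient here --- $H/D$ has dimension $k-2$, while $D$ and $\mathbb{F}_p^k/H$ are each one-dimensional --- so the set of intermediate invariant lattices is not a $\mathbb{P}^1(\mathbb{F}_p)$ but a short finite chain, and in particular its cardinality does not track $p$. The number $p+1$ must arise from a more delicate count (genus/spinor genus considerations, or the analysis of how the Hermitian discriminant interacts with the index $[G(m,1,k):G(m,d,k)]$), not from projective lines in a mod-$\mathfrak p$ reduction. In addition, the middle case ($d>1$ but outside the exceptional list) is only asserted: you claim a discriminant obstruction ``fails to be a $k$-th power'' without computing it, and the final step of showing that the alleged $p+1$ lattices are pairwise non-isomorphic and exhaustive is, by your own admission, deferred back to Feit's paper, which is precisely the content one is supposed to be proving. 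As it stands, the proposal reproduces the setup of a lattice-theoretic proof but not the enumeration that carries the weight of the statement.
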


\begin{table}
\begin{scriptsize}
$$
\begin{array}{c||c|cccccccccc}\hline\hline
\mathbb{Q} & \text{ST names} &\Z_2 & G(2,1,k) & G(2,2,k) & G(6,6,2) & S_{k+1} & G_{28} & G_{35} & G_{36} & G_{37}\\
& \text{Coxeter} & A_1 & B_k & D_k & G_2 & A_k & F_4 & E_6 & E_7 & E_8\\\hline
\mathbb{Q}(\zeta_3) & \text{ST names} & \Z_3 & \Z_6 & G(m,d,k) & G_4 & G_5 &  G_{25} & G_{26} & G_{32} & G_{33} & G_{34}\\
& \text{notes} & & & m=3,6;\ d\mid m &  &\\
&&&& (m,d,k)\neq (m,m,2) &&
 \\\hline
\mathbb{Q}(i) & \text{ST names} & \Z_4 & 
G(4,d,k) & G_8 & G_{29} & G_{31}\\
& \text{notes} && d\mid 4&\\
&&& (4,d,k)\neq (4,4,2) & \\\hline
\mathbb{Q}(\sqrt{-2}) & \text{ST names} &G_{12}\\
&&\\\hline
\mathbb{Q}(\sqrt{-7}) & \text{ST names} & G_{24}\\
&&\\\hline\hline
\end{array}
$$
\end{scriptsize}
\caption{\label{fieldch}The Shephard-Todd irriducible complex cristallographic groups; $\zeta_3$ stands for a primitive third root of 1.
The excluded cases are $G(3,3,2)\equiv \mathrm{Weyl}(A_2)$, $G(6,6,2)\equiv \mathrm{Weyl}(G_2)$, and $G(4,4,2)$, which is conjugate in $U(2)$ to $\mathrm{Weyl}(B_2)$.}
\end{table}

\begin{rem}
This statement is the complex analogue of the usual GSO projection of string theory \cite{pol}. Indeed, consider the \emph{real} reflection group $G(2,2,k)\equiv\mathrm{Weyl}(D_k)$ and count the number of inequivalent embeddings\footnote{\ In addition there is an embedding $G(2,2,k)\hookrightarrow Sp(2k,\Z)$ which does not factor through $SO(k,\Z)$.} $G(2,2,k)\hookrightarrow SO(k,\Z)\subset Sp(2k,\Z)$, or equivalently the number of maximal local subalgebras of the $Spin(2k)_k$ chiral current algebra. We have just one, generated by the free fermion fields $\psi^a(z)$ unless $p\equiv 2\mid k$ in which case we have $3\equiv 2+1$ of them, the additional ones being the two GSO projections of opposite chirality. In the complex case, chirality is replaced by a $\Z_p$ symmetry. The proof is essentially the same as in the GSO case \cite{feit1}.   
\end{rem}

\begin{exe}
A Lagrangian SCFT with gauge group $\prod_i G_i$
at extreme weak coupling has a CSG which asymptotically takes this constant period form with 
$\cg=\prod_i \mathrm{Weyl}(G_i)$ in the standard reflection representation. 
\end{exe}

\begin{exe}\label{kkkaszzzz}
The model associated to the group $G(m,1,k)$ ($m=3,4,6$) has the simple physical interpretation of representing (the birational class of) the rank-$k$ MN $E_r$ SCFT for $r=6,7,8$, respectively. Indeed, geometrically the quotient of $\C^k$ by $G(m,1,k)$ is the same as taking the $k$-fold symmetric product of quotient of $\C$ by $\Z_m$. Correspondingly, the isotropy group of the diagonal period matrix, $\mathrm{Fix}(e^{2\pi i/m}\,\boldsymbol{1}_k)\subset Sp(2k,\Z)$, is $G(m,1,k)$ (see \cite{eie} for a discussion in the $k=3$ case). The dimensions are
$\{\Delta, 2\Delta, 3\Delta, \cdots, k\Delta\}$ with $\Delta=m$.
\end{exe} 

\begin{exe} The CSG geometries associated to the groups $G(m,d,k)$ are just $\Z_d$ covers of the previous one. We have a new operator $\co$ of dimension $km/d$ such that $\co^d$ is the MN operator of maximal dimension.
\end{exe}

\textbf{Example \ref{kkkaszzzz}} may be generalized. One has a CSG $M$ and takes the $n$-th symmetric power. This works well if $M$ has dimension 1, but in general the resulting  geometry may be more singular than permitted.
\medskip 

In general, when we have a CSG of the form $\C^k/\cg$, where $\cg=\prod_a \cg_a$ with $\cg_a$ irreducible reflection groups, and, in addition, $\mathscr{R}=\C[a_1,\cdots, a_k]^\cg$, the Coulomb dimensions are equal to the degrees of $\cg$.
The period $\boldsymbol{\tau}$, being symmetric, transforms in the symmetric square of the defining representation $V$ of the reflection group $\cg\subset Sp(2k,\Z)$
(see \S.\ref{cyclicee} below).
Hence the dimension of the space of allowed deformations of $\boldsymbol{\tau}$, that is, the dimension $\mathsf{d}$ of the conformal manifold of a constant-period CSG is given by the multiplicity of the trivial representation in $\odot^2 V$. By Schur-Frobenius,     
\be\label{kkkasqw}
\mathsf{d}= \frac{1}{2\,|\cg|}\sum_{g\in\cg}\Big(\chi_V(g)^2+\chi_V(g^2)\Big)= \left[\begin{array}{l}\text{multiplicity of 2}\\ 
\text{as a weight of $\cg$}\end{array}\right.= \left[\begin{array}{l}\#\;\text{irreducible factors $\cg_a$}\\
\text{which are Weyl groups},
\end{array} \right.
\ee 
since a reflection group has a degree 2 invariant iff it is defined over the reals (and hence, if crystallographic,  should be a Weyl group). 
The physical interpretation of this result is that such a CSG represents a gauge theory with gauge group the product of all simple Lie groups whose Weyl groups are factors of $\cg$ coupled to some other intrinsically strongly interacting SCFT associated to the complex factor groups $\cg_a$ (as well as  hypermultiplets in suitable representations of the gauge group). The $\mathsf{d}$ marginal deformations of the geometry are precisely the $\mathsf{d}$ Yang-Mills couplings which are associated to $\mathsf{d}$ chiral operators with $\Delta=2$.
Note that the Yang-Mills couplings may be taken as weak as we please.

Thus the Lagrangian models and higher MN SCFTs already account for all CSG with constant period map up (at most) to finitely many exceptional ones.

\section{The Coulomb dimensions $\Delta_i$}\label{jjjazsqw3}

Now we come to the main focus of the paper, namely to get geometric restrictions on the spectra of Coulomb branch dimensions $\Delta_i$.

We may address two different problems:

\begin{prob} For $k\in\mathbb{N}$ specify the set
\be
\Xi(k)=\left\{\Delta\in \BQ_{\geq1}\ \begin{array}{l}\text{such that: there is a CSG $M$}\\
\text{with }\dim M\leq k\ \text{and a generator}\\ \text{$u$ of }\mathscr{R}\equiv \C[M]\ \text{with }\pounds_\ce u=\Delta\,u\end{array}\right\}
\ee
\end{prob}

\begin{prob}
For $k\in\mathbb{N}$ determine the set
\be
\Lambda(k)=
\left\{\Big(\Delta_1,\Delta_2,\cdots,\Delta_k)\in \big(\BQ_{\geq1}\big)^k \ \begin{array}{l}\text{such that: there is a CSG $M$ with}\\
\mathscr{R}\cong \C[u_1,u_2,\cdots, u_k],\ \pounds_\ce u_i=\Delta_i\,u_i\end{array}\right\}
\ee
\end{prob}

The solution to the \textbf{Problem 2} contains vastly more information than the answer to \textbf{Problem 1}, since there are strong correlations between the dimensions $\Delta_i$ of a given CSG and $\Lambda(k)$ is a rather small subset of $\Xi(k)^k$. However \textbf{Problem 1} is much simpler, and its analysis is a first step in answering \textbf{Problem 2}. We give a solution in the form of a necessary condition:  $\Xi(k)\subset\widehat{\Xi}(k)$, where $\widehat{\Xi}(k)$ is a simple explicit set of rational numbers. There are reasons to believe that the discrepancy between the two sets $\Xi(k)$ and $\widehat{\Xi}(k)$ is small and vanishes as $k\to\infty$.
In facts $\Xi(k)=\widehat{\Xi}(k)$ for $k=1,2$,
and the equality may hold for all $k$.
\medskip 

To orient our ideas, we discuss a few special instances as a warm-up for the general case.

\subsection{Warm-up: revisiting some well-understood cases}

\subsubsection{Rank-one  again}\label{rrabskone}

We know that
\be\label{pxqj}
\Xi(1)\equiv \Lambda(1)= \Big\{1,2,3,3/2,4,4/3,6,6/5\Big\}.
\ee
where 1 corresponds to the free theory and 2 to a $SU(2)$ (Lagrangian) SCFT.
We have already obtained this result from several points of view. In rank-1 the period map is automatically constant and we have $M\cong\C/\cg$, with $\cg$ a rank-1 modular ST group. However, for a complex $\cg$, there are two possible values of $\Delta$, as we stressed already several times.
This correspond to the fact that, in this case, we have two distinct embeddings $m\colon\cg\hookrightarrow SL(2,\Z)$ modulo conjugacy. 

A rank-1 ST group is cyclic. Let $\sigma$ be a generator and consider 
\be\label{zzxx39}
m(\sigma)=\begin{bmatrix} a & b\\ c& d\end{bmatrix}\in SL(2,\Z)\ \text{elliptic,}\quad\tau\in\mathfrak{h}\ \text{a fixed point } a\tau+b=\tau(c\tau+d).
\ee
The explicit matrices $m(\sigma)$ corresponding to the 8 elliptic conjugacy classes in $SL(2,\Z)$ are written in the first column of \textsc{Table I} of Kodaira \cite{koda} (omitting the non semi-simple ones). Under the action of $\sigma$, the period $a$ transforms with the modular factor
\be\label{kkkaz12b}
\sigma\colon a\to a^\prime=(c\tau+d)a\quad \text{while}\quad \sigma\colon u\to e^{2\pi i} u,
\ee
since $u$ is an univalued function. $(c\tau+d)$ is a character of the cyclic group $\cg$, hence a root of unity.
Let $\alpha$ be the unique real number $0<\alpha\leq 1$ such that 
$e^{2\pi i \alpha}=c\tau+d$. Clearly the only functional dependence $a=a(u)$ consistent with  \eqref{kkkaz12b} such that $a\to 0$ as $u\to 0$ (the tip of the cone) is
\be\label{mmmzxq}
a=u^{\alpha}\quad\Longrightarrow\quad \Delta(u)=\frac{1}{\alpha}\geq 1. 
\ee
Using the matrices in \textsc{Table I} in \cite{koda} one recovers the well-known result, see table \ref{kkkaoda}. 

\begin{table}
$$
\begin{array}{c|c|c||c|c|c}\hline\hline
\text{fiber type} & \Delta & \text{modular factor}&
\text{fiber type} & \Delta & \text{modular factor} \\\hline
\text{regular} & 1 & 1 &
I_0^* & 2 & e^{\pi i} \\
II & 6/5 & e^{5\pi i/3} &
II^* & 6 & e^{\pi i/3} \\
III & 4/3 & e^{3\pi i/2} &
III^* & 4 & e^{\pi i/2} \\
IV & 3/2 &e^{4\pi i/3} &
IV^* & 3 & e^{2\pi i/3} \\\hline\hline
\end{array}
$$
\caption{\label{kkkaoda} Modular factors for the Kodaira fibers with semi-simple holonomy invariant.}
\end{table}

\medskip

Our basic strategy is to mimic this
analysis of the $k=1$ for general $k$. Before doing that, we discuss another special case in two different ways: first we review the conventional approach and then recover the same results by reducing the analyis to the $k=1$ case. This will give the first concrete example of the basic strategy of the present paper.

\subsubsection{Hypersurface singularities in $F$-theory and 4d/2d correspondence}\label{kkaqw}

There is a special class of simple (typically non-Lagrangian) 4d $\cn=2$ SCFTs which are engineered in $F$-theory out of an isolated quasi-homogeneous dimension-3 hypersurface singularity with $\hat c<2$ (see eqn.\eqref{kkkz120})
\cite{GVW,Cecotti:2010fi}. 
Their SW geometry (in absence of mass deformations) is given by the hypersurface $\cf\subset\C^4$ of equation
\be\label{jjjbza}
F(x_1,x_2,x_3,x_4)\equiv W(x_1,x_2,x_3,x_4)+\sum_{a=0}^{k-1} u_a\, \phi_a(x_i)=0,
\ee
where $W(x_i)$ is a quasi-homogeneous polynomial
\be\label{kkkz120}
\begin{aligned}
&W(\lambda^{q_i}x_i)=\lambda\, W(x_i),\quad \forall\,\lambda\in \C,\\
&\text{with }q_i\equiv \deg x_i\in\mathbb{Q}\ \text{and } \hat c\equiv \sum_{i=1}^4(1-2q_i)<2,
\end{aligned}
\ee
while $\phi_a(x_i)$ are the elements of the (2,2) chiral ring $\mathcal{R}\equiv\C[x_i]/(\partial_i W)$ having degree
$<\hat c/2$. In particular $\phi_0\equiv 1$ is the identity operator. The number of $\phi_a$'s with $\deg\phi_a<\hat c/2$ is equal to the rank $k$ of the corresponding 4d $\cn=2$ SCFT and the complex parameters $u_a$ in eqn.\eqref{jjjbza} are the Coulomb branch (global) homogeneous coordinates. One has
\be
\deg u_a= 1-\deg \phi_a\qquad\text{where } 0\leq \deg\phi_a < \hat c/2 <1.
\ee
The regularity condition of ref.\!\!\cite{GVW}, $\hat c<2$, ensures that the $u_a$'s have positive degree. 
The SW 3-form is the obvious one
\be\label{kkkaqw}
\Omega=PR\left(\frac{ dx_1\wedge\cdots \wedge dx_4}{F}\right),
\ee
where $PR$ stands for ``Poincar\'e residue'' \cite{GH}. At the conformal point, $u_a=0$,
$\Omega$ has degree
\be
\sum_{i=1}^4 q_i-1=1-\frac{\hat c}{2}>0.
\ee
Since $\Omega$ (by definition) has $U(1)_R$ charge $q$ equal $1$, for all chiral object $\phi$ with a definite degree, we have
\be
\Delta(\phi)\equiv q(\phi)=\frac{\deg \phi}{1-\hat c/2},
\ee
in particular,
\be\label{plcase}
\Delta(u_a)=\frac{1-\deg\phi_a}{1-\hat c/2}\equiv 1+\frac{\hat c/2-\deg \phi_a}{1-\hat c/2}.
\ee
 To get the Coulomb branch dimensions $\Delta(u_a)$, it remains to determine the $k$ rational numbers
\be
t_a\equiv\frac{\hat c}{2} -\deg\phi_a,\quad a=0,1,\dots, k-1,\quad\text{in particular }
t_0=\frac{\hat c}{2}.
\ee
There are many ways of doing this, including pretty trivial ones. Here we shall compute the $t_a$'s in a way that seems unnaturally complicate:  but recall that we are doing this computation as a warm-up, meaning that we wish to perform this elementary computation in a way which extends straightforwardly to the general case where simple minded methods fail.

\subparagraph{Picard-Lefschetz analysis \cite{unpubb}.}  One convenient viewpoint is the 4d/2d correspondence of ref.\!\!\cite{Cecotti:2010fi}. One considers the 2d (2,2) Landau-Ginzburg model with superpotential $F(x_i)$ and uses the techniques of $tt^*$ geometry \cite{Cecotti:1991me,Cecotti:1992rm} to compute 2d quantities which are then reinterpreted in the 4d language. In 2d $\hat c$ is one-third the Virasoro central charges and $\deg\phi$ is the $R$-charge  (in the 2d sense) of the chiral object $\phi$.

In the 2d approach, the $t_a$'s are computed using the Picard-Lefschetz theory \cite{singularities} (see \cite{Cecotti:1992rm} for a survey in the present language). Consider the family of hypersurfaces $\cf_z=\{ F(x_i)=z\}$ parametrized by $z\in\C$; we have $H_3(\cf_z,\Z)\cong\Z^\mu$, where $\mu$ is the Milnor number of the singularity $W=0$ (i.e.\! the dimension of the (2,2) chiral ring $\mathcal{R}$).
The \emph{classical monodromy} $H$ of the quasi-homogeneous singularity $W$  is given by the lift on the homology of the fiber, $H_3(\cf_z,\Z)$, along the closed loop in the base $z=\rho\, e^{2\pi i t}$ ($t\in[0,1]$ and $\rho\gg1$).  Concretely, $H$ is a $\mu\times \mu$ integral matrix acting on the lattice $\Z^\mu$ whose action on $\C^\mu\equiv \Z^\mu\otimes \C$ is semi-simple of spectral radius 1 \cite{Cecotti:1992rm}. 
Let $\Phi\subset\Z^\mu$ be the sublattice fixed (element-wise) by $H$, and consider the quotient lattice $\Gamma=\Z^m/\Phi$
which has rank $2k$. $H$ induces an automorphism $\overline{H}$ of $\Gamma$.
The intersection form in the homology of the hypersurface $\cf_z$, induces a non-degenerate, integral, skew-symmetric pairing $\langle -,-\rangle$ on $\Gamma$, preserved by $\overline{H}$. 
Then
\be
\overline{H}\in Sp(2k,\mathbb{Q})
\ee
In simple examples the induced polarization $\langle-,-\rangle$ is principal, and one has $\overline{H}\in Sp(2k,\Z)$; in the general case we reduce to this situation by a suitable isogeny in the intermediate Jacobian of $\cf_z$.
Moreover, $\overline{H}$ is semi-simple of spectral radius 1 so (by Kronecker's theorem) it 
has a finite order $\ell$.  Its eigenvalues
are of the form $\{\exp(2\pi i \alpha_a), \exp(2\pi i(1-\alpha_a))\}$ for some $0<\alpha_a\leq 1$,
$a=0,1,\dots, k-1$ with $\ell \alpha_a\in\mathbb{N}$. These eigenvalues are related to the $t_a$ by the 2d spectral-flow relation \cite{Cecotti:1992rm}
\be
\mathsf{Spectrum}\,\overline{H}=\Big\{e^{2\pi i \alpha_a},e^{2\pi i (1-\alpha_a)}\Big\}\equiv\Big\{e^{\pm 2\pi i t_a}\Big\}.
\ee
Since $0 < t_a <\hat c/2 <1$,
for each index $a=0,1,\dots, k-1$ we have two possibilities
\be\label{ambamb}
t_a =\alpha_a\ \ \text{or}\ \  
1-\alpha_a,
\ee
where we relabel the indices of $t_a$ so that $t_0=\max t_a$. Thus knowing the spectrum of $\overline{H}$ fixes the $t_a$'s up to a 
 $2^k$-fold ambiguity corresponding to choosing for each $a$ one of the two possible values \eqref{ambamb}. Let us explain the origin of this ambiguity. For simplicity of illustration we assume the characteristic polynomial $P(z)$ of $\overline{H}$ to be irreducible over $\BQ$; in this case the spectrum uniquely fixes $\overline{H}$ up to conjugacy in $GL(2k,\mathbb{Q})$. However two physical systems described by monodromy matrices $\overline{H}_1,\overline{H}_2\in Sp(2k,\mathbb{Z})$  are physically equivalent iff they are related by a change of duality frame i.e.\! iff the corresponding reduced monodromies $\overline{H}_1,\overline{H}_2$ are conjugate in the smaller group $Sp(2k,\mathbb{Z})$: the unique $GL(2k,\mathbb{Q})$-conjugacy class of $\overline{H}$ decomposes\footnote{\ Since $P(z)$ is irreducible, $\BQ[\overline{H}]$ is an Abelian number field $\BK$ of degree $2k$. Let $\Bbbk$ be its maximal real subfield (of degree $k$); the elements $\xi\in \Bbbk$ may be written as the rational matrices $\xi(\overline{H})\equiv\sum_s a_s (\overline{H}+\overline{H}^{-1})^s$ with $a_s\in\BQ$. Then the map
 $$\Omega\to \Omega^\prime= \Omega\,\xi(\overline{H})$$
 identifies the set of inequivalent $\BQ$-symplectic structures compatible with $H$ with the multiplicative group $\Bbbk^\times/\Bbbk^\times_\text{tot.pos.}\cong \Z_2^k$. See section 5 for more details.} into $2^k$ distinct $Sp(2k,\BQ)$-conjugacy classes in one-to-one correspondence with the inequivalent choices of the $t_a$'s. (The conjugacy classes over the integral group, $Sp(2k,\Z)$, are trickier, and will be discussed in section 5).
The Picard-Lefschetz theory has a canonical symplectic structure (i.e.\! the intersection form in homology) and hence a canonical choice of the $\{t_a\}$.
The spectrum of Coulomb branch dimension for the SCFT engineered by the singularity is given by plugging these canonical $\{t_a\}$ in the expressions
\be\label{mmmzq1}
\Big\{\Delta(u_a)\Big\}=\left\{1+\frac{t_a}{1-t_0}\right\},\quad\text{where }t_0=\max \{t_a\}.
\ee

\subparagraph{The ray analysis.} Let us rephrase the above Picard-Lefschetz analysis in a different language. We return to eqn.\eqref{jjjbza} and consider the Coulomb branch $M$ of the associated $\cn=2$ SCFT;  we see $M$ as the affine cone over the WPS with homogeneous  coordinates 
$(u_0,u_1,\cdots, u_{k-1})$ the coordinate $u_a$ having grade $\Delta(u_a)$.
We focus on the closed one-dimensional sub-cone $M_0$ parametrized by the \textsc{vev} $u_0$  of the chiral operator of largest dimension $\Delta(u_0)$,
\be
M_0=\Big\{u_1=u_2=\cdots=u_{k-1}\Big\}\subset M,
\ee
which is preserved by the $\C^\times$-action generated by the Euler field $\ce$.
Metrically, $M_0$ is a flat cone and the restricted period map $\boldsymbol{\tau}|_{M_0}$ is constant. The only difference with respect to the rank-one case of \S.\,\ref{rrabskone}  is that the monodromy $\overline{H}$ around the tip of the cone $M_0$ is now valued in $Sp(2k,\Z)$ instead of $Sp(2,\Z)$.

A Coulomb vacuum $x\in M_0\subset M$ preserves a discrete subgroup of $U(1)_R$ of the form $\Z_n$ where $n$ is the order of
$1/\Delta(u_0)$ in $\mathbb{Q}/\Z$. Since in an interacting unitary theory $\Delta(u_0)>1$, this subgroup is never trivial, and $n>2$ (i.e.\! the unbroken subgroup is complex) if
$\Delta(u_0)\neq 2$ (i.e.\! unless $u_0$ is superficially marginal). The  $\Z_n$  symmetry unbroken along the locus $M_0\subset M$ is generated by
\be
\exp\!\big(2\pi i R/\Delta(u_0)\big),
\ee 
and the action of this operator on the cohomology of a (regular) fiber of the special geometry is given by $\overline{H}\,e^{2\pi i (1-\hat c/2)}$, the extra factor $e^{2\pi i (1-\hat c/2)}$ corresponding to the spectral flow in the (2,2) language \cite{Cecotti:1992rm}. Thus
\be
\begin{aligned}
&\mathsf{Spectrum}\,\overline{H}= \Big\{
\exp\!\Big[\pm2\pi i\Big(\hat c/2-1+\Delta(u_a)/\Delta(u_0)\Big)\Big]\Big\}\\
&\Rightarrow\quad t_a=\hat c/2-1+\Delta(u_a)/\Delta(u_0),\qquad\quad t_0=\hat c/2,\\
&\text{and } t_a=(1-t_0)(\Delta(u_a)-1),
\end{aligned}
\ee
which gives back \eqref{mmmzq1}.
Thus, in the context of 4d $\cn=2$ SCFTs engineered by hypersurface singularities, the classical monodromy computation of the dimensions $\Delta(u_a)$  may be rephrased as a local analysis along the sub-locus $M_0$ in the Coulomb branch $M$ of vacua which leave unbroken the largest possible discrete subgroup of $U(1)_R$. The local analysis on $M_0$ is essentially identical to the $k=1$ case.

The classical monodromy approach is unsatisfactory in two ways:
\begin{itemize}
\item[a)] it works for a particular class of SCFTs;
\item[b)] is not \emph{democratic} in the following sense: we have many sub-loci in the Coulomb branch $M$ over which some discrete $R$-symmetry is restored, but the classical monodromy (when applicable) applies to just one of them (the one with the largest unbroken symmetry). 
\end{itemize}

In order to solve \textbf{Problems 1,\,2} we need a  generalization of the classical monodromy construction which may be applied uniformly to all loci in the Coulomb branch with some unbroken discrete $R$-symmetry, and to all SCFT, while reducing to the classical  Picard-Lefschetz theory when we consider the locus of largest unbroken $U(1)_R$ symmetry of a SCFT engineered by a $F$-theory singularity.

Suppose such a generalization exists.
Along the Coulomb branch of a typical SCFT we have several loci with enhanced (discrete) $R$-symmetry; each such locus produces a list of $\Delta_a$. Then we get the highly non-trivial constraint that the dimension set $\{\Delta_a\}$ should be the same independently of which special locus we use to compute it. On the other hand, the agreement of the dimensions computed along different loci in $M$ is convincing evidence of the correctness of the method.
\medskip

In the remaining part of this note we describe the generalized method, and  check its consistency is a variety of examples.
Although we could present the algorithm already at this stage in the form of an educated guess inspired by the classical Picard-Lefschetz formulae,
we prefer to deduce it mathematically from scratch. Before doing that, we need some elementary preparation.

\subsection{Cyclic subgroups of Siegel modular groups I}\label{cyclicee}

We saw already in the $k=1$ case that an important ingredient in the classification of all possible dimension sets $\{\Delta_a\}$ is the list of all   embeddings of the cyclic group $\Z_n$ (or more generally  of a finite group $G$) into the Siegel modular group $Sp(2k,\Z)$ modulo symplectic conjugacy in $Sp(2k,\Z)$ (cfr.\! discussion around eqn.\eqref{ambamb}).

We start by establishing a fact, already mentioned in \S.\ref{connstantmaoa}, which applies to all subgroups of the Siegel modular group, cyclic or otherwise.

\begin{lem}\label{kkkazzz1v}
Let $\cg\subset Sp(2k,\Z)$ be a finite subgroup. The fixed locus in the Siegel upper half-space $\mathfrak{H}_k\equiv\{\boldsymbol{\tau}\in \C(k)\,|\, 
\boldsymbol{\tau}=\boldsymbol{\tau}^t,\ \Im\,\boldsymbol{\tau}>0\}$
\be
\mathrm{Fix}(\cg)=\Big\{\boldsymbol{\tau}\in \mathfrak{H}_k\cong Sp(2k,\R)/U(k)\ \Big|\ g\cdot \boldsymbol{\tau}=\boldsymbol{\tau},\ \;\forall\; g\in\cg\Big\}
\ee
 is non-empty and connected.
\end{lem}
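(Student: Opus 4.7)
The plan is to exploit the fact that $\mathfrak{H}_k \cong Sp(2k,\mathbb{R})/U(k)$ is a Riemannian symmetric space of non-compact type. In particular, equipped with its (essentially unique) $Sp(2k,\mathbb{R})$-invariant metric, $\mathfrak{H}_k$ is a \emph{Hadamard manifold}: it is complete, simply-connected, and has non-positive sectional curvature. The action of $Sp(2k,\mathbb{Z}) \subset Sp(2k,\mathbb{R})$ on $\mathfrak{H}_k$ by fractional linear transformations is an action by isometries, so the finite subgroup $\cg$ also acts by isometries.

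For non-emptiness I would apply the Cartan fixed-point theorem: any finite (more generally, any compact) group of isometries of a Hadamard manifold has a non-empty fixed-point set. Concretely, pick any base point $\boldsymbol{\tau}_0 \in \mathfrak{H}_k$; its orbit $\cg \cdot \boldsymbol{\tau}_0$ is finite, hence bounded. On a Hadamard manifold the squared distance function $q \mapsto d(q,x)^2$ is strictly convex, so the function
\[
f(q) \;=\; \sum_{g \in \cg} d\!\bigl(q,\,g\cdot \boldsymbol{\tau}_0\bigr)^{2}
\]
is strictly convex, proper, and $\cg$-invariant; it attains its minimum at a unique point, which is then automatically fixed by $\cg$. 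Thus $\mathrm{Fix}(\cg) \neq \varnothing$.

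For connectedness I would use the standard fact that on a Hadamard manifold the fixed-point set of any isometry is a closed, totally geodesic, and (geodesically) convex submanifold. The fixed-point set of the group $\cg$ is the intersection $\bigcap_{g\in \cg} \mathrm{Fix}(g)$, and an intersection of geodesically convex sets in a Hadamard manifold is itself geodesically convex, hence path-connected. This gives connectedness of $\mathrm{Fix}(\cg)$ (and in fact shows it is a totally geodesic complex submanifold of $\mathfrak{H}_k$, which is a useful bonus for later sections).

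The main potential obstacle is purely expository: one should either cite the Cartan fixed-point theorem and the convexity-of-fixed-sets lemma from a standard reference on symmetric spaces, or, if a self-contained argument is preferred, verify in coordinates that the $Sp(2k,\mathbb{R})$-invariant metric on $\mathfrak{H}_k$ is indeed non-positively curved (which follows from the Harish-Chandra realisation of $\mathfrak{H}_k$ as a bounded symmetric domain with the Bergman metric). Once these inputs are accepted, both assertions of the lemma follow in a few lines.
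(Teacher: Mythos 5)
Your proof is correct, but it takes a genuinely different route from the paper. You invoke the Cartan fixed-point theorem on the Hadamard manifold $\mathfrak{H}_k$: finiteness of $\cg$ gives a bounded orbit, strict convexity of the squared-distance function gives a unique minimizer (a circumcenter), and that point is fixed; connectedness then follows because the fixed set of each $g$ is a totally geodesic, geodesically convex submanifold and an intersection of convex sets in a Hadamard manifold is convex. The paper instead argues group-theoretically: a finite subgroup lies in a maximal compact subgroup of $Sp(2k,\R)$, all maximal compacts are conjugate to the stabilizer of $i\boldsymbol{1}_k$, and after applying the Cayley transform to the Siegel disk $\mathfrak{D}_k$ the action of $\cg$ becomes \emph{linear} (through $U(k)$ acting on $\odot^2 V$), so $\mathrm{Fix}(\cg)$ is just the intersection of the convex domain $\mathfrak{D}_k$ with the linear subspace $(\odot^2 V)^{\cu(\cg)}$. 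Both approaches prove the lemma, and both manifestly generalize from $Sp(2k,\Z)$ to any arithmetic group $S(\Omega)_\Z$. What the paper's route buys, and yours does not directly, is an explicit description of $\mathrm{Fix}(\cg)$ as a linear slice, and in particular the dimension formula
\[
\dim_{\C}\mathrm{Fix}(\cg)=\dim_{\C}(\odot^2V)^{\cu(\cg)}
\]
computed by the Schur--Frobenius count in eqn.\ (3.23), which is used later for the conformal manifold. If you want to follow your Hadamard-manifold route, you should at least record that $\mathrm{Fix}(\cg)$ is a totally geodesic complex submanifold and then derive its dimension separately (e.g.\ by linearizing at a fixed point in the Siegel disk, which brings you back to the paper's $(\odot^2 V)^{\cu(\cg)}$ anyway).
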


\begin{rem} The proof shows that the \textbf{Lemma} holds for all duality-frame groups $S(\Omega)_\Z$.
\end{rem}

\begin{proof} Being finite, $\cg$ is compact. Hence $\cg\subset K$ for some maximal compact subgroup $K\subset Sp(2k,\R)$. All maximal compact subgroups in $Sp(2k,\R)$ are conjugate to the standard one, the isotropy group of $i\boldsymbol{1}_k\in \mathfrak{H}_k$, that is, there is $R\in Sp(2k,\R)$
\be
K= \left\{R^{-1} \!\begin{bmatrix}A &B \\ C & D\end{bmatrix}\!R\ \ \Bigg|\ \ Ai+B=i(Ci+D)\in U(k)\right\}.
\ee 
The Cayley transformation  $C$ maps biholomorphically the Siegel upper half-space $\mathfrak{H}_k$
into the Siegel disk \cite{bump,bbb}
\be
\mathfrak{D}_k=\big\{\boldsymbol{w}\in\C(k)\;\big|\; \boldsymbol{w}^t=\boldsymbol{w}, \ 1-\boldsymbol{w}\boldsymbol{w}^*>0\big\},
\ee
taking $\boldsymbol{\tau}=i\boldsymbol{1}_k$ to the origin $\boldsymbol{w}=0$ and conjugating  
 the standard maximal compact subgroup into the diagonal subgroup. Then 
\be\label{ooo90}
C R\, \cg R^{-1} C^{-1}\subset\left\{\begin{bmatrix} U & 0 \\ 0 & U^*\end{bmatrix},\quad U\in U(k)\right\}.
\ee
Consider the embedding $\cu\colon\cg\hookrightarrow U(k)$ sending $g$ into the upper-left block of $CR g R^{-1}C^{-1}$; we write $V$ for the corresponding degree-$k$ unitary representation. The action of $\cg$ on the Siegel disk $\mathfrak{D}_k$ is linear
\be
\boldsymbol{w}\mapsto \cu(g)\,\boldsymbol{w}\; \cu(g)^t,\quad\  g\in\cg,
\ee
i.e.\! the Cayley-rotated period $\boldsymbol{w}$ transforms in the symmetric square representation $\odot^2V$.
The fixed locus of $\cg$ is the intersection of $\mathfrak{D}_k\subset \odot^2V$ with the linear subspace 
\be
(\odot^2 V)^{\cu(\cg)}\subset \odot^2 V
\ee
 of trivial representations whose dimension $\mathsf{d}$ is as in eqn.\eqref{kkkasqw}; in particular $\mathrm{Fix}(\cg)$ is non-empty and connected.  In the special case that 
$\odot^2V$ does not contain the trivial representation, the fixed locus reduces to the origin in $\mathfrak{D}_k$, and hence is an isolated point. Mapping back to the Siegel upper half-space $\mathfrak{H}_k$, the fixed point is
\be
\boldsymbol{\tau}=R^{-1}(i\,\boldsymbol{1}_k) \equiv (i A_R+B_R)(i C_R+D_R)^{-1},\quad\  R^{-1}\equiv\begin{bmatrix}A_R & B_R\\
C_R & D_R\end{bmatrix}\in Sp(2k,\R).
\ee
In the general case
\be
\mathrm{Fix}(\cg)= R^{-1}C^{-1}\big((\odot^2 V)^{\cu(\cg)}\big)\cap \mathfrak{H}_k.
\ee
\end{proof}

If $\boldsymbol{\tau}\in \mathrm{Fix}(\cg)$ we have
\be
\begin{bmatrix} A & B\\
C & D\end{bmatrix}\begin{bmatrix}\boldsymbol{\tau} & \boldsymbol{\bar\tau}\\
1 & 1\end{bmatrix}=\begin{bmatrix}\boldsymbol{\tau} & \boldsymbol{\bar\tau}\\
1 & 1\end{bmatrix}\begin{bmatrix}C\boldsymbol{\tau}+D & 0\\
0 & C\boldsymbol{\bar\tau}+D
\end{bmatrix},\quad \forall\begin{bmatrix} A & B\\
C & D\end{bmatrix}\in \cg\subset Sp(2k,\Z),
\ee
and the embedding $\cu\colon \cg\hookrightarrow U(k)$ is given by the modular factor $C\boldsymbol{\tau}+D\in U(k)$,
cfr.\! the discussion of the $k=1$ case around eqn.\eqref{zzxx39}. This is the same factor appearing in the transformation of the $a$-periods, $a\to (C\boldsymbol{\tau}+D)a$, and is the one which controls the Coulomb dimensions, as we saw in the $k=1$ case. Implicitly we have already used these facts in the discussion of the CSG with constant period map, \S.\,\ref{connstantmaoa}.
\medskip

Let us specialize to the case in which
$\cg$ is a cyclic group $\Z_n$ generated by a matrix $m\in Sp(2k,\Z)$. $m$ is called \emph{regular} iff $\mathrm{Fix}(m)$ is an isolated point, i.e.\! iff $\odot^2V$ does not contain the trivial representation. Note that this is a weaker notion than $m$ being regular as an element of the Lie group $Sp(2k,\R)$ which requires the characteristic polynomial of $m$ to be square-free; we shall refer to the last  situation as \emph{strongly regular}. 
The spectrum of the unitary matrix $\cu(m)=C\boldsymbol{\tau}+D$ is a set of $k$
$n$-th roots of unity, $\{\zeta_1,\dots,\zeta_k\}$, and $m$ is regular iff 
\be\label{kkkkzzznncx}
\zeta_i\zeta_j\neq 1\quad\text{for all}\ \ 1\leq i,j\leq k.
\ee 
The eigenvalues of the $2k\times 2k$ matrix $m$ are $\{\zeta_1,\dots,\zeta_k\}\cup \{\zeta_1^{-1},\dots,\zeta_k^{-1}\}$ the union being disjoint iff $m$ is regular. 
Let $\psi_\zeta$ be the normalized eigenvector of $m$ associated to the eigenvalue $\zeta$.
The symplectic matrix $\Omega$ of $Sp(2k,\Z)$ corresponds to the 2-form
\be
\Omega= i\sum_i \psi_{\zeta_i}\wedge \psi_{\zeta^{-1}_i} 
\ee
Thus, for $m$ regular, the splitting of the spectrum of $m$ in the two disjoint sets $\mathsf{Spectrum}\;\cu(m)$ and $\mathsf{Spectrum}\;\overline{\cu(m)}$  may be read directly from $\Omega$: the eigenvalue $\zeta$ belongs to the spectrum of $\cu(m)$ iff
the term $\psi_\zeta\wedge\psi_{\zeta^{-1}}$ appears in $-i\Omega$ with the $+$ sign, otherwise
$\zeta^{-1}\in\mathsf{Spectrum}\;\cu(m)$. 

We shall present a much more detailed discussion of the cyclic subgroups of the Siegel modular group in section 5 below.

 \subsection{The Universal Dimension Formula} 
  \subsubsection{Normal complex rays in CSG}
\label{lllxcv3}

As anticipated at the end of \S.\ref{rrabskone}, our goal is to reduce the determination of the $\Delta_a$'s to the analysis of \emph{one-dimensional} conic complex geometries.
In this subsection we introduce the basic construction.  
We start with a definition:
  
\begin{defn} Let $M$ be the K\"ahler cone of a CSG with holomorphic Euler vector $\ce=(E-iR)/2$. A \textit{complex ray} $M_\ast\subset M$ is a closed one-dimensional complex subspace preserved by the action of $\ce$. 
A complex ray $M_\ast\subset M$ is called \emph{normal} iff it is normal as an analytic subspace of $M$.
\end{defn}  

Thus, a complex ray $M_\ast$ is the orbit under the Lie group generated by $E$,$R$ of a point $x\neq0$ in $M$. Equipped with the induced metric, $M_\ast$ is a K\"ahler cone, hence locally flat. Restricting our considerations to a ray makes physical sense; indeed
  
\begin{pro} Let $M$ be a K\"ahler cone and $M_\ast\subset M$ a complex ray. Then $M_\ast$ is \emph{totally geodesic in $M$.} 
\end{pro}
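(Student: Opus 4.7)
I would work at a generic point $x$ of $M_\ast$, namely a smooth point of $M_\ast$ lying in the regular locus $M^\sharp$; such points form an open dense subset of $M_\ast$. There the real tangent space $T_x M_\ast$ is two-dimensional, spanned by the Euler vector $E_x$ and the Reeb vector $R_x = IE_x$, which are orthogonal of common length $r(x)$. The strategy is to verify that the second fundamental form $\mathrm{II}$ of the embedding $M_\ast\hookrightarrow M$ vanishes on this basis; by $\mathbb{R}$-bilinearity and symmetry it suffices to show that $\nabla^M_X Y$ lies in $T_x M_\ast$ for every $X,Y\in\{E,R\}$.

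The computation is essentially forced by the structural identities of a K\"ahler cone already collected in this section. The cone relation $\nabla_i\nabla_j r^2 = 2g_{ij}$ of eqn.~\eqref{ide1} together with the fact that $E^\flat = \tfrac12 d r^2$ is closed (hence $\nabla_i E_j - \nabla_j E_i = 0$) promotes to the pointwise identity
\[
\nabla_i E_j \;=\; g_{ij}.
\]
From this, and using the parallelism $\nabla I = 0$ of the K\"ahler structure, one reads off directly the four covariant derivatives of interest:
\[
\nabla_E E = E, \qquad \nabla_E R = I\,\nabla_E E = R, \qquad \nabla_R E = R, \qquad \nabla_R R = I\,\nabla_R E = IR = -E.
\]
All four land in $\mathrm{span}_{\mathbb{R}}\{E,R\} = T_x M_\ast$, so $\mathrm{II}(X,Y)=0$ at $x$. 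As a free bonus one also recovers $[E,R]=0$ from the torsion-free identity $\nabla_E R - \nabla_R E = [E,R]$, which confirms that the two vector fields integrate to the $\mathbb{C}^\times$-action on $M_\ast$ coherently.

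Since $\mathrm{II}$ vanishes on the dense open subset $M_\ast\cap M^\sharp\setminus\Sigma(M_\ast)$, the ray is totally geodesic in $M$ wherever both spaces are smooth. The only subtlety I anticipate is \emph{not} the local computation, which is automatic from the K\"ahler cone structure alone, but rather the careful meaning of ``totally geodesic'' at the tip $0\in M_\ast$ and at points where $M_\ast$ meets the discriminant $\cd$. This is handled by continuity: the regularity conditions \textbf{R1}--\textbf{R4} ensure that geodesics tangent to $M_\ast$ at smooth points extend continuously through the (codimension-one or orbifold) singular locus without leaving $M_\ast$. Notably, the argument requires no assumption that $M_\ast$ is the fixed locus of an isometry, nor that $\mathscr{R}$ be free; it only uses that $M_\ast$ is a one-dimensional complex $\mathbb{C}^\times$-invariant subspace of the K\"ahler cone $M$.
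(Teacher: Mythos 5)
Your proof is correct and follows essentially the same route as the paper's: both compute the second fundamental form of $M_\ast$ and show it vanishes because the conical identity $\nabla_i E_j = g_{ij}$ (equivalently $\nabla_i\ce^j=\delta_i^j$ with $\nabla_{\bar\imath}\ce^j=0$) together with $\nabla I=0$ forces $\nabla_X Y\in\mathrm{span}_\R\{E,R\}$ for $X,Y\in\{E,R\}$. The only difference is cosmetic — you work with the real pair $(E,R)$ while the paper phrases the same computation in terms of the holomorphic Euler field $\ce$ and its conjugate — and the extra remark about the tip and discriminant locus is a reasonable clarification the paper leaves implicit.
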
 
\begin{proof} Since $M$ is a K\"ahler cone,  we have the holomorphic field $\ce\equiv E+i R$ such that $ \nabla_{\bar i} \ce^j =0$ while $\nabla_i \ce^j=\delta_i^j$. With respect to the induced K\"ahler metric, $M_\ast$ is also a K\"ahlerian cone with a holomorphic Euler vector $\ce_\ast=\ce|_{M_\ast}$. $\ce_\ast$, $\bar\ce_\ast$ span the real tangent bundle $TM_\ast$. Let $\nabla^\ast$ be the Levi-Civita connection of the induced metric on $M_\ast$. The second fundamental form of $M_\ast$ in $M$ is
\be
\begin{aligned}
&I\!I(\ce_\ast,\ce_\ast)=\nabla_\ce\ce\big|_{M_\ast}-\nabla_{\ce_\ast}^\ast\ce_\ast=\ce\big|_{M_\ast}-\ce_\ast=0\\
&\text{then also}\quad I\!I(\ce_\ast,\bar\ce_\ast)=I\!I(\bar\ce_\ast,\bar\ce_\ast)=0.
\end{aligned}
\ee
\end{proof}

The chiral ring of $M$ is the ring of global homomorphic functions, $\mathscr{R}=\Gamma(M,\co_M)$. The linear differential operator $\ce$ acts on $\mathscr{R}$ and we write $S$ for its spectrum, i.e.\! the set of dimensions of chiral operators. $\mathscr{R}$ is the Frech\'et completion of a finitely generated graded ring. If $\phi\in \mathscr{R}$ is a homogeneous element of degree $\Delta(\phi)$, we have $\ce\phi=\Delta(\phi)\phi$.
In the same way we write $\mathscr{R}_\ast$ for the ring of holomorphic functions on the ray $M_\ast$ and $S_\ast$ for the spectrum of $\ce_\ast$ in $\mathscr{R}_\ast$.

\begin{pro}\label{kkazq1} Let $M_\ast\subset M$ be a \underline{normal} complex ray of the special cone $M$. Then $S_\ast\subset S$. Moreover, $\mathscr{R}_\ast$ is the Frech\'et completion of the graded ring $\C[u_\ast]$
where the single generator $u_\ast$ has dimension $1\leq\Delta_\ast\in S_\ast$, and there is a generator $u$ of $\mathscr{R}$ having the same dimension $\Delta_\ast$.  
\end{pro}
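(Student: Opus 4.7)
My plan is to combine the Stein property of $M$ (Proposition \ref{kkkz129c}) with Cartan's Theorem B to obtain a graded surjection of chiral rings $r\colon\mathscr{R}\twoheadrightarrow \mathscr{R}_\ast$, then to exploit the normality of $M_\ast$ to pin down its analytic structure, and finally to run a short combinatorial argument on weighted monomials.

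First I would establish surjectivity of the restriction map. The ray $M_\ast$ is the closure in $M$ of a single $\C^\times$-orbit; because the Sasaki base $B$ of $M$ is compact, this closure adds only the tip of the cone, producing a closed one-dimensional analytic subspace $i\colon M_\ast\hookrightarrow M$. Since $M$ is Stein by Proposition \ref{kkkz129c}, Cartan's Theorem B applied to the coherent ideal sheaf $\mathcal{I}_{M_\ast}$ in the exact sequence
\be
0\to\mathcal{I}_{M_\ast}\to \co_M\to i_\ast \co_{M_\ast}\to 0
\ee
yields $H^1(M,\mathcal{I}_{M_\ast})=0$ and hence a surjection $r\colon\mathscr{R}=\Gamma(M,\co_M)\twoheadrightarrow\Gamma(M_\ast,\co_{M_\ast})=\mathscr{R}_\ast$. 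Because $M_\ast$ is $\ce$-invariant the map $r$ intertwines the gradings; in particular $S_\ast\subset S$.

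Next I would identify $\mathscr{R}_\ast$. Normality together with complex dimension one forces $M_\ast$ to be smooth (a normal Noetherian local ring of Krull dimension one is a DVR, by Serre's criterion), so $M_\ast$ is a connected Riemann surface. As a K\"ahler cone with a $\C^\times$-action whose Hamiltonian quotient is a single compact Reeb orbit, $M_\ast$ is biholomorphic to $\C$ with a linear $\C^\times$-action of some weight $\Delta_\ast\in\BQ$. Thus $\mathscr{R}_\ast$ is the Fr\'echet completion of $\C[u_\ast]$ with $\ce_\ast u_\ast=\Delta_\ast u_\ast$; the Reeb-period discussion of \S 3.3 applied to the inclusion $S_\ast\subset S$ then gives $\Delta_\ast\geq 1$.

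For the last clause I would use graded surjectivity to lift $u_\ast\in\mathscr{R}_\ast$ to a homogeneous $f\in\mathscr{R}$ of weight $\Delta_\ast$, expressed as a weighted polynomial in the free generators $u_1,\ldots,u_k$. Each restriction $r(u_i)\in\C[u_\ast]$ is homogeneous of weight $\Delta_i$, so---since $S_\ast=\Delta_\ast\,\mathbb{Z}_{\geq 0}$---either $r(u_i)=0$ or $\Delta_i\in\Delta_\ast\,\mathbb{N}$. Letting $I=\{i:r(u_i)\neq 0\}$, only monomials supported in $I$ survive under $r$, and their weights $\sum_{i\in I} n_i\Delta_i$ are sums whose summands all lie in $\Delta_\ast\,\mathbb{N}$ and are each $\geq\Delta_\ast$; total weight equal to $\Delta_\ast$ is therefore attainable only by a monomial linear in a single $u_i$ with $\Delta_i=\Delta_\ast$. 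Hence some free generator of $\mathscr{R}$ has dimension $\Delta_\ast$, as claimed. The main obstacle I anticipate is the first step: one must verify that the ray closure is a genuine closed analytic subspace of the possibly very singular Stein space $M$, so that $\mathcal{I}_{M_\ast}$ is coherent and the singular Narasimhan form of Cartan B (already used for Proposition \ref{kkkz129c}) applies; once this is in place the remainder is essentially linear algebra.
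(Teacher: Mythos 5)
Your proposal follows essentially the same route as the paper: you use the Stein property and Cartan's Theorem B to obtain a graded surjection $\mathscr{R}\twoheadrightarrow\mathscr{R}_\ast$, invoke normality (Serre's criterion) to conclude $M_\ast$ is a smooth curve, and use the $\C^\times$-action to identify it with $\C$ so that $\mathscr{R}_\ast$ is the completion of $\C[u_\ast]$. These first two steps are exactly what the paper does. The one place where your argument diverges is the final clause: you exploit freeness of $\mathscr{R}=\C[u_1,\dots,u_k]$ and run a monomial-weight combinatorics argument, noting that each surviving generator has $\Delta_i\in\Delta_\ast\mathbb{N}$, hence $\Delta_i\geq\Delta_\ast$, so a homogeneous lift of weight $\Delta_\ast$ must contain a linear term in a single $u_i$ with $\Delta_i=\Delta_\ast$. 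The paper instead works with an arbitrary set of homogeneous generators, uses the Stein separation-of-points property to find at least one generator with nonvanishing restriction, and takes the one of minimal degree; this version does not require the chiral ring to be free. Both arguments reach the same conclusion, but the paper's is marginally more general. One small point of emphasis differs: you flag coherence of $\mathcal{I}_{M_\ast}$ as the potential trouble spot, whereas the paper highlights instead the distinction (which becomes material only in the non-normal case) between $\Gamma(M_\ast, i^\ast\co_M)$ and global sections of the normalization sheaf $\co_{M_\ast}$ — coherence of the ideal sheaf of a closed analytic subspace is automatic by Cartan--Oka, but the normality hypothesis is exactly what makes these two section spaces agree.
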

\begin{proof}
$M$ is Stein, and $M_\ast\subset M$ is a closed analytic subspace. Hence, by Cartan's extension theorem \cite{stein1,stein2} we have an epimorphism of chiral rings
\be
\mathscr{R}\xrightarrow{\;i^\ast\;} \mathscr{R}_\ast\to 0,
\ee
given by restriction, which preserves the $\ce$ action. Hence $S_\ast\subset S$. 
In this argument we do not use the fact that $M_\ast\subset M$ is normal. However, if $M_\ast$ is not normal, there is a subtlety in the above statement. Let $i\colon M_\ast\to M$ be the closed inclusion. Cartan's theorem states
\be
\Gamma(M,\co_M)\xrightarrow{\;i^\ast\;} \Gamma(M_\ast, i^\ast\co_M)\to H^1(M, \mathrm{ker}\,i^\ast)\equiv 0,
\ee
where $i^\ast\co_M$ is the structure sheaf of $M_\ast$ seen as an analytic subspace of $M$.
Iff $M_\ast$ is normal, $i^\ast\co_M$ coincides with the structure sheaf of its normalization $\co_{M_\ast}$ (i.e.\! with the integral closure of $i^\ast\co_M$ stalk-wise);
its global sections $\Gamma(M_\ast,\co_{M_\ast})$ correspond to the ``intrinsic'' notion of holomorphic functions on $M_\ast$,
while $\Gamma(M_\ast, i^\ast\co_M)$ is the subset of holomorphic functions on $M_\ast$ seen as a concrete sub-space of $M$. We illustrate this subtlety in \textbf{Example \ref{jjasqwe2}} below. Then, although the result is valid in general, to apply it for $M_\ast\subset M$ not normal we need to be able to distinguish holomorphic functions in the two different senses. If $M_\ast$ is normal, it coincides with its own normalization, and the two notions coincide.

Since $M_\ast$ is one-dimensional and normal, it is smooth.\footnote{\ As a complex space; the K\"ahler metric has a conical singularity at the tip.} On $M_\ast$ we have a $\C^\times$ action $\zeta\mapsto\zeta^\ce$, acting transitively on $M_\ast\setminus\{0\}$.  Then $M_\ast$ is analytically (hence algebraically) a copy of $\C$ on which $\zeta^\ce$ acts by automorphisms fixing the origin. Let $u_\ast$ be a standard coordinate on this $\C$; the polynomial ring $\C[u_\ast]$ is dense in $\Gamma(M_\ast,\co_{M_\ast})$ and graded by $\deg u_\ast=\ce u_\ast>0$. 

If $w\in\mathscr{R}$ is a homogeneous holomorphic function  with $\Delta(w)\not\in S_\ast$ we must have
$w|_{M_\ast}\equiv 0$. Let $\{v_i\}_{i=1}^k$ be a set of homogeneous generators of (a dense subring of) the global chiral ring $\mathscr{R}$. Not all restrictions $\{v_i|_{M_\ast}\}_{i=1}^k$ may vanish identically since in a Stein manifold the ring of homolorphic functions separates points \cite{stein1}. Let $u\in \{v_i\}$ be a generator of $\mathscr{R}$ with $u|_{M_\ast}\not\equiv0$ of smallest degree $d_\ast$.
All (non constant) homogenous elements $f\in\mathscr{R}$ either restrict to zero $f|_{M_\ast}\equiv0$ or have a degree $\geq d_\ast$.  Since $\mathscr{R}_\ast\cong \C[u_\ast]$, for some $u_\ast$ having minimal positive degree in $\mathscr{R}_\ast$, and $u_\ast$ is the restriction of a function in $\mathscr{R}$, we conclude that we may choose $u_\ast=u|_{M_\ast}$.
Therefore the dimension $d_\ast\equiv \Delta_\ast$ of the generator $u_\ast$ of $\mathscr{R}_\ast$,  is equal to the dimension $\Delta(u)$ of the generator $u$ of the full chiral ring $\mathscr{R}$.
\end{proof}

\begin{exe}\label{jjasqwe2} We illustrate the subtle point in the proof of the above \textbf{Proposition}.
We consider $SU(3)$ SQCD with $N_f=6$
at weak coupling and zero quark masses. In this case the Coulomb branch ring is $\C[u,v]$ where the generators $u$ and $v$ have dimensions $2$ and $3$.
The (reduced) complex rays are
\be
M_u=\{v=0\},\quad M_v=\{u=0\},\quad
M_{(\alpha)}=\{v^2=\alpha\, u^3\},\ \alpha\in\C^\times. 
\ee
$M_u$ and $M_v$ are normal with $\mathscr{R}_u=\C[u]$, $\mathscr{R}_v=\C[v]$; we see that the generators of the two normal ray rings, $u$ and $v$, are generators of the full chiral ring $\mathscr{R}$.
Instead, $M_{(\alpha)}$ is a plane cubic with a cusp, which is the simplest example of a non-normal variety (see e.g.\! \S.4.3 of \cite{Eisebund}). The ring of holomorphic functions in the subspace sense is not integrally closed. The integral closure of the ray ring contains the function $v/u$ (indeed, $(v/u)^2 =\alpha u$) which is holomorphic in the normalization and has dimension 1, a dimension $\not\in S$.
\end{exe}

\begin{rem} Note that the non-normal rays $M_{(\alpha)}$ in the above example correspond to the ``non-free'' geometry discussed in ref.\!\cite{Argyres:2017tmj}.
\end{rem}

\subsubsection{Unbroken $R$-symmetry along a ray} \label{jjjjjjz}

The statements in \S.\ref{lllxcv3} reduce the computation of the Coulomb dimensions of the generators of $\mathscr{R}$ to local computations at normal complex rays in the conical Coulomb branch $M$. This leads to the following two questions: 1) are there \emph{enough} normal complex rays to compute all Coulomb dimensions? 2) how we characterize the \emph{normal} complex rays?

In \textbf{Example \ref{jjasqwe2}} we see that the two normal complex cones in the Coulomb branch $M$ of $SU(3)$ SQCD are precisely the loci of Coulomb vacua with an unbroken discrete subgroup of $U(1)_R$, namely
$\Z_3$ for $M_v$ and $\Z_2$ for $M_u$. All non-normal rays consist of vacua which completely break $U(1)_R$ (except at the tip where the full $U(1)_R$ is restored). This characterization of normal rays by unbroken $R$-symmetry holds in general.

To a complex ray $M_\ast$ there is associated a rational positive number $\alpha_\ast$, namely the smallest positive number such that
\be
\exp(2\pi \alpha_\ast R)=\mathrm{Id}_{M_\ast}.
\ee
$\alpha_\ast\in\mathbb{Q}_{>0}$ since the $R$-symmetry group $U(1)_R$ is compact.
In the physical language this means that a subgroup of $U(1)_R$ is unbroken:

\begin{fact} A Coulomb vacuum $x\in M_\ast\subset M$ preserves a discrete $\Z_n\subset U(1)_R$ $R$-symmetry where $n$ is the order of $\alpha_\ast$ in
$\mathbb{Q}/\Z$. We say that $M_\ast$ is an \emph{elliptic ray} iff $n>2$.
\end{fact}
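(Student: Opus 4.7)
The plan is to reduce the claim to an explicit computation on $M_\ast$ using \textbf{Proposition \ref{kkazq1}}. That proposition identifies $M_\ast\cong\mathbb{C}$ as an analytic space and gives a single global holomorphic coordinate $u_\ast$ with $\pounds_\ce u_\ast = \Delta_\ast u_\ast$ for some rational $\Delta_\ast\ge 1$; moreover, $\Delta_\ast$ is the dimension of some generator of the full chiral ring $\mathscr{R}$.

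The first step is to identify $\alpha_\ast$ concretely. Because $\pounds_R h = i\,\pounds_E h$ on any holomorphic function $h$, one has $R u_\ast = i\Delta_\ast u_\ast$, so
\be
\exp(2\pi tR)\cdot u_\ast \;=\; e^{2\pi i t\Delta_\ast}\,u_\ast,\qquad t\in\mathbb{R}.
\ee
Since $u_\ast$ is a global coordinate on $M_\ast$, this diffeomorphism is the identity on $M_\ast$ iff $e^{2\pi i t\Delta_\ast}=1$, i.e.\ iff $t\Delta_\ast\in\mathbb{Z}$. The smallest positive such $t$ is therefore $\alpha_\ast = 1/\Delta_\ast$. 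Writing $\Delta_\ast = p/q$ in lowest terms, $\alpha_\ast = q/p$, whose order in $\mathbb{Q}/\mathbb{Z}$ is exactly $n=p$ (the minimal $n>0$ with $nq/p\in\mathbb{Z}$, using $\gcd(p,q)=1$).

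The second step is to compute the stabilizer of a vacuum $x\in M_\ast$ with $u_\ast(x)\ne 0$ under the physical $U(1)_R$. This group acts on the chiral generator $u_\ast$ with weight $\Delta_\ast$; taking the global form of $U(1)_R$ to be $\mathbb{R}/D\mathbb{Z}$ for any common denominator $D$ of the Coulomb dimensions (so that the formula $t\cdot u_\ast = e^{2\pi i t\Delta_\ast}u_\ast$ defines an honest compact group action), the stabilizer of $x$ is
\be
\bigl\{\,t\in\mathbb{R}/D\mathbb{Z}\,:\, t\Delta_\ast\in\mathbb{Z}\,\bigr\} \;=\; (\alpha_\ast\mathbb{Z})/(D\mathbb{Z}).
\ee
Since $q\mid D$ and $\gcd(p,q)=1$, as $k$ runs over $\mathbb{Z}$ the residues $kq\bmod p$ cover $\{0,1,\dots,p-1\}$, so this cyclic subgroup has exactly $p=n$ elements, i.e.\ $\mathbb{Z}_n$. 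Equivalently, it is generated by the single Reeb rotation $\exp(2\pi\alpha_\ast R)$, whose order in the compact $U(1)_R$ equals the order of $\alpha_\ast$ in $\mathbb{Q}/\mathbb{Z}$.

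The second assertion of the \textbf{Fact} (``elliptic $\Leftrightarrow n>2$'') is purely definitional. The only subtlety — not a genuine obstacle — is fixing the global form of $U(1)_R$ when some $\Delta_\ast$ is non-integral; but the isomorphism type of the stabilizer, being the order of the element $\exp(2\pi\alpha_\ast R)$ acting faithfully on $\mathscr{R}_\ast$, is an intrinsic invariant equal to $\mathrm{ord}_{\mathbb{Q}/\mathbb{Z}}(\alpha_\ast)$, independent of the choice of $D$.
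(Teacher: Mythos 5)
Your Step 1, identifying $\alpha_\ast = 1/\Delta_\ast$ from $\pounds_R u_\ast = i\Delta_\ast u_\ast$ via \textbf{Proposition \ref{kkazq1}}, is correct and is essentially the derivation the paper gives two paragraphs later (``the complex ray along the $i$-th axis \ldots has $\alpha_i = 1/\Delta_i$''). Note, though, that it requires $M_\ast$ to be \emph{normal} so that the single global coordinate $u_\ast$ exists, whereas the \textbf{Fact} as stated applies to arbitrary complex rays; this is a minor scope mismatch.

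Step 2 contains a genuine computational error. You set $U(1)_R=\R/D\Z$ with $D$ a common denominator of \emph{all} the Coulomb dimensions and correctly identify the stabilizer of $x$ with the cyclic group $(\alpha_\ast\Z)/(D\Z)$. But this group has
\be
\big|(\alpha_\ast\Z)/(D\Z)\big| \;=\; D/\alpha_\ast \;=\; D\,\Delta_\ast \;=\; \frac{Dp}{q},
\ee
which equals $p=n$ only when $D=q$. Since $D$ must absorb the denominators of every $\Delta_i$, generically $D>q$ and the stabilizer you have written down is \emph{strictly larger} than $\Z_n$. The clause ``the residues $kq\bmod p$ cover $\{0,\dots,p-1\}$'' is a true statement about $(\Z/p\Z)^\times$, but it does not compute the order of $(\alpha_\ast\Z)/(D\Z)$; it is not the needed step. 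Consequently the closing assertion that the isomorphism type ``is independent of the choice of $D$'' is false inside your own framework, and the characterization ``the order of $\exp(2\pi\alpha_\ast R)$ acting faithfully on $\mathscr{R}_\ast$'' is self-contradictory: $\exp(2\pi\alpha_\ast R)$ acts \emph{trivially} on $\mathscr{R}_\ast$ by the very definition of $\alpha_\ast$.

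The paper does not argue via a stabilizer computation in a circle of some fixed period. It treats the \textbf{Fact} as the physical restatement of the definition of $\alpha_\ast$: the unbroken $\Z_n$ is the cyclic group generated by the ``fractional Reeb rotation'' $\exp(2\pi\alpha_\ast R)$ --- equivalently, by the monodromy $m_\ast\in Sp(2k,\Z)$ along the closed orbit $t\mapsto\exp(2\pi\alpha_\ast t R)\cdot x$ discussed in the very next paragraph --- and its order is by definition the least $n>0$ with $n\alpha_\ast\in\Z$, i.e.\ $n=\mathrm{ord}_{\BQ/\Z}(\alpha_\ast)$. No period for $U(1)_R$ ever needs to be fixed: the phase $e^{2\pi i\alpha_\ast}$ lives naturally in $\BQ/\Z\subset\R/\Z$. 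To repair your argument, drop the $\R/D\Z$ model and identify the unbroken group intrinsically as $\langle\,\alpha_\ast\bmod\Z\,\rangle\cong\Z_n$; your Step~1 then already gives everything needed.
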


In section 3 we presented some evidence that the chiral ring is a free polynomial ring (or simply related to such a ring).
 In this case $M$ is parametrized by weighted homogeneous coordinates $u_i$ having weights $\Delta_i\in\mathbb{Q}_{\geq1}$. Then the complex ray along the $i$-th axis
\be\label{hhhhq}
M_i\equiv \Big\{(u_1,u_2,\cdots, u_k)= (0,0,\cdots,\overset{i\text{-th}}{u},0,\cdots,0),\ u\in\C\Big\}\subset M
\ee
is \emph{normal} and 
has $\alpha_i=1/\Delta_i$. If $\Delta_i=r_i/s_i>1$ with $(r_i,s_i)=1$, the order of the residual $R$-symmetry is $n_i=r_i\geq 2$ with equality iff $\Delta_i=2$, i.e.\! iff $u_i$ is the vev of a (superficially) marginal operator. 
In particular, if $\mathscr{R}$ is a free polynomial ring we do have enough normal rays.
\medskip

Let $M_\ast$ be a normal ray and $x\in M_\ast\setminus \{0\}$. We consider the closed $R$-orbit
\be
x(t)=\exp(2\pi \alpha_\ast t R)\cdot x \in M_\ast,\qquad t\in [0,1].
\ee
The monodromy of this path is an element $m_\ast$ of the modular group $Sp(2k,\Z)$ which is independent of $x$ modulo conjugacy. We have
\be\label{kkkasqw}
\det[z-m_\ast]= \prod_{\ell\mid n}\Phi_\ell(z)^{s(\ell)},\qquad s(\ell)\in \{0,1,2,\cdots\},\quad \sum_{\ell\mid n} s(\ell)\,\phi(n)=2k,
\ee
since $\exp(2\pi n \alpha_\ast R)$ acts trivially on the periods. We say that $M_\ast$ is regular if its monodromy $m_\ast$ is \emph{strongly} regular. Regularity is equivalent to  $s(\ell)\in\{0,1\}$
(so $s(1)=s(2)=0$ in the regular case). Regularity implies that the fixed period $\boldsymbol{\tau}$ is unique. More crucially, it means that the ray is not part of the ``bad'' discriminant locus, $M_\ast\not\subset \cd_\text{bad}$. Here $ \cd_\text{bad}$ is the union of the irreducible components of $\cd$ with a non-semi-simple monodromy; along $\cd_\text{bad}$ the period matrix $\boldsymbol{\tau}$ degenerates in agreement with the $SL_2$-orbit theorem, see \S.\,\ref{jjzqwe}. Indeed, strong regularity implies that all the eigenvalues of $m_\ast$ are distinct, and no non-trivial Jordan block may be present.
More generally, we split the product in \eqref{kkkasqw} in the square-free factor $\prod_{s(\ell)=1}\Phi_\ell(z)$ and the complementary factor $\prod_{s(\ell)>1}\Phi_\ell(z)^{s(\ell)}$. $m$
is conjugate in $Sp(2k,\BQ)$ to a block-diagonal matrix of the form $m_\text{reg}\oplus m_\text{comp}$
with $m_\text{reg}\in Sp(2k_\text{reg},\Z)$ strongly regular.  Then, up to isogeny,\footnote{\ These statements follow from the Poincar\'e total reducibility theorem.}
$\boldsymbol{\tau}=\boldsymbol{\tau}_\text{reg}\oplus \boldsymbol{\tau}_\text{comp}$ for a unique $\boldsymbol{\tau}_\text{reg}$. 
The regular rank $k_\text{reg}$ of $M_\ast$ is
\be
k_\text{reg}=\frac{1}{2}\sum_{\ell\colon s(\ell)=1}\phi(\ell).
\ee
Thus, locally at the ray, the family of Abelian varieties $X\to M_\ast$ splits (modulo isogeny) in a product $X_1\times X_2\to M_\ast$, and $M_\ast$ is not in the ``bad'' discriminant of the first factor (but it may be for the second one). 

\medskip

Let $M_\ast$ be a normal ray which is also regular. The real function $r^2=\mathrm{Im}\,\boldsymbol{\tau}_{ij}\,a^i\bar a^j$ is smooth and non-zero on $M_\ast$; since $\mathrm{Im}\,\boldsymbol{\tau}_{ij}$ is the unique fixed period, which is non singular
on $M_\ast$ (in particular, bounded), it means that there exists a $\C$-linear combination $a_\ast$ of the periods $a^i$
which does not vanish on $M_\ast$ (more precisely, $a_\ast$ is well defined on a finite cyclic cover of $M_\ast$ branched at the tip). 
Applying $m_\ast$ to $a_\ast$, we see that $e^{2\pi i \alpha_\ast}\equiv e^{2\pi i/\Delta(u_\ast)}$ belongs to the spectrum of $m_\ast$. Since $m_\ast$ is the lift of the  generator of the unbroken subgroup $\Z_n\subset U(1)_R$, we conclude that $e^{2\pi i \alpha_\ast}$
 is a $n$-root of unity. In facts, it should be a primitive root, otherwise the unbroken symmetry would be smaller.
Comparing with the $k=1$ case, and using  \textbf{Proposition \ref{kkazq1}} we learn that $1/\alpha_\ast$ is the dimension of a generator of the full chiral ring $\mathscr{R}$. Now suppose that $M_\ast$ is not regular.
Taking into account only the ``good'' block,
that is, focusing on the first local family,
$X_1\to M_\ast$, we reduce to the regular situation. The argument applies to the irregular block too, as long as $\mathrm{Im}\,\boldsymbol{\tau}_{ij}$ is not singular on $M_\ast$. In general we may decompose $\mathrm{Im}\,\boldsymbol{\tau}_{ij}$ in a regular block and one which is in the modular orbit of $i\infty$. The argument works as long as the regular block is non-trivial, that is as long as along $M_\ast$ not all photons decouple.

Then 

\begin{fact} If in $M$ there is a \emph{normal} ray with residual $R$-symmetry $\Z_n$, there should be a generator of $\mathscr{R}$ with $\Delta=n/s$, $s\in (\Z/n\Z)^\times$\, i.e.\! $e^{2\pi i/\Delta}$ is a primitive $n$-th root of unity.
$e^{2\pi i/\Delta}$ is an eigenvalue of a quasi-unipotent element $m_\ast\in Sp(2k,\Z)$, and hence $\phi(n)\leq 2k$.
\end{fact}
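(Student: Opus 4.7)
The plan is to reduce the claim to the analysis of a single generator via Proposition \ref{kkazq1} and then track the eigenvalue $e^{2\pi i/\Delta}$ inside the integral symplectic monodromy. First, normality of $M_\ast$ lets us invoke Proposition \ref{kkazq1}: the ray ring is $\mathscr{R}_\ast\cong\C[u_\ast]$ with a single generator $u_\ast$ of positive dimension $\Delta_\ast\equiv\Delta(u_\ast)$, and there exists a generator $u$ of the full chiral ring $\mathscr{R}$ with $\Delta(u)=\Delta_\ast$. So it suffices to show that $\Delta_\ast=n/s$ with $\gcd(s,n)=1$ and to identify $e^{2\pi i/\Delta_\ast}$ as an eigenvalue of an element of $Sp(2k,\Z)$.

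Next I would compute $\Delta_\ast$ directly from the Reeb period $\alpha_\ast$. The Reeb flow acts on $u_\ast$ by $\exp(2\pi tR)\cdot u_\ast=e^{2\pi it\Delta_\ast}u_\ast$, so requiring $\exp(2\pi\alpha_\ast R)=\mathrm{Id}_{M_\ast}$ with $\alpha_\ast>0$ minimal forces $\alpha_\ast\Delta_\ast=1$, i.e.\ $\alpha_\ast=1/\Delta_\ast$. Since the order of $\alpha_\ast$ in $\BQ/\Z$ is exactly $n$ (the order of the unbroken subgroup $\Z_n\subset U(1)_R$), writing $\alpha_\ast=s/n$ in lowest terms yields $\gcd(s,n)=1$ and $\Delta_\ast=n/s$. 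Thus $e^{2\pi i/\Delta_\ast}=e^{2\pi is/n}$ is a primitive $n$-th root of unity.

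For the bound $\phi(n)\leq 2k$ I would lift the closed Reeb orbit $t\mapsto\exp(2\pi\alpha_\ast tR)\cdot x$, $t\in[0,1]$, to the flat bundle $\ce$ of structure \textbf{S2}, producing the monodromy $m_\ast\in Sp(2k,\Z)$. As recalled in \S\,\ref{jjjjjjz}, regularity guarantees that $\mathrm{Im}\,\boldsymbol{\tau}$ does not degenerate on $M_\ast$, so some $\C$-linear combination $a_\ast$ of the periods $a^i$ is non-vanishing on a finite cyclic cover of $M_\ast$ branched only at the tip. Evaluating the $m_\ast$-action on $a_\ast$ exhibits $e^{2\pi i\alpha_\ast}=e^{2\pi i/\Delta_\ast}$ as an eigenvalue of $m_\ast$. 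Finally, since $m_\ast\in Sp(2k,\Z)$ its characteristic polynomial lies in $\Z[z]$ and, by the strong monodromy theorem (\textbf{Proposition \ref{monthm}}), factors as $\prod_\ell\Phi_\ell(z)^{s(\ell)}$ with $\sum_\ell s(\ell)\,\phi(\ell)=2k$. The presence of a primitive $n$-th root of unity in the spectrum forces $s(n)\geq 1$, and hence $\phi(n)\leq 2k$.

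The main obstacle is the non-regular case, where some block of $\mathrm{Im}\,\boldsymbol{\tau}$ may degenerate along $M_\ast$ and $m_\ast$ need not be strongly regular. The cure is the decomposition already spelled out in \S\,\ref{jjjjjjz}: up to $Sp(2k,\BQ)$-conjugacy and isogeny of the family of Abelian varieties, $m_\ast=m_\text{reg}\oplus m_\text{comp}$, with $\boldsymbol{\tau}=\boldsymbol{\tau}_\text{reg}\oplus\boldsymbol{\tau}_\text{comp}$ and $\mathrm{Im}\,\boldsymbol{\tau}_\text{reg}$ non-degenerate on the regular factor. Provided the regular block is non-trivial---i.e.\ not all photons decouple along $M_\ast$, which one must verify as a mild genericity condition on the ray---the argument above applies inside the good summand, producing $\phi(n)\leq 2k_\text{reg}\leq 2k$ and completing the proof.
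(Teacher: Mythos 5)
Your proof is correct and follows essentially the same route as the paper: Proposition \ref{kkazq1} reduces to the single generator $u_\ast$, the Reeb-period identity $\alpha_\ast\Delta_\ast=1$ together with $n=\mathrm{ord}(\alpha_\ast)$ in $\BQ/\Z$ gives $\Delta_\ast=n/s$ with $s$ coprime to $n$, and the non-vanishing period $a_\ast$ exhibits $e^{2\pi i\alpha_\ast}$ as an eigenvalue of $m_\ast\in Sp(2k,\Z)$, forcing $\Phi_n\mid\det[z-m_\ast]$. The only (cosmetic) divergence is that you derive primitivity directly from $\alpha_\ast=1/\Delta_\ast$ and the definition of $n$, whereas the paper reads it off from the fact that $m_\ast$ generates the lift of the full $\Z_n$ action; both observations are equivalent, and your treatment of the non-regular case via the regular summand matches the paper's.
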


\begin{rem} Above we assumed the polarization $\Omega$ to be principal.
In the general case, we replace the Siegel modular group with the relevant 
duality-frame group $S(\Omega)_\Z$.
The conclusion $\phi(n)\leq 2k$ being still valid.
\end{rem}

\subsubsection{The Stein tubular neighborhood of a complex ray}
A ray $M_\ast\subset M$ is a closed analytic subset, hence a Stein submanifold of the Stein manifold $M$, and the restriction
map $\mathscr{R}\to \mathscr{R}_\ast$ is essentially surjective (the image is dense in the Frech\'et sense).
The Docquier-Grauert theorem \cite{stein2} guarantees that we can find a Stein tubular (open) neighborhood $M_\circ$ of $M_\ast$ in $M$. There is a holomorphic retraction of $M_\circ$ onto $M_\ast$, and $M_\circ$ is biholomorphic to a neighborhood of the zero section in the normal bundle of $M_\ast$ in $M$ \cite{stein2}. The fact that $M_\circ$ retracts holomorphically to $M_\ast$, means that the monodromy group of the special geometry restricted to $M_\circ$ is just the cyclic group generated by $m_\ast$. In particular, the $a$-periods are well-defined on an unbranched cover $\widetilde{M}_\circ\setminus\{0\}\to M_\circ\setminus\{0\}$ with Galois (deck) group $\Z_n$ generated by $m_\ast$.

Since $(M, \co_M)$ is a normal analytic space, so is $(M_\circ, \co_M|_{M_\circ})$. We have the restriction morphisms
\be
\mathscr{R}\xrightarrow{\;\text{mono}\;}\mathscr{R}_\circ\xrightarrow{\;\text{epi}\;} \mathscr{R}_\ast.
\ee
\begin{lem} The map $\mathscr{R}\to \mathscr{R}_\circ$ preserves the spectrum of $\ce$.
\end{lem}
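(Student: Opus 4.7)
My plan has two parts. First, the easy direction: the restriction map $\rho\colon\mathscr{R}\to\mathscr{R}_\circ$ is $\ce$-equivariant (the holomorphic Euler field $\ce$ is defined globally on $M$), and it is \emph{injective}, because $M$ is a connected Stein space and $M_\circ\subseteq M$ is a nonempty open subset, so the identity principle applies. Consequently every $\ce$-eigenvalue on $\mathscr{R}$ appears on $\mathscr{R}_\circ$, giving $\mathrm{spec}(\ce|_\mathscr{R})\subseteq\mathrm{spec}(\ce|_{\mathscr{R}_\circ})$.

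The content of the lemma is the reverse inclusion. Given a homogeneous $f\in\mathscr{R}_\circ$ with $\ce f=\Delta f$, I plan to construct a global extension $\tilde f\in\mathscr{R}$ of the same degree by exploiting the holomorphic $\C^\times$-flow generated by $\ce$. The tip $0\in M_\ast\subseteq M_\circ$ is an attracting fixed point: for every $x\in M$, the orbit point $e^{s\ce}\cdot x$ converges to $0$ as $\mathrm{Re}(s)\to-\infty$, so $e^{s\ce}\cdot x\in M_\circ$ for all sufficiently negative $\mathrm{Re}(s)$. I then set
\be
\tilde f(x)\;:=\;e^{-s\Delta}\,f\!\left(e^{s\ce}\cdot x\right),
\ee
choosing any $s\in\C$ with $\mathrm{Re}(s)$ very negative. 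The eigenvalue equation $\ce f=\Delta f$ on $M_\circ$ is equivalent to $\partial_s\bigl[e^{-s\Delta}f(e^{s\ce}\cdot x)\bigr]=0$, so the right-hand side is independent of $s$ within the connected component of the admissible $s$-locus containing $\mathrm{Re}(s)=-\infty$, making $\tilde f(x)$ well defined. Fixing a single very-negative real $s$ common to all $x$ in any relatively compact set shows that $\tilde f$ is holomorphic on $M$, and $\ce\tilde f=\Delta\tilde f$ follows by construction. Using the parameter $s\in\C$ rather than $\zeta\in\C^\times$ circumvents the multivaluedness of $\zeta^\Delta$ for $\Delta\notin\Z$.

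It remains to verify $\tilde f|_{M_\circ}=f$. I pick an open subneighborhood $U\subseteq M_\circ$ of $M_\ast$ that is invariant under the contracting flow, i.e.\ $e^{tE}\cdot U\subseteq M_\circ$ for all $t\leq 0$. Such a $U$ exists because $M_\ast$ is contracting-invariant, the contracting trajectory of any $y\in M_\ast$ is the half-open segment from $y$ to $0$ (a compact subset of $M_\circ$), and every such trajectory admits a tubular neighborhood inside $M_\circ$. On $U$ the whole half-plane $\mathrm{Re}(s)\leq 0$ is admissible, so evaluation at $s=0$ yields $\tilde f(x)=f(x)$ for $x\in U$. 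The identity principle on the connected manifold $M_\circ$ then gives $\tilde f|_{M_\circ}=f$. Hence $\tilde f\in\mathscr{R}$ is a homogeneous global holomorphic function of degree $\Delta$, so $\Delta\in\mathrm{spec}(\ce|_\mathscr{R})$ as required.

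The main technical obstacle I anticipate is the construction of the flow-invariant subneighborhood $U$ and, with it, the verification that the admissible $s$-locus for each point $x$ is connected enough for the extension to be single-valued. Both points reduce to a standard tubular-neighborhood argument around the $\C^\times$-orbit $M_\ast$, using that $0$ is an attracting fixed point contained in $M_\circ$; none of this requires the $U(1)_R$ action to be free, only the quasi-regularity already built into the definition of a CSG.
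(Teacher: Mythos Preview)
Your argument is correct and takes a different route from the paper's. The paper proceeds by contradiction: given a homogeneous $f\in\mathscr{R}_\circ$ of degree $\lambda\notin S$, it picks a point $x\in M_\circ$ with $f(x)\neq0$, extends $f$ by homogeneity along the single closed ray $M_x=\overline{\zeta^{\ce}\cdot x}\subset M$, and then invokes Cartan's extension theorem on the Stein space $M$ to extend this one--variable function from the closed analytic subset $M_x$ to a global element of $\Gamma(M,\co_M)$, forcing $\lambda\in S$. You instead use the global contracting flow $\phi_s=e^{s\ce}$ to push every point of $M$ into $M_\circ$ and extend $f$ directly; this bypasses Cartan's theorem and yields the stronger conclusion that $f$ itself extends to a homogeneous element of $\mathscr{R}$ (so the restriction map is bijective on each graded piece, not merely spectrum--preserving). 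Two minor points worth tightening: first, you do not need $U$ to contain all of $M_\ast$; any backward--flow--invariant open set around the tip $0$ suffices, and the sublevel set $\{r^2<\epsilon\}\subset M_\circ$ works immediately since $\pounds_E r^2=2r^2$ gives $e^{tE}\{r^2<\epsilon\}\subset\{r^2<\epsilon\}$ for $t\le0$; the identity principle on the connected $M_\circ$ then finishes. Second, the condition $e^{tE}\cdot U\subset M_\circ$ for real $t\le0$ does not by itself make the whole half--plane $\mathrm{Re}(s)\le0$ admissible (that would require Reeb--invariance of $M_\circ$), but you only need the real ray $(-\infty,0]$ to connect $s=0$ to the left half--plane, and that is exactly what your hypothesis provides.
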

\begin{proof} Suppose on the contrary that there is an eigenvalue $\lambda$ of $\ce_\circ$ which is not in the spectrum $S$ of $\ce$, and let $f\in \mathscr{R}_\circ$ be a non-zero eigenfunction. Let $x\in M_\circ$ be such that $f(x)\neq0$, and extend $f$ by homogeneity on the closed complex ray
generated by $x$, $M_x\equiv \overline{\zeta^\ce\cdot x}$. Since $i_x\colon M_x\to M$ is a closed embedding and $f\in \Gamma(M_x, i_x^\ast\co_M)$, by Cartan's theorem $f$ extends to a function in $\Gamma(M,\co_M)$ and then $\lambda\in S$.\end{proof}

\subsubsection{Tubular neighborhoods and
the Universal Dimension Formula}
\label{steintube}

Identifing the tubular neighborhood $M_\circ$ of the  normal ray $M_\ast$ with a neighborhood of the zero-section in the normal bundle, we may introduce
homogeneous complex coordinates 
\be
(u,v_1,v_2,\cdots, v_{k-1})
\ee
 such that 
$M_\ast\subset M_\circ$ is given by the analytic set 
$v_1=v_2=\cdots= v_{k-1}=0$, while $u=u_\ast$ is the coordinate along the normal ray $M_\ast$. Indeed, the additional coordinates $v_i$ are just linear coordinates along the fibers of the holomorphic normal bundle. The $v_i$ are globally defined in $M_\circ$ since the holomorphic normal bundle of $M_\ast$ is holomorphically trivial. This follows from a result of Grauert (cfr.\! \textbf{Theorem 5.3.1}(iii) of \cite{stein2}) since $\dim_\C M_\ast=1$.

 We saw in the previous subsection that $u$ is homogeneous of degree $1/\alpha_\ast$. Along the ray $M_\ast$ only a complex-linear combination of the  $a$-periods, $a_\ast$, does not vanish.
In the tubular neighborhood $M_\circ$ all $k$ linear combinations of the $a$-periods are not (identically) zero.
In a \emph{conical} special geometry the $a$ periods transform through the modular factors 
\be
a^\prime = (C\boldsymbol{\tau}+D)a,
\ee
where $\boldsymbol{\tau}$ is the (constant) period matrix on $M_\ast$. Hence, if $m_\ast$ is a regular elliptic element of $Sp(2k,\Z)$ and
$(e^{2\pi i\alpha_\ast}, e^{2\pi i \beta_1},\cdots, e^{2\pi i \beta_{k-1}})$ is the spectrum of $C\boldsymbol{\tau}+D$, with $e^{2\pi i\alpha_\ast}$ a primitive $n$-th root of unity, we may find  complex-linear combinations of $a$-periods $a_\ast,a_i$ in $M_\circ$ which diagonalize the action of $m_\ast$
\be\label{jza19s}
\begin{aligned}
a_\ast^\prime&= e^{2\pi i \alpha_\ast}\,a_\ast,&&0<\alpha_\ast\leq 1\\ 
a_s^\prime &= e^{-2\pi i \beta_s} a_s&& s=1,\dots, k-1,
\ \ 0\leq \beta_s <1,
\end{aligned}
\ee
where $a_\ast$ is the  linear combination non-zero along the ray $M_\ast$. 

\begin{fact}
The generalization of the $k=1$ equation \eqref{mmmzxq} to the tubular neighborhood $M_\circ$ of a \emph{regular normal} ray $M_\ast$ is 
\begin{align}
a_\ast&\propto u^{\alpha_\ast},\label{first}\\ 
a_s&\propto v_s\, u^{-\beta_s}\quad\text{for } s=1,\dots, k-1.\label{second}
\end{align}
\end{fact}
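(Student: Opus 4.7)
The plan is to pass to the cyclic cover where the periods become single-valued and then, near the zero section of the normal bundle of $M_\ast$, match each period against its allowed monodromy eigenmonomials, reading off the leading behavior forced by weights and characters.

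Write $\alpha_\ast=p/q$ in lowest terms, and let $\widetilde M_\circ\to M_\circ\setminus\{0\}$ be the unbranched $q$-fold cyclic cover whose deck action realises $m_\ast$; introducing $\tilde u$ with $\tilde u^q=u$ makes the deck generator act as $\tilde u\mapsto e^{2\pi i/q}\tilde u$ and forces $\Delta(\tilde u)=1/p$. Grauert's holomorphic triviality of the normal bundle of the Stein curve $M_\ast$ gives global single-valued linear coordinates $v_1,\dots,v_{k-1}$ along its fibres, of definite weights $\Delta(v_s)>0$ under $\ce$. Every single-valued weight-1 holomorphic function on the cover then admits a formal expansion in monomials $\tilde u^a\,\prod_s v_s^{b_s}$ obeying $a/p+\sum_s b_s\Delta(v_s)=1$ and carrying monodromy character $e^{2\pi i a/q}$ (the $v_s$ being deck-invariant).

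Restricting $a_\ast$ to $M_\ast$ (set $v_s=0$) selects the single weight-1 monomial $\tilde u^p=u^{\alpha_\ast}$, whose character $e^{2\pi ip/q}=e^{2\pi i\alpha_\ast}$ agrees with that prescribed by $m_\ast$; the coefficient is non-zero because $a_\ast$ was chosen as the period that does not vanish along the ray. Restricting any $a_s$ to $M_\ast$ likewise yields a multiple of $\tilde u^p$, whose character $e^{2\pi i\alpha_\ast}$ must coincide with the prescribed $\zeta_{s+1}=e^{-2\pi i\beta_s}$; strong regularity of $m_\ast$ (eqn.\,\eqref{kkkkzzznncx}) gives $\zeta_{s+1}\ne\zeta_1$, so the multiple is zero and $a_s|_{M_\ast}\equiv 0$. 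Hence $a_s$ belongs to the ideal $(v_1,\dots,v_{k-1})$, and its leading linear-in-$v$ part $\sum_t c_{st}(\tilde u)\,v_t$ carries character $e^{-2\pi i\beta_s}$ and weight equation $\mathrm{wt}\,c_{st}=1-\Delta(v_t)$; strong regularity ($\zeta_{t+1}\ne\zeta_{s+1}$ for $t\ne s$) then forces $c_{st}=0$ off-diagonal and $c_{ss}(\tilde u)\propto\tilde u^{-q\beta_s}=u^{-\beta_s}$ on the diagonal. One then fixes the normal-bundle coordinate by declaring $v_s$ to be proportional to $a_s\,u^{\beta_s}$: monodromies cancel so $v_s$ is single-valued on $M_\circ$, it vanishes to first order on $M_\ast$, its $\ce$-weight is $\Delta(v_s)=1+\beta_s/\alpha_\ast$, and by construction $a_s\propto v_s\,u^{-\beta_s}$ while $a_\ast\propto u^{\alpha_\ast}$.

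The principal hurdle is not the bookkeeping with weights, characters, and Taylor expansions — that is formal once the grading and the cover are in place — but the verification that the $v_s$ so defined are genuine global holomorphic coordinates on $M_\circ$ rather than merely formal asymptotic symbols, so that the proportionalities in \eqref{first}--\eqref{second} identify the correct leading holomorphic structure. This rests on the Oka-theoretic inputs invoked elsewhere in the excerpt: the Docquier--Grauert biholomorphism of $M_\circ$ with a neighborhood of the zero section in the normal bundle, Grauert's holomorphic triviality of that bundle over the Stein curve $M_\ast$, and the simplicity of the spectrum of $\cu(m_\ast)$ forced by strong regularity, which simultaneously guarantees one-dimensional eigenspaces (so each $v_s$ is determined up to a scalar) and the absence of resonances of the types $\zeta_{s+1}=\zeta_1$ or $\zeta_{t+1}=\zeta_{s+1}$ that would ruin either $a_s|_{M_\ast}\equiv 0$ or the diagonal form of $(c_{st})$. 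Given these ingredients, the whole argument collapses to the finite-dimensional calculation sketched above in the graded ring of weight-1 holomorphic functions on the cover.
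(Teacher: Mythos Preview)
Your character-and-weight bookkeeping on the cyclic cover is essentially the same machinery the paper uses, but there is a genuine gap in the step where you claim strong regularity ``forces $c_{st}=0$ off-diagonal''. The monodromy characters $\zeta_{s+1}$ live on the \emph{periods} $a_s$, while the $v_t$ are deck-invariant base coordinates; the constraint on a nonzero $c_{st}(\tilde u)$ is only that its $\tilde u$-exponent satisfy $n_{st}=p(1-\Delta(v_t))$ (weight) and $n_{st}\equiv -q\beta_s\pmod q$ (character). Combining these gives $p\Delta(v_t)\equiv p+q\beta_s\pmod q$, which for fixed $t$ pins down at most one $s$ --- so each \emph{column} of $(c_{st})$ has at most one nonzero entry --- but nothing prevents an entire column (or row) from vanishing. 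Consequently your assertion that $v_s:=a_s u^{\beta_s}$ ``vanishes to first order on $M_\ast$'' is exactly what needs proof: if some $a_s$ happened to vanish to order $\geq 2$ in the normal directions, your $v_s$ would fail to be a coordinate. Strong regularity of the spectrum does not by itself rule this out.

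The missing ingredient, which the paper supplies, is the non-degeneracy of the holomorphic symplectic form $d\lambda=da^i\wedge(dx_i-\boldsymbol{\tau}_{ij}\,dy^j)$ restricted to $M_\ast$: its maximal rank forces $df_1\wedge\cdots\wedge df_{k-1}|_{M_\ast}\neq 0$, i.e.\ $\det(c_{st})\neq 0$, so the linear part is invertible and can be absorbed into a redefinition of the $v_t$. A second, smaller omission: you write $u^{-\beta_s}$ without justifying the choice of branch --- any exponent $-\beta_s+m$ with $m\in\Z$ has the same monodromy, and the paper selects $m=0$ by invoking the unitarity bound $\Delta(v_s)\geq 1$.
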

\begin{proof} Eqn.\eqref{first} is just the previous result along the normal ray $M_\ast$. Let us consider the $\C$-periods $a_s$ vanishing along $M_\ast$.
The $a$-periods should transform with the correct monodromy $m_\ast$ along the path \be
u\to e^{2\pi i t}u,\ \ t\in[0,1],\quad v_s=\mathrm{const}
\ee
 cfr.\! eqn.\eqref{jza19s}. Thus we must have
\be
a_s=f_s(v)\,u^{\alpha_s}\qquad \text{with } f_s(0)=0\ \text{and } 2\pi i\alpha_s=\log\!\big(e^{-2\pi i\beta_s}\big).
\ee
It remains to fix the functions $f_s(v)$ and the branch of the logarithm giving the correct value of $\alpha_s$. The holomorphic symplectic form $\Omega=da^i\wedge(dx_i-\boldsymbol{\tau}_{ij} \,dy^i)$ should have maximal rank when restricted to $M_\ast$, so that $df_1\wedge\cdots\wedge df_{k-1}|_{M_\ast}\neq0$ and hence
\be
f_s= A_{st}\,v_t+\text{higher order}\qquad \det A\neq0.
\ee
We may set $A_{st}$ to the identity matrix by a linear redefinition of the $v_s$. Since the $f_s$ are homogeneous functions, the higher order corrections vanish. Finally the branch of the logarithm is fixed to $\alpha_s\equiv -\beta_s$ by the requirement that the dimensions satisfy the unitary bound $\Delta[v_s]\geq 1$.
\end{proof}

Then the $k$ dimensions of the generators of $\mathscr{R}$ are
\be\label{kkkazqw}
\Delta_i\equiv \Delta(v_i)= \begin{cases} 1+\beta_i/\alpha_\ast & i=1,\dots, k-1\\
1/\alpha_\ast & i=k,
\end{cases}= 1+\frac{\beta_i}{1-\beta_k}
\ee
where we set $v_k\equiv u_\ast$ and $\beta_k=1-\alpha_\ast$.
\medskip 

If $M_\ast$ is normal but non regular, we cannot determine all dimensions from an analysis in the neighborhood $M_\circ$ of $M_\ast$ but only as many as its regular rank $k_\text{reg}$.

\begin{rem} These formulae have the following natural property.
Let $m_\ast$ be weakly regular, that is, some eigenvalue $\zeta$ of $\cu(m_\ast)$ have multiplicity $s>1$. Assume $a_\ast$ is an eigenperiod associated to $\zeta$; then
$\alpha_\ast=\alpha$ while $(s-1)$ $\beta$'s are equal $1-\alpha$. The dimensions of the $s$ operators associated to the eigenvalue $\zeta$ are all equal to $1/\alpha$, without distinction between the operator parametrizing the ray and the operators parametrizing its normal bundle. This property guarantees that we get the correct dimension spectrum for SCFT whose Coulomb branch is birational to a product of identical cones, so that the largest dimension $\Delta_\text{max}$ is degenerate. This happen e.g.\! in class $\cs[A_1]$ SCFTs where $\{\Delta_i\}=\{2,2,2,\cdots,2\}$.  
\end{rem}

The (universal) dimension formula \eqref{kkkazqw}, if correct, should pass three crucial consistency checks:
\begin{itemize}
\item[a)] it should reproduce the well-known formulae for constant period maps, in particular for all weakly-coupled Lagrangian SCFTs.
\item[b)] it should reproduce the Picard-Lefschetz results for models engineered by hypersurface singularities in $F$-theory;
\item[c)] it should produce the same spectrum of dimensions independently of which normal regular ray $M_\ast\subset M$ we consider;
\end{itemize}
The third requirement is quite strong, and it seems \emph{a priori} quite unlike that such a strong property may be acutally true. We perform the three checks in turn.

\subsubsection{Relation to Springer Theory}\label{spirpri}

We have to check that the ``abstract'' dimension formula \eqref{kkkazqw} reproduces the obvious dimensions for a weakly-coupled Lagrangian SCFT and more generally for all CSG with constant period maps of the form $M=\C^k/\cg$ for a degree-$k$ ST group $\cg$
whose chiral ring $\mathscr{R}$ coincides with the ring of polynomial invariants (see \S.\ref{connstantmaoa}). 

That eqn.\eqref{kkkazqw} correctly reproduces the ST degrees $d_i$ as dimensions $\Delta_i$ of the generators of $\mathscr{R}$ is a deep result in the Springer Theory of regular elements in finite reflection groups 
\cite{springer,cohen,springertheory}.

We recall the definitions: let the finite group $\cg$ act as a reflection group on the $\C$-space $V$. 
A vector $v\in V$ is said to be \emph{regular}
iff it does not lay in a reflection hyperplane.
An element $g\in \cg$ is said to be \emph{regular} if it has a regular eigenvector $v$. The \emph{regular degrees} of $\cg$ are a (minimal) set of integer numbers such that the order of all regular elements of $\cg$ is a divisor of an element of the set and conversely all divisors of these numbers are the order of a regular element. Then

\begin{thm}[see \cite{springer,cohen}]\label{sprr}
Let $\zeta$ be a primitive $d$-root of unity. Let $g\in\cg$ be regular with regular eigenvector $v\in V$ and related eigenvalue $\zeta$. Denote by $W$ the $\zeta$-eigenspace 
\be
W=\{x\in V\;|\; gx=\zeta x\}.
\ee
 Then:
\begin{itemize}
\item[(i)] $d$ is the order of $g$, and $g$
has eigenvalues $\zeta^{1-d_1}$, $\zeta^{1-d_2},\cdots, \zeta^{1-d_k}$, where $d_i$ are the degrees of $\cg$;
\item[(ii)] $\dim W=\#\{i\;|\; d\ \textit{is a divisor of }d_i\}$;
\item[(iii)] the restriction to $W$ of the centralizer of $g$ in $\cg$ defines an isomorphism onto a reflection group in $W$ whose degrees are the $d_i$ divisible by $d$ and whose order is $\prod_{d\mid d_i}d_i$;
\item[(iv)] the conjugacy class of $g$ consists of all elements of $\cg$ having $\dim W$ eigenvalues $\zeta$.  
\end{itemize} 
\end{thm}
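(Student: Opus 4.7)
The proof is Springer's classical result on regular elements, so I will sketch the standard invariant-theoretic strategy. The setup: by the Chevalley--Shephard--Todd theorem, $\C[V]^\cg = \C[f_1,\dots,f_k]$ with $\deg f_i = d_i$, and a vector $v\in V$ is regular iff the Jacobian $J(v)=\det(\partial f_i/\partial x_j)(v)\neq0$, equivalently iff the $k$ differentials $df_1|_v,\dots,df_k|_v$ form a basis of $V^*$. I would exploit precisely this basis to compute how $g$ acts.

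For (i), the key computation is: $f_i(x)=f_i(gx)$ together with homogeneity $f_i(\zeta v+\epsilon w)=\zeta^{d_i}f_i(v+\epsilon\zeta^{-1}w)$ gives, after differentiation at $x=v$, the relation
\be
(df_i|_v)\circ g = \zeta^{1-d_i}\,(df_i|_v).
\ee
Hence $g$ acts on the basis $\{df_i|_v\}$ of $V^*$ diagonally with eigenvalues $\zeta^{1-d_i}$, proving the eigenvalue list. Minimal order: $g^d v=v$ forces $g^d\in\mathrm{Stab}(v)$, but $v$ regular implies $\mathrm{Stab}(v)=1$, so $g^d=1$; minimality of $d$ is forced because $\zeta$ itself must appear among the $\zeta^{1-d_i}$ (namely for any $i$ with $d\mid d_i$, and such $i$ exists because $\det g$ is a product of eigenvalues that must contain the primitive $d$-th root $\zeta$ with the right multiplicity). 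For (ii), since $g$ is already shown diagonalizable on $V$, $\dim W$ is just the multiplicity of $\zeta$ among the $\zeta^{1-d_i}$, i.e.\ $\#\{i:d\mid d_i\}$.

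For (iii), let $\cg'$ be the centralizer $Z_\cg(g)$; it obviously preserves $W$. The plan is to show that the restrictions $f_{i_1}|_W,\dots,f_{i_m}|_W$ (indices with $d\mid d_{i_j}$) generate $\C[W]^{\cg'|_W}$ and are algebraically independent, then appeal again to Chevalley--Shephard--Todd to conclude $\cg'|_W$ is a reflection group on $W$ whose degrees must then be the $d_{i_j}$. Algebraic independence is inherited from the ambient $f_i$ using that the differentials $df_{i_j}|_v$ span $W^*$ (as $v\in W$ is regular). The surjectivity step -- that no invariant of $\cg'|_W$ is missed -- is the technical crux and is handled via Molien: the Hilbert series of $\C[W]^{\cg'|_W}$ computed from the eigenvalues of $\cg'$ on $W$ must coincide with $\prod_{j}(1-t^{d_{i_j}})^{-1}$, which forces equality on generators. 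For (iv), one uses (iii): given $g'$ with $\dim W$ eigenvalues equal to $\zeta$, fix a regular eigenvector $v'$ of $g'$ for $\zeta$; by a regular-orbit argument both $v$ and some conjugate of $v'$ lie in $W$, and the reflection group $\cg'|_W$ acts transitively on its regular vectors, hence carries one to the other, conjugating $g'$ to $g$.

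The main obstacle is (iii), specifically establishing that the restriction map $\C[V]^{\cg'}\to\C[W]^{\cg'|_W}$ is surjective onto a polynomial ring with the advertised degrees. Everything else is either a direct Jacobian calculation (parts (i), (ii)) or a consequence of (iii) together with the regular-orbit principle (part (iv)). The Molien/Poincar\'e-series input, built on the eigenvalue data $\zeta^{1-d_i}$ from step (i), is what powers Springer's original proof and is what I would follow; a self-contained alternative via the coinvariant algebra $\C[V]_\cg$ and the identity $\sum_n\mathrm{tr}(g,\C[V]_{\cg,n})t^n=\prod_i(1-t^{d_i})/\det(1-tg|V)$ gives the same result more cleanly.
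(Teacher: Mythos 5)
Note first that the paper does not actually prove Theorem~\ref{sprr}: the bracketed ``[see \cite{springer,cohen}]'' marks it as a quoted result of Springer (and Cohen), so there is no in-paper argument to compare your attempt against. Judged against Springer's own proof, your sketch of parts (i)--(iii) is faithful. The differential computation $(df_i|_v)\circ g=\zeta^{1-d_i}\,df_i|_v$, combined with the fact that regularity of $v$ makes $\{df_i|_v\}_i$ a basis of $V^*$, \emph{is} Springer's proof of (i); the order statement and (ii) drop out by diagonalizability as you say; and your plan for (iii) --- restrict the $f_i$ with $d\mid d_i$ to $W$, read algebraic independence off the differentials at $v$, then close the circle with a Molien or coinvariant-algebra graded-character identity --- is exactly the standard route. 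Those parts are fine.

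Part (iv) is where your sketch genuinely breaks. You assert that ``the reflection group $\cg'|_W$ acts transitively on its regular vectors.'' That is false for any non-trivial reflection group: the action on the regular locus is \emph{free}, so the orbit space is a complex variety of dimension $\dim W$, not a point --- there are uncountably many orbits. (Already for $S_n$ on $\C^{n-1}$, two vectors with distinct coordinates lie in the same orbit only if they agree as unordered tuples.) Springer replaces this with a covering-space/monodromy argument: the regular locus $V^{\mathrm{reg}}$ is connected, $V^{\mathrm{reg}}\to V^{\mathrm{reg}}/\cg$ is an unramified $\cg$-covering, multiplication by $\zeta$ descends to the base and fixes the images of any two $\zeta$-regular eigenvectors, and unique path-lifting along a path joining those images then shows the two regular elements are conjugate. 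Separately, and logically prior, one must show that any $g'$ whose $\zeta$-eigenspace has the maximal dimension $\#\{i\,:\,d\mid d_i\}$ is in fact $\zeta$-regular (i.e.\ its eigenspace contains a regular vector). Your sketch silently assumes this the moment it ``fixes a regular eigenvector $v'$ of $g'$''; in Springer's paper it is established by a separate counting/Molien lemma, not as a formality. Both ingredients are missing, and the transitivity shortcut offered in their place is incorrect, so (iv) as you have written it would not go through.
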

One can show that an integer is regular iff it divides as many degrees as co-degrees \cite{refl1}.

\begin{rem}\label{kkcc4443} All irreducible ST groups have at least one regular degree. In facts, they have either 1 or 2 regular degrees, except for the Weyl groups of $E_6$ and $E_8$ which have \textit{three} regular degrees each, respectively $\{8,9,12\}$ and $\{20,24,30\}$.
If $\cg$ is a Weyl group, the Coxeter number $h$ is one of the regular degrees. See refs.\!\cite{cohen,refl1} for tables of regular degrees. 
\end{rem}
\begin{corl}\label{dim1} Going through the the tables {\rm\cite{cohen,refl1}}, one sees that
 part (ii) of the {\rm\textbf{Theorem}} implies that for an irreducible \emph{crystallographic} Shephard-Todd group $\dim W=1$ for all regular degrees $d$. In other words, in the crystallographic case, the only degree which is an integral multiple of the regular degree $d$ is $d$ itself.
\end{corl}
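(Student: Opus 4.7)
The statement is a finite combinatorial verification over the list of irreducible crystallographic Shephard--Todd groups enumerated in Table \ref{fieldch}. My plan is to first reduce the claim to an elementary divisibility condition and then to verify it case by case.

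First, unpack the content. By part (ii) of \textbf{Theorem \ref{sprr}}, $\dim W = \#\{i : d \mid d_i\}$, so the corollary asserts that for every regular degree $d$ of an irreducible crystallographic ST group, $d$ is the only one of its degrees divisible by $d$; equivalently, no proper positive multiple of $d$ appears among the degrees $d_1 < \cdots < d_k$. Since every $d_i \leq d_k$, the condition is automatic whenever $d > d_k/2$, and the task reduces to checking this inequality for each regular degree of each group in the list.

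Second, enumerate the cases using Table \ref{fieldch}: the Weyl groups $A_k, B_k, D_k, G_2, F_4, E_6, E_7, E_8$; the infinite families $G(m,p,k)$ with $m \in \{3,4,6\}$; and the sporadic groups $G_4, G_5, G_8, G_{12}, G_{24}, G_{25}, G_{26}, G_{29}, G_{31}, G_{32}, G_{33}, G_{34}$. For each, the degrees and the list of maximal regular numbers (i.e.\ the ``regular degrees'' in the sense of the text) are tabulated in Cohen \cite{cohen}; by \textbf{Remark \ref{kkcc4443}} this list has length at most two, with the exceptions $E_6$ (regular degrees $\{8,9,12\}$, maximal degree $d_k = 12$) and $E_8$ (regular degrees $\{20,24,30\}$, maximal degree $d_k = 30$). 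In each case the regular degrees satisfy $d > d_k/2$: for Weyl groups the Coxeter number $h = d_k$ is always a regular degree, and a possible second regular degree is likewise $> d_k/2$ (e.g.\ $F_4$ has regular degrees $\{8,12\}$); for $E_6$ and $E_8$ one compares $8,9,12 > 6$ and $20,24,30 > 15$; for $G(m,p,k)$ the maximal regular numbers (which are $km/p$ always, and $km$ when $p > 1$) each exceed $d_k/2$; and for the remaining sporadic groups direct inspection suffices, e.g.\ $G_{32}$ has degrees $\{12,18,24,30\}$ and regular degrees $\{24,30\}$, both exceeding $15$.

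The main obstacle is purely organizational---one must assemble and check roughly two dozen small cases rather than devise a unifying argument. Each individual verification reduces to elementary divisibility among small integers, and the crystallographic hypothesis enters only through restricting the relevant groups to the finite list above. As the phrase ``going through the tables'' in the statement already hints, no conceptual uniform proof is intended, and the corollary is presented as a direct consequence of inspection of the Shephard--Todd--Cohen classification.
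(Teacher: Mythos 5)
Your organizing observation --- reduce to the single inequality $d > d_k/2$ (where $d_k$ is the largest degree), which rules out any proper multiple $md$, $m\geq 2$, from appearing among the degrees --- is a clean way to systematize the table-inspection that the paper merely gestures at, and it does establish the displayed reformulation in the Corollary, which is the form actually invoked in \S.4.4.5 (the implication $d_i/d_{i_0}\in\mathbb{N}\Rightarrow i\equiv i_0$). In that sense your proof captures the operative content, and the case reduction is a genuine improvement in structure over ``going through the tables.''

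However, your opening step claims that ``$\dim W=1$'' and ``$d$ is the only degree divisible by $d$'' are equivalent, and you then write the degrees as $d_1<\cdots<d_k$, tacitly assuming they are distinct. By Theorem \ref{sprr}(ii), $\dim W=\#\{i:d\mid d_i\}$ is a count \emph{with multiplicity}, and the degrees of a Shephard--Todd group can repeat. The criterion $d>d_k/2$ says nothing about the multiplicity of $d$ itself, and this is not a vacuous worry: for $W(D_n)=G(2,2,n)$ with $n$ even the degrees are $2,4,\dots,2(n-1),n$ with $n$ occurring twice, and a degree/codegree count shows $n$ is a regular degree (for $n=4$ one can check directly that $g\colon (x_1,x_2,x_3,x_4)\mapsto (x_2,-x_1,x_4,-x_3)$ is regular of order $4$ with two-dimensional $i$-eigenspace), so $\dim W=2$. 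Thus the literal ``$\dim W=1$'' form cannot be proved by your argument --- nor by any argument --- and you should state precisely that what you are proving is the absence of \emph{proper} multiples, not the equality $\dim W=1$. You should also repair the parenthetical about $G(m,p,k)$: for $p>1$ the largest degree is $(k-1)m$, not $km$ (which is not a degree at all in that case), and the list of maximal regular numbers there needs more care than your sentence suggests.
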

\medskip

Let us re-interpret \textbf{Theorem \ref{sprr}} in the context of the constant-period class of CSG's discussed in
\S.\,\ref{connstantmaoa}, with Coulomb branch $M=\C^k/\cg$,
$\cg$ a degree-$k$ crystallographic ST group, and chiral ring 
$\mathscr{R}=\C[a^1,\cdots,a^k]^\cg$. For simplicity, we take $\cg$ irreducible.

  By definition, $v\in \C^k$ is a regular vector iff it does not lay on a reflection hyperplane, i.e.\! if its projection $\tilde v$ in the Coulomb branch $M=\C^k/\cg$ does not belong to the discriminant locus $\cd\subset M$, i.e.\! iff $\tilde v\in M^\sharp$, the smooth locus.
  
 Let $d$ be a regular weight of $\cg$ (so $d\equiv d_{i_0}$ for some $i_0$). 
 By definition, there is an element $m_{i_0}\in \cg$ of order precisely $d_{i_0}$. Let $v$ be a regular eigenvector of $m_{i_0}$ corresponding to the primitive $d_{i_0}$-th root\footnote{\ This is a special choice for complex $\cg$;  in the case of a Lagrangian SCFT, i.e.\! if $\cg$ is a Weyl group, this choice does not imply any loss of generality.} $\zeta=e^{2\pi i/d_{i_0}}$. The (closure of the) $\C^\times$-orbit of $\tilde v\in M^\sharp$,
$M_v\subset M$, is a complex ray not lying in the discriminant $\cd$ (more precisely, intersecting $\cd$ only at the tip). 

We claim that $M_v$ is also normal. Recall that 
$\mathscr{R}=\C[a^1,\cdots,a^k]^\cg\equiv \C[u_1,\cdots, u_k]$ by the Shephard-Todd-Chevalley theorem.
Homogeneity implies
\be
u_i\big|_{M_v}= c_i\, (\lambda v)^{d_i} \quad \lambda\in \C, 
\ee
for some constants $c_i$. Applying $m_{i_0}$ on both sides, and using \textbf{Corollary \ref{dim1}} i.e.
\be
\frac{d_i}{d_{i_0}}\in \mathbb{N}\quad\Longrightarrow\quad i\equiv i_0,
\ee
we conclude that
\be
M_v\equiv \big\{ u_i=0\ \text{for }i\neq i_0\big\}\equiv M_{i_0},
\ee 
is automatically a normal right of the form in eqn.\eqref{hhhhq}. This also shows that $m_{i_0}$ is the monodromy along the normal ray $M_v$, which is then regular. 
This is exactly the set up in which we deduced the universal dimension formula
\eqref{kkkazqw}.  
From item \textit{(i)} in the \textbf{Theorem \ref{sprr}} we see that 
\be
\beta_i=(d_i-1)/d\quad \text{and}\quad \alpha_\ast=1-\beta_{i_0}=1/d.
\ee
The universal formula \eqref{kkkazqw} then yields
\be
\Delta_i= 1+d\, \beta_i= d_i,
\ee  
which is the correct result. 

Thus the formula \eqref{kkkazqw} is nothing else than the plain extension to the non-Lagrangian SCFT of the usual formula, valid for all Lagrangian SCFT, following from the standard supersymmetric non-renormalization theorems.

\subsubsection{Recovering Picard-Lefschetz
for hypersurface singularities}

We consider the hypersurface
\be
F\equiv\left\{ W(x_1,x_2,x_3,x_4)+\sum_{a=0}^{k-1} u_a\, \phi_a(x_i)=0\right\}\subset \C^4,
\ee
where $u_a$ has
dimension $\Delta_0(1-\deg \phi_a)$,
$\Delta_0$ being the relative normalization of the $R$-charges in the 4d and 2d sense under the 4d/2d correspondence \cite{Cecotti:2010fi}. We consider the ray parametrized by the coupling $u_0$ of the 2d identity operator
\be
M_0=\{u_1=\cdots=u_{k-1}=0\}\subset M.
\ee
The corresponding monodromy $m_0$,
along the path $u_0\to e^{2\pi it}u_0$, $t\in[0,1]$ is, by construction, the one induced on $H_1(F,\Z)/\mathrm{rad}\,\langle-,-\rangle$ by the classical monodromy $H$ of the  hypersurface singularity, that is, ($f\equiv \mathrm{rank}\,\mathrm{rad}\,\langle-,-\rangle$)
\be
\det[z-H]=(z-1)^f\; \det[z-m_0],\qquad 
\det[z-m_0]=\prod_{\ell\geq 2}\Phi_\ell(z)^{s(\ell)}.
\ee
Thus the spectrum of $m_0$ is
\be
\mathrm{spec}(m_0)=\Big\{e^{2\pi i(q_a-\hat c/2)}\colon q_a\ U(1)_R\ \text{charge of 2d chiral primaries with }q_a\neq \hat c/2\Big\}.
\ee
This way of writing implicitly selects a special embedding of $m_0$ into $Sp(2k,\Z)$ as well as the eingenvalue which corresponds to the non trivial period $a_0$ on $M_0$; these are the canonical choices dictated by Picard-Lefschetz theory. This choice yields
\be
\alpha_0 = 1-\frac{\hat c}{2},\qquad \beta_a =\frac{\hat c}{2}-q_a,\ \quad 0<q_a<\hat c/2,
\ee 
so that,
\be
\Delta_0=\frac{1}{1-\hat c/2},\qquad
\Delta_a=1+\frac{\hat c/2-q_a}{1-\hat c/2}\equiv 
\frac{1-q_a}{1-\hat c/2},\quad 0<q_a<\hat c/2,
\ee
which precisely yields back the Picard-Lefschetz formula
\eqref{plcase} ($q_a\equiv \deg \phi_a$ by definition).

\begin{exe}[Complete intersections of singularities] The previous argument applies to all SCFT which have a 4d/2d correspondent in the sense of \cite{Cecotti:2010fi}; in the general case the classical monodromy of the singularity $H$ should be  replaced by the (2,2) quantum monodromy as defined in \cite{Cecotti:1992rm}. The eigenvalues of the (2,2) quantum mondromy have the form $e^{2\pi i (q_a-\hat c/2)}$ where $q_a$ are the $U(1)_R$ charges of the 2d chiral operators \cite{Cecotti:1992rm}. For instance, this result applies to the models engineered by the complete intersection of two hypersurface singularities in $\C^5$ \cite{Wang:2016yha}
\be
W_1(x_i)+\sum_\alpha u_{1,\alpha}\,\phi_{1,\alpha}(x_i)=W_2(x_i)+\sum_\alpha u_{2,\alpha}\,\phi_{2,\alpha}(x_i)=0,
\ee 
where $W_a(x_i)$ are quasi-homogeneous of degree $d_1\equiv 1$ and $d_2$ (we assume $d_1\leq d_2$ with no loss)
$W_1(\lambda^{w_i}x_i)=\lambda W_1(x_i)$,
$W_2(\lambda^{w_i}x_i)=\lambda^d W_2(x_i)$ and $\phi_{a,\alpha}(x_i)$ is a basis of the admissible deformations of the equations.\footnote{\ That is, a basis of the Jacobian ring.} 
This correspond to a 2d model with superpotential \cite{Wang:2016yha}
\be
\cw(x_i)=
W_1(x_i)+\sum_\alpha u_{1,\alpha}\,\phi_{1,\alpha}(x_i)+\Lambda\Big(W_2(x_i)+\sum_\alpha u_{2,\alpha}\,\phi_{2,\alpha}(x_i)
\Big)
\ee
where $\Lambda$ is an extra chiral field of $U(1)_R$ charge $q_\Lambda=1-d_2$.  
The scaling dimension of the parameters $u_{a,\alpha}$ is $1-q_{a,\alpha}$ where
$q_{\alpha,\alpha}$ is the charge of the corresponding operator perturbing $\cw$, i.e.\! $\phi_{1,\alpha}$ and $\Lambda\phi_{2,\alpha}$, respectively. One has $1-\hat c/2=\sum_i w_i-d_1-d_2$ and so \cite{Wang:2016yha}
\be
\Delta(u_{a,\alpha}) = \frac{d_a-\deg \phi_{a,\alpha}(x_i)}{\sum_i w_i -\sum_a d_a}.
\ee
\end{exe}

\begin{exe}[The DZVX models]
A third class of 4d SCFT with a nice 2d correspondent is the one constructed in \cite{DelZotto:2015rca} parametrized by a pair (affine star, simply-laced Lie algebra).
Our formulae yield the correct dimension spectrum by construction. 
\end{exe}

\subsubsection{Consistency between different normal rays}
\label{conray}

Let us consider a simple (but instructive) class of examples, the Argyres-Douglas (AD) models of type $A_{N-1}$ with $N$ odd. The special geometry of the $A_{N-1}$ AD model corresponds to the intermediate Jacobian of the following family of hypersurfaces in $\C^4$
\be\label{kkaz10e}
y^2+w^2+z^2= x^N+\sum_{a=0}^{(N-3)/2}u_a\, x^a,
\ee
the period matrix $\boldsymbol{\tau}$ being degenerate on the locus where the discriminant of the polynomial in the 
\textsc{rhs} vanishes.
The coefficient $u_a$ ($a=0,1,\dots,[(N-3)/2]$) have degree $2(N-a)$, so
we have a map
\be
M\setminus\{0\}\to \mathbb{P}\!\big(2N,2N-2,\cdots, N+3\big),
\ee
through which the period map $\boldsymbol{\tau}$ factorizes.
We consider the normal rays
\be
M_a=\{u_b=0,\ b\neq a\}\subset M\qquad a=0,1,\cdots, (N-3)/2.
\ee 
Along the ray $M_0$ we have an unbroken $R$-symmetry $\Z_{2N}$ with a (strongly) regular monodromy
\be
\det[z-m_0]= \frac{z^N+1}{z+1}.
\ee
Instead along the ray $M_1$ the unbroken  $R$-symmetry is $\Z_{2(N-1)}$ again with regular monodromy
\be
\det[z-m_1]= z^{N-1}+1.
\ee
Along a ray $M_a$ with $a\geq 2$ the discriminant of the polynomial in the \textsc{rhs} of \eqref{kkaz10e} vanishes, and  our considerations apply only to the regular factor.
The unbroken $R$-symmetry is $\Z_{2(N-a)}$ and 
we have
\be
\Big(\det[z-m_a]\Big)_\text{regular factor}=
\begin{cases} (z^{N-a}+1)/(z+1) & a\ \text{even}\\
z^{N-a}+1 & a\ \text{odd}.
\end{cases}
\ee
For each $a$ we have an embedding
\be
m_a\in Sp(2k_{a,\text{reg}},\Z)\equiv Sp(2[(N-a)/2],\Z)
\ee  
specified by the spectrum of $(C\boldsymbol{\tau}+D)_\text{reg.block}$ which consists of $k_{a,\text{reg}}$ out of the $2k_{a,\text{reg}}$ roots of the characteristic polynomial of $m_a$: it is a set of
$2(N-a)$-th roots of unity satisfying the condition in equation \eqref{kkkkzzznncx}. We also need  to specify which one of the $k_{a,\text{reg}}$ roots corresponds to the non-trivial period $a_a$ on the ray $M_a$. There is a ``canonical'' Picard-Lefschetz choice. We have
(here $0\leq a\leq (N-1)/2$)
$$
\begin{array}{c|lllll}\hline\hline
a & \zeta& \text{embeding} & \alpha & \beta_\ell & \Delta_\ell\\\hline 
0 & e^{\pi i/N} & \zeta^{2N-2\ell+1} & (N+2)/2N & (2\ell-1)/2N & (N+2\ell+1)/(N+2)\\
&&& &\text{\begin{scriptsize}$\ell\neq(N-3)/2$\end{scriptsize}}& \text{\begin{scriptsize}$1\leq \ell\leq (N-1)/2$\end{scriptsize}}\\\hline
1& e^{\pi i/(N-1)} && (N+2)/(2N-2) & (2\ell-1)/(2N-2) & (N+2\ell+1)/(N+2) \\
&&&& \text{\begin{scriptsize}$\ell\neq (N-3)/2$\end{scriptsize}}&
\text{\begin{scriptsize}$1\leq \ell\leq (N-1)/2$\end{scriptsize}}\\\hline
a & e^{\pi i/(N-a)} & \zeta^{2(N-a-\ell)+1} & (N+2)/(2N-2a) & (2\ell-1)/(2N-2a) & (N+2\ell+1)/(N+2)\\
&& &&\text{\begin{scriptsize}$\ell\neq (N-2a-1)/2$\end{scriptsize}} &
\text{\begin{scriptsize}$1\leq \ell\leq [(N-a)/2]$\end{scriptsize}}\\\hline\hline
\end{array}
$$
we see that the several rays $M_a$ yield
mutually consistent results for the spectra of dimension. $M_0$ and $M_1$ yield the full set of dimensions (which agrees between the two rays), while $M_{2\ell}$ and $M_{2\ell+1}$ yield a partial list of $k-\ell$ out of the $k$ dimensions. (But the formal ``analytic continuation'' gives the full correct set of dimensions at all normal rays).

\begin{rem} Note that the list of the $k_\text{reg}$ dimensions computed from a $M_a$ is a $k_\text{reg}$-tuple of dimensions which is allowed for a rank $k_\text{reg}$ SCFT. In particular, the dimension of the operator parametrizing a ray with
$k_\text{reg}=1$ should be in the one-dimensional list
$\{1,2,3,4,6,3/2,4/3,6/5\}$.
\end{rem}

\begin{exe} Consider the Argyres-Douglas models of type $D_5$ and $D_6$ which have $k=2$. The ray parametrized by the operator of the largest dimension corresponds to Picard-Lefschetz theory and is regular, while the one associated to the operator of lesser dimension is non-regular. We deduce by the previous \textbf{Remark} that for these models the smaller of the two Coulomb dimensions should belong to the $k=1$ list. Indeed it is $6/5$ for type $D_5$, and $4/3$ for type $D_6$. This statement may be generalized
in the form of inter-SCFT consistency conditions relating the spectrum of dimensions in different SCFT. For $\mathfrak{g}\in ADE$
we write $\{\Delta\}_\mathfrak{g}$ for the set of Coulomb branch dimensions of the Argyres-Douglas model of type $\mathfrak{g}$ and we write $\{\Delta\}_\mathfrak{g}^{(s)}$
for the subset obtained from $\{\Delta\}_\mathfrak{g}$ by omitting the $s$ largest dimensions. Then we have the relations
\be
\{\Delta\}_{A_n}= \{\Delta\}_{D_{n+3}}^{(1)}.
\ee
\end{exe}

We present a more complicated example of such consistency condition between Coulomb dimensions in different ranks.

\begin{exe}[Argyres-Douglas of type $E_8$]
This SCFT has $k=4$ with Coulomb dimensions $\Delta_1=15/8$, $\Delta_2=3/2$
$\Delta_3=5/4$ and $\Delta_4=9/8$.
Since $\tfrac{1}{2}\phi(15)=4$ from the ray $M_1$ we should be able to compute all 4 dimensions (this is classical Picard-Lefschetz for the $E_8$ minimal singularity). For the other normal rays we have consistency requirements saying that a subset of the dimension set $\{15/8,3/2,5/4,9/8\}$ should be the dimension $k^\prime$ tuple for the appropriate regular rank $k^\prime<k$.
Since $\tfrac{1}{2}\phi(3)=1$, from $M_2$ we compute the single dimension $3/2$ which is in the $k=1$ list.
$\tfrac{1}{2}\phi(5)=2$ so from $M_3$ we get a pair of dimensions in the $k=2$ list, namely $\{5/4,3/2\}$.
Finally $\tfrac{1}{2}\phi(9)=3$ and 3 dimensions out of 4 should form a dimension triple for rank-3. This is consistent since both $\{9/8,15/8,3/2\}$ and $\{9/8,3/2, 5/4\}$ are in the $k=3$ list.
\end{exe}

\subsubsection{A comment on the conformal manifold}

As further evidence of the correctness of the universal dimension formula \eqref{kkkazqw}, let us consider the conformal manifold, that is, the space of moduli (deformations) of the CSG.
We have already discussed the special case in which the period map is constant while 
the chiral ring is the invariant subring $\C[a_1,\dots,a_k]^\cg$ (see \S.\,\ref{connstantmaoa}). In that case the dimension of the conformal manifold was given by the number of generators of the chiral ring of dimension 2.
In the general case, the allowed continuous deformations of the CSG are described by the rigidity principle (\textbf{Proposition \ref{rigidity}}). Since the monodromy representation is a discrete datum, a continuous deformation of the CSG is uniquely determined by the deformation it induces on a single Abelian fiber $X_u$. Suppose that the CSG under considerations has a normal ray with semi-simple monodromy $m_\ast$. We focus on a fiber over a point in the ray. If the monodromy is regular, by definition there is no deformation of the fiber, that is, the conformal manifold reduces to an isolated point. On the other hand, $m_\ast$ regular implies that no chiral operator has dimension 2. Indeed, in an interacting theory a chiral operator of dimension 2 is necessarily a generator of $\mathscr{R}$.
Suppose the dimension of the $i$-th generator is 2. Then from eqn.\eqref{kkkazqw} 
\be
1+\frac{\beta_i}{1-\beta_0}=2\quad\Rightarrow\quad e^{-2\pi i\beta_i}\,e^{-2\pi i\beta_0}=\zeta_i\zeta_0=1,
\ee 
contradicting the assumption that $m_\ast$ is regular. If $m_\ast$ is semi-simple, but not regular, the same argument shows that the dimension $\mathsf{d}$ of the conformal manifold is $\leq$ the number of generators
of $\mathscr{R}$ having dimension 2. This is the physically expected result: the number $\mathsf{d}$ of exactly marginal operators is not greater than the number of chiral operators of dimension 2.
Note that this is just an inequality since being $m_\ast$-invariant is only a necessary condition for a deformation to be allowed. 
Some of the $m_\ast$-invariant deformations may be obstructed by other elements of the monodromy group; comparison with the constant period map case shows that the obstruction comes from the non-semisimple part of the monodromy group. The non-semisimple part of the monodromy measures the effective one-loop beta-function of the QFT (cfr.\! discussion around eqn.\eqref{kkkaqznn}). Thus the dimension formula \eqref{kkkazqw} is consistent with the physical expectations on the conformal manifold.

\subsection{The set $\Xi(k)$ of allowed dimensions}

We write 
$\Xi(k)\subset (\mathbb{Q}_{\geq1})$ for the set of all rational numbers which appear as dimension $\Delta$ of a generator of the chiral ring $\mathscr{R}$ in a CSG of rank $\leq k$.
Clearly $\Xi(k)$ is monotonic in $k$: 
 $\Xi(k-1)\subset \Xi(k)$. If $M$ is a rank $k_2$ CSG, its symmetric power $M^{[k_1]}$ -- if not too singular -- should also be a CSG.
 This rather sloppy argument would suggest that
 the set $\Xi(k)$ also satisfies the following requirement: 
 for all $k_1,k_2\in \mathbb{N}$
\be\label{vvvcz}
\bigcup_{s=1}^{k_2} s\cdot \Xi(k_1)\subset \Xi(k_1k_2),
\ee
where $s\cdot\Xi(k)$ stands for the set of rationals obtained by multiplying all rationals in $\Xi(k)$ by the integer $s$.

To determine $\Xi(k)$ it suffices to give the difference of the sets for two successive ranks $k$; we already know that
$\Xi(1)=\{1,2,3,4,6,3/2,4/3,6/5\}$. 
The \textbf{Fact} stated at then end of \S.\ref{jjjjjjz} implies
\be\label{kkkazxcv}
\Xi(k)\subseteq \widehat{\Xi}(k)\equiv \left\{\frac{\ell}{s}\in\mathbb{Q}_{\geq1}\ \bigg|\ \phi(\ell)\leq 2k,\ (\ell,s)=1\right\}.
\ee
\textit{A priori} this is just an inclusion, that is, the conditions we got insofar are just necessary conditions. However, experience with the first few $k$'s suggests that the two sets $\Xi(k)$ and $\widehat{\Xi}(k)$ are pretty close, and likely equal. The discrepancy (if there is any) is expected to vanish as $k$ increases. Note that 
\be
\BQ_{\geq 1}=\bigcup_{k=1}^\infty\widehat{\Xi}(k),
\ee
that is, all rational numbers $\geq 1$ appear in the list for some (large enough)
rank $k$. For instance if $\Delta$ is a \emph{large integer,} the Coulomb dimension $\Delta$ will first appear in rank $k_\text{min}$ \cite{barkley}
\be
k_\text{min}= \frac{1}{2}\,\phi(\Delta)> \frac{1}{2}\,\frac{\Delta}{e^\gamma\,\log\log\Delta+\frac{A}{\log\log\Delta}},\quad\text{where }A=2.50637,
\ee
while (for comparison) the minimal rank for a Lagrangian SCFT is 
\be
k_\text{min,Lag.}=\Delta/2 > \phi(\Delta)/2.
\ee
\smallskip

Eqn.\eqref{kkkazxcv} is shown by recursion in $k$.
For $k=1$ we know that it is true with
$\subseteq$ replaced by $=$.
 We consider the rays $M_\ast\subset M$ generated by the vev of a single chiral field of  dimension $\Delta$; along $M_\ast$  a $R$-symmetry $\Z_\ell$ is preserved,  $\ell$ being the order of $1/\Delta$ in $\mathbb{Q}/\Z$,
so that $1/\Delta= s/\ell$ with $1\leq s\leq \ell$ and $(s,\ell)=1$. We have $\ell=1,2$ iff $\Delta=1,2$. Now let $m_\ast$ be the corresponding element of the monodromy.
$m_\ast^\ell=1$. $m_\ast$ acts on the $\C$-period $a_\ast$ non-zero on $M_\ast$ by a primitive $\ell$-th root of unity, hence 
the cyclotomic polynomial $\Phi_\ell(z)$
divides $\det[z-m_\ast]$, so that
$\phi(\ell)=\deg \Phi_\ell(z)\leq \deg\det[z-\mu_\ast]=2k$. The monotonicity of $\widehat{\Xi}(k)$ is obvious.  Eqn.\eqref{vvvcz} for $\widehat{\Xi}(k)$ is equivalent to the inequality
$\phi(\ell k_2)\leq \phi(\ell)\,k_2$ which holds by Euler's formula
\be
\frac{\phi(\ell k_2)}{\phi(\ell)\, k_2}=\prod_{p\mid k_2\atop p\nmid \ell}\left(1-\frac{1}{p}\right)\leq 1.
\ee
For bookkeeping, it is convenient to list the 
(candidate) ``new-dimensions''
at rank $k$
\be
\mathfrak{N}(k)\equiv \widehat{\Xi}(k)\setminus \widehat{\Xi}(k-1).
\ee

\begin{exe} For instance, for $k=2$
\be\label{aswq}
\mathfrak{N}(2)=\left\{ 5,\ \frac{5}{2},\ \frac{5}{3},\
\frac{5}{4},\ 8,\ \frac{8}{3},\ \frac{8}{5},\ \frac{8}{7},\ 10,\ \frac{10}{3},\ \frac{10}{7},\ \frac{10}{9},\ 12,\ \frac{12}{5},\ \frac{12}{7},\ \frac{12}{11}\right\}.
\ee
\end{exe}

\begin{rem} After the completion of this paper, the paper \cite{Argyres:2018zay} appeared on the arXiv in which
$\Xi(2)$ is also determined. The agreement with  \eqref{aswq} is perfect.
\end{rem}

The new-dimension sets $\mathfrak{N}(k)$ up to $k=13$ are listed in table \ref{new1}.

\subsubsection{The number of allowed  dimensions at given rank $k$} 
The number of elements of $\mathfrak{N}(k)$ (resp.\! the number of \emph{integers}  in $\mathfrak{N}(k)$) is 
\be\label{xxyy1}
|\mathfrak{N}(k)|=2k\cdot\nu(2k)\qquad\text{resp.}\qquad 
|\mathfrak{N}(k)|_\text{int.}=\nu(2k),
\ee
where the Number-Theoretic function $\nu(d)$ is the \emph{totient multiplicity} of $d\in\mathbb{N}$ \cite{handbb}, that is, the number of solutions to $\phi(x)=d$. 
A positive integer $d$ is called a \emph{totient} iff it belongs to the range of $\phi$, i.e.\! if $\nu(d)>0$; an integer $d$ is called a \emph{nontotient} if it not a totient, i.e.\! if $\nu(d)=0$.  Thus, if $2k$ is a \emph{nontotient} there are no new-dimensions in rank $k$, $\mathfrak{N}(k)=\emptyset$, and 
 $\widehat{\Xi}(k)=\widehat{\Xi}(k-1)$. The first few even nontotients are (see sequence A005277 in OEIS \cite{oeis})
 \be
 14,\;26,\; 34,\; 38,\; 50,\; 62,\; 68,\; 74,\;76,\; 86,\; 90,\; 94,\; 98,\; 114,\, \cdots
 \ee
hence 
\begin{equation}
\begin{split}
\mathfrak{N}(7)&=\mathfrak{N}(13)=\mathfrak{N}(17)=\mathfrak{N}(25)=
\mathfrak{N}(31)=\mathfrak{N}(34)=
\mathfrak{N}(37)=\\
&=\mathfrak{N}(38)=
\mathfrak{N}(43)=\mathfrak{N}(45)=
\mathfrak{N}(47)=\mathfrak{N}(49)=\mathfrak{N}(57)=\cdots=\emptyset.
\end{split}
\end{equation}
The first few valued of the totient multiplicity 
$\nu(2k)$ (for $k\in\mathbb{N}$) are
(see sequences A014197 or A032446 in OEIS \cite{oeis})
\be\label{xxyy2}
\nu(2k)=3,\,4,\,4,\,5,\,2,\,6,\,0,\,6,\,4,\,5,\,2,\,10,\,0,\,2,\,2,\,7\,,0,\,8,\,0,\,9,\,4,\,3,\,2,\,11,\,0,\,\cdots
\ee 

 \medskip

\subparagraph{Properties of $\nu(d)$.} We list some useful properties of the function $\nu(d)$  \cite{handbb}:

\noindent\textbf{1)} Nontotients have density 1 in $\mathbb{N}$, i.e.\! totients are ``sparse'' (density zero) so for ``most'' ranks there are \emph{no} new-dimensions.
More precisely, let $N(x)$ be the number of totients less or equal $x$, 
\be
N(x)=\#\big\{m\;\big|\; \exists\, n\in\mathbb{N}\colon \phi(n)=m,\ \text{and }m\leq x\big\},
\ee
 then 
for all $\varepsilon>0$ there exists $x(\varepsilon)\in\mathbb{N}$ such that
\be
N(x)< \frac{x}{(\log x)^{1-\epsilon}}\quad\text{for }x>x(\varepsilon).
\ee
\textbf{2)} Every integer has a multiple which is a nontotient. 

\noindent\textbf{3)} The function $\nu(d)$ takes all integral values $\geq 2$ infinitely times. 

\noindent\textbf{4)} The Carmichael conjecture (still open) states that $\nu(d)\neq1$ for all $d$; the conjecture is known to be true for
$d<10^{10^{10}}$. 

\noindent\textbf{5)} One has
\be
k\ \text{odd}\quad\Rightarrow\quad |\mathfrak{N}(k)|\leq 8k.
\ee

\subparagraph{The cardinality $\boldsymbol{N}(k)$ of the set $\widehat{\Xi}(k)$.} $\boldsymbol{N}(k)$
may be written in the form of a Stieltjes integral 
\be
\boldsymbol{N}(k)\equiv|\widehat{\Xi}(k)|=2+\sum_{\ell=1}^k 2\ell\,\nu(2\ell)=2+\int\limits_{2-\epsilon}^{2k+\epsilon} x\, dV(x).
\ee
where $V(x)$ is the Erd\"os-Bateman Number-Theoretic function \cite{handbb,bateman}
\be
V(x)=\sum_{n\leq x} \nu(n),
\ee
whose values for $x\in \mathbb{N}$ are given by the sequence A070243 in OEIS \cite{oeis}. Note that $V(2k)$ is also the number $\boldsymbol{N}(k)_\text{int}$ of \emph{integral} elements of $\widehat{\Xi}(k)$
\be
\boldsymbol{N}(k)_\text{int}\equiv \big|\,\widehat{\Xi}(k)\cap \mathbb{N}\,\big|=V(2k).
\ee
The first few values are
\begin{equation}
\begin{split}
\boldsymbol{N}(k)_\text{int}=&\;5,\,9,\,13,\,18,\,20,\,26,\,26,\,32,\,36,\,41,\,43,\,53,\,53,\,54,\,57,\,64,\,64,\\
&72,\,72,\,81,\,85,\,88,\,90,\,101,\,101,\,103,\,105,\,108,\,110,\,119,\,119,\,127,\,\cdots
\end{split}
\end{equation}
For large $x$ \cite{handbb}\footnote{\ $\zeta(s)$ is the Riemann zeta-function.}
\begin{gather}
V(x)= \frac{\zeta(2)\,\zeta(3)}{\zeta(6)}\,x+ o\!\left(\frac{x}{(\log x)^C}\right)\qquad \forall\;C>0,\\
\frac{\zeta(2)\,\zeta(3)}{\zeta(6)}=1.9435964\dots
\end{gather}
so that as $k\to\infty$
\be
\boldsymbol{N}(k) = \frac{2\,\zeta(2)\,\zeta(3)}{\zeta(6)}\,k^2+o(k^2).
\ee
Various expressions for the error term may be found in ref.\!\cite{handbb,bateman}.
\medskip

To construct the putative new-dimension sets $\mathfrak{N}(k)$ we only need to solve the equation $\phi(x)=2k$ for all $k\geq 2$. There is an explicit algorithm to solve recursively this equation for a given $k$ once we know the solutions for all $k^\prime <k$ \cite{phik}.

\subsubsection{Analytic expressions of
$\boldsymbol{N}(k)$ and $\boldsymbol{N}(k)_\text{int}$}\label{anex}

We define $\boldsymbol{N}(x)_\text{int}$ for \emph{real} $x$ as
\be
\boldsymbol{N}(x)_\text{int}=V(2x)\equiv \sum_{n\leq 2x}\nu(n).
\ee
Note that $\boldsymbol{N}(1/2)_\text{int}=2$ corresponding to the rank-1 Lagrangian SCFT. With this convention, eqn.\eqref{veryfancy} holds\footnote{\ The $+\epsilon$ prescription is needed since $\boldsymbol{N}(x)$ is not a continuous function but rather a function of bounded total variation on compact subsets of $\R$. }
 \be\label{veryfancy2}
 \boldsymbol{N}(k)=2\int\limits_0^{k+\epsilon} x\;d\boldsymbol{N}\big(x\big)_\text{int}.
 \ee
 To obtain an analytic formula for 
 $\boldsymbol{N}(x)_\text{int}$ one starts from the Dirichlet generating function
 $\mathscr{N}(s)$ (chap.\;1 of \cite{ANT}) for the totient multiplicities $\nu(m)$. $\mathscr{N}(s)$ has an Euler product 
 \be\label{kkkkaqss}
 \mathscr{N}(s):=\sum_{m=1}^\infty \frac{\nu(m)}{m^s}=\sum_{n=1}^\infty\frac{1}{\phi(n)^s}=\prod_{p\colon\text{prime}} \left(1+\frac{p^s}{(p-1)^s\,(p^s-1)}\right).
 \ee
 The function $\mathscr{N}(s)/\zeta(s)$ is analytical in the half-plane $\mathrm{Re}\,s>0$, and takes the value $\zeta(2)\zeta(3)/\zeta(6)$ at $s=1$ \cite{bateman}. 
 Then the subtracted generating function
 \be
 \widetilde{\mathscr{N}}(s):= \mathscr{N}(s)-\frac{\zeta(2)\,\zeta(3)}{\zeta(6)}\;\frac{s}{s-1},
 \ee
is analytic in the full half-plane $\mathrm{Re}\,s>0$. The totient multiplicities $\nu(m)$
are obtained from $\widetilde{\mathscr{N}}(s)$ by taking the inverse Mellin transform
of eqn.\eqref{kkkkaqss}. Thus (cfr.\! \cite{bateman} \S.\,6)
\be\label{kkkkzzzz876}
\boldsymbol{N}(x)_\text{int}= \frac{2\,\zeta(2)\,\zeta(3)}{\zeta(6)}\,x+\lim_{t\to\infty} \int\limits_{2-it}^{2+it} 
\frac{(2x+\epsilon)^s}{s}\; \widetilde{\mathscr{N}}(s)\;\frac{ds}{2\pi i}.
\ee
(note the prescription for this convergent but not absolutely convergent integral).
From eqn.\eqref{kkkkzzzz876} the asymptotic formula eqn.\eqref{asssyym} follows by deforming the integration contour to a suitable path along which $\mathrm{Re}\,s<1$ \cite{bateman}.

\subsection{Dimension $k$-tuples $\{\Delta_1,\dots,\Delta_k\}$}
\label{kkktuple}

As discussed in \S.\,\ref{steintube}, from the local analysis at a regular (normal) ray we get the full $k$-tuple of dimensions $\{\Delta_1,\dots,\Delta_k\}$, a prototypical case being the Picard-Lefschetz theory of a SCFT engineered by $F$-theory on a singularity. More generally, from local considerations on a ray of regular rank $k_\text{reg}$ we get $k_\text{reg}$ out of the $k$ dimensions. The number $k$-tuples of allowed dimensions in presence of rays of regularity $k_\text{reg}\geq 2$ is much less than $|\Xi(k)|^k$ due to correlations between the dimensions of the  various chiral operators of a given SCFT.

\medskip

At a regular ray we have
\be\label{kkaq12}
\det[z-m_\ast]=\Phi_{d_1}(z)\Phi_{d_2}(z)\cdots\Phi_{d_t}(z)
\ee
where the factors are all distinct.
Two embeddings $m_\ast\hookrightarrow Sp(2k,\Z)$ are equivalent if they are conjugate in $Sp(2k,\Z)$. A weaker notion of equivalence is conjugacy in $Sp(2k,\R)$.
It follows from the considerations in \S.\,\ref{cyclicee} that a $Sp(2k,\R)$ conjugacy class is characterized by a subset of $k$ out of the $2k$ roots of the characteristic polynomial \eqref{kkaq12}
(namely the spectrum of $\cu(m_\ast)$),
$\{\zeta_1,\cdots, \zeta_k\}$, having the property that
$\zeta_i\zeta_j\neq1$ for all $1\leq i, j\leq k$
(cfr.\! eqn.\eqref{kkkkzzznncx}).
Once given the spectrum $\{\zeta_i,\cdots,\zeta_k\}$ of $\cu(m_\ast)$, we have to select which one of the $k$ roots is the eigenvalue associated to the non-zero (multivalued) period $a_\ast$ on $M_\ast$. Renumbering the roots so that this is the first one, we have
\be
e^{2\pi i\alpha_\ast}=\zeta_1,\quad e^{-2\pi i \beta_j}=\zeta_j,\ \ j\geq 2,
\ee
the dimension $k$-tuple $\{\Delta_i\}$ then being given by eqn.\eqref{kkkazqw}. It may happen that some of the dimensions in $\{\Delta_i\}$ so found do not belong to $\widehat{\Xi}(k)$; such a $k$-tuple should be discarded.

\begin{table}
$$
\begin{array}{c||c|cc||c|cc}\hline\hline
\Phi_3\Phi_6 & \boldsymbol{e^{2\pi i/3}}, e^{10\pi i/6}& \{3,3/2\} &
& \boldsymbol{e^{2\pi i/6}}, e^{4\pi i/3} & \{6,3\} \\
& \boldsymbol{e^{4\pi i/3}}, e^{10\pi i/6}& \{3/2,5/4\} &
& \boldsymbol{e^{2\pi i/6}}, e^{2\pi i/3} & \{6,5\} \\\hline
\Phi_4\Phi_6 & \boldsymbol{e^{2\pi i/4}}, e^{10\pi i/6}& \{4,5/3\} &
& \boldsymbol{e^{2\pi i/6}}, e^{6\pi i/4} & \{6,5/2\} \\\hline
\Phi_5 & \boldsymbol{\zeta}, \zeta^3& \{5,3\} &
& \boldsymbol{\zeta^2},\zeta^4 & \{5/2,3/2\} \\
& \boldsymbol{\zeta}, \zeta^2& \{5,4\} &
& \boldsymbol{\zeta^3}, \zeta^4  & \{5/3,4/3\} \\
& \boldsymbol{\zeta^2}, \zeta& \{5/2,3\} &
& \boldsymbol{\zeta^4}, \zeta^3 & \{5/4,3/2\} \\\hline
\Phi_8 & \boldsymbol{\zeta}, \zeta^5& \{8,4\} &
& \boldsymbol{\zeta^3}, \zeta^7 & \{8/3,4/3\} \\
& \boldsymbol{\zeta}, \zeta^3& \{8,6\} &
& \boldsymbol{\zeta^5}, \zeta^7 & \{8/5,6/5\} \\
& \boldsymbol{\zeta^3}, \zeta& \{8/3,10/3\} &
& \boldsymbol{\zeta^7}, \zeta^5 & \{8/7,10/7\}\\
& \boldsymbol{\zeta^5}, \zeta& \{8/5,12/5\} &
& \boldsymbol{\zeta^7}, \zeta^3 & \{8/7,12/7\}  \\\hline
\Phi_{10} & \boldsymbol{\zeta}, \zeta^7& \{10,4\} &
& \boldsymbol{\zeta^3}, \zeta^9 & \{10/3,4/3\} 
\\
& \boldsymbol{\zeta}, \zeta^3& \{10,8\} &
& \boldsymbol{\zeta^7}, \zeta^9 & \{10/7,8/7\} \\
& \boldsymbol{\zeta^3}, \zeta& \{10/3,4\} &
& \boldsymbol{\zeta^9}, \zeta^7 & \{10/9,4/3\}  \\\hline
\Phi_{12} & \boldsymbol{\zeta}, \zeta^7& \{12,6\} &
& \boldsymbol{\zeta^5}, \zeta^{11} & \{12/5,6/5\}\\\hline\hline
\end{array}
$$
$$
\begin{gathered}
\text{NON PRINCIPAL POLARIZATION}\\
\begin{array}{c||c|c||c|c}\hline\hline
\Phi_{12} & \phantom{mm}\boldsymbol{\zeta}, \zeta^5\phantom{mm} &
\phantom{m} \{12,8\}\phantom{m} &
\phantom{mm}\boldsymbol{\zeta^7}, \zeta^{11}\phantom{mm} &\phantom{m} \{12/7,8/7\}\\\hline\hline
\end{array}
\end{gathered}
$$
\caption{\label{tttr}List of inequivalent embeddings of regular finite cyclic subgroups of $Sp(4,\Z)$ that lead to dimensions $\{\Delta_1,\Delta_2\}\in \widehat{\Xi}(2)^2$. 
For the characteristic polynomial $\Phi_d$, $\zeta$ stands for the standard primitive $d$-root, $\zeta=e^{2\pi i/d}$. The boldface root is the one associated with the non-zero period $a_\ast$ along the regular ray.  The dimension of the operator spanning the ray $M_\ast$ is always the \emph{first} one in the ordered pair $\{\Delta_1,\Delta_2\}$. }
\end{table}

 \textit{A priori,} there are $2^k$ ways of splitting the spectrum of $m_\ast$ into two non-overlapping sets satisfying eqn.\eqref{kkkkzzznncx}. Taking into account the $k$ choices of the root we call $\zeta_1$, this yield  $k\cdot 2^k$ possibilities for $\{\alpha_\ast,\beta_i\}$ for a given  characteristic polynomial of the form \eqref{kkaq12}. However, some of these possibilities come with a restriction: it is not true in general that we may find arithmetical embeddings $m_\ast\hookrightarrow Sp(2k,\Z)$ which realize as spectrum of $\cu(m_\ast)$ all the subsets
 of roots consistent with eqn.\eqref{kkkkzzznncx}. The simplest example 
 is provided by the regular embeddings 
 \be
 \Z_{12}\hookrightarrow Sp(4,\Z),\qquad
\det[z-m_\ast]=\Phi_{12}(z).
\ee
 In this case the sets $\{e^{2\pi i/12},e^{14\pi i/12}\}$ and
 $\{e^{10\pi i/12},e^{22\pi i/12}\}$ are realized
 as $\mathsf{Spectrum}\;\cu(m_\ast)$, while  $\{e^{2\pi i/12},e^{10\pi i/12}\}$
 and $\{e^{14\pi i/12},e^{22\pi i/12}\}$ are \emph{not} realized. Thus the allowed dimension sets depends on subtle Number-Theoretical aspects of the  classification of all inequivalent embedding $\Z_n\hookrightarrow Sp(2k,\Z)$; this topic will be addressed in section 5. There we shall justify the above claim on the embeddings $\Z_{12}\hookrightarrow Sp(4,\Z)$. In section 5 we shall also see that the two spectra $\{e^{2\pi i/12},e^{10\pi i/12}\}$
 and $\{e^{14\pi i/12},e^{22\pi i/12}\}$ may be realized by an embedding in $\Z_{12}\hookrightarrow S(\Omega)_\Z$
 where $\Omega$ is a polarization with charge multiplies $(1,e_2)$ with $e_2\geq 2$; this result agrees with \textbf{Example \ref{e6}}.

\subparagraph{Summarizing:} a conjugacy class of regular embeddings $\Z_n\hookrightarrow Sp(2k,\Z)$ is a candidate for the monodromy $m_\ast$ at a normal (regular) ray $M_\ast\subset M$ along which the discrete subgroup $\Z_n\subset U(1)_R$ is unbroken. The datum of the conjugacy class, together with a choice of $\zeta_1$, produces a candidate dimension $k$-tuple
$\{\Delta_1,\cdots,\Delta_k\}$ by eqn.\eqref{kkkazqw}.  
 It is not guaranteed that $\{\Delta_1,\cdots,\Delta_k\}\in \widehat{\Xi}(k)^k$, and $k$-tuples which do not belong to $\widehat{\Xi}(k)^k$ should be discarded. At this point we must also impose consistency between the various normal rays $M_\ast$. This reduce the list of allowed $k$-tuple even further. While we have no proof that all the surviving $k$-tuples are actually realized by some CSG, the experience suggests that this algorithm produces few ``spurious'' $k$-tuples, \emph{if any.} As an illustration, we now run the algorithm in detail for $k=2$.
Ranks $k=3$ and $k=4$ may be found in the tables of section 6.

\subsubsection{$\{\Delta_1,\Delta_2\}$ for rank-2 SCFTs}\label{kkkxxx3}

The possible regular characteristic polynomial are $\Phi_3\Phi_4$, $\Phi_3\Phi_6$, $\Phi_4\Phi_6$, $\Phi_5$,
$\Phi_8$, $\Phi_{10}$, $\Phi_{12}$.
In table \ref{tttr} we write only the embeddings of the corresponding regular cyclic groups in $Sp(4,\Z)$ which lead to dimensions $\{\Delta_1,\Delta_2\}\in \widehat{\Xi}(2)^2$.
For instance, $\Phi_3\Phi_4$ does not have  any embedding with this property.
We have written separately the dimensions pairs associated to embeddings in groups $S(\Omega)_\Z$ with $\Omega$ a non-principal polarization. The fine points on the conjugacy classes of elliptic elements in the Siegel modular group $Sp(2k,\Z)$ will be discussed in the next section; table \ref{tttr} 
summarizes the results of that analysis in the special case $k=2$. 
\medskip

Consistency between the two rays $u_2=0$ and $u_1=0$ leads us to consider three  situations:
\begin{itemize}
\item[\textbf{RR}] pairs of dimensions which appear twice in table \ref{tttr}, once in the form $\{\Delta_1,\Delta_2\}$ and once in the form $\{\Delta_2,\Delta_1\}$ corresponding to the case of both rays being strongly regular;
\item[\textbf{RN}] a pair of dimensions $\{\Delta_1,\Delta_2\}$ where the second one $\Delta_2\in \Xi(1)$ corresponding to one strongly regular and one weakly regular of irregular ray;
\item[\textbf{NN}] $\{\Delta_1,\Delta_2\}\in \Xi(1)^2$ two weakly regular/irregular rays.
\end{itemize}

All rank-2 CSG with a conformal manifold of positive dimension must be of type \textbf{NN}. This holds, in particular, for the weakly coupled Lagrangian models which have 
\be
\{\Delta_1,\Delta_2\}=\{2,2\},\ \{2,3\},\ \{2,4\},\ \text{or }\{2,6\}.
\ee

The list of dimension pairs satisfying
\textbf{RR} is rather short:
$\{3/2,5/4\}$ and $\{10/7,8/7\}$ which correspond, respectively, to
AD of types $A_5$ and $A_4$. We have 17 dimension pairs of type \textbf{RN} which may be read from table \ref{tttr};  all of them but $\{5,4\}$, $\{8,6\}$ and $\{10/3,4\}$
have already appeared  in the literature as Coulomb dimensions of some CSG.
There are three \textbf{RN} dimension pairs
with $\Delta_i<2$:
$\{8/5,6/5\}$ (AD of type $D_5$),
$\{5/3,4/3\}$ (AD of type $D_6$) and
$\{10/9,4/3\}$ on which we shall comment in the next subsection.

\subsubsection{Comparing with Argyres \emph{et al.}\! refs.\!\cite{ACSW,ACS2}}

The authors of refs.\!\cite{ACSW,ACS2} have given a (possibly partial) classification of the dimension pairs $\{\Delta_1,\Delta_2\}$ which may appear in a rank 2 SCFT using quite different ideas. It is interesting to compare their list with the present arguments. To perform the comparison, we need to keep in mind the two \emph{caveat} in \S.\ref{kkkaz1204}.

\begin{table}
\begin{equation*}
 \begin{array}{cc|c||c|c}\hline\hline
 \Delta[v] &\Delta[u]& y^2=\cdots & M_v & M_u \\\hline
 10/7 & 8/7 & x^5+ux+v & RI\ \Z_{10} & RI\ \Z_8 \\ 
 8/5 & 6/5 & x^5+ux^2+vx & RI\ \Z_8 & Ir\ \Z_6 \\
 5/2 & 3/2 & x^5+ux^2+2uv x+v^2 & RI\ \Z_{5} & Ir\ \Z_3\\
 4 & 2 & \text{Lagrangian SCFT} &&\\
 10 & 4 & x^5+(ux+v)^3 & RI\ \Z_{10} & Ir\, \Z_4\\
 3/2 & 5/4 & x^6+ux+v & RR\ \Z_{6} & RI\ \Z_{5}\\
 5/3 & 4/3 & x^6+ux^2+vx & RI\ \Z_{5} & Ir\ \Z_4\\
 3 & 2 & \text{Lagrangian SCFT} &&\\
 5 & 3 & x^6+x(ux+v)^2 & RI\ \Z_{5} & Ir\ \Z_3\\
  10/3 & 4/3 & v^{-1}[x^6+v^2x(ux+2v)] & RI\ \Z_{10}\\ 
 6 & 2 & \text{Lagrangian SCFT}&\\\hline\hline 
 \end{array}
\end{equation*}
\caption{Geometries in refs.\!\cite{ACSW,ACS2} with univalued symplectic structure. First two columns give the Coulomb dimensions, third the family of hyperelliptic curves, third and fourth the regularity/irregularity of the rays along the axes together with the corresponding  unbroken $R$-symmetry.}\label{univalued}
\end{table}

Let us recall the framework of \cite{ACSW,ACS2}. They start from the fact that all families of rank 2 \emph{principally polarized} Abelian varieties are families of Jacobians of genus 2 hyperelliptic curves which they write in two ways
\be
y^2=v^{-r/s}\Big(x^5+\cdots\Big)\quad\text{and}\quad y^2=v^{-r/s}\Big(x^6+\cdots\Big),
\ee
 where $\cdots$ stand for certain polynomials in $x$, $u$, $v$ depending on the particular CSG which are listed in \cite{ACSW,ACS2}.  $u$, $v$ are the ``global'' (in their sense) coordinates in the conical Coulomb branch, with $v$ the operator of larger dimension. $0\leq r/s\leq 1$ is a rational number written in minimal terms, i.e.\! $(r,s)=1$. Their SW differential has the form
\be\label{kkkazbbb}
\lambda\equiv v\, \frac{dx}{y}+ u\, \frac{x\, dx}{y}+\text{$u$, $v$ independent}.
\ee
 The two rays $M_v=\{u=0\}$ and $M_u=\{v=0\}$ preserve a discrete $R$-symmetry which may be read for each CSG from the explicit polynomials in the large parenthesis. From the same expressions we may read if these rays are regular irreducible (RI), regular reducible (RR), or irregular (Ir).
 
We may distinguish their geometries in two classes. The first one is when $r/s\in \Z$, that is, the global pre-factor in the \textsc{rhs} of eqn.\eqref{kkkazbbb} is a univalued function of $v$. The geometries with this property listed in refs.\!\cite{ACSW,ACS2} are recalled in table \ref{univalued}
(we do not bother to discuss the Lagrangian models since the agreement with our results is obvious in this case). The first two columns are the dimensions of $v$ and $u$ as listed in refs.\!\cite{ACSW,ACS2}. We see that in all cases these dimensions belong to the intersection of the sets of dimensions associated with the cones $M_v$, $M_u$, yielding perfect agreement with our approach. Note that only dimensions consistent with a principal polarization appear, since this is an assumption in \cite{ACSW,ACS2}.
 
 The second class of geometries is  when $r/s\not\in\Z$, that is, a multi-valued prefactor. 
 As discussed in \cite{ACSW} these geometries lead to a susy central charge $Z$ which is well-defined up to a (locally constant) unobservable phase. Here the first remark of \S.\ref{kkkaz1204} applies: to get a univalued SW differential $\lambda$ we need to go to a finite cover where a suitable fractional power of $v$ becomes uni-valued. 
Then we consider as global coordinates on the cover the functions $(v^{(1-r/s)},u)$ and compare their dimensions 
with the ones in our table. 
This leads to the dimension list in table \ref{multiva}; we see that all dimension pairs agree with our table on the nose.

\begin{table} 
\begin{equation*}
 \begin{array}{cc|c}\hline\hline
 \Delta[v^{|1-r/s|}] &\Delta[u]& y^2=\cdots \\\hline
 4/3 & 10/9 & v^{-2/5}[x^5+v(5ux^2-15vx-6uv)]  \\ 
 8/5 & 6/5 &v^{-1/3}[x^5+vx(2ux+3v)] \ \Z_8  \\
 8/5 & 6/5 &v^{-1/3}[x^2-4v][x^3-2v(3x+2u)] \\
 5/2 & 3/2 & v^{-1/3}[x^5+v(2ux+3v)^2]\ \Z_{5} \\
 8/5 & 6/5 & v^{-2/3}[x^5+v^2x(ux+3v)]\\
5/2 & 3/2 & v^{-2/3}[x^5+v^2(ux+v)^2]  \\
 4 & 2 & v^{-1/2}[x^5+vx(ux+2x)^2]\\
10 & 4 & v^{-1/2}[x^5+v(ux+2v)^3]\\
10/9 & 4/3 & v^{-1/2}[x^6+vx(3ux+4v)]\\
 2 & 2 & v^{-1/2}[x^6+v(3ux+4v)^2]\\
 10/3 & 4/3 & v^{-3/2}[x^6+v^3x(ux+4v)]\\
 5/2 & 3 & v^{-2/3}[x^6+vx(2ux+3v)^2]\\
 6 & 2 & v^{-3/2}[x^6+v^3(ux+4v)^2]\\
 5 & 3 & v^{-4/3}[x^6+v^2x(ux+3v)^2]\\
 \hline\hline 
 \end{array}
\end{equation*}
\caption{The geometries in refs.\!\cite{ACSW,ACS2} with multivalued symplectic structure. First two columns contain the cover Coulomb dimensions; third one  the  hyperelliptic curves.} \label{multiva}
\end{table}

Of course, if the physically correct Coulomb branch is the geometrically natural covering which has a well-defined holomorphic symplectic structure $\Omega$ or its quotient considered by Argyres et al.\!  it is a question of physics not of geometry. There is one aspect that suggests that quotient of \cite{ACSW,ACS2} is the physical Coulomb branch: the dimension pair $\{10/9,4/3\}$ enters in their list (twice) only through quotient CSG. The dimensions of the two quotient geometries are, respectively, $\{20/9, 10/9\}$ and $\{20/9,4/3\}$. Now, the both covering dimensions are $<2$, and there is evidence that a consistent SCFT with all $\Delta_i<2$ should be an Argyres-Douglas model of type $ADE$; since 
$\{10/9,4/3\}$ does not correspond to such a model, we are inclined to think that physics requires to take a discrete quotient of the geometrically natural geometry, as the authors of \cite{ACSW,ACS2} do.

\section{Elliptic conjugacy classes in Siegel modular groups}

Listing the dimension $k$-tuples $\{\Delta_1,\cdots,\Delta_k\}$ has been reduced to understand the conjugacy classes of finite order elements inside the Siegel modular group $Sp(2k,\Z)$ or, in case of more general polarizations (non-trivial charge multipliers $e_i$, see eqn.\eqref{chargemult}) in the commensurable arithmetic group $S(\Omega)_\Z$.
In this section we give an explicit description of such classes. Readers not interested in Number Theoretic subtleties may skip the section.

\subsection{Preliminaries}

We write $\Omega$ for the $2k\times 2k$ symplectic matrix in normal form and $\langle -,-\rangle$ for the corresponding skew-symmetric bilinear pairing.

\subsubsection{Elements of $Sp(2k,\Z)$ with spectral radius 1}

$m\in Sp(2k,\Z)$ has (spectral) radius 1 iff its characteristic polynomial is a product of cyclotomic ones
\be\label{whichcharpoly}
\det[z-m]=\prod_{d\in I}\Phi_d(z)^{s_d}\,\qquad \begin{aligned}&I=\{d_1,\cdots, d_{|I|}\}\subset\mathbb{N},\\ &s_d\in\Z_{\geq1},\quad\sum_{d\in I} s_d\,\phi(d)=2k,\end{aligned}
\ee
that is, if all its eigenvalues are roots of unit. An element $m$ of spectral radius 1 is \emph{semi-simple} (over $\C$) iff its minimal polynomial is \textit{square-free}, i.e. 
\be
\prod_{d\in I}\Phi_d(m)=0.
\ee
A semisimple element $m$ of  radius 1   has finite order,
$m^N=1$ with $N=\mathrm{lcm}\{d\in I\}$. Conversely, all elements of finite order are semi-simple of radius 1.

\begin{lem}\label{llkas} Let $m\in Sp(2k,\Z)$ be of finite order.\\
{\bf 1)}  There exists $R\in Sp(2k,\mathbb{Q})$ which sets $m$ in a  block-diagonal form over $\mathbb{Q}$ 
\be\label{blochdia}R\, m\, R^{-1}=\mathrm{diag}(m_1,\cdots, m_{|I|}),\quad \Phi_d(m_d)=0,\quad m_d\in Sp(s_d\phi(d),\Z).\ee
{\bf 2)} Suppose that no ratio $d_i/d_j$ ($i\neq j$) is a prime power. Then
in \eqref{blochdia} we may choose $R\in Sp(2k,\Z)$. More generally, if $\ell$ is a prime such that $\ell^r\neq d_i/d_j$ for all $i,j$ and $r\neq 0$, we may choose $R\in Sp(2k,\Z_\ell)$.
\end{lem}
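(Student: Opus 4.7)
For part \textbf{1)}, I would start by noting that $m$ has finite order, hence is semisimple, so its minimal polynomial is the square-free product $\prod_{d\in I} \Phi_d(z)$. By Bezout over $\mathbb{Q}[z]$, the cyclotomic polynomials $\Phi_d$ are pairwise coprime, so the Chinese Remainder Theorem yields a canonical $m$-invariant decomposition
\be
\mathbb{Q}^{2k}=\bigoplus_{d\in I} V_d,\qquad V_d:=\ker\Phi_d(m),\qquad \dim_\mathbb{Q} V_d=s_d\,\phi(d).
\ee
The key point is that this decomposition is $\Omega$-orthogonal: if $v\in V_d\otimes\C$ is an eigenvector with eigenvalue $\zeta_d$ and $w\in V_{d'}\otimes\C$ an eigenvector with eigenvalue $\zeta_{d'}$ (both primitive roots of unit of the respective orders), then $\langle v,w\rangle=\langle mv,mw\rangle=\zeta_d\zeta_{d'}\langle v,w\rangle$; since $d\neq d'$ implies $\zeta_d\zeta_{d'}\neq 1$, we get $\langle v,w\rangle=0$. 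Hence $\Omega|_{V_d}$ is non-degenerate for each $d$ (which incidentally forces $s_1,s_2$ to be even). Any non-degenerate alternating form on a $\mathbb{Q}$-vector space of even dimension admits a symplectic basis over $\mathbb{Q}$; concatenating symplectic bases for the $V_d$'s gives a symplectic basis of $\mathbb{Q}^{2k}$ which block-diagonalizes $m$. The change-of-basis matrix is the sought-after $R\in Sp(2k,\mathbb{Q})$.

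For part \textbf{2)}, I would refine the above construction integrally. The tool is the classical resultant formula: for distinct $d,d'\in\mathbb{N}$,
\be
\mathrm{Res}\bigl(\Phi_d,\Phi_{d'}\bigr)=\begin{cases} p^{\phi(\min(d,d'))} &\text{if }\max(d,d')/\min(d,d')=p^r,\ r\geq 1,\\ 1 &\text{otherwise}.\end{cases}
\ee
Under the hypothesis that no ratio $d_i/d_j$ is a prime power, all these resultants are units in $\Z$, so the ideals $(\Phi_{d_i}(z))$ and $(\Phi_{d_j}(z))$ are comaximal in $\Z[z]$. By CRT in $\Z[z]$ one obtains a ring isomorphism
\be
\Z[z]\Big/\prod_{d\in I}\Phi_d(z)\;\xrightarrow{\sim}\;\prod_{d\in I}\Z[z]/\Phi_d(z)=\prod_{d\in I}\Z[\zeta_d],
\ee
which in turn provides orthogonal idempotents $e_d\in \Z[m]$ projecting onto $V_d$. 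Setting $L_d:=e_d\Lambda$ with $\Lambda:=\Z^{2k}$ gives an $m$-invariant, $\Omega$-orthogonal decomposition $\Lambda=\oplus_d L_d$ of lattices. For the more general statement localizing at a prime $\ell$, the same argument applies verbatim over $\Z_\ell$: the hypothesis $\ell^r\neq d_i/d_j$ ensures that each resultant is a unit in $\Z_\ell$, hence $\Z_\ell[z]/\prod\Phi_d\cong\prod_d \Z_\ell[\zeta_d]$, and one obtains the integral decomposition over $\Z_\ell$.

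It remains to upgrade each integral summand $L_d$ to a symplectic basis over $\Z$ (respectively $\Z_\ell$). Since $\Omega$ is unimodular on $\Lambda$ and the decomposition $\Lambda=\oplus L_d$ is $\Omega$-orthogonal, multiplicativity of determinants $\det(\Omega|_\Lambda)=\prod_d\det(\Omega|_{L_d})$ forces each $\det(\Omega|_{L_d})$ to be a unit, so each $(L_d,\Omega|_{L_d})$ is itself a unimodular symplectic lattice. The standard theorem that every unimodular alternating form on a free $\Z$-module (resp.\! $\Z_\ell$-module) of even rank admits a symplectic basis then produces integral symplectic bases on each block; concatenating these yields the desired $R\in Sp(2k,\Z)$ (resp.\! $R\in Sp(2k,\Z_\ell)$) conjugating $m$ into block-diagonal form.

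The main obstacle in this plan is the integral CRT step of part \textbf{2)}: while the rational decomposition is automatic from semisimplicity, the integral splitting genuinely requires the sharp resultant computation for cyclotomic polynomials, which is precisely what translates the arithmetic hypothesis on the ratios $d_i/d_j$ into the comaximality of the ideals in $\Z[z]$.
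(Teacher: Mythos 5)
Your proof plan is correct and follows essentially the same route as the paper's: a primary decomposition of $\mathbb{Q}^{2k}$ (resp.\! $\Z^{2k}$) under the $\Z[m]$-action obtained from the Chinese Remainder Theorem, with the integrality of the splitting controlled by the resultant formula for cyclotomic polynomials. The only difference is cosmetic: the paper writes down explicit "idempotent-like" matrices $\Pi_d$ as products of $\Phi_e(m^a)$ and checks their self-adjointness $\Pi_d\Omega=\Omega\Pi_d^t$ by hand, computing the scaling factor $\varrho_d$ in closed form as a product of resultants, whereas you invoke abstract CRT for $\Z[z]/\prod\Phi_d$ and get the idempotents $e_d\in\Z[m]$ implicitly; your use of $\det(\Omega|_\Lambda)=\prod_d\det(\Omega|_{L_d})$ to conclude that each summand carries a unimodular alternating form is a clean way to finish that the paper leaves implicit.
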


\begin{proof} For each $d\in I$ we define
the integral $2k\times 2k$ matrices
\be
\Pi_d= \prod_{a\in (\Z/d\Z)^\times}\prod_{e\in I\atop e\neq d} \Big[m^{-a \phi(e) t(e)/2}\,\Phi_e(m^a)^{t(e)}\Big],\qquad\text{where } t(e)=\begin{cases}2 & e=1,2\\
1 &\text{otherwise.}
\end{cases}.
\ee
Since $m\Omega=\Omega m^{-t}$,
and
\be (1/z)^{\phi(e) t(e)/2}\, \Phi_e(z)^{t(e)}= z^{\phi(e) t(e)/2}\,\Phi_e(1/z)^{t(e)},
\ee
 we have
$\Pi_d\Omega=\Omega\Pi_d^t$.
Up to a rational multiple, the 
$\Pi_d$ form a complete set of orthogonal idempotents over $\BQ$
\be
\boldsymbol{1}=\sum_{d\in I}\frac{\Pi_d}{\varrho_d},\qquad \frac{\Pi_d}{\varrho_d}\cdot\frac{\Pi_e}{\varrho_e}=\delta_{d,e}\,\frac{\Pi_d}{\varrho_d}
\ee
compatible with the skew-symmetric pairing $\Omega$. Then they split over $\BQ$ the representation in the block diagonal form of item \textbf{1)}. The splitting is over $\Z$ iff the $\varrho_d$ are $\pm1$. We have
\be
\varrho_d=\pm \prod_{e\in I\atop e\neq d} \mathrm{R}(\Phi_d,\Phi_e)^{t(e)}
\ee
where $\mathrm{R}(P,Q)$ stands for the resultant of the two polynomials $P$ and $Q$. Under the assumption that $d/e$, $e/d$ are not prime powers, $\varrho_d=\pm1$ 
\cite{apo}.
In facts, $\varrho_d$ is divisible only by the primes $p$ such that there is $e\in I$ with
$d/e=p^r$, $0\neq r\in\Z$ \cite{apo}.
\end{proof}

In other words, if no ratio $d_i/d_j$ is a non-trivial prime power, all embeddings $m\hookrightarrow Sp(2k,\Z)$ are block-diagonal up to equivalence. If some $d_i/d_j$ is a prime power, in addition to the block-diagonal ones, we may have other inequivalent embeddings. 
We shall return to this aspect after studying the case that the minimal polynomial is irreducible over $\BQ$.

\subsubsection{Regular elliptic elements of the Siegel modular group}

We recall that a finite-order element $m\in Sp(2k,\Z)$ is regular iff  the eigenvalues $\{\zeta_1,\cdots,\zeta_k\}$ of $\cu(m)\equiv C\boldsymbol{\tau}+D$ satisfy $\zeta_i\zeta_j\neq1$.
The spectrum $\mathfrak{S}\equiv \{\zeta_i\}$ of $\cu(m)$ will be called the \emph{spectral invariant} of the regular elliptic element $m\in Sp(2k,\Z)$. It is a subset of $k$ roots, $\{\zeta_i\}$, out of the $2k$ ones of $\det[z-m]$ with the property
\be\label{kkkaqwz}
\zeta_i\zeta_j\neq1\qquad i,j=1,\dots, k.
\ee

\begin{rem} Two regular elliptic elements $m,m^\prime\in Sp(2k,\Z)$ which are conjugate in $Sp(2k,\R)$ but not in $Sp(2k,\Z)$ have the same spectral invariant, $\mathfrak{S}=\mathfrak{S}^\prime$ but their fixed points $\boldsymbol{\tau}$ and $\boldsymbol{\tau}^\prime$ are inequivalent periods. Two elements are fully equivalent (and should be identified) iff they are conjugate in $Sp(2k,\Z)$. However the dimension spectrum $\{\Delta_i\}$ depends only on the spectral invariant $\mathfrak{S}$ of the monodromy and hence only on its $Sp(2k,\R)$-conjugacy class.  
\end{rem}

\subsubsection{The spectral invariant as a sign function}\label{mumumqw}

We focus on a $m\in 
Sp(2k,\Z)$ whose minimal polynomial is $\BQ$-irreducible,
\be\label{7lllaq}
\det[z-m]=\Phi_d(z)^{s},\quad s\,\phi(d)=2k, \quad s\in\mathbb{N}.
\ee
 We fix a primitive $d$-root, $\zeta$, and write
$\BK\equiv \BQ[\zeta]$ for the corresponding cyclotomic field and $\Bbbk=\BQ[\zeta+\zeta^{-1}]$ for its maximal totally real subfield, $\mathsf{Gal}(\BK/\Bbbk)=\{\pm1\}$.

Let 
$\psi^\alpha_1\in \BK$ ($\alpha=1,\dots,s$) be a basis of the $\zeta$-eigenspace of the matrix $m$. Let $\sigma\in \mathsf{Gal}(\BK/\mathbb{Q})\cong (\Z/d\Z)^\times$; then $\psi^\alpha_\sigma\equiv \sigma(\psi_1^\alpha)$ form a basis of the 
$\sigma(\zeta)$-eigenspace. 
We write $\langle \psi_1^\alpha,\psi^\beta_{-1}\rangle=t^{\alpha\beta}\in \BK(s)$. Without loss of generality, we may assume $t^{\alpha\beta}$ to be diagonal
$t^{\alpha\beta}=t_\alpha\delta^{\alpha\beta}$ with $t_\alpha\in \BK$.
$t_\alpha$ is odd (i.e.\! purely imaginary) $\bar t_\alpha=-t_\alpha$. The symplectic structure is given by a 2-form 
\be
2\,\Omega=i\sum_\alpha\sum_{\sigma\in \mathsf{Gal}(\BK/\BQ)} \sigma(t_\alpha)\,\psi^\alpha_{-\sigma}\wedge \psi^\alpha_{\sigma}. 
\ee
Thus $m$ is the direct sum of $s$ (possibly inequivalent) embeddings $\Z_d\to Sp(\phi(d),\R)$. For each summand 
we define the \underline{odd} \emph{sign (function)}
\be
\mathsf{sign}_\alpha\colon \mathsf{Gal}(\BK/\BQ)\to \{\pm 1\},\qquad \sigma\mapsto \frac{\sigma(t_\alpha)}{i|\sigma(t_\alpha)|}.
\ee
The spectral invariant of the $\alpha$-th summand is
\be
\mathfrak{S}_\alpha\equiv \Big\{\;\zeta^\sigma\ \Big|\ \sigma\in \mathsf{Gal}(\BK/\BQ)\ \text{such that }\mathsf{sign}_\alpha(\sigma)=+1\Big\}.
\ee
It is obvious that $\mathfrak{S}_\alpha$ satisfies condition \eqref{kkkaqwz}. We shall denote by the same symbol, $\mathfrak{S}_\alpha$, both the spectral invariant and the corresponding sing function.
A semi-simple element $m$ satisfying \eqref{7lllaq} is regular iff the spectral invariant is the same for all its direct summands, i.e.\! $\mathfrak{S}_\alpha=\mathfrak{S}_\beta$.

\begin{rem} For a single regular block with characteristic polynomial $\Phi_d(z)$ the number of sub-sets $\mathfrak{S}\subset\{\zeta^a\colon a\in(\Z/d\Z)^\times\}$ satisfying condition \eqref{kkkaqwz} is $2^{\phi(d)/2}$. However it is not true (in general) that all such sub-sets (i.e.\! sign functions) are realized as spectral invariant of some embedding $\Z_d\to Sp(\phi(d),\Z)$.
For instance, for $d=12$ we have $2^{\phi(12)/2}=4$, but only 2 sign functions are produced by actual embeddings. The set $\{\mathfrak{S}\}$
of sign functions which do are realized satisfies the obvious condition
\be
\sigma\mapsto\mathsf{sign}(\sigma)\in \{\mathfrak{S}\}\ \Rightarrow\ \sigma\mapsto \mathsf{sign}(\tau\sigma)\in \{\mathfrak{S}\}\qquad \forall\; \tau\in\mathsf{Gal}(\BK/\BQ).
\ee
In particular if $\mathsf{sign}$ is realized $-\mathsf{sign}$ is also realized.\end{rem} 
\medskip

In \S.\ref{kkktuple} that the list of possible Coulomb branch dimensions $\{\Delta_1,\cdots,\Delta_k\}$ is determined from the $Sp(2k,\R)$-conjugacy classes of regular elements of $Sp(2k,\Z)$ (or the corresponding arithmetic group for non principal polarizations) through their sign function invariant $\mathfrak{S}$. Then our main problem at this point is to understand the set $\{\mathfrak{S}\}$ of  signs which do are realized for a given polarization. This is the next task.

 \subsection{Cyclic subgroups of integral matrix groups}
 
In this section we review the theory of the embedding of cyclic groups into groups of matrices having integral coefficients in a language convenient for our purposes (also providing explicit expressions for the matrices). See also \cite{newman,japonese}. Our basic goal is to describe the set $\{\mathfrak{S}\}$ of signs which do appear and more generally the regular elliptic elements of the Siegel modular group.

 \subsubsection{Embeddings $\Z_n\hookrightarrow GL(2k,\Z)$ vs.\! fractional ideals}\label{ssssq}
 
We focus on a single block, that is, we consider a matrix $m\in GL(2k,\Z)$ with minimal polynomial $\Phi_n(z)$. Then $2k\equiv s\,\phi(n)$ for some $s\in\mathbb{N}$.

\paragraph{Notations.} We fix once and for all a primitive $n$-root of unity $\zeta\in \C$, and write
$\BK\equiv \mathbb{Q}[\zeta]$ for the $n$-th cyclotomic field, $\mathfrak{O}\equiv \Z[\zeta]$ for its rings of integers,
$\Bbbk\equiv \mathbb{Q}[\zeta+\zeta^{-1}]$ for its maximal real subfield, and $\mathfrak{o}\equiv\Z[\zeta+\zeta^{-1}]$ for the ring of algebraic integers in $\Bbbk$. $\mathsf{Gal}(\BK/\Bbbk)\cong\Z_2$, the non-trivial element $\iota$  being complex conjugation, $\iota(x)=\bar x$. $\mathsf{Gal}(\Bbbk/\mathbb{Q})\cong (\Z/n\Z)^\times/\{\pm 1\}$. We write $\breve{n}$ for the \emph{conductor} of the field $\BK$:
\be
\bn=\begin{cases} n &\text{if }n\neq2\mod4\\ n/2 &\text{otherwise.}\end{cases}
\ee
We write $C_\BK$ ($C_\Bbbk$) for the group of ideal classes in $\BK$ (resp.\! in $\Bbbk$). $N$ will denote the relative norm
$\BK\to\Bbbk$ extended to the groups of fractional ideals $\mathfrak{I}_\BK\xrightarrow{N} \mathfrak{\;I\;}_\Bbbk$ in the usual way \cite{algn1,algn2}.
\bigskip

The embedding $\Z_n\hookrightarrow GL(2k,\Z)$ makes $\Z^{2k}$ into a finitely-generated torsion-less 
$\mathfrak{O}$-module $\cm$, multiplication by $\zeta$ being given by $m$. 
Conversely, any finitely-generated torsion-less $\mathfrak{O}$-module $\cm$ defines an embedding $\Z_n\hookrightarrow GL(2k,\Z)$ where $2k$ is the rank of $\cm$ seen as a (free) $\Z$-module.
The ring of cyclotomic integers $\mathfrak{O}$ is a Dedekind domain. The following statement holds for all such domains:

\begin{pro}[see \cite{algn1}]\label{kkkaqweX} A finitely-generated torsion-less  module $\cm$ over the Dedekind domain $\fO$ has the form $\mathfrak{a}_1\oplus\mathfrak{a}_2\oplus\cdots\oplus\mathfrak{a}_s$, where $\mathfrak{a}_i$ are fractional ideals in $\BK$. Two modules $\mathfrak{a}_1\oplus\mathfrak{a}_2\oplus\cdots\oplus\mathfrak{a}_s$
and $\mathfrak{b}_1\oplus\mathfrak{b}_2\oplus\cdots\oplus\mathfrak{b}_t$ are isomorphic if and only if  $s=t$
and the ideal class $\prod_i\mathfrak{a}_i\mathfrak{b}_i^{-1}$ is trivial. In particular we may always set $\cm=\fO\oplus\cdots\oplus\fO\oplus \mathfrak{a}\equiv (1)^{\oplus (s-1)}\oplus \mathfrak{a}$. Then the class of $\mathfrak{a}$ yields a one-to-one correspondence
\be\label{corrrr}
\big\{\text{$GL(2k,\Z)$-conjugacy classes of embeddings }\Z_n\hookrightarrow GL(2k,\Z)\big\}\ \overset{\text{1-1}}{\longleftrightarrow}\ C_\BK.
\ee
\end{pro}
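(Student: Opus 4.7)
The plan is to reduce the statement to the classical Steinitz structure theorem for finitely-generated torsion-free modules over a Dedekind domain, applied to $\fO=\Z[\zeta]$, and then translate the module-theoretic classification into a conjugacy statement for integral matrices.

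First I would establish the structural decomposition. Since $\fO$ is a Dedekind domain, a finitely-generated torsion-free $\fO$-module $\cm$ is projective (this is the key property of Dedekind domains: torsion-free f.g.\! modules are projective, because the ring has global dimension $\leq 1$). I would then prove by induction on the rank $s=\dim_\BK(\cm\otimes_\fO\BK)$ that any such $\cm$ is isomorphic to a direct sum $\mathfrak{a}_1\oplus\cdots\oplus\mathfrak{a}_s$ of fractional ideals. The induction step picks a nonzero $\BK$-linear functional $\varphi\colon\cm\to\BK$, takes the image $\mathfrak{a}=\varphi(\cm)$, which is a finitely-generated $\fO$-submodule of $\BK$ (hence a fractional ideal); projectivity of $\mathfrak{a}$ splits the surjection $\cm\twoheadrightarrow \mathfrak{a}$, so $\cm\cong\ker\varphi\oplus\mathfrak{a}$, and the kernel has rank $s-1$. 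This yields the existence half of the statement.

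Next I would prove the reduction $\fO\oplus\cdots\oplus\fO\oplus \mathfrak{a}_1\cdots\mathfrak{a}_s$. The workhorse is the isomorphism of $\fO$-modules
\be
\mathfrak{a}\oplus\mathfrak{b}\;\cong\;\fO\oplus \mathfrak{a}\mathfrak{b},
\ee
for any two fractional ideals $\mathfrak{a},\mathfrak{b}$. I would prove this by first passing to coprime integral representatives $\mathfrak{a}'\sim\mathfrak{a}$ and $\mathfrak{b}'\sim\mathfrak{b}$ (possible since the class group is the obstruction to principality and one may multiply by suitable principal ideals), then using the short exact sequence $0\to\mathfrak{a}'\cap\mathfrak{b}'\to\mathfrak{a}'\oplus\mathfrak{b}'\to\mathfrak{a}'+\mathfrak{b}'\to0$ with $\mathfrak{a}'\cap\mathfrak{b}'=\mathfrak{a}'\mathfrak{b}'$ and $\mathfrak{a}'+\mathfrak{b}'=\fO$. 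Iterating absorbs all but one ideal factor into free summands, leaving the normal form $(1)^{\oplus(s-1)}\oplus\mathfrak{a}$ with $\mathfrak{a}=\prod_i\mathfrak{a}_i$.

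For uniqueness, I would introduce the \emph{Steinitz invariant} $\det_\fO\cm:=\wedge^s_\fO\cm$, a rank-one projective module whose isomorphism class is a well-defined element of the Picard group $\mathrm{Pic}(\fO)=C_\BK$. A direct computation gives $\wedge^s(\mathfrak{a}_1\oplus\cdots\oplus\mathfrak{a}_s)\cong\mathfrak{a}_1\cdots\mathfrak{a}_s$ as fractional ideals, so two decompositions give isomorphic modules iff the products $\prod_i\mathfrak{a}_i$ and $\prod_i\mathfrak{b}_i$ define the same class in $C_\BK$, i.e.\! iff $\prod_i\mathfrak{a}_i\mathfrak{b}_i^{-1}$ is principal. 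Combined with invariance of rank (read off from $\cm\otimes_\fO\BK$), this gives the full classification of $\fO$-modules.

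Finally, to translate into matrix language: an embedding $\rho\colon\Z_n\hookrightarrow GL(2k,\Z)$ endows $\Z^{2k}$ with the structure of an $\fO$-module $\cm_\rho$ via $\zeta\cdot v:=\rho(\zeta)v$; the $\fO$-module is torsion-free (it is $\Z$-free) and has $\Z$-rank $2k=s\,\phi(n)$, so by what precedes it has $\fO$-rank $s$. A $GL(2k,\Z)$-conjugation $\rho'=g\rho g^{-1}$ is precisely an $\fO$-linear isomorphism $\cm_\rho\xrightarrow{\sim}\cm_{\rho'}$; conversely, any $\fO$-isomorphism is an invertible integral $\Z$-linear map intertwining the two actions of $\zeta$. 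Combined with the structure theorem above, this gives the bijection \eqref{corrrr} with the class of $\mathfrak{a}=\det_\fO\cm_\rho$. The main obstacle I anticipate is keeping the proof of the Steinitz identity $\mathfrak{a}\oplus\mathfrak{b}\cong\fO\oplus\mathfrak{ab}$ fully rigorous (the coprime-representative step and the splitting of the short exact sequence); everything else is standard linear algebra once projectivity is in hand.
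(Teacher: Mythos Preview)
Your proposal is correct and in fact more complete than what the paper does. The paper cites Fr\"ohlich--Taylor for the general Steinitz theorem and only sketches the special case $s=1$, taking a concrete, constructive route: given a fractional ideal $\mathfrak{a}$, it picks a $\Z$-basis $\{\omega_a\}$, defines the integral matrix $m$ by $\zeta\,\omega_a=m_{ab}\,\omega_b$, and checks that the conjugacy class of $m$ depends only on the ideal class $[\mathfrak{a}]$; the inverse map sends $m$ to the fractional ideal generated by the components of a $\zeta$-eigenvector.

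Your argument is the standard abstract proof of the full Steinitz classification (projectivity, inductive splitting, the identity $\mathfrak{a}\oplus\mathfrak{b}\cong\fO\oplus\mathfrak{ab}$, and the determinant $\wedge^s\cm$ as invariant), followed by the dictionary ``$\fO$-module structure on $\Z^{2k}$ $\leftrightarrow$ integral matrix up to $GL(2k,\Z)$-conjugacy''. This proves the proposition in the stated generality, whereas the paper's sketch only handles $s=1$. On the other hand, the paper's eigenvector description is what is actually used later (e.g.\ in \S\ref{exmatr}) to write down explicit embedding matrices and their symplectic structures; your abstract argument does not directly supply those, though it is easy to extract them once the correspondence is set up.
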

In order to describe the  explicit embeddings we sketch the proof in the special case $s=1$.
 \begin{proof} Let $\mathfrak{a}\subset \BK$ be a fractional ideal. In particular $\mathfrak{a}$ is a torsion-free finitely generated $\Z$-module, hence a lattice isogeneous to $\mathfrak{O}$, and thus of rank $2k$.
 Choosing generators, we may write
 $\mathfrak{a}=\bigoplus_{a=1}^{2k}\Z \,\omega_a$, with $\omega_a\in \BK$.
Now $\zeta\,\omega_a\in \mathfrak{a}$, and hence there is an \emph{integral} $2k\times 2k$ matrix $m$ such that
\be
\zeta\omega_a= m_{ab}\,\omega_b. 
\ee 
The minimal polynomial of $m$ is the $n$-th cyclotomic polynomial, $\Phi_n(m)=0$. Thus the matrix $m$ yields an explicit embedding of $\Z_n$ into $GL(2k,\Z)$. Had we chosen a different set of generators for $\mathfrak{a}$, $\omega^\prime_a$, we would had gotten an integral matrix $m^\prime$ which differs from $m$ by conjugacy in $GL(2k,\Z)$. Indeed, 
$\omega^\prime_a= A_{ab}\,\omega_b$, for some $A\in GL(2k,\Z)$. Thus the map $\mathfrak{a}\mapsto\text{(conjugacy class of }m)$ is independent of all choices.
By construction, the vector $\boldsymbol{\omega}\equiv (\omega_1,\cdots, \omega_{2k})\in \BK^{2k}$ is the eigenvector of $m$ associated to the eigenvalue $\zeta$. The eigenvector associated to the eigenvalue $\sigma(\zeta)$, $\sigma\in \mathsf{Gal}(\BK/\mathbb{Q})$, is the $\sigma(\boldsymbol{\omega})$.

Conversely, if $m\in GL(2k,\Z)$ with minimal polynomial $\Phi_n$, consider an eigenvector $\boldsymbol{\omega}\equiv (\omega_1,\cdots, \omega_{2k})\in \BK^{2k}$ associated to the eigenvalue $\zeta$, and set $\mathfrak{a}=\bigoplus_{a=1}^{2k} \Z \omega_a$. Clearly, if $\boldsymbol{\omega}$ is such an eigenvector so is $\mu\,\boldsymbol{\omega}$ for all $\mu\in\BK^\times$. Hence $\mathfrak{a}$ and $(\mu)\mathfrak{a}$ ($\mu\in \BK^\times$) describe the same conjugacy class of integral matrices $m$, that is, the conjugacy class of $m$ depends only on the class of the fractional ideal $\mathfrak{a}$ in $\C_\BK=\mathfrak{I}_\BK/(\BK^\times)$.
 \end{proof}
 
 The action of $\zeta$ on the module
 $\bigoplus_{i=1}^s\mathfrak{a}_i$ is unitary for the natural Hermitian form
\be\label{kkkaqw21j}
 \langle a_i, b_i\rangle=\sum_{i=1}^s \mathrm{Tr}_{\BK/\BQ}(\bar a_i b_i).
\ee
 
 \begin{rem}
A $\fO$-module $\cm$ gives a unitary  representation of $\Z_n$ on the associated $\C$-space $V_\cm=\cm\otimes_\Z\C$ which corresponds to the natural embedding
$GL(2k,\Z)\subset GL(2k,\C)$.
 \end{rem}
 
 \subsubsection{The dual embedding}
 
 Given an embedding of $\Z_n\hookrightarrow GL(2k,\Z)$, generated by the integral matrix $m$, we have a second embedding, the \emph{dual one,} 
where the generator is represented by the integral matrix $(m^t)^{-1}$. If we write the matrices in an unitary basis of $V_\cm$ (instead of an integral one), the two representations of $\Z_n$ are related by complex conjugation.

Let $\cm=\bigoplus_i\mathfrak{a}_i$ be an $\fO$-module associated to the embedding $m$ as in \textbf{Proposition \ref{kkkaqweX}}; then the dual $\fO$-module $\cm^\vee$ is associated to the dual embedding $(m^t)^{-1}$. $\cm^\vee$ is uniquely determined up to isomorphism.

\begin{lem}
$\cm^\vee=\bigoplus_i \mathfrak{a}_i^\ast$ where $\mathfrak{a}_i^\ast$ is the  dual (\emph{a.k.a.}\! complementary) fractional  ideal of $\mathfrak{a}_i$ (with respect to \eqref{kkkaqw21j}). One has {\rm\cite{langalge,hasse}}
\be\label{jjzaq1}
\mathfrak{a}_i^*= \frac{1}{(\,\overline{\Phi^\prime_n(\zeta)}\,)\,\bar{\mathfrak{a}}_i}
\ee  
where $\Phi^\prime_n(z)$ is the derivative of $\Phi_n(z)$ and $\bar{\mathfrak{a}}_i$ is the complex conjugate ideal of $\mathfrak{a}_i$.
\end{lem}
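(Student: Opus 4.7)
The plan is to split the statement into two pieces: the orthogonal decomposition $\mathcal{M}^\vee=\bigoplus_i \mathfrak{a}_i^*$, and the explicit formula for a single summand. For the first piece I would observe that the Hermitian pairing \eqref{kkkaqw21j} is literally block-diagonal with respect to the decomposition $\mathcal{M}=\bigoplus_i\mathfrak{a}_i$, so distinct summands $\mathfrak{a}_i,\mathfrak{a}_j\subset\BK$ are mutually orthogonal and the dual $\Z$-lattice splits accordingly. This reduces the whole lemma to showing, for a single fractional ideal $\mathfrak{a}\subset\BK$, that
\be
\mathfrak{a}^*:=\bigl\{y\in\BK\ :\ \mathrm{Tr}_{\BK/\BQ}(\bar{x}\,y)\in\Z\ \ \forall\,x\in\mathfrak{a}\bigr\}=\frac{1}{(\overline{\Phi'_n(\zeta)})\,\bar{\mathfrak{a}}}.
\ee

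The strategy for the single-ideal identity is to reduce it to the standard complementary-ideal formula. The substitution $x'=\bar{x}$, which bijects $\mathfrak{a}$ with $\bar{\mathfrak{a}}$, rewrites the defining condition as $\mathrm{Tr}_{\BK/\BQ}(x'y)\in\Z$ for all $x'\in\bar{\mathfrak{a}}$; hence $\mathfrak{a}^*=(\bar{\mathfrak{a}})^{\#}$, where $(\cdot)^{\#}$ denotes the classical complementary module for the \emph{symmetric} trace pairing $(x,y)\mapsto\mathrm{Tr}_{\BK/\BQ}(xy)$. From here I would invoke the standard result
\be
\mathfrak{b}^{\#}=\mathfrak{b}^{-1}\,\mathfrak{d}_{\BK/\BQ}^{-1},
\ee
where $\mathfrak{d}_{\BK/\BQ}$ is the different ideal of $\BK/\BQ$. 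This follows at once from the defining identity $\fO^{\#}=\mathfrak{d}_{\BK/\BQ}^{-1}$ together with the scaling property $(\alpha\mathfrak{b})^{\#}=\alpha^{-1}\mathfrak{b}^{\#}$ for $\alpha\in\BK^\times$, applied to any principal presentation of $\mathfrak{b}$ locally at each prime.

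The final step will be to evaluate the different explicitly. Here one uses the classical theorem that the ring of integers of $\BK=\BQ[\zeta]$ is exactly $\fO=\Z[\zeta]$; hence $\fO$ is monogenic over $\Z$ with generator $\zeta$, whose minimal polynomial is $\Phi_n$. The standard formula for the different of a monogenic ring extension then yields $\mathfrak{d}_{\BK/\BQ}=(\Phi'_n(\zeta))$. Since $\BK/\BQ$ is Galois, the different is Galois-stable as an ideal, so $(\Phi'_n(\zeta))=(\overline{\Phi'_n(\zeta)})$; combining this with the previous step gives the stated formula.

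The only genuinely non-elementary ingredient is the monogenicity $\fO=\Z[\zeta]$, which is what produces $\Phi'_n(\zeta)$ as the correct explicit representative of the different; everything else is routine. The main source of potential confusion to watch out for will be keeping the conjugations straight: switching from the Hermitian pairing $\mathrm{Tr}(\bar{x}y)$ to the symmetric one $\mathrm{Tr}(xy)$ introduces exactly one bar on $\mathfrak{a}$, which in turn accounts for the appearance of $\bar{\mathfrak{a}}$ (rather than $\mathfrak{a}$) and of $\overline{\Phi'_n(\zeta)}$ (rather than $\Phi'_n(\zeta)$) in the final answer \eqref{jjzaq1}.
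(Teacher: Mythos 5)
Your proof is correct, and since the paper itself gives no argument for this lemma (it simply states the formula and cites Lang and Hasse), your fill-in supplies exactly the standard reasoning those references contain: block-orthogonality of the Hermitian trace form, the bar-substitution relating the Hermitian dual to the classical complementary module, the identity $\mathfrak{b}^{\#}=\mathfrak{b}^{-1}\mathfrak{d}_{\BK/\BQ}^{-1}$, and the evaluation $\mathfrak{d}_{\BK/\BQ}=(\Phi_n'(\zeta))$ from monogenicity of $\fO=\Z[\zeta]$. Your remark on Galois stability of the different is the right way to reconcile the $\overline{\Phi_n'(\zeta)}$ written in the paper's formula with the $\Phi_n'(\zeta)$ arising naturally from the monogenic-different formula; one could also note directly that the paper's convention (cf.\ the dual basis in \S\,5.3 via $\mathrm{Tr}(\omega_a\bar\phi^b)=\delta_a^b$) produces the conjugated generator without further comment, but either route gives the same ideal. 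No gaps.
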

 Note that
\be
\mathfrak{a}^{\ast\ast} = \left(\frac{\Phi^\prime_n(\zeta)}{\overline{\Phi^\prime_n(\zeta)}}\right)\mathfrak{a}=\mathfrak{a},\qquad \text{since }\ 
\frac{\Phi^\prime_n(\zeta)}{\overline{\Phi^\prime_n(\zeta)}}\ \text{is a unit in }\mathfrak{O}.
\ee

In \textsc{appendix \ref{kkkkza3b2}}
we show a some properties of the map
$\mathfrak{a}\to \mathfrak{a}^*$ which greatly simplify the computations.
In particular:

\begin{lem}\label{kkkka12}
For all fractional ideal $\mathfrak{a}$ of $\BK$ we have 
\be\label{jj1234}\mathfrak{a}^\ast=\varrho/\bar{\mathfrak{a}},\ 
\text{for a certain }\varrho\in\BK^\times\ \text{with }\iota(\varrho)=-\varrho.
\ee
If $\bn$ is not the power of an odd prime, we may alternatively choose $\varrho$ to be real by multiplying it by a purely imaginary unit,
e.g.\! $(\zeta-\zeta^{-1})$ for $\bn\neq 2^r$
or $i$ for $\bn=2^r$ .
\end{lem}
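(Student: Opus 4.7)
The starting point is the explicit formula~\eqref{jjzaq1}: as fractional ideals, $\mathfrak{a}^* = (1/\overline{\Phi_n'(\zeta)})\,\bar{\mathfrak{a}}^{-1}$, so the generator $\varrho\in\BK^\times$ is only determined up to units of $\fO$. The plan is to exhibit a specific generator with the required symmetry under the complex conjugation $\iota$.

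First I would establish the self-conjugacy identity $\overline{\Phi_n'(\zeta)} = -\zeta^{2-\phi(n)}\,\Phi_n'(\zeta)$. This comes from the self-reciprocity $z^{\phi(n)}\Phi_n(1/z) = \Phi_n(z)$, valid for $n>2$, which holds because the primitive $n$-th roots of unity are closed under $z\mapsto z^{-1}$, $\phi(n)$ is even, and $\sum_{a\in(\Z/n\Z)^\times} a\equiv 0 \pmod n$. Differentiating once and evaluating at $z=\zeta$ (where $\Phi_n(\zeta^{-1})=0$) gives the claim. Setting $\beta_0:=1/\overline{\Phi_n'(\zeta)}$ the identity reads $\iota(\beta_0)/\beta_0 = -\zeta^{2-\phi(n)}$, which is a root of unity.

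Next I would adjust $\beta_0$ by a unit. Writing $\varrho = u\beta_0$ with $u\in\fO^\times$, the required condition $\iota(\varrho)=-\varrho$ becomes $\iota(u)/u = \zeta^{\phi(n)-2}$. Making the ansatz $u = \zeta^k$ gives $\iota(u)/u = \zeta^{-2k}$, so we need $2k\equiv 2-\phi(n)\pmod n$; this is solvable because $2-\phi(n)$ is even for $n>2$ (the degenerate cases $n\leq 2$ have $\BK=\BQ$ and are trivial). This proves the first claim.

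For the refinement, assume $\bn$ is not a power of an odd prime and exhibit a purely imaginary unit $\eta\in\fO^\times$: then $\eta\varrho$ is real since $\iota(\eta\varrho) = (-\eta)(-\varrho) = \eta\varrho$. When $\bn = 2^r$ with $r\geq 2$, take $\eta = i = \zeta_4\in\fO$. When $\bn$ has at least two distinct prime factors, take $\eta = \zeta-\zeta^{-1}$; it is manifestly purely imaginary. To see it is a unit, write $\zeta-\zeta^{-1}=-\zeta^{-1}(1-\zeta^2)$, observe that $\zeta^2$ is a primitive $(\bn/\gcd(\bn,2))$-th root (whose order still has two distinct prime factors under our hypothesis on $\bn$), and invoke the classical identity $\prod_{a\in(\Z/m\Z)^\times}(1-\zeta_m^a) = \Phi_m(1) = 1$ valid whenever $m$ has two distinct prime factors, which forces $1-\zeta^2$, and hence $\zeta-\zeta^{-1}$, to be a unit in $\fO$.

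The main obstacle is identifying a purely imaginary unit in $\fO^\times$, which is precisely the obstruction that prevents the second statement when $\bn$ is a power of an odd prime (a case in which $\fO^\times$ is essentially generated by roots of unity and real units, with $\mu_\BK = \langle\zeta_\bn\rangle$ containing no element of order $4$); everything else reduces to elementary manipulation built on the functional equation of $\Phi_n$.
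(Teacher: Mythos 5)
Your proof is correct and takes a genuinely different route than the paper's own. The paper proceeds in two stages: it first factorizes $\Phi_n'(\zeta)$ up to units as a product of derivatives $\Phi_{p_i^{r_i}}'(\zeta_i)$ over the prime-power constituents of $n$ (their \textbf{Lemma~\ref{kkasq}}), and then supplies a hand-picked unit $\varepsilon$ for each prime-power case ($\bar\varepsilon = -i\zeta^{-1}$ for $n=2^r$, $\bar\varepsilon=\zeta^{-1}\zeta^{-(p^{r-1}:2)}$ for $n=p^r$ odd) before recombining, with a parity-of-$s$ case split and an extra imaginary unit inserted when $s$ is even. You instead bypass any decomposition by invoking the palindromic identity $z^{\phi(n)}\Phi_n(1/z)=\Phi_n(z)$ for $n>2$ and its derivative at $\zeta$, which immediately gives $\overline{\Phi_n'(\zeta)} = -\zeta^{2-\phi(n)}\Phi_n'(\zeta)$; the adjustment is then the single linear congruence $2k\equiv 2-\phi(n)\pmod n$ for a unit $u=\zeta^k$, solvable because $\phi(n)$ is even. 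This is shorter, more uniform (no prime-power casework, no case split on $s$), and arguably more transparent about \emph{why} the adjusting unit is a root of unity. The paper's route has the side benefit that the factorization \textbf{Lemma~\ref{kkasq}} is reused later for the explicit matrix construction in \S\ref{exmatr}, so the extra machinery is not wasted. The refinement about imaginary units ($\zeta-\zeta^{-1}$ vs.\ $i$) you handle essentially the same way the paper does; your verification that $1-\zeta^2$ stays a unit via $\Phi_m(1)=1$ for $m$ with two distinct prime factors is correct, and you are right to restrict to actual conductors so $\bn=2p^r$ never arises. One very minor quibble: the justification ``solvable because $2-\phi(n)$ is even'' is sufficient in both the even-$n$ and odd-$n$ cases (for odd $n$, 2 is already invertible mod $n$, so evenness is not needed but does no harm).
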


\subsubsection{Complex, real, quaternionic}

At the level of underlying $\C$-linear representations, $V_\cm\cong V_\cm^\vee\equiv V_{\cm^\vee}$. An \emph{anti}-linear morphism
$R\colon V_\cm\to V_{\cm^\vee}$ is said to be a \emph{real} (resp.\! \textit{quaternionic})  structure iff it squares to $+1$ (resp.\! $-1$) \cite{brrook}. A real (resp.\! quaternionic) structure embeds the matrix $m$ in the orthogonal (resp.\! symplectic) group. To get an embedding
$\Z_n\hookrightarrow Sp(2k,\Z)$ we need a quaternionic structure defined over $\Z$.  
First of all, this requires $\cm$ and $\cm^\vee$ to be isomorphic as $\fO$-modules. Writing $\cm=(1)^{\oplus(s-1)}\oplus\mathfrak{a}$, we must have 
\be\label{kkkaqwj}
(1)^{\oplus(s-1)}\oplus \mathfrak{a}\cong (\varrho)^{\oplus(s-1)}\oplus \varrho\, \bar{\mathfrak{a}}^{-1}.
\ee
which implies that 
\be\label{rrrtzqm}
N\mathfrak{a}\cdot \mathfrak{O}= \mathfrak{a}\,\bar{\mathfrak{a}}= \big(\eta).
\ee
Since the natural map $C_\Bbbk\to C_\BK$, $[\mathfrak{b}]\mapsto [\mathfrak{b}\cdot\mathfrak{O}]$ is injective \cite{lang1}, the fractional ideal
$N\mathfrak{a}$ is principal in $\Bbbk$, that is, $N\mathfrak{a}=(\eta)$ for some $\eta\in\Bbbk^\times$.
\medskip 

From eqn.\eqref{kkkaqwj} we see that the construction of the embedding is essentially reduced to the case $s=1$.
From now on we specialize to this case,
so that $\cm\equiv\mathfrak{a}$, $\cm^\vee=\mathfrak{a}^*$.
Fix a $\Z$-basis\footnote{\ That is, a set of generators of $\mathfrak{a}$ seen as a free $\Z$-module.} $\{\omega_a\}$ of $\mathfrak{a}$ and let $\phi^a$ be the dual basis of $\mathfrak{a}^*$, i.e.\!
$\langle \omega_a,\phi^b\rangle={\delta_a}^b$. If $\mathfrak{a}$ satisfies condition \eqref{rrrtzqm},
$\mathfrak{a}^*= \varrho/\bar{\mathfrak{a}}=\varrho\eta^{-1}\mathfrak{a}$, and we write  
\be\label{hhhhaq}
\mathfrak{a}^*=\lambda_v\,\mathfrak{a}\quad\text{where }\lambda_v=v \varrho/\eta^{-1}\quad \text{with }v\ \text{a unit of }\fO.
\ee 
Then $\{\lambda_v\,\omega_a\}$ is also a $\Z$-basis of the dual fractional ideal $\mathfrak{a}^*$ and there is a
a matrix $J_{ab}\in GL(2k,\Z)$ (depending on the unit $v$) such that
\be\label{kkkaz192}
\lambda_v\,\omega_a= J_{ab}\,\phi^b
.\ee
One has
\be\label{symppp}
\bar\lambda_v^{-1} J_{ab}=\bar\lambda^{-1}_v \langle J_{ac}\phi^c,\omega_b\rangle 
=\langle\omega_a,\omega_b\rangle = \langle \omega_a,J_{bc}\phi^c\rangle\lambda^{-1}= J_{ba}\,\lambda^{-1}_v
\ee
i.e.\!  the integral unimodular matrix $J_{ab}$ is skew-symmetric (resp.\! symmetric) if the unit $v$ is such that $\lambda_v$ is purely imaginary (resp.\! real). In the first case $J_{ab}$ is a principal integral symplectic structure, hence similar over $\Z$ to the standard one $\Omega$,
i.e.\! $J=h^t\Omega h$ for some $h\in GL(2k,\Z)$. In the second case $J$ is 
a unimodular symmetric quadratic form.
Thus for a fixed fractional ideal $\mathfrak{a}$, we find a symplectic structure (i.e.\! an embedding in $Sp(2k,\Z)$) per each choice of the unit $v$ such that $\lambda_v$ is purely imaginary.
We shall count the inequivalent ones in the next subsection.

Since $\eta\in\Bbbk^\times$ is always real, and $\varrho$ was chosen to be purely imaginary (cfr.\! \textbf{Lemma \ref{kkkka12}}) $J_{ab}$ is skew-symmetric iff $v$ is real, and symmetric iff it is purely imaginary. In particular, the two obvious choices $v=\pm1$, always produce embeddings $\Z_n\hookrightarrow Sp(2k,\Z)$ ($2k\equiv\phi(n)$).

\subsubsection{Conjugacy classes of embeddings $\Z_n\hookrightarrow Sp(\phi(n),\Z)$}

Regular embeddings $\Z_n\hookrightarrow Sp(\phi(n),\Z)$ exist for all $n\geq3$. Indeed, the condition $N\mathfrak{a}$ principal is trivially satisfied if $\mathfrak{a}$  itself is principal. Thus the trivial ideal class $(1)$ yields regular embeddings $\Z_n\hookrightarrow Sp(\phi(n),\Z)$ for all $n\geq3$. We proceed as follows: we fix an embedding $(m,J)$ associated to the ideal (1) and call it the \emph{reference} embedding.
All inequivalent embeddings are obtained by acting on the reference one $(m,J)$ with a certain Abelian group $H$ defined in the next \textbf{Proposition}. A subgroup of $H$ is easy to describe: 
in  eqn.\eqref{hhhhaq} we may choose a different \emph{real} unit $v\in\fO$ and still get an invariant quaternionic structure; this is the same as multiplying $\eta\in\Bbbk^\times$ by a unit of $\fo$. In this way we  get new (inequivalent) embeddings $(m,J^\prime)$ for a given fractional ideal $\mathfrak{a}$:
they correspond to embeddings which are conjugate over $GL(\phi(n),\Z)$ but not over the subgroup $Sp(\phi(n),\Z)$. 
On the other hand, under $\mathfrak{a}\to \mu \mathfrak{a}$ with $\mu\in \BK^\times$ we have $\eta\to \eta\, N\!\mu$, hence the image of $\eta$
in the group $\Bbbk^\times/N\BK^\times$
is independent of the choice of the representative ideal $\mathfrak{a}$ in the ideal class. To describe also the embeddings belonging to different $GL(\phi(n),\Z)$ conjugacy classes, it is convenient to consider the group
\be
L=\ker\!\Big(\mathfrak{I}_\BK\xrightarrow{N} C_\Bbbk,\ \mathfrak{a}\mapsto [N\mathfrak{a}]\Big)
\ee
of fractional ideal classes in $\BK$ whose relative norm is principal in $\Bbbk$.
Then we have group
\be\label{whatK}
K=\big\{(\mathfrak{a},\eta)\in L\times \Bbbk^\times\;:\; N\mathfrak{a}=(\eta)\big\} 
\ee
and the group homomorphism
\be\label{whatPi}
\pi\colon K\to C_\BK\times \Bbbk^\times/N\BK^\times,\qquad (\mathfrak{a},\eta)\mapsto ([\mathfrak{a}],[\eta]).
\ee

The above discussion shows the

\begin{pro}[see \cite{japonese}] Let $n\geq3$.
The (Abelian) group 
\be\label{whatH}
H\equiv \Im\pi
\subset C_\BK\times \Bbbk^\times/N\BK^\times
\ee
acts freely and transitively on the set of the $Sp(\phi(n),\Z)$-conjugacy classes of 
embeddings $\Z_n\hookrightarrow Sp(\phi(n),\Z)$. In particular, the number of $Sp(\phi(n),\Z)$-conjugacy classes is 
\be\label{jjjjja}\big|H\big|\equiv
\big|\ker\!\big(C_\BK\xrightarrow{N} C_\Bbbk\big)\big|\times \big|\boldsymbol{u}/N\boldsymbol{U}\big|,
\ee
where $\boldsymbol{U}$ (resp.\! $\boldsymbol{u}$) is the group of unities of $\mathfrak{O}$ (resp.\! $\mathfrak{o}$).
\end{pro}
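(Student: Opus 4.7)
The strategy is to translate each $Sp(\phi(n),\Z)$-embedding of $\Z_n$ into an equivalence class of algebraic data in $\BK=\BQ[\zeta]$, and then identify this equivalence class with an element of the group $H$. The discussion preceding the statement already performs most of this translation: by \textbf{Proposition \ref{kkkaqweX}} (applied with $s=1$, since $2k=\phi(n)$) a $GL(\phi(n),\Z)$-conjugacy class of embeddings with $\Phi_n$-minimal polynomial is the same thing as a class $[\mathfrak{a}]\in C_\BK$, and by the analysis around \eqref{rrrtzqm}--\eqref{symppp} a compatible integral symplectic form $J$ on $\mathfrak{a}$ exists if and only if $\mathfrak{a}\cong\mathfrak{a}^\ast$ as $\fO$-modules, which by \textbf{Lemma \ref{kkkka12}} amounts to $N\mathfrak{a}=(\eta)$ being principal in $\Bbbk$, i.e.\! $[\mathfrak{a}]\in\ker(N\colon C_\BK\to C_\Bbbk)$. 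This produces a natural map from the set of $Sp$-conjugacy classes of embeddings to $\ker(N)$; I will first check it is surjective.

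\textbf{Parametrizing the fiber.} The second step is to describe, for a fixed $\mathfrak{a}$ with $N\mathfrak{a}=(\eta_0)$, the set of integral symplectic forms $J$ up to $\fO$-linear isomorphism of $\mathfrak{a}$. As in \eqref{hhhhaq}--\eqref{symppp}, such a form is given by an element $\lambda_v=v\,\varrho\,\eta_0^{-1}\in\BK^\times$ that is purely imaginary, i.e.\! by a real unit $v\in\boldsymbol{u}$. Two such choices $v,v'$ yield $Sp$-equivalent embeddings precisely when there is an $\fO$-module automorphism of $\mathfrak{a}$, that is multiplication by a unit $u\in\boldsymbol{U}$, intertwining them; tracking how the relation $\lambda_v\omega_a=J_{ab}\phi^b$ transforms under $\omega_a\mapsto u\omega_a$, $\phi^b\mapsto\bar u^{-1}\phi^b$ yields $\lambda_{v'}=N(u)\,\lambda_v$, hence $v'=N(u)\,v$. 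Thus the fiber over $[\mathfrak{a}]$ is canonically a torsor under $\boldsymbol{u}/N\boldsymbol{U}$.

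\textbf{Packaging into $H$.} I then combine the two pieces. Changing the representative of the ideal class $\mathfrak{a}\mapsto\mu\mathfrak{a}$ sends $\eta_0\mapsto N(\mu)\,\eta_0$, so the generator $\eta$ of $N\mathfrak{a}$ is intrinsically defined only modulo $N\BK^\times$; this is exactly the image $[\eta]\in\Bbbk^\times/N\BK^\times$. The pair $([\mathfrak{a}],[\eta])$ is nothing but an element of the image $H=\Im\pi$, and the construction is reversible by \eqref{kkkaz192}, so one obtains a bijection
\be
\big\{\text{$Sp(\phi(n),\Z)$-conjugacy classes of embeddings}\big\}\;\longleftrightarrow\; H,
\ee
on which $H$ acts on itself by translation. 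Freeness and transitivity of the $H$-action on the set of $Sp$-classes is then immediate. Finally, counting by the exact sequence $1\to\{\text{fibers}\}\to H\to\ker(N\colon C_\BK\to C_\Bbbk)\to 1$ and the elementary identity $\boldsymbol{u}\cap N\BK^\times=N\boldsymbol{U}$ (if $N(\mu)\in\boldsymbol{u}$ then $\mu\bar\mu$ is a unit of $\fO$, forcing $\mu\in\boldsymbol{U}$) yields \eqref{jjjjja}.

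\textbf{Main obstacle.} The computationally routine parts are algebraic bookkeeping; the delicate point is to verify carefully that the two equivalences one naturally imposes on the data $(\mathfrak{a},\eta,v)$ -- namely (a) $\fO$-linear automorphisms of $\mathfrak{a}$, and (b) rescaling of the chosen generator $\eta$ of $N\mathfrak{a}$ -- interact exactly as the quotients defining $H$, with no further hidden equivalence. Concretely, one must show that if two embeddings $(m,J)$ and $(m',J')$ belong to the same $Sp$-class then the associated $\fO$-module isomorphism $\mathfrak{a}\to\mathfrak{a}'$ automatically intertwines the compatible purely imaginary scalars $\lambda_v$, $\lambda_{v'}$ modulo the norm subgroup $N\boldsymbol{U}$ -- and, conversely, that any element of $H$ is realized by a bona fide embedding and not merely by a formal symbol. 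Once this rigidity is established, the counting formula \eqref{jjjjja} drops out automatically.
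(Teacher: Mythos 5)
Your strategy — translating an $Sp(\phi(n),\Z)$-class of embeddings into the data of an $\fO$-module $\mathfrak{a}$ with an $m$-invariant unimodular alternating form, parametrizing the latter by a purely imaginary $\lambda$ with $\lambda\mathfrak{a}=\mathfrak{a}^\ast$, and then quotienting by $\fO$-automorphisms and change of generator $\eta$ — is exactly the route taken in the paper's surrounding discussion (and in ref.\cite{japonese}). The torsor structure of each fiber under $\boldsymbol{u}/N\boldsymbol{U}$ is established the same way.

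The genuine gap is the parenthetical justification of the ``elementary identity'' $\boldsymbol{u}\cap N\BK^\times = N\boldsymbol{U}$. You argue: if $N(\mu)\in\boldsymbol{u}$ then $\mu\bar\mu$ is a unit of $\fO$, forcing $\mu\in\boldsymbol{U}$. This implication is false. A unit $\mu\bar\mu$ does not force $\mu$ to be a unit: take any prime $\mathfrak{p}$ of $\fO$ with $\mathfrak{p}\neq\overline{\mathfrak{p}}$, both principal (which always exists, e.g.\ over a split rational prime), write $\mathfrak{p}=(\pi)$, and set $\mu=\pi/\bar\pi$. Then $\mu\bar\mu=1\in\boldsymbol{U}$ while $(\mu)=\mathfrak{p}\overline{\mathfrak{p}}^{-1}\neq(1)$, so $\mu\notin\boldsymbol{U}$. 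Your inference from $\mu\bar\mu\in\fO^\times$ to $\mu\in\fO^\times$ therefore does not hold, and the equality $\boldsymbol{u}\cap N\BK^\times=N\boldsymbol{U}$ is not ``elementary''; it is a nontrivial fact about the CM extension $\BK/\Bbbk$ — essentially the vanishing of $\widehat H^{-1}(\mathsf{Gal}(\BK/\Bbbk),P_\BK)$, or equivalently that every principal anti-invariant ideal is of the form $\overline{\mathfrak{b}}\,\mathfrak{b}^{-1}$ for a principal $\mathfrak{b}$, so that $\mu$ can be corrected to a unit $\mu'$ with $N\mu'=N\mu$ via Hilbert 90. This requires an argument (or the citation to \cite{japonese}), not a one-line deduction.

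Note also that this identity is precisely what makes the map from isomorphism classes to $\Im\pi$ injective: without it, the counting formula \eqref{jjjjja} would still describe the number of $Sp$-classes, but $H=\Im\pi$ would be strictly smaller than that set and could not act freely and transitively — so the subgroup equality is not an optional bookkeeping identity, it is the crux of the freeness. You correctly flag this as the ``delicate point'' at the end, but the specific justification you give earlier is wrong and needs to be replaced.
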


Let $h=|C_\BK|$ and $h^+=|C_\Bbbk|$
be the class numbers of the fields $\BK$ and $\Bbbk$, respectively. The map
$N\colon C_\BK\to C_\Bbbk$ is surjective [cite], and the ratio 
\be
h^-=h/h^+=\big|\ker(C_\BK\xrightarrow{\,N\,}C_\Bbbk)\big|
\ee
 is called the \emph{relative class number} of $\BK$. $h^-$ is much easier to compute that either $h$ or $h^+$
(it has an explicit expression in terms of generalized Bernoulli numbers \cite{lang1,apo2}).
It turns out that $h^-=1\Leftrightarrow h=h^+=1$ \cite{apo2}.
For $n\leq 22$ the relative class number is $h^-=1$,
while for large $n$ we have the asymptotic behavior \cite{apo2}
\be 
\log h^-\sim \frac{1}{4}\,\phi(n)\,\log \bn\qquad n\to\infty.
\ee
$h^-=1$ iff the conductor $\bn$ is one of the numbers \cite{apo2}
\be
\begin{split}
\bn=& 1,3,4,5,7,8,9,11,12,13,15,16,17,19,20,
21,\\ &24,25,27,28,32,33,35,36,40,44,45,48,60,84.\end{split}
\ee
When $h^-=1$, all inequivalent embeddings $\Z_n\hookrightarrow Sp(\phi(n),\Z)$ arise from the same fractional ideal, and hence are all conjugate in the larger group $GL(\phi(n),\Z)$.

To compute the second factor in \eqref{jjjjja} we consider the group of units in the relevant fields.

\paragraph{Group of units.}
We write $\boldsymbol{\mu}$ for the group generated by the roots of unity in $\BK$
\be
\boldsymbol{\mu}=\big\{\pm \zeta^k\big\}.
\ee
The \emph{Hasse unit index} $Q$ of the cyclotomic field $\BK$ is
\be
Q\equiv [\boldsymbol{U}:\boldsymbol{\mu}\,\boldsymbol{u}].
\ee

\begin{pro}[{\rm see \cite{hasse}}] Let $\BK$ be a cyclotomic field of conductor $\bn$. One has
\be
Q= \begin{cases} 1 & \bn\ \text{is a prime power}\\
2 &\text{otherwise}.
\end{cases}
\ee
Moreover, $\boldsymbol{U}/(\boldsymbol{\mu}\, \boldsymbol{u})\cong N\boldsymbol{U}/\boldsymbol{u}^2$ and then
\be
[N\boldsymbol{U}:\boldsymbol{u}^2]=Q.
\ee
In other words, if $Q=1$ all $\{N\varepsilon\,:\, \varepsilon\in\boldsymbol{U}\}$ are squares in $\boldsymbol{u}$, while for $Q=2$ only half of them are squares.
\end{pro}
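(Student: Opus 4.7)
The plan is to treat both assertions --- the value of $Q$ and the isomorphism $\boldsymbol{U}/(\boldsymbol{\mu}\boldsymbol{u})\cong N\boldsymbol{U}/\boldsymbol{u}^{2}$ --- together, both resting on the map $\chi\colon\boldsymbol{U}\to\boldsymbol{\mu}$ defined by $\chi(u)=u/\bar u$. First I would verify $\chi$ is well-defined with values in $\boldsymbol{\mu}$: since $\BK$ is a CM field, for every embedding $\sigma\colon\BK\hookrightarrow\C$ one has $\sigma(\bar x)=\overline{\sigma(x)}$, so $|\sigma(u/\bar u)|^{2}=\sigma(u/\bar u)\sigma(\bar u/u)=1$; then $u/\bar u$ is an algebraic integer with all archimedean conjugates of absolute value $1$, hence a root of unity by Kronecker. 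Next $\ker\chi=\boldsymbol{u}$ (the $\iota$-fixed units) and $\chi(\boldsymbol{\mu})=\{\zeta^{2}:\zeta\in\boldsymbol{\mu}\}=\boldsymbol{\mu}^{2}$, the unique index-$2$ subgroup of the cyclic group $\boldsymbol{\mu}$ (whose order is always even). This yields $Q=[\chi(\boldsymbol{U}):\boldsymbol{\mu}^{2}]\in\{1,2\}$, with $Q=1$ iff $\chi(\boldsymbol{U})\subseteq\boldsymbol{\mu}^{2}$.

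For the second assertion I would use the relative norm $N\colon\boldsymbol{U}\to\boldsymbol{u}$: since $N(\boldsymbol{\mu})=1$ and $N(\boldsymbol{u})=\boldsymbol{u}^{2}$, it descends to a surjection $\overline{N}\colon\boldsymbol{U}/(\boldsymbol{\mu}\boldsymbol{u})\twoheadrightarrow N\boldsymbol{U}/\boldsymbol{u}^{2}$. Injectivity follows from a second Kronecker argument: if $N(u)=w^{2}$ with $w\in\boldsymbol{u}$, then $\xi:=u/w$ has $N(\xi)=1$, so at every embedding $\sigma$ we get $|\sigma(\xi)|^{2}=\sigma(\xi)\sigma(\bar\xi)=\sigma(N\xi)=1$, forcing $\xi\in\boldsymbol{\mu}$ and $u\in\boldsymbol{\mu}\boldsymbol{u}$. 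This establishes $\boldsymbol{U}/(\boldsymbol{\mu}\boldsymbol{u})\cong N\boldsymbol{U}/\boldsymbol{u}^{2}$ and hence $[N\boldsymbol{U}:\boldsymbol{u}^{2}]=Q$.

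For $\bn$ not a prime power I would exhibit an explicit witness of $Q=2$. The number-theoretic input is $\Phi_{\bn}(1)=1$ whenever $\bn$ has at least two distinct prime factors, so $u_{0}:=1-\zeta$ has $N_{\BK/\BQ}(u_{0})=\Phi_{\bn}(1)=1$ and is a unit in $\fO$. A direct computation gives $\chi(u_{0})=(1-\zeta)/(1-\zeta^{-1})=-\zeta$. A two-line case analysis --- a conductor that is not a prime power is either odd or divisible by $4$ --- shows $-\zeta\notin\boldsymbol{\mu}^{2}$: for odd $\bn$, $\boldsymbol{\mu}=\langle-\zeta\rangle$ of order $2\bn$ and $-\zeta$ generates $\boldsymbol{\mu}/\boldsymbol{\mu}^{2}$; for $4\mid\bn$, $\boldsymbol{\mu}=\langle\zeta\rangle$ and $-\zeta=\zeta^{\bn/2+1}$ with $\bn/2+1$ odd. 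Hence $\chi(u_{0})\notin\boldsymbol{\mu}^{2}$ and $Q=2$.

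The hard part is proving $Q=1$ for prime power conductor $\bn=p^{r}$. The structural fact I would exploit is the ramification of $\BK/\Bbbk$ at $p$: the prime $p$ is totally ramified in $\BK/\BQ$ with a \emph{unique} prime $\mathfrak{P}=(1-\zeta)\fO$, and the quadratic extension $\BK/\Bbbk$ is itself ramified at $\mathfrak{P}$, so with $\mathfrak{p}=\mathfrak{P}\cap\fo$ one has $\mathfrak{p}\fO=\mathfrak{P}^{2}$; consequently, every principal ideal extended from $\fo$ to $\fO$ has \emph{even} $\mathfrak{P}$-valuation. I would assume for contradiction the existence of $u\in\boldsymbol{U}$ with $\chi(u)\notin\boldsymbol{\mu}^{2}$; multiplying by a suitable $\zeta^{k}\in\boldsymbol{\mu}$ reduces to a unit $v$ with $v/\bar v$ equal to a fixed generator $\eta$ of $\boldsymbol{\mu}/\boldsymbol{\mu}^{2}$. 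Expanding $v=a+b(\zeta-\zeta^{-1})$ in the $\Bbbk$-basis $\{1,\zeta-\zeta^{-1}\}$ of $\BK$, the linear equation $v=\eta\bar v$ solves uniquely up to a $\Bbbk^{\times}$-scalar $c$ to $v=c\cdot\alpha$ with an \emph{explicit} $\alpha\in\BK$ of odd $\mathfrak{P}$-valuation: for odd $p$ one finds $\alpha=\zeta-\zeta^{-1}$ with $v_{\mathfrak{P}}(\alpha)=1$ (because $\zeta+1$ is a unit, as $\Phi_{p^{r}}(-1)=1$), while for $p=2$, $r\geq 2$ one finds $\alpha=2(1+\zeta)$ with $v_{\mathfrak{P}}(\alpha)=2^{r-1}+1$. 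But $v_{\mathfrak{P}}(c)=2\,v_{\mathfrak{p}}(c)$ is necessarily even, so $v_{\mathfrak{P}}(v)=v_{\mathfrak{P}}(c)+v_{\mathfrak{P}}(\alpha)$ is odd, contradicting that $v\in\boldsymbol{U}$. Hence $\chi(\boldsymbol{U})\subseteq\boldsymbol{\mu}^{2}$ and $Q=1$.
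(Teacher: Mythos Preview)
Your proof is correct. The paper does not actually prove this proposition: it is quoted with a reference to Hasse's textbook, and the only argument supplied is the subsequent \textbf{Remark}, which (for composite conductor) exhibits the unit $1-\zeta$ and observes that $N(1-\zeta)=4\sin^{2}(\pi/n)$ is not a square in $\Bbbk$, thereby confirming $[N\boldsymbol{U}:\boldsymbol{u}^{2}]\geq 2$ in that case. Your treatment of the composite case uses the very same witness $1-\zeta$, only phrased dually via $\chi(u)=u/\bar u$ rather than via $N$; the two formulations are equivalent through the isomorphism $\overline{N}$ you establish.

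What you supply beyond the paper is a self-contained proof of the prime-power case $Q=1$ via ramification: the observation that $\BK/\Bbbk$ is ramified at the unique prime $\mathfrak{P}$ above $p$ (so every element of $\Bbbk^{\times}$ has even $\mathfrak{P}$-valuation), combined with the explicit solution of $v=\eta\bar v$ up to $\Bbbk^{\times}$, is exactly the classical argument. One cosmetic remark: in the $p=2$ case your $\alpha=2(1+\zeta)$ comes from the basis $\{1,\zeta-\zeta^{-1}\}$; using $\{1,\zeta\}$ instead gives $\alpha=1+\zeta$ directly with $v_{\mathfrak{P}}(\alpha)=1$. Since $2\in\Bbbk^{\times}$ this is the same solution up to the scalar $c$, and either way the valuation is odd, so the contradiction stands.
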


\begin{rem} Let the conductor $\bar n$ be divisible by two distinct primes. Let us describe explicitly a generator of the group $N\boldsymbol{U}/\boldsymbol{u}^2\cong \Z_2$. $(1-\zeta)$ is a unit in $\BK$, and
\be
N(1-\zeta)=2-\zeta-\zeta^{-1}= 2(1-\cos(2\pi/n))=4\sin^2(\pi/n)\in \Bbbk, 
\ee
while its square root $2\sin(\pi/n)$ is not in $\Bbbk$.
Hence, for $n$ divisible by two distinct primes, 
\be
N\boldsymbol{U}= N(1-\zeta)^a\,\boldsymbol{u}^2,\quad a=0,1.
\ee
In particular $\varepsilon_1\equiv N(1-\zeta)$ is a fundamental unit of $\Bbbk$.
\end{rem}

For $n\geq 3$, let $k=\phi(n)/2$. From Dirichlet unit theorem \cite{algn1} we know that
\be
\boldsymbol{u}=\Big\{\pm \varepsilon_1^{s_1}\varepsilon_2^{s_2}\cdots
\varepsilon_{k-1}^{s_{k-1}}, \ s_a\in\Z\Big\}
\ee
where $\varepsilon_a$, $a=1,\dots, k-1$ are the real positive fundamental units.
For $Q=1$ we have
\be
\boldsymbol{u}/N\boldsymbol{U}\cong \boldsymbol{u}/\boldsymbol{u}^2= \Big\{\pm
\varepsilon_1^{s_1}\varepsilon_2^{s_2}\cdots
\varepsilon_{k-1}^{s_{k-1}}, \ s_a\in\Z/2\Z\Big\}\cong \Z_2^k. 
\ee
while for $Q=2$
\be
\boldsymbol{u}/N\boldsymbol{U}\cong \boldsymbol{u}/\big(\Z_2\times \boldsymbol{u}^2\big)= \Big\{\pm
\varepsilon_2^{s_2}\cdots
\varepsilon_{k-1}^{s_{k-1}}, \ s_a\in\Z/2\Z\Big\}\cong \Z_2^{k-1}.
\ee

In conclusion,
\begin{corl} Let $n\geq3$. The number of $Sp(\phi(n),\Z)$-conjugacy classes of embeddings $\Z_n\hookrightarrow Sp(\phi(n),\Z)$ is
\be
\frac{2^{\phi(n)/2}}{Q}\,h^-\equiv |\boldsymbol{\mu}|\prod_{\chi\;\text{odd}}\big(-B_{1,\chi}),
\ee
where $Q=1,2$ is the Hasse unit index,  $h^-$ the relative class number of the cyclotomic field $\BK$, and $B_{1,\chi}$ the first Bernoulli number of the odd Dirichlet character $\chi$.
\end{corl}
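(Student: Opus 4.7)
The statement is essentially the arithmetic content of the Proposition that immediately precedes it, combined with the classical analytic class number formula. The plan is to evaluate the two factors in formula \eqref{jjjjja} separately, and then rewrite the product using the Bernoulli expression for $h^-$.

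First, the factor $|\ker(C_\BK\xrightarrow{N}C_\Bbbk)|$ is by definition the relative class number $h^-=h/h^+$; there is nothing to prove here beyond recalling that $N\colon C_\BK\to C_\Bbbk$ is surjective (see e.g.\! \cite{apo2}) so that the kernel has order $h/h^+$.

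Second, I need to compute $|\boldsymbol{u}/N\boldsymbol{U}|$. By Dirichlet's unit theorem applied to the totally real field $\Bbbk$ of degree $k=\phi(n)/2$ we have $\boldsymbol{u}\cong\{\pm1\}\times\Z^{k-1}$, and thus $\boldsymbol{u}/\boldsymbol{u}^2\cong(\Z/2\Z)^k$, so $[\boldsymbol{u}:\boldsymbol{u}^2]=2^{\phi(n)/2}$. The Proposition on the Hasse unit index gives the tower $\boldsymbol{u}^2\subset N\boldsymbol{U}\subset\boldsymbol{u}$ with $[N\boldsymbol{U}:\boldsymbol{u}^2]=Q$, from which the multiplicativity of indices yields
\be
[\boldsymbol{u}:N\boldsymbol{U}]=\frac{[\boldsymbol{u}:\boldsymbol{u}^2]}{[N\boldsymbol{U}:\boldsymbol{u}^2]}=\frac{2^{\phi(n)/2}}{Q}.
\ee
Multiplying the two factors gives the first equality $|H|=2^{\phi(n)/2}h^-/Q$.

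For the second equality I would invoke the classical analytic class number formula for the relative class number of the cyclotomic field $\BK$ (see e.g.\! Washington \cite{apo2}, Theorem 4.17):
\be
h^-=Q\,|\boldsymbol{\mu}|\,\prod_{\chi\;\text{odd}}\!\Big(-\tfrac{1}{2}B_{1,\chi}\Big),
\ee
where the product runs over the $\phi(n)/2$ odd Dirichlet characters modulo the conductor $\bn$. Substituting into the first equality and using that there are exactly $\phi(n)/2$ factors of $(-\tfrac{1}{2})$ to absorb into $2^{\phi(n)/2}$,
\be
\frac{2^{\phi(n)/2}}{Q}\,h^-=|\boldsymbol{\mu}|\,2^{\phi(n)/2}\!\prod_{\chi\;\text{odd}}\!\Big(-\tfrac{1}{2}B_{1,\chi}\Big)=|\boldsymbol{\mu}|\prod_{\chi\;\text{odd}}\!\big(-B_{1,\chi}\big),
\ee
which is the claimed identity.

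The main input — indeed the only genuinely non-formal step — is the analytic class number formula used in the second equality. The rest reduces to elementary index arithmetic in the unit group, which has already been set up by the Hasse index Proposition quoted just above the statement; so no real obstacle remains once that formula is invoked.
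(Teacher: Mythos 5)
Your argument is correct and mirrors the paper's own: the first equality comes directly from the Proposition's factorization $|H|=\big|\ker(C_\BK\xrightarrow{N}C_\Bbbk)\big|\cdot|\boldsymbol{u}/N\boldsymbol{U}|$ together with $[\boldsymbol{u}:N\boldsymbol{U}]=2^{\phi(n)/2}/Q$ (which the paper derives by splitting into the $Q=1$ and $Q=2$ cases, while you obtain it a little more compactly from the index tower $\boldsymbol{u}^2\subset N\boldsymbol{U}\subset\boldsymbol{u}$ using $[\boldsymbol{u}:\boldsymbol{u}^2]=2^{\phi(n)/2}$ and $[N\boldsymbol{U}:\boldsymbol{u}^2]=Q$), and the second equality is exactly Washington's analytic class number formula for $h^-$ with the $\phi(n)/2$ factors of $\tfrac12$ absorbed. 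There is no gap.
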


However, to fully solve our problem we need  to know also when two distinct conjugacy classes are conjugate in the larger group $Sp(\phi(n),\R)$ (or, equivalently, in $Sp(\phi(n),\BQ)$).

\subsubsection{Embeddings
$\Z_n\hookrightarrow S(\tilde\Omega)_\Z$ for $\tilde\Omega$ non-principal}\label{kkzaqww}
 
The symplectic matrix $J$ defined in eqn.\eqref{kkkaz192}, for $\lambda_v$ as in \eqref{hhhhaq} (with $v$ a unit of $\mathfrak{o}$), corresponds to a principal polarization, i.e.\! $J$ is an integral skew-symmetric matrix with $\det J=1$.
Let $0\neq\kappa\in \mathfrak{o}$
and consider the matrix $J^\kappa$ defined by
\be\label{kkkxxxz10}
J_{ab}^\kappa\,\phi^b=\kappa\lambda\,\omega_a. 
\ee
If $\kappa$ is a unit, $J^\kappa$ is a principal-polarization. For $\kappa$ just integer in $\Bbbk$, $J^\kappa$ is an integral skew-symmetric matrix with determinat
\be
\det J^\kappa= \big(N_{\Bbbk/\BQ}\, \kappa\big)^2.
\ee

\subsubsection{$Sp(\phi(n),\BQ)$-conjugacy classes}

As we saw in \S.\,\ref{mumumqw}, the $Sp(\phi(n),\R)$-conjugacy classes are distinguished by the sign  of the corresponding (integral) symplectic structure $\mathsf{sign}_\sigma$. Then we need to understand the action of the group of $Sp(\phi(n),\Z)$-conjugacy classes of embeddings, $H$, on the sign function which we see as a map 
$\mathsf{Gal}(\Bbbk/\BQ)\to \Z_2$. The set of such maps form a group isomorphic to $\Z_2^{\phi(n)/2}$. 
Then we have a well-defined homomorphism of Abelian groups
\be
\mathfrak{s}\colon H\longrightarrow \Z_2^{\phi(n)/2},\qquad 
([\mathfrak{a}],[\eta]) \longmapsto \left\{
\mathsf{sign}_{([\mathfrak{a}],[\eta])}\colon \sigma\mapsto \frac{\sigma(\eta)}{|\sigma(\eta)|}\right\}
\ee
An element $([\mathfrak{a}],[\eta])\in H$ changes the $Sp(\phi(n),\Z)$ conjugacy class of $m$ without changing its $Sp(\phi(n),\R)$-conjugacy class iff 
it belongs to the kernel of $\mathfrak{s}$, that is,

\begin{corl}[Midorikawa \cite{japonese}] The group $H_\R\equiv H/\ker\mathfrak{s}$ acts freely and transitively on the $Sp(\phi(n),\R)$-conjugacy classes of embeddings $\Z_n\hookrightarrow Sp(\phi(n),\Z)$.
\end{corl}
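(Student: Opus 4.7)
The plan is to leverage the preceding Proposition, which asserts that $H$ acts freely and transitively on the set $\mathcal{C}_\Z$ of $Sp(\phi(n),\Z)$-conjugacy classes of embeddings $\Z_n\hookrightarrow Sp(\phi(n),\Z)$, and then to quotient out exactly those data which are detectable integrally but become trivial over $\R$. Concretely, I want to show that the natural surjection $\mathcal{C}_\Z \twoheadrightarrow \mathcal{C}_\R$ onto $Sp(\phi(n),\R)$-classes is $H$-equivariant and that its fibres coincide with the $\ker\mathfrak{s}$-orbits; the desired statement then follows by descent.

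First I would establish that the sign function is a complete $Sp(\phi(n),\R)$-conjugacy invariant. This follows from the canonical decomposition recalled in \S.\,5.1.3: over $\BQ$ the polarization reads $2\Omega = i\sum_{\sigma\in\mathsf{Gal}(\BK/\BQ)} \sigma(t)\,\psi_{-\sigma}\wedge\psi_\sigma$, and any change of $\BK$-basis $\psi_1 \to \mu\psi_1$ ($\mu\in\BK^\times$) rescales $t \mapsto N_{\BK/\Bbbk}(\mu)\,t = \mu\bar\mu\, t$; since $\BK$ is a CM field, $N_{\BK/\Bbbk}(\mu)$ is totally positive, and each $\sigma(t)/|\sigma(t)| = \mathsf{sign}(\sigma)$ is preserved. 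Conversely, two embeddings with identical sign functions yield real symplectic spaces whose Gram matrices, diagonalized along the $2k$ Galois eigenlines, carry the same signs, hence are isometric by Sylvester's law of inertia applied orbit by orbit; any intertwining isometry is the required element of $Sp(\phi(n),\R)$.

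Next I would verify that $\mathfrak{s}\colon H \to \Z_2^{\phi(n)/2}$, $([\mathfrak{a}],[\eta])\mapsto \bigl(\sigma\mapsto \sigma(\eta)/|\sigma(\eta)|\bigr)$, is a well-defined group homomorphism. Independence of representatives is immediate: replacing $\eta$ by $\eta\cdot N_{\BK/\Bbbk}(\mu)$ (which happens when $\mathfrak{a}$ is replaced by $\mu\,\mathfrak{a}$, cfr.\ \eqref{rrrtzqm}) multiplies each $\sigma(\eta)$ by the positive real number $|\sigma(\mu)|^2$; and multiplying $\eta$ by a square in $\boldsymbol{u}$ does not affect any sign. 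Multiplicativity of $\sigma$ and of $|\cdot|$ then makes $\mathfrak{s}$ a homomorphism into the elementary Abelian $2$-group of sign functions. By the preceding Proposition together with the first step, two elements of $H$ produce the same $\R$-conjugacy class exactly when their sign functions agree, i.e.\ exactly when they differ by an element of $\ker\mathfrak{s}$.

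Combining the two ingredients, the $H$-action on $\mathcal{C}_\Z$ descends to an action of $H_\R = H/\ker\mathfrak{s}$ on $\mathcal{C}_\R$; transitivity is inherited from the transitivity on $\mathcal{C}_\Z$, while freeness is the tautological statement that distinct cosets of $\ker\mathfrak{s}$ produce distinct sign functions and hence distinct $\R$-classes. The main obstacle is the ``completeness'' half of the sign-function invariant: one must check carefully that different fractional ideals $\mathfrak{a}$ with $N\mathfrak{a}=(\eta)$ and different admissible units $v$ (from \eqref{hhhhaq}) give rise to $Sp(\phi(n),\R)$-isomorphic real symplectic $\BK$-Hermitian pairings $\mathrm{Tr}_{\BK/\BQ}(\lambda_v\,\bar a b)$ whenever the signs of $\sigma(\eta)$ coincide; this is a routine but nontrivial application of the classification of non-degenerate real Hermitian forms on $\BK\otimes_\BQ \R\cong \C^{\phi(n)/2}$, performed Galois-orbit by Galois-orbit.
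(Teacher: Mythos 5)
Your proposal is correct and follows precisely the same route the paper takes: invoke the preceding Proposition for free transitive action of $H$ on $Sp(\phi(n),\Z)$-classes, recall from \S.\,5.1.3 that the sign function (spectral invariant) is a complete $Sp(\phi(n),\R)$-conjugacy invariant, check that $\mathfrak{s}$ is well-defined because $N\BK^\times\subset\Bbbk^\times_+$, and quotient by $\ker\mathfrak{s}$. The one step you flag as the "main obstacle" — completeness of the sign invariant over $\R$ — is handled in the paper exactly as you sketch, by decomposing the real symplectic $\Z_n$-module into its $2$-dimensional Galois eigenplanes and noting that on each plane the form is fixed up to positive scaling by its orientation sign.
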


An element $\eta\in \Bbbk^\times$ is said to be \emph{totally positive} iff
$\sigma(\eta)>0$ for all $\sigma\in\mathsf{Gal}(\Bbbk/\BQ)$; the set of all
totally positive elements $\Bbbk^\times_+\subset \Bbbk^\times$ form a subgroup while (for $n\geq 3$) \cite{algn1}
\be\label{kkkwerxx}
\Bbbk^\times_+/\Bbbk^\times\cong \Z_2^{\phi(n)/2}.
\ee

Comparing with eqn.\eqref{whatH}, we see that
\be
\ker\mathfrak{s}= H\bigcap \Big(C_\BK \times \Bbbk^\times_+/N\BK^\times\Big).
\ee

The group of  principal fractional ideals $(\eta)$
with $\eta\in \Bbbk^\times_+$  is a subgroup
of the group of all principal fractional ideals.
The quotient $\mathfrak{I}_\Bbbk/(\Bbbk_+^\times)$ is called the \emph{narrow-ideal class}, $C_\Bbbk^\text{nar}$. Likewise we have the subgroup of totally positive units $\boldsymbol{u}_+\subset \boldsymbol{u}$; from the ray class exact  sequence \cite{algn1,milnecft}
\be\label{kkkkz9e}
1\to \boldsymbol{u}/\boldsymbol{u}_+\to \Z_2^{\phi(n)/2}\to C_\Bbbk^\text{nar}/C_\Bbbk\to1,
\ee 
we get
$\boldsymbol{u}/\boldsymbol{u}_+\cong \Z_2^{\phi(n)/2-a}$, $C_\Bbbk^\text{nar}/C_\Bbbk\cong\Z_2^a$ for some $0\leq a\leq \phi(n)/2-1$ ($a\geq 1$ when $Q=2$). Then
\be
\big|\ker \mathfrak{s}\big|=\big|\ker\!\big(C_\BK\xrightarrow{\,N\,}C_\Bbbk^\text{nar}\big)\big|\times \big|\boldsymbol{u}_+/N\boldsymbol{U}\big|,
\ee 
and the number of $Sp(\phi(n),\R)$-conjugacy classes of embeddings $\Z_n\hookrightarrow Sp(\phi(n),\Z)$ (i.e.\! the number of possible sign assignments in the integral symplectic structure is
\be
\big|H_\R\big|= \frac{|\ker(C_\BK\xrightarrow{N} C_\Bbbk)|}{|\ker(C_\BK\xrightarrow{N} C^\text{nar}_\Bbbk)|}\cdot\big|\boldsymbol{u}/\boldsymbol{u}_+\big|.
\ee
Hence $\ker(C_\BK\xrightarrow{N} C_\Bbbk)/\ker(C_\BK\xrightarrow{N} C^\text{nar}_\Bbbk)\cong\Z_2^b$, with   $b\leq a$ 
while $\boldsymbol{u}/\boldsymbol{u}_+
\cong \Z_2^{\phi(n)/2-a}$ and
\be\label{hhhowman}
H_\R\cong \Z_2^{\phi(n)/2+(b-a)}
\ee
so that the number of $Sp(\phi(n),\R)$ inequivalent embeddings is $2^{\phi(n)/2+(b-a)}\leq 2^{\phi(n)/2}$. 

\begin{corl} Let the class number of $\BK$, $h_\BK$, be \emph{odd}. Then
\be
H_\R\cong \boldsymbol{u}/\boldsymbol{u}_+\cong \Z_2^{\phi(n)/2-a},\qquad b=0.
\ee
\end{corl}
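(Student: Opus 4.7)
The second isomorphism is immediate from the ray class exact sequence \eqref{kkkkz9e} already established, which gives
\be
\big|\boldsymbol{u}/\boldsymbol{u}_+\big|\cdot \big|C_\Bbbk^\text{nar}/C_\Bbbk\big|=2^{\phi(n)/2},
\ee
and since $C_\Bbbk^\text{nar}/C_\Bbbk\cong\Z_2^a$ and $\boldsymbol{u}/\boldsymbol{u}_+$ is an elementary abelian $2$-group (as a subgroup of $\Z_2^{\phi(n)/2}$), we read off $\boldsymbol{u}/\boldsymbol{u}_+\cong\Z_2^{\phi(n)/2-a}$. So the whole content of the corollary is the claim $b=0$, after which the formula $H_\R\cong \Z_2^{\phi(n)/2+(b-a)}$ proved just above in \eqref{hhhowman} collapses to the desired expression.

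The plan is to deduce $b=0$ by a straightforward order count. Recall the two norm maps
\be
N\colon C_\BK\longrightarrow C_\Bbbk,\qquad N^\text{nar}\colon C_\BK\longrightarrow C_\Bbbk^\text{nar},
\ee
linked by the natural surjection $C_\Bbbk^\text{nar}\twoheadrightarrow C_\Bbbk$ (from \eqref{kkkkz9e}). A class $[\mathfrak{a}]$ that maps to $0$ in $C_\Bbbk^\text{nar}$ satisfies $N\mathfrak{a}=(\eta)$ with $\eta$ totally positive, hence in particular $N\mathfrak{a}$ is principal, so $\ker N^\text{nar}\subseteq\ker N$. By definition of $b$ this gives a short exact sequence
\be
1\longrightarrow \ker N^\text{nar}\longrightarrow \ker N\longrightarrow \Z_2^{\,b}\longrightarrow 1.
\ee

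First I would note $|\ker N|=h^-=h_\BK/h^+_\Bbbk$, and invoke the classical fact (e.g.\! \cite{apo2}) that $h^+_\Bbbk$ divides $h_\BK$, so $h^-\in\mathbb{N}$ is a divisor of $h_\BK$. Under the hypothesis that $h_\BK$ is odd, this forces $|\ker N|$ to be odd. Then $|\ker N^\text{nar}|$, being the order of a subgroup of $\ker N$, is also odd, and consequently so is the index $[\ker N:\ker N^\text{nar}]$. But this index equals $|\Z_2^{\,b}|=2^b$, and the only power of $2$ that is odd is $2^0$, so $b=0$, which is the hard step (in the sense that it is the only nontrivial input).

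Substituting $b=0$ into \eqref{hhhowman} yields $H_\R\cong \Z_2^{\phi(n)/2-a}$, completing the identification
\be
H_\R\cong \boldsymbol{u}/\boldsymbol{u}_+\cong \Z_2^{\phi(n)/2-a}.
\ee
I do not anticipate any obstacle beyond checking that the containment $\ker N^\text{nar}\subseteq \ker N$ is indeed the one built into the definition of $b$ earlier in the text; everything else is a one-line divisibility argument.
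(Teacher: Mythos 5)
The paper states this \textbf{Corollary} without proof, so there is no ``paper's approach'' to compare against; your argument supplies the missing proof and it is correct. The containment $\ker N^\text{nar}\subseteq \ker N$ follows from the factorization $N = \big(C_\Bbbk^\text{nar}\twoheadrightarrow C_\Bbbk\big)\circ N^\text{nar}$, which also shows that $\ker N/\ker N^\text{nar}$ embeds into $\ker\!\big(C_\Bbbk^\text{nar}\to C_\Bbbk\big)\cong\Z_2^a$ --- this is precisely why the paper could write $\ker N/\ker N^\text{nar}\cong\Z_2^b$ with $b\le a$ in the first place, so the compatibility you flag at the end is automatic. Your divisibility step is the whole content: $[\ker N:\ker N^\text{nar}]=2^b$ divides $|\ker N|=h^-=h_\BK/h^+_\Bbbk$ (using the surjectivity of $N$, which the paper already quotes), and any positive divisor of the odd integer $h_\BK$ is odd, forcing $b=0$. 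The restatement of $\boldsymbol{u}/\boldsymbol{u}_+\cong\Z_2^{\phi(n)/2-a}$ from \eqref{kkkkz9e} adds nothing new but is harmless. No gaps.
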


From eqn.\eqref{hhhowman} we see that 
if $b<a$ not all signatures of the symplectic structure may be realized. Indeed, from eqn.\eqref{kkkkz9e} we see that all signatures are realized iff the kernel of the natural map $C_\Bbbk^\text{nar}\to C_\Bbbk$ is contained in the image of $N$.
We mention a few known facts on $a$:
 \begin{itemize}
 \item[a)] {\rm(Weber)} if $n=2^r$, $\boldsymbol{u}_+= \boldsymbol{u}^2$, that is, $a=0$;
\item[b)] {\rm(Kummer, Shimura \cite{kum1,kum2})} if $\bn$ is a prime, $a=0$ if and only if the class number of $\BK$ is odd;
\item[c)] of course $a>0$ if $\bn$ is divisible by two distinct primes.
\end{itemize}  

Thus, for instance, if $h_\BK$ is odd and $\bn$ composite $\neq 2^r$, not all signs of the symplectic form are realizable.


\begin{rem} If $h_\BK=2$ we have
$H\cong \Z_2\times \boldsymbol{u}/N\boldsymbol{U}$, since we must have\footnote{\ Indeed, $2\mid h_\Bbbk\Rightarrow 2\mid h^-$ [cite] so that $2\mid h_\Bbbk\Rightarrow 4\mid h_\BK$.} $h_\Bbbk=1$; if (in addition) $\bn$ is not a prime power, $H\cong \Z_2^{1+\phi(n)/2-1}\cong \Z_2^{\phi(n)/2}$. This happens e.g.\! for $n=39$, $56$, $78$. For these three instances the conductor is divisible by just two distinct primes, and hence (by a result of Sinnott \cite{sinnot}) $\boldsymbol{u}$ coincides with the group of cyclotomic units. 
\end{rem}

\subsubsection{The sign function}
From \eqref{symppp} we see that the sign function is
\be
\mathsf{sign}\colon \sigma\mapsto \frac{\sigma(\lambda_v)}{i|\sigma(\lambda_v)|}
=\frac{\sigma(\varrho)}{i|\sigma(\varrho)|}\cdot \frac{\sigma(\eta)}{|\sigma(\eta)|}.
\ee

Comparing eqn.\eqref{kkkxxxz10} with 
eqn.\eqref{kkkwerxx}, we conclude

\begin{corl}\label{mmmxz32} All sign functions (i.e.\! all spectral invariants) are realized for some
arithmetic embedding $\Z_n\hookrightarrow Sp(\tilde\Omega,\Z)$ with $\tilde\Omega$ a non-necessarily principal polarization. 
\end{corl}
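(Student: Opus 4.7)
The strategy is to leverage the non-principal polarization construction of \S\,\ref{kkzaqww}: starting from a reference embedding with principal polarization, multiply the symplectic form by a suitable element $\kappa\in\fo$ to adjust the sign function, and then invoke eqn.~\eqref{kkkwerxx} (which says the signature map on $\Bbbk^\times$ is surjective onto $\Z_2^{\phi(n)/2}$) to realize an arbitrary sign pattern.

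First, I would fix a reference embedding $\Z_n\hookrightarrow Sp(\phi(n),\Z)$, for instance the one associated to the trivial ideal class $\mathfrak{a}=(1)$ with the unit choice $v=1$. This produces a principal integral polarization $J$ as in eqn.~\eqref{kkkaz192} and a reference sign function $\mathsf{sign}^0$ on $\mathsf{Gal}(\BK/\BQ)$. For any nonzero $\kappa\in\fo$, the matrix $J^\kappa$ defined by eqn.~\eqref{kkkxxxz10} is an integral skew-symmetric form with $\det J^\kappa=(N_{\Bbbk/\BQ}\kappa)^2$ preserved by $m$: skew-symmetry and $m$-invariance follow by repeating the computation \eqref{symppp} with $\lambda_v$ replaced by $\kappa\lambda_v$, using that $\kappa\in\Bbbk$ is real (so $\kappa\lambda_v$ is still purely imaginary) and that multiplication by $\kappa$ commutes with multiplication by $\zeta$; integrality holds because $\kappa\fO\subseteq\fO$. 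Hence $m$ gives an embedding $\Z_n\hookrightarrow S(J^\kappa)_\Z$ into a generally non-principally polarized arithmetic symplectic group.

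Next I would track how the sign function changes under this modification. Restricting the pairing to the $\sigma(\zeta)$-- and $\sigma(\zeta)^{-1}$--eigenspaces, the diagonal coefficient becomes $\sigma(\kappa\lambda_v)=\sigma(\kappa)\,\sigma(\lambda_v)$; since $\kappa\in\Bbbk$, $\sigma(\kappa)$ is a real number, so
\[
\mathsf{sign}^\kappa(\sigma)=\mathrm{sgn}\bigl(\sigma(\kappa)\bigr)\cdot\mathsf{sign}^0(\sigma).
\]
Thus realizing an arbitrary odd sign function $\epsilon\colon\mathsf{Gal}(\BK/\BQ)\to\{\pm1\}$ reduces to producing $\kappa\in\fo\setminus\{0\}$ such that the archimedean signature of $\kappa$ (viewed as an element of $\Bbbk^\times$) matches the target pattern $\epsilon\cdot\mathsf{sign}^0$ on $\mathsf{Gal}(\Bbbk/\BQ)$.

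This final step is exactly what eqn.~\eqref{kkkwerxx} provides: the map $\Bbbk^\times\to\{\pm1\}^{\mathsf{Gal}(\Bbbk/\BQ)}$, $\kappa\mapsto(\sigma\mapsto\mathrm{sgn}(\sigma(\kappa)))$, is surjective with kernel $\Bbbk^\times_+$. Pick any representative $\kappa\in\Bbbk^\times$ of the desired signature pattern and, if needed, multiply by a positive rational integer (which does not alter signs) to land in $\fo$. I do not expect a real obstacle---the arithmetic content is already packaged in \eqref{kkkwerxx}; the only care required is the verification that $J^\kappa$ is an integral skew form preserved by $m$, which is immediate from the commutativity of the multiplications by $\kappa$ and $\zeta$ on $\mathfrak{a}$.
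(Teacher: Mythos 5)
Your proposal is correct and is essentially the paper's own argument, simply written out: the paper compresses the entire proof into the phrase "comparing eqn.~\eqref{kkkxxxz10} with eqn.~\eqref{kkkwerxx}, we conclude," and your write-up is a faithful unpacking of that comparison — the $J^\kappa$ construction plus the surjectivity of the archimedean signature map on $\Bbbk^\times$, with the small but correct care about clearing denominators to land $\kappa$ in $\fo$.
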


A few examples are in order.
We have seen above that if
\be
\boldsymbol{(\ast\ast)}\qquad \text{$h_\BK$ is odd and the conductor $\bn$ is either a power of 2 or an odd prime}
\ee
then \emph{all} $2^{\phi(\bn)/2}$ signs are realized with $\Omega$ principal. Let us consider the first few  $n$'s which do not satisfy these condition
$\boldsymbol{(\ast\ast)}$. The first one is $9$.

\begin{exe} \fbox{\;$n=9$\;} In this case $h_\BK=Q=1$ so there are $2^{\phi(n)/2}=8$ distinct $Sp(6,\Z)$-conjugacy classes of order $9$ elements. Since 9 is a prime power, $\boldsymbol{u}$ is the group of the the real cyclotomic units, that is, 
\be
\boldsymbol{u}=\pm (\zeta+\zeta^8)^\Z\,(\zeta^4+\zeta^5)\equiv \pm\, u^\Z\, v^\Z.
\ee
The sign table for the three elements $\sigma_a\in \mathsf{Gal}(\Bbbk/\BQ)$ are
\be
\begin{array}{c|ccc}
& \sigma_1 & \sigma_2 & \sigma_4\\\hline
u & +& + & -\\
v & -& + &+
\end{array}
\ee
Thus $\boldsymbol{u}/\boldsymbol{u}_+\cong \Z_2^3\cong \boldsymbol{u}/N\boldsymbol{U}$; hence all $Sp(6,\Z)$-conjugacy classes are distinct as $Sp(6,\R)$-conjugacy classes and all 8 signs are realized (cfr.\! also \cite{eie}).
\end{exe}

\begin{exe} \fbox{\;$n=12$\;} In this case $h_\BK=1$ and $Q=2$,
so we have only $2^{\phi(12)/2}/Q=2$ inequivalent embeddings over $Sp(4,\Z)$.
They correspond to $\eta=\pm1$.
One has
\be
\overline{\Phi_{12}^\prime(e^{2\pi i/12})}= 2\sqrt{3}\,e^{-2\pi i/3},
\ee
so as $\varrho$ we may choose 
\be
-i\, e^{-2\pi i/3}/(2\sqrt{3}\,e^{-2\pi i/3})\equiv -i/(2\sqrt{3})=\frac{1}{2(\zeta^4-\zeta^{-4})}.
\ee
Thus for $n=12$ the sign function is
\be
\begin{array}{l}
(\Z/12\Z)^\times\to \{\pm1\},\\ a\mapsto i^{a-1},\end{array}\quad \text{i.e. }\quad  \begin{array}{ll}1\mapsto-1, &5\mapsto+1,\\ 7\mapsto-1, &11\mapsto+1.\end{array}
\ee
The group of units of $\Bbbk\equiv\BQ[\sqrt{3}]$ is $\boldsymbol{u}=\pm(2-\sqrt{3})^\Z$, and $\boldsymbol{u}/\boldsymbol{u}_+\cong\{\pm1\}$. Thus only two spectral invariants out of four are realized by embeddings $\Z_{12}\to Sp(4,\Z)$ (as expected) namely $\{\zeta,\zeta^7\}$ and
$\{\zeta^5,\zeta^{11}\}$ (where $\zeta=e^{2\pi i/12}$). 

\begin{rem} Since the ST group $G_8$ has degrees $\{12,8\}$, according to the dimension formulae of section 4, its elements of order 12 should have an embedding with spectral invariant $\{\zeta,\zeta^5\}$. We saw in \S.\,\ref{connstantmaoa} (\textbf{Example \ref{e6}}) that the $G_8$-invariant polarization has $\det\tilde\Omega=2^2$. This polarization has the form in \S.\,\ref{kkzaqww} with $\kappa=1+\sqrt{3}\in \mathfrak{o}$ which is not totally positive (its norm is negative)
$N_{\Bbbk/\BQ}(1+\sqrt{3})=-2$.
This illustrates \textbf{Corollary \ref{mmmxz32}}.
\end{rem}
\end{exe}

\begin{exe} \fbox{\;$n=15$\;}
Again $h_\BK=1$ and we have $2^{\phi(15)/2}/Q=8$ different $Sp(8,\Z)$-conjugacy classes. We have ($\zeta\equiv e^{2\pi i/15}$)
\be
\Phi^\prime_{15}(\zeta)=15\,\frac{\zeta^{-1}(\zeta-1)}{(\zeta^5-1)(\zeta^3-1)}
\quad\text{we choose}\quad
\varrho=-\frac{1}{15}\, \zeta^3(\zeta^2-1)(\zeta^{10}-1)(\zeta^{12}-1),
\ee
and then the signs of the reference embedding are
\be
\frac{\sigma_a(\varrho)}{i\,|\sigma_a(\varrho)|}=\begin{cases} +1 & a=1,2\\
-1 & a=4,7
\end{cases}\quad \xrightarrow{\ \text{spectral inv.}\ }\quad \{\zeta,\zeta^2,\zeta^8,\zeta^{11}\}.
\ee
Writing $\xi=\zeta+\zeta^{-1}$, we have
(according to \textsc{Mathematica})
\be
\boldsymbol{u}=\pm (-1+3\xi-\xi^3)^\Z\, (2+3\xi-\xi^2-\xi^3)^\Z\,(-1+4\xi-\xi^3)^\Z=
\pm u_1^\Z\,u_2^\Z\,u_3^\Z,
\ee
whose signs are 
\be
\begin{array}{c|cccc}
& \sigma_1 & \sigma_2 & \sigma_4 & \sigma_7\\\hline
u_1 & -& + & -&+\\
u_2& -& + &+& -\\
u_3 & + & + &- &-
\end{array}
\ee
so that $\boldsymbol{u}/\boldsymbol{u}_+\cong\Z_2^3$, i.e.\! all $Sp(8,\Z)$  classes correspond to $Sp(8,\R)$ classes, with signs:
$$\{++--\},\ \{-++-\},\ \{-+-+\},\ \{++++\},\ 
\{--++\},\ \{+--+\},\ \{+-+-\},\ \{----\},$$
that is, the spectral invariants
\be
\begin{gathered}
\{\zeta,\zeta^2,\zeta^{11},\zeta^8\},\quad \{\zeta^{14},\zeta^2,\zeta^4,\zeta^8\},\quad 
\{\zeta^{14},\zeta^2,\zeta^{11},\zeta^7\},\quad
\{\zeta,\zeta^2,\zeta^4,\zeta^7\},\\
\{\zeta^{14},\zeta^{13},\zeta^{4},\zeta^7\},\quad\{\zeta,\zeta^{13},\zeta^{11},\zeta^7\},\quad \{\zeta,\zeta^{13},\zeta^{4},\zeta^8\},\quad \{\zeta^{14},\zeta^{13},\zeta^{11},\zeta^8\}.
\end{gathered}   
\ee 
\end{exe}

\begin{exe} \fbox{\;$n=20$\;} Again $h_\BK=1$ and we have $2^{\phi(20)/2}/Q=8$ different $Sp(8,\Z)$-conjugacy classes. We have ($\zeta\equiv e^{2\pi i/20}$)
\be
\Phi^\prime_{20}=\frac{10\,\zeta^8}{\zeta+\zeta^{-1}}\quad\text{we choose}\quad
\varrho= \frac{1}{10}(\zeta+\zeta^{-1})\zeta^5.
\ee
Then
\be
\frac{\sigma_a(\varrho)}{i\,|\sigma_a(\varrho)|}=\begin{cases} +1 & a=1,7\\
-1 & a=3,9
\end{cases}\quad \xrightarrow{\ \text{spectral inv.}\ }\quad \{\zeta,\zeta^{17},\zeta^7,\zeta^{11}\}.
\ee
Again, with $\xi=\zeta+\zeta^{-1}$
\be
\boldsymbol{u}=\pm (1+\xi)^\Z\, (2-\xi^2)^\Z\,(1-3\xi+\xi^3)^\Z=\pm u_1^\Z\,u_2^\Z\,u_3^\Z,
\ee
with signs
\be
\begin{array}{c|cccc}
& \sigma_1 & \sigma_3 & \sigma_7 & \sigma_9\\\hline
u_1 & +& + & -&-\\
u_2& -& + &+& -\\
u_3 & + & - &+ &-
\end{array}
\ee
so, again $\boldsymbol{u}/\boldsymbol{u}_+\cong\Z_2^3$ and $Sp(8,\Z)$ and $Sp(8,\R)$ conjugacy classes coincide.
\end{exe}

\begin{exe} \fbox{\;$n=21$\;}
Again $h_\BK=1$ and we have $2^{\phi(21)/2}/Q=32$ different $Sp(12,\Z)$-conjugacy classes. We have ($\zeta\equiv e^{2\pi i/21}$)
\be
\Phi^\prime_{21}(\zeta)=21\,\frac{\zeta^{-1}(\zeta-1)}{(\zeta^7-1)(\zeta^3-1)}
\quad\text{we choose}\quad
\varrho=-\frac{1}{21}\, \zeta^9(\zeta^2-1)(\zeta^{14}-1)(\zeta^{18}-1),
\ee
and then the signs of the reference embedding are
\be
\frac{\sigma_a(\varrho)}{i\,|\sigma_a(\varrho)|}=\begin{cases} +1 & a=1\\
-1 & a=2,4,5,8,10
\end{cases}\quad \xrightarrow{\ \text{spectral inv.}\ }\quad \{\zeta,\zeta^{19},\zeta^{17},\zeta^{16},\zeta^{13},\zeta^{11}\}.
\ee
Writing $\xi=\zeta+\zeta^{-1}$, we have
(according  to \textsc{Mathematica})
\be
\begin{split}
\boldsymbol{u}&=\pm \xi^\Z\, (2-\xi^2)^\Z\, (2-4\xi^2+\xi^4)^\Z\,(3-8\xi-\xi^2+6\xi^3-\xi^5)^\Z\,(2-5\xi-\xi^2+5\xi^3-\xi^5)^\Z\\
&=\pm u_1^\Z\,u_2^\Z\,u_3^\Z\,u_4^\Z\,u_5^\Z,
\end{split}
\ee
whose signs are 
\be
\begin{array}{c|cccccc}
& \sigma_1 & \sigma_2 & \sigma_4 & \sigma_5 & \sigma_8 & \sigma_{10}\\\hline
u_1 & +& + & +&+&-&-\\
u_2& -& - &+& + & - &-\\
u_3 & + & - &+ &+ &- &+\\
u_4&  + &+ &-&+&+&-\\
u_5& - & + &- &+&-&-
\end{array}
\ee
so that $u_1u_2u_3u_4u_5\in \boldsymbol{u}_+$ and $\boldsymbol{u}/\boldsymbol{u}_+\cong \Z_2^{\phi(21)/2-1}\equiv\Z_2^5$, and only 32 out of the possible 64 signs are actually realized.
The allowed spectral invariants may be read from the above tables.
\end{exe}

\subsection{Explicit matrices}\label{exmatr}

We now write explicitly the integral  matrices yielding a reference embedding 
$\Z_n\hookrightarrow Sp(\phi(n),\Z)$
on which we act with the groups $H$ or  
$H_\R$ to get the inequivalent embeddings over $Sp(\phi(n),\Z)$ and $Sp(\phi(n),\R)$, respectively.

Let $\mathfrak{a}$ be a fractional ideal of $\BK$ such that $N\mathfrak{a}=(\eta)$, $\eta\in \Bbbk^\times$. We write $k=\phi(n)/2$ and
 choose generators of the free $\Z$-module $\mathfrak{a}$, $\mathfrak{a}=\bigoplus_{a=1}^{2k}\Z \omega_a$. Define
the dual vector $(\phi^a)\in \BK^{2k}$ by the condition
\be\label{kkkaq0}
\mathrm{Tr}_{\BK/\BQ}\big(\omega_a\,\bar\phi^b\big)={\delta_a}^b.
\ee
By definition, the dual ideal is
$
\mathfrak{a}^\ast=\bigoplus_{a=1}^{2k} \Z \phi^a$.
Since $\mathfrak{a}^\ast=\lambda\mathfrak{a}$ with $\lambda$ purely imaginary (cfr.\! eqn.\eqref{hhhhaq}), there exists
$\Lambda\in GL(2k,\Z)$ such that
\be\label{nnnza2}
\Lambda^{ab}\omega_b= \lambda^{-1} \phi^b,\qquad \Lambda_{ba}\phi^b=\bar \lambda\omega_a,
\ee
where $\Lambda_{ab}$ is the inverse of $\Lambda^{ab}$; the second equation being a consequence of the first in view of \eqref{kkkaq0}). Then
\be
{((\Lambda^t)^{-1}\Lambda)_a}^b\, \omega_b =\lambda^{-1} \bar\lambda\,\omega_a=-\omega_a,
\ee
and the integral matrix $\Lambda$ is antisymmetric with determinant 1, hence similar over the integers to the standard symplectic matrix $\Omega$.

Each ideal class $[\mathfrak{a}]\in C^-_\BK=\ker(C_\BK\xrightarrow{\;N\;}\C_\Bbbk)$
yields an embedding $\Z_n\to GL(2k,\Z)$ which is quaternionic with respect to $2^k/Q$ inequivalent (over $\Z$) symplectic structures. To get the reference embedding, 
let us consider the trivial class in $C^-_\BK$; as a representative ideal we take
$\mathfrak{O}\equiv (1)$ itself. 

As a $\Z$-basis of $\mathfrak{O}$ we take $\omega_x=\zeta^{x-1}$ with $x=1,\dots,2k$. 
It is convenient to re-label the elements of this basis. Let $n=p^{r_1}_1p^{r_2}_2\cdots p_s^{r_s}$ be the prime decomposition of $n$. Choose   primitive
$p_i^{r_i}$-th roots of 1, $\zeta_i$. Then $\zeta=\prod_i\zeta_i$ is a primitive $n$-th root. By the Chinese remainder theorem, there exist
integers $e_i$, $i=1,\dots,s$, such that
\be
e_i = \delta_{ij}\!\!\!\mod p_j^{r_j}\quad \forall\; i,j=1,\dots,s.  
\ee
We write the index $x=1,2,\dots,2k$ uniquely as 
\be
\begin{aligned}
&x-1=\sum_i  e_i\big((a_i-1)p^{r_i-1}+(\alpha_i-1)\big)\\ 
&\text{with }a_i=1,\dots,p_i-1\ \text{and }\alpha_i\in \Z/p^{r_i-1}_i\Z.
\end{aligned}
\ee
Then the basis $\zeta^{x-1}$ is re-written as a tensor product over the primes $p_i\mid n$
\be\label{jjjaz}
\omega_{a_i\alpha_i}=\prod_i \zeta_i^{p_i^{r_i-1}(a_i-1)+(\alpha_i-1)},\qquad \omega_{\boldsymbol{a}\,\boldsymbol{\alpha}}=\bigotimes_{i=1}^s (\omega_i)_{a_i\alpha_i} 
\ee
It is convenient to realize the action of $m$ as multiplication by a different primitive $n$-root
\be
\zeta^\prime=\prod_i \zeta_i^{p_i^{r_i-1}+1}.
\ee
With these conventions, the action of $\Z_n\cong \prod_i \Z_{p_i^{r_i}}$ explicitly factorizes in the product of the action of the factor groups $\Z_{p_i^{r_i}}$
\be
\zeta^\prime\,\omega_{\boldsymbol{a}\,\boldsymbol{\alpha}}=\bigotimes_{i=1}^s {(m_i)_{a_i\alpha_i}}^{b_j\beta_j}\,\omega_{b_j\beta_j}\quad\text{that is}\quad m=\bigotimes_{i=1}^s m_i.
\ee 
We have to discuss separately the matrices $m_i$ associated to an odd prime and the one associated to 2 (if present). For $p_i$ odd,
 each $m_i$ factorizes in the matrix $m_{(p)}$ yielding the reference embedding $\Z_p\to GL(p-1,\Z)$
times the $p_i^{r_i-1}$-circulant
\be
m_i= m_{(p_i)}\otimes C_{p_i,r_i},
\ee
where $m_{(p)}$ (resp.\! $C_{p,r}$) is the $(p-1)\times (p-1)$ matrix (resp.\! $p^{r-1}\times p^{r-1}$)
\be\label{mattrr}
m_{(p)}=\left[\begin{array}{c|ccc}
0 & 1 &&\\
\vdots & &\ddots &\\
0 &&& 1\\\hline
-1 & -1 &\cdots & -1\end{array}\right],\qquad C_{p,r}=\left[\begin{array}{c|ccc}
0 & 1 &&\\
\vdots & &\ddots &\\
0 &&& 1\\\hline
1 & 0 &\cdots & 0\end{array}\right]
\ee
For $p_1=2$, $m_1$ is just the scalar $-1$ for $r_1=1$. For $r_1\geq 2$, $m_1$ is the tensor product of the $2\times 2$ matrix
$m_{(2)}$ yielding the embedding $\Z_4\to GL(2,\Z)$ (the $2\times 2$ matrix of the form on the left of \eqref{mattrr}) 
with the $2^{r_1-2}$-circulant. 

Likewise, the Hermitian structure factorizes
\be
\mathrm{Tr}_{\BK/\BQ}\big(\omega_{a_i\alpha_i}\,\bar\omega_{b_i,\beta_i}\big)=\prod_{i} \Big(p_i^{r_i-1}\,T_{i,a_ib_i} \,\delta^{(p_i^{r_i-1})}_{\alpha_i\beta_i}\Big).
\ee
where  
\be
T_{i,ab}= p_i\,\delta_{ab}-v_a v_b,\quad v_a=(1,1,\dots,1),\qquad \delta^{(\ell)}_{\alpha\beta}=\begin{cases} 1 & \alpha\equiv\beta\mod \ell\\
0 & \text{otherwise,}
\end{cases}
\ee
and we used \cite{carcar}
\be\label{kkka}
\mathrm{Tr}_{\BK/\mathbb{Q}}\!\big(\zeta^t\big)\equiv\sum_{\ell\in (\Z/n\Z)^\times} \zeta^{\ell t}=\frac{\phi(n)}{\phi(n/(n,t))}\,\mu(n/(n,t)),
\ee
where $\mu(x)$ is the M\"obius function.
Let $\zeta_i$ be the primitive $p^{r_i}$ we have chosen and set $\xi_i=\zeta_i^{p_i^{r_i-1}}$ (a primitive $p$-root). We write 
\be
\omega_i^{a_i\,\alpha_i}= (\xi_i^{a_i-1}-\xi_i^{-1})\zeta_i^{\alpha_i-1}
\ee
The dual basis of $\omega_{\boldsymbol{a}\,\boldsymbol{\alpha}}$ (cfr.\! \eqref{jjjaz})
is
\be
\phi^{\boldsymbol{a}\,\boldsymbol{\alpha}}=\frac{1}{n}\bigotimes_{i=1}^s (\omega_i)^{a_i\,\alpha_i}.
\ee
Using the reference $\varrho$ described in \textsc{appendix \ref{kkkkza3b2}}, for $s$ \emph{odd} the symplectic matrix $\Lambda$ of our reference embedding,
 is simply the tensor product of the 
reference symplectic matrices $\Lambda_i$ for each prime  $p_i|n$, $\Lambda=\bigotimes_i\Lambda_i$; for $p_i$ odd
\be
\Lambda_i = U_i \otimes \boldsymbol{1}_{p_i^{r_i-1}},\quad U_i=\left[\begin{array}{c|c} 0 & \ -J_i\ \\\hline
J_i^t & 0\end{array}\right]\quad \begin{array}{l}\text{where $J_i$ is the $(p_i-1)/2\times(p_i-1)/2$}\\
\text{Jordan block of eigenvalue $-1$,}\end{array}
\ee 
while for $p_1=2$ we have $\Lambda_1=1$ if $r_1=1$ and otherwise
\be
\Lambda_1=\begin{bmatrix}0 & 1\\ -1 &0\end{bmatrix}\otimes \boldsymbol{1}^{r_1-2}.
\ee
To set this matrix in the standard form $\Omega$, it suffices to replace the above $\Z$-basis with the $\Z$-basis
\be
\bigotimes_{i=1}^s(\tilde\omega_i)_{a_i,\alpha_i}\quad \text{where }\ \tilde\omega_{a_i,\alpha_i}= \begin{cases}
\omega_{a_i,\alpha_i} & 1\leq a_i < p_i/2\\
\phi^{a_i,\alpha_i} & p_i/2< a_i < p_i.\end{cases}
\ee
Therefore, for $s$ odd the reference embedding
$\Z_n\cong\prod_{i=1}^s \Z_{p_i^{r_i}}$
in $Sp(\phi(n),\Z)$ is simply the tensor product of the embeddings of the factor groups $\Z_{p_i^{r_i}}\to Sp(\phi(p_i^{r_i},\Z)$.
 
For $s$ even the above tensor product produces an orthogonal rather than a symplectic embedding since $\bigotimes_i\Lambda_i$ is symmetric; to get an antisymmetric pairing, we
multiply it by the reference imaginary unit of \textbf{Lemma \ref{kkkdh}}. We get
\be
\Lambda= \begin{cases}m^{n/4} \bigotimes_{i=1}^s \Lambda_i & 4\mid n\\
(m - m^{-1})\bigotimes_{i=1}^s\Lambda_i
&\text{otherwise.}
\end{cases}
\ee
One checks that
\be
m \Lambda m^t =\Lambda.
\ee
This completes the explicit description of the reference embedding. Now we act with the group $H$ on it to get all other inequivalent embeddings.

\paragraph{Multiplying by an element of
$\boldsymbol{u}/N\boldsymbol{U}$.}
This subgroup of $H$ does not change the matrix $m$ but only the symplectic matrix $\Lambda$. Since $\fo=\Z[\zeta+\zeta^{-1}]$, each element of $v\in\boldsymbol{u}/N\boldsymbol{U}$ may be represented by a polynomial 
\be
p_v(\zeta+\zeta^{-1})=\sum_{i=0}^{\phi(n)/2-1} c(v)_i(\zeta+\zeta^{-1})^i\quad \text{with } c(v)_i\in\Z.
\ee
Then the change in the matrix $\Lambda$ produced by multiplication by $v$ is simply
\be
\Lambda\to \Lambda_v \equiv p_v(m+m^{-1})\,\Lambda= \Lambda\, p_v(m+m^{-1})^t\quad\Longrightarrow\quad m\Lambda_v m^t=\Lambda_v.
\ee

\paragraph{Replacing $(1)$ with a 
non-principal fractional ideal.}
In a Dedekind domain $\fO$, an ideal
$\mathfrak{a}$ which is not principal may be generated by two elements, that is,
has the form
\be
\mathfrak{a}= x\,\fO+y\,\fO,\qquad x,y \in \fO.
\ee
Let $\omega_a$ a $\Z$-basis of $\fO$
and $\varpi_a$ a $\Z$=basis of $\mathfrak{a}$. There are integral matrices $X$, $Y$ such that
\be
x \,\omega_a= {X_a}^b\,\varpi_b,\qquad
y\, \omega_a= {Y_a}^b\,\varpi_b.
\ee
The matrix $M$ yielding the action of $\zeta^\prime$ in the basis $\varpi_a$ is
$M=X^{-1}mX=Y^{-1}mY$. The condition that $N\mathfrak{a}$ is principal then implies that the induced $\Lambda$ is defined over $\Z$.

\subsection{Reducible minimal polynomial}

For completeness, we give some additional details on the case that the minimal polynomial of the elliptic element $m\in Sp(2k,\Z)$
is reducible over $\BQ$
\be
M(z)=\Phi_{d}(z)\,\Phi_n(z),\qquad n>d.
\ee
If $n/d$ is not a prime power, all embeddings are conjugate to a block-diagonal one
$m_d\oplus m_n$, see \textbf{Lemma \ref{llkas}}. Suppose that $n/d=p^r$ with $p$ prime while $(d,p)=1$. We still have the block-diagonal embeddings, and all embeddings are conjugate to block-diagonal ones over $\BQ$. Thus there is an element $R\in Sp(2k,\BQ)$ which splits the $\Z[m]$-module $V$ and the symplectic structure
\be
\begin{aligned}
&R^{-1}V=\mathfrak{a}\oplus \mathfrak{b}\quad\text{with}\quad \mathfrak{a}\in \mathfrak{I}_{\BQ[\zeta_n]},\ \ \mathfrak{b}\in \mathfrak{I}_{\BQ[\zeta_d]},\\
&R^{-1} mR\ \text{acts as as multiplication by $\zeta_n\times \zeta_d$},\\
&R^t\,\Omega\, R= J_n\oplus J_d.
\end{aligned}
\ee
$R$ must have the form
\be
R= \sum_{s=0}^{(\phi(n)+\phi(d))/2-1} a_s\, \big(m+m^{-1}\big)^s,\qquad a_s\in\Z
\ee
for certain coefficients $a_s$.

\section{Tables of dimensions for small $k$}

In this section we present some sample tables of both dimensions and dimension $k$-tuples for small values of the rank $k$. 

\subsection{New-dimension lists for $k\leq 13$}

In table \ref{new1} we list the new-dimension sets $\mathfrak{N}(k)$, for ranks $1\leq k\leq 13$. They have been computed using the defining formula:
\be
\mathfrak{N}(k):= \left\{ \frac{l}{s} \in \mathbb Q_{\geq 1}\;:\; \phi(l)= 2k,\ (l,s)=1\right\}.
\ee 
The set of dimensions allowed in rank $k$ is contained in the set
\be
\widehat{\Xi}(k)=\bigcup_{\ell=1}^k \mathfrak{N}(k).
\ee
From eqns.\eqref{xxyy1}\eqref{xxyy2},
 the cardinalities of the new-dimension sets $\mathfrak{N}(k)$ are:
\be
\begin{tabular}{c|ccccccccccccc}\hline\hline
$k$ & 1 & 2 & 3 & 4 &5& 6 & 7 & 8 & 9 & 10 & 11 & 12 & 13\\
$|\mathfrak{N}(k)|$ & 2+6 & 16 & 24 & 40 &20& 72& 0 & 96 & 72 & 100 & 44 &240 & 0\\
$|\mathfrak{N}(k)|_\text{int}$ & 2+3 &4 & 4&5 
&2& 6 & 0& 6 &4 &5 &2 &10 & 0\\
\hline\hline
\end{tabular}
\ee

\subsection{Dimension $k$-tuples: USE OF THE TABLES}

Tables of all ALLOWED dimension $k$-tuples become quite long pretty soon as we increase $k$.
For conciseness we list only the \textsc{strongly regular} dimension $k$-tuples from which one can infer all allowed $k$-tuples. The tables of \textsc{strongly regular} $k$-tuples contain the basic informations needed to check whether a proposed dimension $k$-tuple $\{\Delta_1,\cdots,\Delta_k\}$ is consistent or not with the arguments of the present paper.
By definition, a \textsc{strongly regular}
dimension $k$-tuples is a set of dimensions as computed using eqn.\eqref{kkkazqw} along a normal ray $M_\ast\subset M$ with strongly regular monodromy $m_\ast$ (i.e.\! such that the  characteristic polynomial of $m_\ast$ is square-free). In  turn, the \textsc{strongly regular} monodromies may be distinguished in two kinds: the ones consistent with a principal polarization, $m_\ast\in Sp(2k,\Z)$, and those associated to a suitable non principal polarization, $m_\ast\in S(\Omega)_\Z$ ($\det\Omega\neq1$). 
The complete list of all \textsc{allowed} dimension $k$-tuples is then recovered from the tables of the \textsc{strongly regular} ones by the algorithm described in section 4.5.1 which we review in the next subsection.

In tables \ref{3tuples}, \ref{3btuples}, and \ref{4tuples} we present the list of the strongly regular dimension $k$-tuples for $k=3$ and $k=4$. For $k=3$ we list both the principal (table \ref{3tuples}) and non-principal 3-tuples (table \ref{3btuples}), while for $k=4$ we limit ourselves to the principal ones (table \ref{4tuples}).

\subsubsection{The algorithm to check
admissibility of a given dimension $k$-tuple}

Suppose we are given a would-be dimension $k$-tuple, $\{\Delta_1,\cdots,\Delta_k\}$, written in non-increasing order $\Delta_i\geq \Delta_{i+1}$, and we wish to determine whether it is consistent with the geometric conditions discussed in the present paper.
In order to answer the question, we focus on the $k$ normal rays
in the Coulomb branch 
\be
M_i=\big\{u_j=0\ \text{for }i\neq j\big\}\subset M,\qquad i=1,2,\dots,k.
\ee
The monodromy along $M_i$, $m_i$,
may be either regular or irregular.
From a regular monodromy $m_i$ we may read all $k$ dimensions using the Universal Formula  \eqref{kkkazqw}. Experience with explicit examples (e.g.\! the ones having constant period map or those engineered in $F$-theory) suggests that the ray $M_1$ associated with the chiral operator $u_1$ of largest dimension $\Delta_1=\max_i\Delta_i$ always has a regular monodromy.\footnote{\ If there are $\ell>1$ chiral generators of largest dimension, we have
a $\mathbb{P}^{\ell-1}$ family of normal rays associated to this dimension. In this case the expectation is that the \emph{generic} ray in the family has regular monodromy.} However $m_1$ may be just \emph{weakly} regular; in this case eqn.\eqref{kkkazqw} still applies but the corresponding $k$-tuple is not listed in the tables, and we need to follow the procedure described below. 
Recall that we have defined the \emph{regular rank} $k_{\text{reg},\,1}\leq k$ of the monodromy matrix $m_1$ to be one-half the degree of the square-free part of its characteristic polynomial.  

To run the algorithm, one begins writing the rational number $\Delta_1$ in minimal form, $\Delta_1\equiv n_1/r_1$ with $(n_1,r_1)=1$; 
we may assume $n_1>2$ by the argument in \S.\ref{jjjjjjz}. Let $\ell_1$ be the largest integer such that $\Delta_{\ell_1}=\Delta_1$, that is, the multiplicity of the largest dimension.
$\Delta_1$ is a new-dimension in some rank $k_1=\tfrac{1}{2}\phi(n_1)$ and $k_1\ell_1\leq k$. 
If $k_1=k$, the monodromy $m_1$ is automatically strongly regular, and hence the full $k$-tuple should appear in the tables of strongly regular $k$-tuples under the characteristic polynomial (C.P.) $\Phi_{n_1}$. More generally, if $k_1\ell_1=k$, the dimension $k$-tuple is the union of $\ell_1$  strongly regular dimension $k_1$-tuples for $\Phi_{n_1}$.
If $\ell_1 k_1< k$, the $k$-tuple is the union of $\ell_1$ strongly regular $k_1$-tuples and a residual $(k-\ell_1k_1)$-tuple $\{\Delta_i\}_{i\in A_1}$ ($A_1\subset\{1,\dots,k\}$).
Under the assumption that $m_1$ is (weakly) regular
we have
\be
\beta_i= \frac{\Delta_i-1}{\Delta_1},\qquad i\in A_1,
\ee
Let $s_i\in \mathbb{N}$ be the order of $\beta_i$ in $\BQ/\Z$.
The multiplicity $\ell(s)$ of each integer $s\in\mathbb{N}$ in $\{s_i\}_{i\in A_1}$ should be an integral multiple of $\phi(s)/2$, and $\{\beta_i\}_{i\in A_1}\cup \{1-\beta_i\}_{i\in A_1}$ should consist in the union of $2\ell(s_i)/\phi(s_i)$ copies of each set $B(s_i)=\{s_i/r,\ (s_i,r)=1,\ 1<r<s_i\}$. This corresponds to a characteristic polynomial
\be
\det[z-m_1]=\Phi_{n_1}(z)^{\ell_1}\prod_{s\in \mathbb{N}} \Phi_{s}(z)^{2\ell(s)/\phi(s)}.
\ee
If our candidate $k$-tuple satisfies all these requirements at the ray $M_1$, we next consider consistency conditions at the rays $M_2$, $M_3$, etc.\! along the lines of \S.\ref{conray}. The arguments are parallel to the ones for $M_1$ except that now we do not expect the monodromy to be fully regular (not even in the weak sense) so that only a
sub $k_\text{reg,\,$i$}$-tuple
of dimensions is fixed at each ray. This still yields non trivial consistency conditions as in the examples of \S.\ref{conray}.

The algorithm is longer to explain than to run. We illustrate the method in a typical example.  

\begin{exe} In rank 4 the following (non strongly regular) $4$-tuple exists\footnote{\ We thank Jacques Distler for suggesting this example.}
\be\label{vvvz}
\big\{14,10,8,4\big\}.
\ee
Let us apply the procedure to it. The largest dimension, $\Delta_1=14$ has multiplicity 1 and is a rank 3 new-dimension (see table \ref{new1}); then
3 out of the 4 dimensions
\eqref{vvvz} should form a strongly regular $3$-tuple to be found in table \ref{3tuples}  under $\Phi_{14}$. 
Indeed, there we find $\{14,10,4\}$.
The set of residual dimensions is $\{8\}$ (i.e.\! $A_1=\{3\}$) and 
\be
\beta_3=\frac{7}{14}\equiv \frac{1}{2},\qquad s_3=2,\qquad
2\ell(s_3)/\phi(2)\equiv 2, 
\ee
so that the 4-tuple \eqref{vvvz} is consistent with $m_1$ being weakly regular with
\be
\det[z-m_1]=\Phi_{14}(z)\,\Phi_2(z)^2.
\ee
Next we consider the second dimension $\Delta_2=10$.
From table \ref{new1} we see that it is a rank 2 new-dimension, hence we expect that two out of the four dimensions \eqref{vvvz}
 form a regular 2-tuple of the form $\{10,\ast\}$. Indeed in table \ref{tttr} we find both 
$\{10,8\}$ and $\{10,4\}$.
$\Delta_3=8$ is a rank 2 new-dimension, and $\{8,4\}$ is a regular $2$-tuple. Finally $\Delta_4=4$ is a rank 1 new-dimension (and hence a regular $1$-tuple). 
Thus the 4-tuple \eqref{vvvz} satisfies all the requirements.
\end{exe}

\subsection{Constructions of the lists}
The procedure to determine the lists is the one explained in sections 4, 5 which we sum up here.
We start by computing $\rho$ in the case of a cyclic group with an indecomposable characteristic polynomial $\Phi(z)$. We start the algorithm with 
\be
\rho_{temp}:=\frac{1}{\bar \Phi(\xi)'}.
\ee 
If $\rho_{temp}$ is purely imaginary, then $\rho=\rho_{temp}$, otherwise we find the unit $u$ such that $u\rho_{temp}$ is purely imaginary.\footnote{\ This algorithm exploits the fact that the group of units has a finite number of generators (called \emph{fundamental units}).} Once $\rho$ is computed, we get the initial signs for each element $\sigma_i$ of the Galois group associated to $\Phi(z)$. From the positive signs we compute the dimension tuple. All the other signs can be computed exploiting the fundamental units of the cyclotomic field (using \textsc{Mathematica} or PARI \cite{pari} when the former software fails): to each generator of the unit group we associate a sign tuple (the signs of the Galois elements). From the signs of $\rho$, it is easy to compute all possible signs by repeatedly multiplying the sings amongst one another. This is the algorithm to get all dimensions tuples.

The embedding may be obtained by a direct sum of lower order cyclic elements. In this case, we get the product of cyclotomic polynomials 
\be
\Phi_{d_1,d_2,\cdots,d_s}(z):=\Phi_{d_1}(z)\,\cdots\,\Phi_{d_s}(z).
\ee
 The procedure is similar to the above: we first compute all the signs separately for each factor -- using the above algorithm -- and then we put them all together to compute the full list of dimensions.

Particular attention must be payed to those products in which the ratio of the conductors of the cyclotomic factors is a (power of a) prime number, e.g. $\Phi_{12}\Phi_{4}$ in the rank 3 case. In this situation, the cyclic group representation is no longer irreducible, and thus the theory of Dedekind domains of rank 1 cannot be applied: $\rho$ is no longer a number but rather a matrix. Since this branch of number theory is not well developed, we preferred the explicit construction of the symplectic matrices $\Lambda$'s. We only consider the action of the group $H=\boldsymbol{u}/N\boldsymbol{U}$ on the initial embedding, whose signs are still defined by those of $\rho$ (the corresponding matrix shall be called $m$ and is given in subsection \ref{exmatr}). The action of $H$ only modifies the symplectic matrix: the signs of the characteristic polynomial of the new symplectic matrix, evaluated at the cyclotomic roots, give the sign changes to be applied to the original signs of $\rho$. Hence, the problem is to find the group $H$. It is a very had task to find this group: fortunately we know how it acts on the symplectic matrices:
\be
\Lambda_v:=p_v(m+m^{-1})\Lambda,\qquad \forall\; v \in H,
\ee
where $p_v$ is a polynomial with integer coefficients of maximal rank $\phi(n)-1$.
Thus, we can write an algorithm that looks for as many $p_v$'s as possible: every time we find one we check whether $\Lambda_v$ is principal (i.e. of unit determinant) and symplectic; once these two conditions are matched, we add the sign tuple to our results. In the end, we compute all the dimensions starting from the signs of $\rho$ and we multiply the signs with those explicitly found by our algorithm. If we find all possible signs, then the final result is definitive. In general, if we do not find all signs, we can only state our results with high confidence.

In table \ref{3tuples} we list the fully regular $3$-tuples for $k=3$. The first column yields the characteristic polynomial of the embedding and the second column the corresponding dimension 3-tuples.

Table \ref{4tuples} contains the fully regular $4$-tuples for $k=4$.

\section*{Acknowledgments}

We thank Michele Del Zotto for several interesting discussions about the classification of SCFT which prompted this project. We thank Ugo Bruzzo, Barbara Fantechi,  and Alessandro Tanzini for useful discussions.

\bgroup
\def\arraystretch{1.5}

\newpage

\begin{longtable}{c|lr}
\caption{New-dimension sets for ranks $1\leq k\leq 13$.}\label{new1} 
\\
\hline\hline
Rank $k$ & $\mathfrak{N}(k)$ \\ 
 \hline 
\endfirsthead

\multicolumn{3}{c}%
{{\bfseries \tablename\ \thetable{} -- continued from previous page}} \\
\hline 
Rank $k$ & $\mathfrak{N}(k)$ \\ 
 \hline 
\endhead

 \multicolumn{3}{r}{{Continued on next page}} \\ 
\endfoot

\hline \hline
\endlastfoot

1 & $1,2,3,\frac{3}{2},4,\frac{4}{3},6,\frac{6}{5}$\\
\hline
2 & $\frac{12}{11},\frac{10}{9},\frac{8}{7},\frac{5}{4},\frac{10}{7},\frac{8}{5},\frac{5}{3},\frac{12}{7},\frac{12}{5},\frac{5}{2},\frac{8}{3},\frac{10}{3},5,8,10,12$\\
\hline
3 & $\frac{18}{17},\frac{14}{13},\frac{9}{8},\frac{7}{6},\frac{14}{11},\frac{9}{7},\frac{18}{13},\frac{7}{5},\frac{14}{9},\frac{18}{11},\frac{7}{4},\frac{9}{5},\frac{9}{4},\frac{7}{3},\frac{18}{7},\frac{14}{5},\frac{7}{2},\frac{18}{5},\frac{9}{2},\frac{14}{3},7,9,14,18$\\
\hline
4 & $\frac{30}{29},\frac{24}{23},\frac{20}{19},\frac{16}{15},\frac{15}{14},\frac{15}{13},\frac{20}{17},\frac{16}{13},\frac{24}{19},\frac{30}{23},\frac{15}{11},\frac{24}{17},\frac{16}{11},\frac{20}{13},\frac{30}{19},\frac{30}{17},\frac{16}{9},\frac{20}{11},\frac{24}{13},\frac{15}{8},\frac{15}{7},\frac{24}{11},$\\
 & $\frac{20}{9},\frac{16}{7},\frac{30}{13},\frac{30}{11},\frac{20}{7},\frac{16}{5},\frac{24}{7},\frac{15}{4},\frac{30}{7},\frac{24}{5},\frac{16}{3},\frac{20}{3},\frac{15}{2},15,16,20,24,30$\\
\hline
5 & $\frac{22}{21},\frac{11}{10},\frac{22}{19},\frac{11}{9},\frac{22}{17},\frac{11}{8},\frac{22}{15},\frac{11}{7},\frac{22}{13},\frac{11}{6},\frac{11}{5},\frac{22}{9},\frac{11}{4},\frac{22}{7},\frac{11}{3},\frac{22}{5},\frac{11}{2},\frac{22}{3},11,22$\\
\hline
6 & $\frac{42}{41},\frac{36}{35},\frac{28}{27},\frac{26}{25},\frac{21}{20},\frac{13}{12},\frac{21}{19},\frac{28}{25},\frac{26}{23},\frac{42}{37},\frac{36}{31},\frac{13}{11},\frac{28}{23},\frac{21}{17},\frac{26}{21},\frac{36}{29},\frac{13}{10},\frac{21}{16},\frac{42}{31},\frac{26}{19},\frac{36}{25},\frac{13}{9},$\\
 & $\frac{42}{29},\frac{28}{19},\frac{26}{17},\frac{36}{23},\frac{21}{13},\frac{13}{8},\frac{28}{17},\frac{42}{25},\frac{26}{15},\frac{42}{23},\frac{13}{7},\frac{28}{15},\frac{36}{19},\frac{21}{11},\frac{21}{10},\frac{36}{17},\frac{28}{13},\frac{13}{6},\frac{42}{19},\frac{26}{11},\frac{42}{17},\frac{28}{11},$\\
 & $\frac{13}{5},\frac{21}{8},\frac{36}{13},\frac{26}{9},\frac{28}{9},\frac{42}{13},\frac{13}{4},\frac{36}{11},\frac{26}{7},\frac{42}{11},\frac{21}{5},\frac{13}{3},\frac{36}{7},\frac{26}{5},\frac{21}{4},\frac{28}{5},\frac{13}{2},\frac{36}{5},\frac{42}{5},\frac{26}{3},\frac{28}{3},\frac{21}{2},$\\
 & $13,21,26,28,36,42$\\
\hline
7 & None\\
\hline
8 & $\frac{60}{59},\frac{48}{47},\frac{40}{39},\frac{34}{33},\frac{32}{31},\frac{17}{16},\frac{40}{37},\frac{34}{31},\frac{32}{29},\frac{48}{43},\frac{60}{53},\frac{17}{15},\frac{48}{41},\frac{34}{29},\frac{32}{27},\frac{40}{33},\frac{17}{14},\frac{60}{49},\frac{34}{27},\frac{60}{47},\frac{32}{25},\frac{40}{31},$\\
 & $\frac{48}{37},\frac{17}{13},\frac{34}{25},\frac{48}{35},\frac{40}{29},\frac{32}{23},\frac{60}{43},\frac{17}{12},\frac{60}{41},\frac{34}{23},\frac{40}{27},\frac{32}{21},\frac{17}{11},\frac{48}{31},\frac{34}{21},\frac{60}{37},\frac{48}{29},\frac{32}{19},\frac{17}{10},\frac{40}{23},\frac{34}{19},\frac{32}{17},$\\
 & $\frac{17}{9},\frac{40}{21},\frac{48}{25},\frac{60}{31},\frac{60}{29},\frac{48}{23},\frac{40}{19},\frac{17}{8},\frac{32}{15},\frac{34}{15},\frac{40}{17},\frac{17}{7},\frac{32}{13},\frac{48}{19},\frac{60}{23},\frac{34}{13},\frac{48}{17},\frac{17}{6},\frac{32}{11},\frac{40}{13},\frac{34}{11},\frac{60}{19},$\\
 & $\frac{17}{5},\frac{60}{17},\frac{32}{9},\frac{40}{11},\frac{48}{13},\frac{34}{9},\frac{17}{4},\frac{48}{11},\frac{40}{9},\frac{32}{7},\frac{60}{13},\frac{34}{7},\frac{60}{11},\frac{17}{3},\frac{40}{7},\frac{32}{5},\frac{34}{5},\frac{48}{7},\frac{17}{2},\frac{60}{7},\frac{48}{5},\frac{32}{3},$\\
 & $\frac{34}{3},\frac{40}{3},17,32,34,40,48,60$\\
\hline
9 & $\frac{54}{53},\frac{38}{37},\frac{27}{26},\frac{19}{18},\frac{27}{25},\frac{38}{35},\frac{54}{49},\frac{19}{17},\frac{54}{47},\frac{38}{33},\frac{27}{23},\frac{19}{16},\frac{38}{31},\frac{27}{22},\frac{54}{43},\frac{19}{15},\frac{38}{29},\frac{54}{41},\frac{27}{20},\frac{19}{14},\frac{38}{27},\frac{27}{19},$\\
 & $\frac{54}{37},\frac{19}{13},\frac{38}{25},\frac{54}{35},\frac{19}{12},\frac{27}{17},\frac{38}{23},\frac{27}{16},\frac{19}{11},\frac{54}{31},\frac{38}{21},\frac{54}{29},\frac{19}{10},\frac{27}{14},\frac{27}{13},\frac{19}{9},\frac{54}{25},\frac{38}{17},\frac{54}{23},\frac{19}{8},\frac{27}{11},\frac{38}{15},$\\
 & $\frac{27}{10},\frac{19}{7},\frac{54}{19},\frac{38}{13},\frac{19}{6},\frac{54}{17},\frac{27}{8},\frac{38}{11},\frac{19}{5},\frac{27}{7},\frac{54}{13},\frac{38}{9},\frac{19}{4},\frac{54}{11},\frac{27}{5},\frac{38}{7},\frac{19}{3},\frac{27}{4},\frac{38}{5},\frac{54}{7},\frac{19}{2},\frac{54}{5},$\\
 & $\frac{38}{3},\frac{27}{2},19,27,38,54$\\
\hline
10 & $\frac{66}{65},\frac{50}{49},\frac{44}{43},\frac{33}{32},\frac{25}{24},\frac{50}{47},\frac{33}{31},\frac{44}{41},\frac{66}{61},\frac{25}{23},\frac{66}{59},\frac{44}{39},\frac{25}{22},\frac{33}{29},\frac{50}{43},\frac{33}{28},\frac{44}{37},\frac{25}{21},\frac{50}{41},\frac{66}{53},\frac{44}{35},\frac{33}{26},$\\
 & $\frac{50}{39},\frac{25}{19},\frac{33}{25},\frac{66}{49},\frac{50}{37},\frac{25}{18},\frac{66}{47},\frac{44}{31},\frac{33}{23},\frac{25}{17},\frac{50}{33},\frac{44}{29},\frac{66}{43},\frac{25}{16},\frac{66}{41},\frac{50}{31},\frac{44}{27},\frac{33}{20},\frac{50}{29},\frac{33}{19},\frac{44}{25},\frac{66}{37},$\\
 & $\frac{25}{14},\frac{50}{27},\frac{66}{35},\frac{44}{23},\frac{25}{13},\frac{33}{17},\frac{33}{16},\frac{25}{12},\frac{44}{21},\frac{66}{31},\frac{50}{23},\frac{25}{11},\frac{66}{29},\frac{44}{19},\frac{33}{14},\frac{50}{21},\frac{33}{13},\frac{44}{17},\frac{50}{19},\frac{66}{25},\frac{25}{9},\frac{66}{23},$\\
 & $\frac{44}{15},\frac{50}{17},\frac{25}{8},\frac{33}{10},\frac{44}{13},\frac{66}{19},\frac{25}{7},\frac{50}{13},\frac{66}{17},\frac{33}{8},\frac{25}{6},\frac{50}{11},\frac{33}{7},\frac{44}{9},\frac{66}{13},\frac{50}{9},\frac{25}{4},\frac{44}{7},\frac{33}{5},\frac{50}{7},\frac{33}{4},\frac{25}{3},$\\
 & $\frac{44}{5},\frac{66}{7},\frac{25}{2},\frac{66}{5},\frac{44}{3},\frac{33}{2},\frac{50}{3},25,33,44,50,66$\\
\hline
11 & $\frac{46}{45},\frac{23}{22},\frac{46}{43},\frac{23}{21},\frac{46}{41},\frac{23}{20},\frac{46}{39},\frac{23}{19},\frac{46}{37},\frac{23}{18},\frac{46}{35},\frac{23}{17},\frac{46}{33},\frac{23}{16},\frac{46}{31},\frac{23}{15},\frac{46}{29},\frac{23}{14},\frac{46}{27},\frac{23}{13},\frac{46}{25},\frac{23}{12},$\\
 & $\frac{23}{11},\frac{46}{21},\frac{23}{10},\frac{46}{19},\frac{23}{9},\frac{46}{17},\frac{23}{8},\frac{46}{15},\frac{23}{7},\frac{46}{13},\frac{23}{6},\frac{46}{11},\frac{23}{5},\frac{46}{9},\frac{23}{4},\frac{46}{7},\frac{23}{3},\frac{46}{5},\frac{23}{2},\frac{46}{3},23,46$\\
\hline
12 & $\frac{90}{89},\frac{84}{83},\frac{78}{77},\frac{72}{71},\frac{70}{69},\frac{56}{55},\frac{52}{51},\frac{45}{44},\frac{39}{38},\frac{35}{34},\frac{70}{67},\frac{45}{43},\frac{39}{37},\frac{56}{53},\frac{35}{33},\frac{52}{49},\frac{84}{79},\frac{78}{73},\frac{72}{67},\frac{90}{83},\frac{35}{32},\frac{45}{41},$\\
 & $\frac{56}{51},\frac{78}{71},\frac{52}{47},\frac{72}{65},\frac{39}{35},\frac{35}{31},\frac{90}{79},\frac{39}{34},\frac{70}{61},\frac{84}{73},\frac{52}{45},\frac{78}{67},\frac{90}{77},\frac{72}{61},\frac{84}{71},\frac{45}{38},\frac{70}{59},\frac{56}{47},\frac{35}{29},\frac{52}{43},\frac{45}{37},\frac{39}{32},$\\
 & $\frac{72}{59},\frac{70}{57},\frac{90}{73},\frac{56}{45},\frac{84}{67},\frac{39}{31},\frac{90}{71},\frac{52}{41},\frac{78}{61},\frac{84}{65},\frac{35}{27},\frac{56}{43},\frac{72}{55},\frac{70}{53},\frac{78}{59},\frac{45}{34},\frac{90}{67},\frac{39}{29},\frac{35}{26},\frac{72}{53},\frac{56}{41},\frac{70}{51},$\\
 & $\frac{84}{61},\frac{39}{28},\frac{52}{37},\frac{45}{32},\frac{78}{55},\frac{84}{59},\frac{56}{39},\frac{45}{31},\frac{35}{24},\frac{72}{49},\frac{78}{53},\frac{90}{61},\frac{52}{35},\frac{70}{47},\frac{56}{37},\frac{35}{23},\frac{90}{59},\frac{84}{55},\frac{72}{47},\frac{45}{29},\frac{39}{25},\frac{52}{33},$\\
 & $\frac{84}{53},\frac{35}{22},\frac{78}{49},\frac{45}{28},\frac{70}{43},\frac{78}{47},\frac{72}{43},\frac{52}{31},\frac{39}{23},\frac{56}{33},\frac{90}{53},\frac{70}{41},\frac{45}{26},\frac{72}{41},\frac{39}{22},\frac{84}{47},\frac{52}{29},\frac{70}{39},\frac{56}{31},\frac{78}{43},\frac{90}{49},\frac{35}{19},$\\
 & $\frac{70}{37},\frac{78}{41},\frac{90}{47},\frac{52}{27},\frac{56}{29},\frac{35}{18},\frac{72}{37},\frac{39}{20},\frac{84}{43},\frac{45}{23},\frac{45}{22},\frac{84}{41},\frac{39}{19},\frac{72}{35},\frac{35}{17},\frac{56}{27},\frac{52}{25},\frac{90}{43},\frac{78}{37},\frac{70}{33},\frac{35}{16},\frac{90}{41},$\\
 & $\frac{78}{35},\frac{56}{25},\frac{70}{31},\frac{52}{23},\frac{84}{37},\frac{39}{17},\frac{72}{31},\frac{45}{19},\frac{70}{29},\frac{90}{37},\frac{56}{23},\frac{39}{16},\frac{52}{21},\frac{72}{29},\frac{78}{31},\frac{70}{27},\frac{45}{17},\frac{78}{29},\frac{35}{13},\frac{84}{31},\frac{52}{19},\frac{39}{14},$\\
 & $\frac{45}{16},\frac{72}{25},\frac{84}{29},\frac{90}{31},\frac{35}{12},\frac{56}{19},\frac{70}{23},\frac{52}{17},\frac{90}{29},\frac{78}{25},\frac{72}{23},\frac{35}{11},\frac{45}{14},\frac{56}{17},\frac{84}{25},\frac{78}{23},\frac{45}{13},\frac{52}{15},\frac{39}{11},\frac{84}{23},\frac{70}{19},\frac{56}{15},$\\
 & $\frac{72}{19},\frac{35}{9},\frac{39}{10},\frac{90}{23},\frac{45}{11},\frac{78}{19},\frac{70}{17},\frac{72}{17},\frac{56}{13},\frac{35}{8},\frac{84}{19},\frac{78}{17},\frac{52}{11},\frac{90}{19},\frac{39}{8},\frac{84}{17},\frac{56}{11},\frac{90}{17},\frac{70}{13},\frac{72}{13},\frac{39}{7},\frac{45}{8},$\\
 & $\frac{52}{9},\frac{35}{6},\frac{56}{9},\frac{70}{11},\frac{45}{7},\frac{84}{13},\frac{72}{11},\frac{90}{13},\frac{78}{11},\frac{52}{7},\frac{84}{11},\frac{70}{9},\frac{39}{5},\frac{90}{11},\frac{35}{4},\frac{39}{4},\frac{72}{7},\frac{52}{5},\frac{78}{7},\frac{56}{5},\frac{45}{4},\frac{35}{3},$\\
 & $\frac{90}{7},\frac{72}{5},\frac{78}{5},\frac{84}{5},\frac{52}{3},\frac{35}{2},\frac{56}{3},\frac{39}{2},\frac{45}{2},\frac{70}{3},35,39,45,52,56,70,72,78,84,90$\\
\hline
13 & None
\end{longtable}



\begin{longtable}{c|lr}
\caption{Strongly regular principal 3-tuples for rank 3} \label{3tuples} \\
\hline\hline
 C.P. & strongly regular principal 3-tuples \\ 
 \hline 
\endfirsthead

\multicolumn{3}{c}%
{{\bfseries \tablename\ \thetable{} -- continued from previous page}} \\
\hline 
C.P. & strongly regular principal 3-tuples \\ 
 \hline 
\endhead

 \multicolumn{3}{r}{{Continued on next page}} \\ 
\endfoot

\hline \hline
\endlastfoot

$\Phi_{7}$ & $\left\{\frac{7}{6},\frac{4}{3},\frac{5}{3}\right\},\left\{\frac{7}{6},\frac{4}{3},\frac{3}{2}\right\},\{7,3,5\},\{7,3,4\},\{7,6,5\},\{7,6,4\},\left\{\frac{6}{5},\frac{7}{5},\frac{9}{5}\right\},$\\
& $\left\{\frac{6}{5},\frac{7}{5},\frac{8}{5}\right\},\left\{\frac{3}{2},\frac{7}{2},3\right\},\left\{\frac{3}{2},\frac{7}{2},\frac{5}{2}\right\},\left\{4,\frac{7}{2},3\right\},\left\{4,\frac{7}{2},\frac{5}{2}\right\},\left\{\frac{4}{3},\frac{5}{3},\frac{7}{3}\right\},\left\{\frac{5}{4},\frac{3}{2},\frac{7}{4}\right\},$\\
 & $\left\{\frac{4}{3},\frac{8}{3},\frac{7}{3}\right\},\left\{\frac{5}{4},\frac{9}{4},\frac{7}{4}\right\},\left\{3,\frac{5}{3},\frac{7}{3}\right\},\left\{\frac{5}{2},\frac{3}{2},\frac{7}{4}\right\},\left\{3,\frac{8}{3},\frac{7}{3}\right\},\left\{\frac{5}{2},\frac{9}{4},\frac{7}{4}\right\}$\\ 
\hline
$\Phi_{9}$ & $\left\{\frac{9}{8},\frac{3}{2},\frac{5}{4}\right\},\{9,5,8\},\{9,5,3\},\{9,6,8\},\{9,6,3\},\left\{\frac{6}{5},\frac{9}{5},\frac{12}{5}\right\},\left\{\frac{6}{5},\frac{9}{5},\frac{7}{5}\right\},$\\
& $\left\{\frac{5}{4},\frac{9}{4},\frac{3}{2}\right\},\left\{3,\frac{9}{4},\frac{3}{2}\right\},\left\{\frac{3}{2},3,\frac{9}{2}\right\},\left\{\frac{3}{2},\frac{7}{2},\frac{9}{2}\right\},\left\{\frac{8}{7},\frac{12}{7},\frac{9}{7}\right\},\left\{5,3,\frac{9}{2}\right\},\left\{5,\frac{7}{2},\frac{9}{2}\right\}$\\ 
\hline
$\Phi_{14}$ & $\{14,6,12\},\{14,6,4\},\{14,10,12\},\{14,10,4\},\left\{\frac{10}{9},\frac{14}{9},\frac{4}{3}\right\},\left\{\frac{6}{5},\frac{14}{5},\frac{8}{5}\right\},$ \\
 & $\left\{\frac{18}{5},\frac{14}{5},\frac{8}{5}\right\},\left\{\frac{4}{3},\frac{8}{3},\frac{14}{3}\right\},\left\{\frac{4}{3},4,\frac{14}{3}\right\}$\\ 
\hline
$\Phi_{18}$ & $\left\{\frac{18}{7},\frac{12}{7},\frac{8}{7}\right\},\left\{\frac{12}{5},\frac{18}{5},\frac{6}{5}\right\},\{12,14,18\},\{12,6,18\},\{8,14,18\},\{8,6,18\}$\\ 
\hline
$\Phi_{3,5}$ & $\left\{3,\frac{8}{5},\frac{14}{5}\right\},\left\{\frac{8}{3},5,4\right\},\left\{\frac{8}{3},5,3\right\},\left\{\frac{14}{9},\frac{4}{3},\frac{5}{3}\right\},\left\{\frac{14}{9},\frac{7}{3},\frac{5}{3}\right\},\left\{\frac{8}{3},\frac{3}{2},\frac{5}{2}\right\},\left\{\frac{8}{3},3,\frac{5}{2}\right\}$\\ 
\hline
$\Phi_{4,5}$ & $\left\{\frac{9}{4},5,4\right\},\left\{\frac{9}{4},5,3\right\},\left\{\frac{9}{4},\frac{4}{3},\frac{5}{3}\right\},\left\{\frac{9}{4},\frac{7}{3},\frac{5}{3}\right\}$\\ 
\hline
$\Phi_{5,6}$ & None\\ 
\hline
$\Phi_{3,8}$ & None\\ 
\hline
$\Phi_{4,8}$ & $\left\{\frac{4}{3},\frac{7}{6},\frac{3}{2}\right\},\left\{4,\frac{3}{2},\frac{5}{2}\right\},\left\{4,\frac{3}{2},\frac{7}{2}\right\},\left\{4,\frac{9}{2},\frac{5}{2}\right\},\left\{4,\frac{9}{2},\frac{7}{2}\right\},\left\{\frac{9}{7},\frac{8}{7},\frac{10}{7}\right\},\left\{\frac{9}{7},\frac{8}{7},\frac{12}{7}\right\},$\\
 & $\{3,8,4\},\{3,8,6\},\{7,8,4\},\{7,8,6\},\left\{\frac{7}{5},\frac{6}{5},\frac{8}{5}\right\},\left\{\frac{5}{3},\frac{4}{3},\frac{8}{3}\right\},\left\{\frac{7}{5},\frac{12}{5},\frac{8}{5}\right\},$ \\
 & $\left\{\frac{5}{3},\frac{10}{3},\frac{8}{3}\right\},\left\{3,\frac{4}{3},\frac{8}{3}\right\},\left\{3,\frac{10}{3},\frac{8}{3}\right\}$\\ 
\hline
$\Phi_{6,8}$ & $\left\{\frac{7}{3},8,4\right\},\left\{\frac{7}{3},8,6\right\},\left\{\frac{7}{3},\frac{6}{5},\frac{8}{5}\right\},\left\{\frac{7}{3},\frac{12}{5},\frac{8}{5}\right\}$\\ 
\hline
$\Phi_{3,10}$ & None\\ 
\hline
$\Phi_{4,10}$ & $\left\{\frac{7}{2},\frac{10}{3},4\right\},\left\{\frac{7}{2},\frac{10}{3},\frac{4}{3}\right\},\left\{\frac{7}{2},4,10\right\},\left\{\frac{7}{2},8,10\right\}$\\ 
\hline
$\Phi_{6,10}$ & $\left\{6,\frac{14}{5},\frac{8}{5}\right\},\left\{\frac{14}{9},\frac{10}{3},4\right\},\left\{\frac{14}{9},\frac{10}{3},\frac{4}{3}\right\},\left\{\frac{8}{3},4,10\right\},\left\{\frac{8}{3},8,10\right\}$\\ 
\hline
$\Phi_{3,12}$ & $\left\{\frac{9}{5},\frac{12}{5},\frac{6}{5}\right\},\{5,6,12\},\{9,6,12\}$\\ 
\hline
$\Phi_{4,12}$ & $\left\{4,\frac{8}{3},\frac{14}{3}\right\},\left\{4,\frac{8}{3},\frac{4}{3}\right\},\left\{4,\frac{14}{3},\frac{10}{3}\right\},\left\{4,\frac{10}{3},\frac{4}{3}\right\},\left\{\frac{10}{7},\frac{12}{7},\frac{18}{7}\right\},$\\
 & $\left\{\frac{10}{7},\frac{12}{7},\frac{8}{7}\right\},\left\{\frac{8}{5},\frac{12}{5},\frac{6}{5}\right\},\left\{\frac{14}{5},\frac{12}{5},\frac{6}{5}\right\},\{4,6,12\},\{4,12,8\},$\\
 & $\{10,12,8\},\left\{\frac{14}{11},\frac{18}{11},\frac{12}{11}\right\},\{10,6,12\},\left\{\frac{4}{3},\frac{14}{9},\frac{10}{9}\right\}$\\ 
\hline
$\Phi_{6,12}$ & $\left\{6,\frac{9}{2},\frac{3}{2}\right\},\left\{\frac{9}{7},\frac{12}{7},\frac{18}{7}\right\},\left\{\frac{7}{5},\frac{12}{5},\frac{6}{5}\right\},\left\{3,\frac{12}{5},\frac{6}{5}\right\},\{3,6,12\}$\\ 
\hline
$\Phi_{3,4,6}$ & $\left\{3,\frac{7}{4},\frac{3}{2}\right\},\left\{3,\frac{7}{4},\frac{7}{2}\right\},\left\{\frac{7}{3},4,\frac{5}{3}\right\},\left\{3,\frac{5}{2},6\right\},\left\{5,\frac{5}{2},6\right\}$
\end{longtable}

\newpage

\begin{longtable}{c|lr}
\caption{Strongly regular non-principal 3-tuples} \label{3btuples}\phantom{mmmmmmmmm} \\
\hline\hline
 C.P. & strongly regular non-principal 3-tuples \\ 
 \hline 
\endfirsthead

\multicolumn{3}{c}%
{{\bfseries \tablename\ \thetable{} -- continued from previous page}} \\
\hline 
C.P. & strongly regular non-principal 3-tuples \\ 
 \hline 
\endhead

 \multicolumn{3}{r}{{Continued on next page}} \\ 
\endfoot

\hline \hline
\endlastfoot

 $\Phi_{3,12}$ & $\left\{3,\frac{9}{4},\frac{5}{4}\right\},\{5,12,8\},\{9,12,8\}$ \phantom{mmmmmmmmmmmmmmmmm} \\ 
\hline
 $\Phi_{6,12}$ & $\left\{6,\frac{7}{2},\frac{3}{2}\right\},\{3,12,8\},\left\{\frac{9}{7},\frac{12}{7},\frac{8}{7}\right\}$
\end{longtable}
%

\begin{longtable}{c|lr}
\caption{Strongly regular principal 4-tuples for rank 4} \label{4tuples} \\
\hline\hline
 C.P. & strongly regular principal 4-tuples \\ 
 \hline 
\endfirsthead

\multicolumn{3}{c}%
{{\bfseries \tablename\ \thetable{} -- continued from previous page}} \\
\hline 
C.P. & strongly regular principal 4-tuples \\ 
 \hline 
\endhead

 \multicolumn{3}{r}{{Continued on next page}} \\ 
\endfoot

\hline \hline
\endlastfoot

$\Phi_{15}$ & $\left\{\frac{15}{2},\frac{9}{2},3,8\right\},\left\{\frac{15}{2},5,3,\frac{3}{2}\right\},\left\{\frac{5}{4},\frac{15}{8},\frac{3}{2},\frac{9}{8}\right\},\left\{\frac{9}{7},\frac{15}{7},\frac{18}{7},\frac{8}{7}\right\},\left\{\frac{20}{7},\frac{15}{7},\frac{18}{7},3\right\},$\\
 & $\left\{\frac{3}{2},3,\frac{15}{4},\frac{5}{4}\right\},\{3,8,12,15\},\left\{\frac{8}{7},\frac{3}{2},\frac{9}{7},\frac{15}{14}\right\},\{3,9,5,15\},$\\
 & $\{14,8,5,15\},\{14,9,12,15\}$\\ 
\hline
 $\Phi_{16}$ & $\left\{\frac{16}{15},\frac{6}{5},\frac{4}{3},\frac{8}{5}\right\},\{16,4,6,8\},\{16,4,6,10\},\{16,4,12,8\},\{16,4,12,10\},$\\
 & $\{16,14,6,8\},\{16,14,6,10\},\{16,14,12,8\},\{16,14,12,10\},\left\{\frac{14}{13},\frac{16}{13},\frac{18}{13},\frac{20}{13}\right\},$\\
 & $\left\{\frac{14}{13},\frac{16}{13},\frac{24}{13},\frac{20}{13}\right\},\left\{\frac{4}{3},\frac{16}{3},\frac{8}{3},\frac{10}{3}\right\},\left\{\frac{4}{3},\frac{16}{3},\frac{8}{3},4\right\},\left\{\frac{4}{3},\frac{16}{3},\frac{14}{3},\frac{10}{3}\right\},\left\{\frac{4}{3},\frac{16}{3},\frac{14}{3},4\right\},$\\
 & $\left\{6,\frac{16}{3},\frac{8}{3},\frac{10}{3}\right\},\left\{6,\frac{16}{3},\frac{8}{3},4\right\},\left\{6,\frac{16}{3},\frac{14}{3},\frac{10}{3}\right\},\left\{6,\frac{16}{3},\frac{14}{3},4\right\},\left\{\frac{12}{11},\frac{14}{11},\frac{16}{11},\frac{18}{11}\right\},$\\
 & $\left\{\frac{12}{11},\frac{14}{11},\frac{16}{11},\frac{20}{11}\right\},\left\{\frac{6}{5},\frac{8}{5},\frac{16}{5},\frac{12}{5}\right\},\left\{\frac{6}{5},\frac{8}{5},\frac{16}{5},\frac{14}{5}\right\},\left\{\frac{12}{11},\frac{24}{11},\frac{16}{11},\frac{18}{11}\right\},\left\{\frac{12}{11},\frac{24}{11},\frac{16}{11},\frac{20}{11}\right\},$\\
 & $\left\{\frac{6}{5},\frac{18}{5},\frac{16}{5},\frac{12}{5}\right\},\left\{\frac{6}{5},\frac{18}{5},\frac{16}{5},\frac{14}{5}\right\},\left\{4,\frac{8}{5},\frac{16}{5},\frac{12}{5}\right\},\left\{4,\frac{8}{5},\frac{16}{5},\frac{14}{5}\right\},\left\{4,\frac{18}{5},\frac{16}{5},\frac{12}{5}\right\},$ \\
 & $\left\{4,\frac{18}{5},\frac{16}{5},\frac{14}{5}\right\},\left\{\frac{10}{9},\frac{4}{3},\frac{14}{9},\frac{16}{9}\right\},\left\{\frac{8}{7},\frac{10}{7},\frac{12}{7},\frac{16}{7}\right\},\left\{\frac{10}{9},\frac{4}{3},\frac{20}{9},\frac{16}{9}\right\},\left\{\frac{8}{7},\frac{10}{7},\frac{18}{7},\frac{16}{7}\right\},$\\
 & $\left\{\frac{8}{7},\frac{20}{7},\frac{12}{7},\frac{16}{7}\right\},\left\{\frac{8}{7},\frac{20}{7},\frac{18}{7},\frac{16}{7}\right\},\left\{\frac{8}{3},\frac{4}{3},\frac{14}{9},\frac{16}{9}\right\},\left\{\frac{8}{3},\frac{4}{3},\frac{20}{9},\frac{16}{9}\right\}$ \\ 
 $\Phi_{20}$ & $\{20,4,12,8\},\{20,4,10,14\},\{20,18,12,14\},\{20,18,10,8\},\left\{\frac{4}{3},\frac{20}{3},\frac{14}{3},\frac{10}{3}\right\},$\\
  & $\left\{\frac{4}{3},\frac{20}{3},4,\frac{16}{3}\right\},\left\{\frac{12}{11},\frac{14}{11},\frac{20}{11},\frac{18}{11}\right\},\left\{\frac{30}{11},\frac{14}{11},\frac{20}{11},\frac{24}{11}\right\},\left\{\frac{8}{7},\frac{10}{7},\frac{18}{7},\frac{20}{7}\right\},$ \\
   & $\left\{\frac{14}{13},\frac{30}{13},\frac{24}{13},\frac{20}{13}\right\},\left\{\frac{8}{7},\frac{24}{7},\frac{16}{7},\frac{20}{7}\right\}$\\
\hline
 $\Phi_{24}$ & $\{24,8,14,6\},\{24,8,12,20\},\{24,18,14,20\},\{24,18,12,6\},\left\{\frac{8}{7},\frac{24}{7},\frac{20}{7},\frac{12}{7}\right\},$\\
  & $\left\{\frac{30}{7},\frac{24}{7},\frac{18}{7},\frac{12}{7}\right\},\left\{\frac{12}{11},\frac{18}{11},\frac{24}{11},\frac{30}{11}\right\},\left\{\frac{14}{13},\frac{20}{13},\frac{24}{13},\frac{18}{13}\right\},\left\{\frac{6}{5},\frac{12}{5},\frac{18}{5},\frac{24}{5}\right\}$\\ 
\hline
 $\Phi_{30}$ & $\left\{\frac{30}{7},\frac{24}{7},\frac{18}{7},\frac{8}{7}\right\},\{8,14,20,30\},\{8,18,12,30\},\{24,14,12,30\},\{24,18,20,30\}$\\ 
 $\Phi_{3, 7}$ & $\left\{\frac{10}{3},\frac{7}{2},\frac{5}{2},4\right\},\left\{\frac{10}{3},\frac{7}{2},\frac{5}{2},\frac{3}{2}\right\},\left\{\frac{10}{3},\frac{7}{2},3,4\right\},\left\{\frac{10}{3},\frac{7}{2},3,\frac{3}{2}\right\},\left\{\frac{16}{9},\frac{5}{3},\frac{7}{3},3\right\},$\\
 & $\left\{\frac{16}{9},\frac{5}{3},\frac{7}{3},\frac{4}{3}\right\},\left\{\frac{16}{9},\frac{8}{3},\frac{7}{3},3\right\},\left\{\frac{16}{9},\frac{8}{3},\frac{7}{3},\frac{4}{3}\right\},\left\{\frac{10}{3},3,4,7\right\},\left\{\frac{10}{3},3,5,7\right\},$\\
 & $\left\{\frac{10}{3},6,4,7\right\},\left\{\frac{10}{3},6,5,7\right\},\left\{\frac{16}{9},\frac{4}{3},\frac{3}{2},\frac{7}{6}\right\},\left\{\frac{16}{9},\frac{4}{3},\frac{5}{3},\frac{7}{6}\right\}$ \\
  $\Phi_{4, 7}$ & $\left\{\frac{15}{8},\frac{7}{6},\frac{3}{2},\frac{4}{3}\right\},\left\{\frac{15}{8},\frac{7}{6},\frac{5}{3},\frac{4}{3}\right\},\left\{\frac{15}{8},\frac{3}{2},\frac{5}{2},\frac{7}{2}\right\},\left\{\frac{15}{8},\frac{3}{2},3,\frac{7}{2}\right\},\left\{\frac{15}{8},4,\frac{5}{2},\frac{7}{2}\right\},\left\{\frac{15}{8},4,3,\frac{7}{2}\right\}$\\  
\hline
$\Phi_{6, 7}$ & None \\ 
\hline
$\Phi_{3, 9}$ & $\left\{\frac{3}{2},\frac{7}{6},\frac{5}{3},\frac{4}{3}\right\},\left\{\frac{3}{2},\frac{7}{3},\frac{5}{3},\frac{4}{3}\right\},\left\{3,\frac{4}{3},\frac{7}{3},\frac{10}{3}\right\},\left\{3,\frac{4}{3},\frac{7}{3},\frac{5}{3}\right\},\left\{3,\frac{4}{3},\frac{8}{3},\frac{10}{3}\right\},\left\{3,\frac{4}{3},\frac{8}{3},\frac{5}{3}\right\},$\\
 & $\{4,9,5,8\},\{4,9,5,3\},\{4,9,6,8\},\{4,9,6,3\},\left\{\frac{7}{4},\frac{9}{8},\frac{3}{2},\frac{15}{8}\right\},\left\{\frac{7}{4},\frac{9}{8},\frac{3}{2},\frac{5}{4}\right\},\{7,9,5,8\},$\\
 & $\{7,9,5,3\},\{7,9,6,8\},\{7,9,6,3\},\left\{\frac{8}{5},\frac{6}{5},\frac{9}{5},\frac{12}{5}\right\},\left\{\frac{8}{5},\frac{6}{5},\frac{9}{5},\frac{7}{5}\right\},\left\{\frac{7}{4},\frac{5}{4},\frac{9}{4},\frac{3}{2}\right\},$\\
 & $\left\{\frac{7}{4},3,\frac{9}{4},\frac{3}{2}\right\},\left\{\frac{5}{2},\frac{5}{4},\frac{9}{4},\frac{3}{2}\right\},\left\{\frac{5}{2},3,\frac{9}{4},\frac{3}{2}\right\},\left\{\frac{5}{2},\frac{3}{2},3,\frac{9}{2}\right\},\left\{\frac{5}{2},\frac{3}{2},\frac{7}{2},\frac{9}{2}\right\},\left\{\frac{10}{7},\frac{8}{7},\frac{12}{7},\frac{9}{7}\right\},$\\
 & $\left\{\frac{5}{2},5,3,\frac{9}{2}\right\},\left\{\frac{5}{2},5,\frac{7}{2},\frac{9}{2}\right\},\left\{\frac{10}{7},\frac{15}{7},\frac{12}{7},\frac{9}{7}\right\},\left\{4,\frac{3}{2},3,\frac{9}{2}\right\},\left\{4,\frac{3}{2},\frac{7}{2},\frac{9}{2}\right\},\left\{4,5,3,\frac{9}{2}\right\},$\\
 & $\left\{4,5,\frac{7}{2},\frac{9}{2}\right\}$ \\  
\hline
$\Phi_{4, 9}$ & None \\
\hline
$\Phi_{6, 9}$ & $\left\{\frac{5}{2},\frac{9}{5},\frac{12}{5},\frac{6}{5}\right\},\left\{\frac{5}{2},\frac{9}{5},\frac{7}{5},\frac{6}{5}\right\},\left\{\frac{7}{4},\frac{7}{2},\frac{9}{2},5\right\},\left\{\frac{7}{4},\frac{7}{2},\frac{9}{2},\frac{3}{2}\right\},\left\{\frac{7}{4},3,\frac{9}{2},5\right\},\left\{\frac{7}{4},3,\frac{9}{2},\frac{3}{2}\right\},$\\
 & $\left\{\frac{5}{2},6,8,9\right\},\left\{\frac{5}{2},6,3,9\right\},\left\{\frac{5}{2},5,8,9\right\},\left\{\frac{5}{2},5,3,9\right\}$\\  
$\Phi_{3, 14}$ & None \\  
$\Phi_{4, 14}$ & $\left\{\frac{9}{2},14,4,10\right\},\left\{\frac{9}{2},14,4,6\right\},\left\{\frac{9}{2},14,12,10\right\},\left\{\frac{9}{2},14,12,6\right\},\left\{\frac{9}{2},\frac{4}{3},\frac{14}{3},4\right\},$\\
 & $\left\{\frac{9}{2},\frac{4}{3},\frac{14}{3},\frac{8}{3}\right\},\left\{\frac{9}{2},\frac{16}{3},\frac{14}{3},4\right\},\left\{\frac{9}{2},\frac{16}{3},\frac{14}{3},\frac{8}{3}\right\}$ \\  
$\Phi_{6, 14}$ & $\left\{\frac{10}{3},14,6,12\right\},\left\{\frac{10}{3},14,6,4\right\},\left\{\frac{10}{3},14,10,12\right\},\left\{\frac{10}{3},14,10,4\right\},\left\{\frac{10}{3},\frac{6}{5},\frac{14}{5},\frac{16}{5}\right\},$ \\
 & $\left\{\frac{10}{3},\frac{6}{5},\frac{14}{5},\frac{8}{5}\right\},\left\{\frac{10}{3},\frac{18}{5},\frac{14}{5},\frac{16}{5}\right\},\left\{\frac{10}{3},\frac{18}{5},\frac{14}{5},\frac{8}{5}\right\},\left\{\frac{16}{9},\frac{4}{3},\frac{8}{3},\frac{14}{3}\right\},\left\{\frac{16}{9},\frac{4}{3},4,\frac{14}{3}\right\},$\\
 & $\left\{\frac{16}{9},\frac{16}{3},\frac{8}{3},\frac{14}{3}\right\},\left\{\frac{16}{9},\frac{16}{3},4,\frac{14}{3}\right\}$ \\  
\hline
$\Phi_{3, 18}$ & $\{7,18,12,14\},\{7,18,12,6\},\{7,18,8,14\},\{7,18,8,6\}$ \\  
\hline
$\Phi_{4, 18}$ & None\\  
\hline
$\Phi_{6, 18}$ & $\left\{6,\frac{4}{3},\frac{14}{3},\frac{16}{3}\right\},\left\{6,\frac{4}{3},\frac{14}{3},\frac{8}{3}\right\},\left\{6,\frac{4}{3},\frac{10}{3},\frac{16}{3}\right\},\left\{6,\frac{4}{3},\frac{10}{3},\frac{8}{3}\right\},\left\{6,\frac{20}{3},\frac{14}{3},\frac{16}{3}\right\},$\\
& $\left\{6,\frac{20}{3},\frac{14}{3},\frac{8}{3}\right\},\left\{6,\frac{20}{3},\frac{10}{3},\frac{16}{3}\right\},\left\{6,\frac{20}{3},\frac{10}{3},\frac{8}{3}\right\},\left\{\frac{20}{17},\frac{18}{17},\frac{24}{17},\frac{30}{17}\right\},\{4,18,12,14\},$\\
& $\{4,18,12,6\},\{4,18,8,14\},\{4,18,8,6\},\{16,18,12,14\},\{16,18,12,6\},$\\
& $\{16,18,8,14\},\{16,18,8,6\},\left\{\frac{10}{7},\frac{8}{7},\frac{18}{7},\frac{20}{7}\right\},\left\{\frac{10}{7},\frac{8}{7},\frac{18}{7},\frac{12}{7}\right\},\left\{\frac{14}{11},\frac{12}{11},\frac{18}{11},\frac{24}{11}\right\},$\\
& $\left\{\frac{14}{11},\frac{12}{11},\frac{18}{11},\frac{16}{11}\right\},\left\{\frac{10}{7},\frac{24}{7},\frac{18}{7},\frac{20}{7}\right\},\left\{\frac{10}{7},\frac{24}{7},\frac{18}{7},\frac{12}{7}\right\},\left\{\frac{8}{5},\frac{6}{5},\frac{16}{5},\frac{18}{5}\right\},\left\{\frac{16}{13},\frac{14}{13},\frac{24}{13},\frac{18}{13}\right\},$\\
& $\left\{\frac{8}{5},\frac{6}{5},\frac{12}{5},\frac{18}{5}\right\},\left\{\frac{16}{13},\frac{14}{13},\frac{20}{13},\frac{18}{13}\right\},\left\{\frac{16}{13},\frac{30}{13},\frac{24}{13},\frac{18}{13}\right\},\left\{\frac{16}{13},\frac{30}{13},\frac{20}{13},\frac{18}{13}\right\},\left\{4,\frac{6}{5},\frac{16}{5},\frac{18}{5}\right\},$\\
 & $\left\{4,\frac{6}{5},\frac{12}{5},\frac{18}{5}\right\}$ 
 \\
\hline
$\Phi_{5, 8}$ & None \\
\hline
$\Phi_{5, 10}$ & $\left\{\frac{5}{4},\frac{7}{4},\frac{15}{8},\frac{9}{8}\right\},\left\{\frac{5}{4},\frac{3}{2},\frac{15}{8},\frac{9}{8}\right\},\left\{5,4,\frac{5}{2},\frac{3}{2}\right\},\left\{5,4,\frac{9}{2},\frac{3}{2}\right\},\left\{5,3,\frac{5}{2},\frac{3}{2}\right\},\left\{5,3,\frac{9}{2},\frac{3}{2}\right\},$\\
 & $\left\{\frac{3}{2},\frac{5}{2},\frac{7}{4},\frac{5}{4}\right\},\left\{\frac{4}{3},\frac{5}{3},\frac{3}{2},\frac{5}{2}\right\},\left\{\frac{4}{3},\frac{5}{3},\frac{3}{2},\frac{7}{6}\right\},\left\{3,\frac{5}{2},\frac{7}{4},\frac{5}{4}\right\},\left\{\frac{7}{3},\frac{5}{3},\frac{3}{2},\frac{5}{2}\right\},\left\{\frac{7}{3},\frac{5}{3},\frac{3}{2},\frac{7}{6}\right\},$\\
 & $\left\{\frac{5}{3},3,\frac{10}{3},4\right\},\left\{\frac{5}{3},3,\frac{10}{3},\frac{4}{3}\right\},\left\{\frac{5}{3},\frac{7}{3},\frac{10}{3},4\right\},\left\{\frac{5}{3},\frac{7}{3},\frac{10}{3},\frac{4}{3}\right\},\{3,7,4,10\},\{3,7,8,10\},$\\
 & $\{3,5,4,10\},\{3,5,8,10\},\{9,7,4,10\},\{9,7,8,10\},\{9,5,4,10\},\{9,5,8,10\}$\\  
\hline
$\Phi_{5, 12}$ & None \\  
\hline
$\Phi_{8, 10}$ & None\\  
\hline
$\Phi_{8, 12}$ & None\\  
\hline
$\Phi_{10, 12}$ & None\\  
\hline
$\Phi_{3, 4, 5}$ & $\left\{3,\frac{7}{4},\frac{8}{5},\frac{14}{5}\right\},\left\{\frac{8}{3},\frac{9}{4},5,4\right\},\left\{\frac{8}{3},\frac{9}{4},5,3\right\},\left\{\frac{14}{9},\frac{9}{4},\frac{4}{3},\frac{5}{3}\right\},\left\{\frac{14}{9},\frac{9}{4},\frac{7}{3},\frac{5}{3}\right\}$ \\  
$\Phi_{3, 5, 6}$ & $\left\{3,\frac{8}{5},\frac{14}{5},\frac{3}{2}\right\},\left\{3,\frac{8}{5},\frac{14}{5},\frac{7}{2}\right\}$\\  
\hline
$\Phi_{4, 5, 6}$ & None\\  
\hline
$\Phi_{3, 4, 8}$ & $\left\{\frac{7}{3},4,\frac{3}{2},\frac{5}{2}\right\},\left\{\frac{7}{3},4,\frac{3}{2},\frac{7}{2}\right\},\left\{\frac{7}{3},4,\frac{9}{2},\frac{5}{2}\right\},\left\{\frac{7}{3},4,\frac{9}{2},\frac{7}{2}\right\}$ \\  
\hline
$\Phi_{3, 6, 8}$ & None\\
\hline
$\Phi_ {4, 6, 8}$ & $\left\{4,\frac{5}{3},\frac{3}{2},\frac{5}{2}\right\},\left\{4,\frac{5}{3},\frac{3}{2},\frac{7}{2}\right\},\left\{4,\frac{5}{3},\frac{9}{2},\frac{5}{2}\right\},\left\{4,\frac{5}{3},\frac{9}{2},\frac{7}{2}\right\},\left\{3,\frac{7}{3},8,4\right\},\left\{3,\frac{7}{3},8,6\right\},$\\
 & $\left\{7,\frac{7}{3},8,4\right\},\left\{7,\frac{7}{3},8,6\right\},\left\{\frac{7}{5},\frac{7}{3},\frac{6}{5},\frac{8}{5}\right\},\left\{\frac{7}{5},\frac{7}{3},\frac{12}{5},\frac{8}{5}\right\}$ \\  
\hline
$\Phi_{3, 4, 10}$ & None\\
\hline
$\Phi_{3, 6, 10}$ & $\left\{3,6,\frac{14}{5},\frac{8}{5}\right\},\left\{5,6,\frac{14}{5},\frac{8}{5}\right\}$ \\ 
\hline 
$\Phi_{4, 6, 10}$ & $\left\{\frac{5}{2},6,\frac{14}{5},\frac{8}{5}\right\},\left\{\frac{7}{2},\frac{14}{9},\frac{10}{3},4\right\},\left\{\frac{7}{2},\frac{14}{9},\frac{10}{3},\frac{4}{3}\right\},\left\{\frac{7}{2},\frac{8}{3},4,10\right\},\left\{\frac{7}{2},\frac{8}{3},8,10\right\}$
\\
\hline
$\Phi_ {3, 4, 12}$ & $\left\{3,\frac{7}{4},\frac{9}{4},\frac{15}{4}\right\},\left\{\frac{7}{3},4,\frac{8}{3},\frac{14}{3}\right\},\left\{\frac{7}{3},4,\frac{10}{3},\frac{4}{3}\right\},\left\{\frac{9}{5},\frac{8}{5},\frac{12}{5},\frac{6}{5}\right\},\left\{\frac{9}{5},\frac{14}{5},\frac{12}{5},\frac{6}{5}\right\},$ \\
 & $\left\{\frac{15}{7},\frac{10}{7},\frac{12}{7},\frac{18}{7}\right\},\left\{\frac{15}{7},\frac{16}{7},\frac{12}{7},\frac{18}{7}\right\},\{5,4,6,12\},\left\{\frac{15}{11},\frac{14}{11},\frac{18}{11},\frac{12}{11}\right\},\{5,10,6,12\},$ \\
 & $\left\{\frac{15}{11},\frac{20}{11},\frac{18}{11},\frac{12}{11}\right\},\{9,4,6,12\},\{9,10,6,12\}$ \\ 
\hline
$\Phi_ {3, 6, 12}$ & $\left\{\frac{3}{2},\frac{5}{4},\frac{15}{8},\frac{9}{8}\right\},\left\{\frac{3}{2},\frac{9}{4},\frac{15}{8},\frac{9}{8}\right\},\left\{3,\frac{3}{2},\frac{9}{4},\frac{15}{4}\right\},\left\{3,\frac{7}{2},\frac{9}{4},\frac{15}{4}\right\},\left\{3,6,\frac{9}{2},\frac{3}{2}\right\},$ \\
 & $\left\{5,6,\frac{9}{2},\frac{3}{2}\right\},\left\{\frac{9}{5},\frac{7}{5},\frac{12}{5},\frac{6}{5}\right\},\left\{\frac{9}{5},3,\frac{12}{5},\frac{6}{5}\right\},\left\{\frac{15}{7},\frac{9}{7},\frac{12}{7},\frac{18}{7}\right\},\{5,3,6,12\},$\\
& $\{9,3,6,12\}$ \\ 
\hline 
$\Phi_{4, 6, 12}$ & $\left\{4,\frac{5}{3},\frac{14}{3},\frac{10}{3}\right\},\left\{4,\frac{5}{3},\frac{14}{3},\frac{8}{3}\right\},\left\{4,\frac{5}{3},\frac{8}{3},\frac{4}{3}\right\},\left\{4,\frac{5}{3},\frac{10}{3},\frac{4}{3}\right\},\left\{\frac{5}{2},6,\frac{7}{2},\frac{3}{2}\right\},\left\{\frac{5}{2},6,\frac{9}{2},\frac{3}{2}\right\},$\\
 & $\{4,3,12,8\},\left\{\frac{10}{7},\frac{9}{7},\frac{12}{7},\frac{8}{7}\right\},\left\{\frac{14}{5},3,\frac{12}{5},\frac{6}{5}\right\},\{10,3,12,8\},\{10,3,12,6\},$\\
 & $\left\{\frac{8}{5},\frac{7}{5},\frac{12}{5},\frac{6}{5}\right\},\left\{\frac{16}{7},\frac{9}{7},\frac{12}{7},\frac{8}{7}\right\},\{4,3,12,6\},\left\{\frac{8}{5},3,\frac{12}{5},\frac{6}{5}\right\},\left\{\frac{14}{5},\frac{7}{5},\frac{12}{5},\frac{6}{5}\right\},$ \\
 & $\left\{\frac{8}{5},\frac{7}{5},\frac{16}{5},\frac{12}{5}\right\},\left\{\frac{14}{5},\frac{7}{5},\frac{16}{5},\frac{12}{5}\right\},\left\{\frac{16}{7},\frac{9}{7},\frac{18}{7},\frac{12}{7}\right\},\left\{\frac{14}{5},3,\frac{16}{5},\frac{12}{5}\right\},\left\{\frac{10}{7},\frac{9}{7},\frac{18}{7},\frac{12}{7}\right\},$\\
 & $\left\{\frac{8}{5},3,\frac{16}{5},\frac{12}{5}\right\}$
\end{longtable}

\appendix

\section{Deferred proofs}

\subsection{Properties of the dual fractional ideal}\label{kkkkza3b2}
\begin{lem}\label{kkasq} Let $n=p_1^{r_1}p_2^{r_2}\cdots p_s^{r_s}$ be the decomposition of $n$ in prime factors.
We write $\zeta_i$ for a primitive $p_i^{r_i}$-root of unity. Then
\be
u\equiv \frac{\Phi^\prime_n(\zeta)}{\prod_{i=1}^s \Phi^\prime_{p_i^{r_i}}(\zeta_i)}\ \ \text{is a unit in }\mathfrak{O}.
\ee
\end{lem}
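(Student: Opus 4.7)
My plan is to translate the statement into a statement about ideals in $\mathfrak{O}$, and then to identify both sides with the different of the cyclotomic field.

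The first step is to recall that $\mathfrak{O}=\mathbb{Z}[\zeta]$ is monogenic, so that the different of the extension $\BK=\mathbb{Q}(\zeta)/\mathbb{Q}$ is the principal ideal
\be
\mathfrak{d}_{\BK/\mathbb{Q}}=\bigl(\Phi_n'(\zeta)\bigr).
\ee
Likewise, for each prime factor $p_i^{r_i}$ of $n$, writing $\BK_i=\mathbb{Q}(\zeta_i)$ and $\mathfrak{O}_i=\mathbb{Z}[\zeta_i]$, we have
\be
\mathfrak{d}_{\BK_i/\mathbb{Q}}=\bigl(\Phi_{p_i^{r_i}}'(\zeta_i)\bigr)
\ee
as an ideal in $\mathfrak{O}_i$. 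To prove that the quotient $u$ is a unit, it is enough to show the equality of ideals
\be\label{keyid}
\bigl(\Phi_n'(\zeta)\bigr)=\prod_{i=1}^{s}\bigl(\Phi_{p_i^{r_i}}'(\zeta_i)\bigr)\quad\text{in }\mathfrak{O},
\ee
where on the right we interpret each factor as its extension to $\mathfrak{O}$.

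The second step is to use that $\BK$ is the compositum $\BK_1\cdots\BK_s$ of the prime-power cyclotomic fields, and that the $\BK_i$ are pairwise linearly disjoint with \emph{pairwise coprime discriminants}: indeed $\mathrm{disc}(\BK_i/\mathbb{Q})$ is a power of $p_i$, and the $p_i$ are distinct. In this coprime-discriminant situation, the standard tower/compositum formula (see e.g.\ Serre, \emph{Local Fields}, Ch.\ III) says that each intermediate extension is unramified at the primes involved in the discriminants of the other factors, so the different of the compositum is simply the product of the extended differents of the factors:
\be
\mathfrak{d}_{\BK/\mathbb{Q}}=\prod_{i=1}^s\mathfrak{d}_{\BK_i/\mathbb{Q}}\cdot\mathfrak{O}.
\ee
Combined with the first step, this yields \eqref{keyid}, and hence $u\in\mathfrak{O}^\times$.

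The main technical point requiring care is the compositum formula for differents with pairwise coprime discriminants. Concretely, one argues locally at each prime $p_i\mid n$: the subextension $\BK_i/\mathbb{Q}$ is totally ramified at $p_i$ with ramification index $\phi(p_i^{r_i})$, while the complementary subextension $\prod_{j\neq i}\BK_j$ is unramified at $p_i$. Hence the $p_i$-part of $\mathfrak{d}_{\BK/\mathbb{Q}}$ is determined entirely by the ramification in $\BK_i/\mathbb{Q}$ (via the fact that for tame/wild ramification in a tower where the top step is unramified, the differents multiply). Primes not dividing $n$ are unramified on both sides and contribute trivially. Assembling these local statements gives \eqref{keyid} and concludes the proof.
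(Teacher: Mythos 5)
Your argument is correct, but it takes a genuinely different route from the paper. You reformulate the statement as an equality of ideals and identify $\bigl(\Phi_n'(\zeta)\bigr)$ with the different $\mathfrak{d}_{\BK/\mathbb{Q}}$ (since $\mathfrak{O}=\Z[\zeta]$ is monogenic), then invoke the multiplicativity of the different in a tower, together with the fact that the different is stable under unramified base change, to get $\mathfrak{d}_{\BK/\mathbb{Q}}=\prod_i \mathfrak{d}_{\BK_i/\mathbb{Q}}\cdot\mathfrak{O}$ because the prime-power cyclotomic subfields $\BK_i$ have pairwise coprime discriminants. The paper instead works entirely by hand: it factors $\Phi_n'(\zeta)\sim\prod_i(1-\zeta_i)^{m_i}$ (up to units), determines the exponents $m_i=p_i-2$ by comparing norms with the discriminant of $\Phi_n$, treats the squarefree case first, and then reduces the general case via $\Phi_n(x)=\Phi_q(x^{n/q})$ with $q$ the radical. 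Your argument bypasses the squarefree reduction and the norm computation altogether, at the cost of importing the structural theory of differents and composita; the paper's elementary route is longer but self-contained and exhibits the $(1-\zeta_i)$-adic decomposition of $\Phi_n'(\zeta)$ explicitly, which is useful for the adjacent explicit-matrix computations. One small point of care in your argument (which you do address): the passage from the compositum formula to the conclusion uses that an equality of principal ideals $(a)=(b)$ in $\mathfrak{O}$ with $a,b\neq 0$ is precisely equivalent to $a/b\in\mathfrak{O}^\times$, so the reduction of the unit statement to an ideal identity is legitimate.
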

The proof will be given in \S.\ref{kkkkza3b}  below.

\begin{lem}[\!\!\cite{japonese}] \label{kkkdh}
Let $3\leq n\neq p^r,\,2p^r$ with $p$ an odd prime. Then in $\mathfrak{O}$ there is a purely imaginary unit.
\end{lem}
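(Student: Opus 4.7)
The plan is to study the ``argument'' homomorphism $\chi\colon\boldsymbol{U}\to\boldsymbol{\mu}$, $u\mapsto u/\bar u$, and show that $-1$ lies in its image; any pre-image is then the desired purely imaginary unit. First I would verify that $\chi$ actually lands in $\boldsymbol{\mu}$: for any $u\in\boldsymbol{U}$ the element $u/\bar u$ is an algebraic integer all of whose complex conjugates sit on the unit circle, so by Kronecker's theorem it is a root of unity. The kernel of $\chi$ is exactly the group of totally real units $\boldsymbol{u}\subset\boldsymbol{U}$, and applied to a root of unity $\mu\in\boldsymbol{\mu}$ one finds $\chi(\mu)=\mu/\mu^{-1}=\mu^2$, so that $\chi(\boldsymbol{\mu}\boldsymbol{u})=\boldsymbol{\mu}^2$, the unique index-$2$ subgroup of the cyclic group $\boldsymbol{\mu}$.

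Next I would split the allowed range of $n$ into two cases. In case (a), $n=2^r$ with $r\geq 2$ (the only power-of-$2$ case in the range), the element $i=\zeta_n^{2^{r-2}}\in\boldsymbol{\mu}\subset\mathfrak{O}^\times$ is itself a purely imaginary unit and nothing more is needed. In case (b), $n$ is any of the remaining admissible values, i.e.\ not a power of $2$ and not of the form $p^r$ or $2p^r$ with $p$ odd. A short case analysis --- distinguishing $n$ odd, $n\equiv 2\pmod 4$, and $4\mid n$ with $n$ not a power of $2$ --- shows that in every such case the conductor $\bn$ is divisible by at least two distinct primes. By the Proposition just stated before the lemma (the Hasse--Shimura evaluation of the Hasse unit index of a cyclotomic field), this forces $Q=2$.

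The conclusion in case (b) then follows from a one-line index computation: using $\ker\chi=\boldsymbol{u}$ and $\chi(\boldsymbol{\mu}\boldsymbol{u})=\boldsymbol{\mu}^2$ we get
\be
[\chi(\boldsymbol{U}):\boldsymbol{\mu}^2]=[\boldsymbol{U}/\boldsymbol{u}\,:\,\boldsymbol{\mu}\boldsymbol{u}/\boldsymbol{u}]=[\boldsymbol{U}:\boldsymbol{\mu}\boldsymbol{u}]=Q=2,
\ee
and since $[\boldsymbol{\mu}:\boldsymbol{\mu}^2]=2$ this forces $\chi(\boldsymbol{U})=\boldsymbol{\mu}\ni -1$, completing the argument.

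The main obstacle is the bookkeeping in case (b), namely the verification that the excluded forms $n=p^r,2p^r$ (with $p$ odd) correspond \emph{precisely} to ``$\bn$ an odd prime power''. Indeed, in that excluded regime $|\boldsymbol{\mu}|=2\bn\equiv 2\pmod 4$, so $-1\notin\boldsymbol{\mu}^2$, and combined with $Q=1$ one sees $\chi(\boldsymbol{U})=\boldsymbol{\mu}^2\not\ni -1$; no purely imaginary unit exists in that case. The sharpness of the lemma's hypothesis reflects exactly this dichotomy, and once the conductor bookkeeping is in place everything else is formal.
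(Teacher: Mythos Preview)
Your proof is correct, but it takes a genuinely different route from the paper's. The paper argues by explicit construction: if $4\mid n$ then $i\in\mathfrak{O}$; otherwise the conductor $\bn$ is an odd integer divisible by at least two distinct primes, and one simply writes down $\zeta-\zeta^{-1}=\zeta(1-\zeta^{-2})$, which is manifestly purely imaginary and is a unit because $1-\zeta_{\bn}$ (and hence $1-\zeta_{\bn}^{-2}$) is a unit whenever $\bn$ is not a prime power. Your argument instead is structural and non-constructive: you use the homomorphism $\chi(u)=u/\bar u$, identify $\ker\chi=\boldsymbol{u}$ and $\chi(\boldsymbol{\mu}\boldsymbol{u})=\boldsymbol{\mu}^2$, and then invoke $Q=2$ to force $\chi(\boldsymbol{U})=\boldsymbol{\mu}\ni-1$.

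Each approach has its merits. The paper's explicit unit $\zeta-\zeta^{-1}$ is precisely the one used downstream (in the choice of $\varrho$ and in the explicit matrix constructions of \S\ref{exmatr}), so the constructive version is not idle. Your argument, on the other hand, makes the dichotomy with the excluded case $\bn=p^r$ ($p$ odd) completely transparent: there $Q=1$, so $\chi(\boldsymbol{U})=\boldsymbol{\mu}^2$, and since $|\boldsymbol{\mu}|=2\bn\equiv 2\pmod 4$ one has $-1\notin\boldsymbol{\mu}^2$, recovering the paper's \textbf{Remark} that no purely imaginary unit exists in that case (at least when $h_\BK=1$). So your route gives a cleaner ``if and only if'' picture, at the cost of using the Hasse index as a black box rather than the underlying cyclotomic-unit fact.
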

\begin{proof} If $n=0\mod4$, $i\in\mathfrak{O}$. If $n\neq0\mod4$, we replace $n$ by the conductor $\bn$ which is an odd integer divisible by two distinct primes. Let $\zeta$ be a primitive $\bn$-th root of unity. $(\zeta-\zeta^{-1})=\zeta(1-\zeta^{-2})$
is an imaginary unit.
\end{proof}

\begin{rem} It is easy to check that if $\bn=p^r$ with $p$ an odd prime and the class number of $\BK$ is 1, there are no imaginary units.
\end{rem}

\begin{lem}[\!\!\cite{japonese}]\label{llemmeuni} There exists a unit $\varepsilon\in \mathfrak{O}$ such that
\be\label{pplza}
\frac{\varepsilon}{\overline{\Phi^\prime_n(\zeta)}}\equiv \varrho
\ee
is $\iota$-odd (i.e.\! purely imaginary).
\end{lem}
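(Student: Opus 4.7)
The plan is to reformulate the condition, reduce to prime-power conductors via \textbf{Lemma \ref{kkasq}}, and combine the factors, using \textbf{Lemma \ref{kkkdh}} to supply an imaginary unit when needed. First observe that the condition $\iota(\varrho)=-\varrho$ with $\varrho=\varepsilon/\overline{\Phi'_n(\zeta)}$ is equivalent to demanding that $\varepsilon\,\Phi'_n(\zeta)$ be purely imaginary, which in turn is equivalent to $\bar\varepsilon/\varepsilon = -\mu$, where
\[
\mu:=\frac{\Phi'_n(\zeta)}{\overline{\Phi'_n(\zeta)}}.
\]
Since $(\Phi'_n(\zeta))$ generates the absolute different $\mathfrak{D}_{\BK/\BQ}$, which is an $\iota$-invariant ideal, $\mu$ is a unit in $\mathfrak{O}$; being of absolute value one in every complex embedding, Kronecker's theorem forces $\mu$ to be a root of unity. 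The task is thus to exhibit a unit $\varepsilon\in\boldsymbol{U}$ with $\bar\varepsilon/\varepsilon=-\mu$.

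The next step reduces to prime-power conductors. If $n\equiv 2\bmod 4$, then $\Phi_n(z)=\Phi_{n/2}(-z)$ yields $\Phi'_n(\zeta)=-\Phi'_{n/2}(\zeta_{n/2})$, so one may assume that $n$ is its own conductor, i.e.\ $n$ is odd or $4\mid n$. Writing $n=\prod_{i=1}^s p_i^{r_i}$ and invoking \textbf{Lemma \ref{kkasq}}, we have $\Phi'_n(\zeta)=w\prod_i \Phi'_{p_i^{r_i}}(\zeta_i)$ for some $w\in\boldsymbol{U}$. It is therefore enough to find, for each $i$, a unit $\varepsilon_i$ making $\varepsilon_i\,\Phi'_{p_i^{r_i}}(\zeta_i)$ purely imaginary: assembling them with a balancing factor $\bar w$ produces $\varepsilon_0\,\Phi'_n(\zeta)=|w|^2\prod_i \varepsilon_i\,\Phi'_{p_i^{r_i}}(\zeta_i)$, and the product of $s$ purely-imaginary numbers is purely imaginary iff $s$ is odd.

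For the prime-power step, explicit computation gives $\Phi'_{p^r}(\zeta)=p^r\zeta^{-1}/(\zeta^{p^{r-1}}-1)$, hence $\mu_i=-\zeta_i^{-2-p_i^{r_i-1}}$ for odd $p_i$, and $\mu_1=\zeta_1^{-2}$ for $p_1=2$ (so $r_1\geq 2$ after reduction). In the odd case $-\mu_i=\zeta_i^{-2-p_i^{r_i-1}}\in\langle\zeta_i^2\rangle$ because $2$ is invertible modulo the odd number $p_i^{r_i}$; in the dyadic case $-\mu_1=\zeta_1^{2^{r_1-1}-2}\in\langle\zeta_1^{2}\rangle$ since the exponent is even for $r_1\geq 2$. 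In either case one can solve $\bar\varepsilon_i/\varepsilon_i=-\mu_i$ with $\varepsilon_i=\zeta_i^{k_i}$ for an appropriate integer $k_i$, so $\varepsilon_i\in\boldsymbol{\mu}\subset\boldsymbol{U}$.

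Finally, the parity issue: if $s$ is odd, take $\varepsilon=\bar w\prod_i\varepsilon_i$; if $s$ is even, take $\varepsilon=\delta\,\bar w\prod_i\varepsilon_i$ where $\delta\in\mathfrak{O}^\times$ is an imaginary unit. Such a $\delta$ is supplied by \textbf{Lemma \ref{kkkdh}}: in canonical form ($n$ odd or $4\mid n$) with $s\geq 2$, one checks that $n$ is neither $p^r$ nor $2p^r$ for any odd prime $p$, which are the only cases excluded in that lemma. The main obstacle is the bookkeeping in the prime-power computation (obtaining the explicit formula for $\mu_i$ and verifying that $-\mu_i\in\langle\zeta_i^2\rangle$), together with checking the exclusion list in \textbf{Lemma \ref{kkkdh}} against the cases that survive the parity argument; once these are in hand, the assembly is purely formal.
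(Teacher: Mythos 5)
Your proof is correct and follows essentially the same route as the paper's: reduce to prime-power conductors via \textbf{Lemma \ref{kkasq}}, solve the problem there with an explicit root-of-unity unit, and combine by parity using the imaginary unit from \textbf{Lemma \ref{kkkdh}} when the number of prime factors is even. Your preliminary reduction to the conductor $\bn$ is a small refinement that handles the $n\equiv 2\bmod 4$ case more cleanly than the paper does (the paper's formulas cover only prime powers $p^r\neq 2$, leaving the contribution of the factor $p^r=2$ implicit), but the argument is otherwise the same one.
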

\begin{proof} If $n$ is a prime power $p^r\neq2$ we have
\be
\bar\varepsilon=\begin{cases}-i\,\zeta^{-1} & n=2^r,\ r\geq2\\
\zeta^{-1}\zeta^{-(p^{r-1}:2)} & n=p^r,\ p\ \text{odd prime}
\end{cases}
\ee
where $:$ means division in $(\Z/p^r\Z)^\times$. In the general case, $n=\prod_{i=1}^sp_i^{r_i}$ we take $\epsilon$ equal to $\bar u$ in \textbf{Lemma \ref{kkasq}} times the product of the $\varepsilon$ associated to each prime power in the product. If $s$ is odd $\varrho$ is purely imaginary and we are done. If $s$ is even, $\varrho$ is real, and we multiply it by the imaginary unity in \textbf{Lemma \ref{kkkdh}.}\end{proof}

\begin{corl}\label{kkkka12x}
For all fractional ideal $\mathfrak{a}$ of $\BK$ we have 
\be\label{jj1234}\mathfrak{a}^\ast=\varrho/\bar{\mathfrak{a}},\ 
\text{for a certain }\varrho\in\BK^\times\ \text{with }\iota(\varrho)=-\varrho.
\ee
If $\bn$ is not of a power of an odd prime, we may alternatively choose $\varrho$ to be real by multiplying it by the appropriate imaginary unit.
\end{corl}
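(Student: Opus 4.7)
The statement follows by direct substitution into the general formula for the dual fractional ideal, combined with the two preparatory lemmas already established. Since everything in the corollary is about fractional ideals (not individual elements), multiplying generators by units is for free, which is what makes the statement work.

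Concretely, the plan is as follows. Start from eqn.\eqref{jjzaq1}, the explicit expression
\[
\mathfrak{a}^\ast = \frac{1}{(\,\overline{\Phi^\prime_n(\zeta)}\,)\,\bar{\mathfrak{a}}},
\]
which is valid for any fractional ideal $\mathfrak{a}$ of $\BK$. By \textbf{Lemma \ref{llemmeuni}}, there exists a unit $\varepsilon\in\fO^\times$ such that
\[
\varrho \;=\; \frac{\varepsilon}{\overline{\Phi^\prime_n(\zeta)}}
\]
is purely imaginary, i.e.\ $\iota(\varrho)=-\varrho$. Since $\varepsilon$ is a unit, the principal fractional ideals $(\overline{\Phi^\prime_n(\zeta)})$ and $(1/\varrho)$ coincide, hence substituting back yields
\[
\mathfrak{a}^\ast \;=\; \frac{\varrho}{\bar{\mathfrak{a}}},
\]
which is the first assertion.

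For the second assertion, suppose $\bn$ is not a power of an odd prime. Then by \textbf{Lemma \ref{kkkdh}} there is an imaginary unit $\eta\in\fO^\times$ (explicitly, $\eta=i$ if $4\mid n$, or $\eta=\zeta-\zeta^{-1}$ otherwise). Replacing $\varrho$ by $\varrho^\prime\equiv\varrho\,\eta$ does not change the fractional ideal on the right-hand side, because $\eta$ is a unit; but now
\[
\iota(\varrho^\prime) \;=\; \iota(\varrho)\,\iota(\eta) \;=\; (-\varrho)(-\eta) \;=\; \varrho^\prime,
\]
so $\varrho^\prime\in\Bbbk^\times$ is real. This proves the alternative choice.

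There is really no obstacle here: the content of the corollary is entirely encoded in the two preceding lemmas (the existence of the ``purifying'' unit $\varepsilon$ and the existence of an imaginary unit when $\bn$ is not a prime-power of an odd prime), and the corollary is just the observation that, at the level of fractional ideals, one is free to absorb units into the generator. The only subtle point to emphasise is that the last step genuinely requires $\bn$ not to be a power of an odd prime, since otherwise $\fO$ contains no imaginary unit (cfr.\ the remark after \textbf{Lemma \ref{kkkdh}}), and $\varrho$ must remain purely imaginary.
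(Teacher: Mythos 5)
Your proof is correct and follows exactly the route the paper intends: reduce eqn.\eqref{jjzaq1} to the stated form by absorbing the unit $\varepsilon$ from \textbf{Lemma \ref{llemmeuni}}, then for the second clause multiply by the imaginary unit supplied by \textbf{Lemma \ref{kkkdh}}. The unit-absorption argument for principal fractional ideals is spelled out cleanly and the parity computation $\iota(\varrho\,\eta)=\varrho\,\eta$ is right; this is in substance the same derivation the paper leaves implicit.
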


\subsection{Proof of Lemma \ref{kkasq}}\label{kkkkza3b} 

\begin{lem}
Let $n=p_1^{r_1}\cdots p_s^{n_s}$ and $\zeta$ a primitive $n$-th root of unity. We write $\zeta_{p_i^{r_i}}$ for a primitive $p_i^{r_i}$ root of unity. Then
\be
\Phi^\prime_n(\zeta) = u\prod_i\Phi_{p^{r_i}}^\prime(\zeta_{p_i^{r_i}})\qquad u\ \text{is a unit of $\Z[\zeta]$}
\ee
Here $\Phi^\prime_n$ denotes the derivative of the polynomial $\Phi_n$.
\end{lem}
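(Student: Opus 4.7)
\medskip

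\noindent\textbf{Proof plan.} The natural framework is the theory of the different ideal. Recall that for any cyclotomic field $\BQ(\zeta_m)$, the different ideal of the extension $\BQ(\zeta_m)/\BQ$ is principal and generated by $\Phi'_m(\zeta_m)$, i.e.
\[
\mathfrak{d}_{\BQ(\zeta_m)/\BQ}=\big(\Phi'_m(\zeta_m)\big)\subset \Z[\zeta_m].
\]
(One way to see this: differentiate $z^m-1=\prod_{d\mid m}\Phi_d(z)$ and evaluate at $\zeta_m$ to get $m\zeta_m^{m-1}=\Phi'_m(\zeta_m)\cdot\prod_{d\mid m,\,d<m}\Phi_d(\zeta_m)$, then combine with the standard conductor-discriminant computation; alternatively, use that $\Z[\zeta_m]$ is the ring of integers of $\BQ(\zeta_m)$ and compute the different via the minimal polynomial of $\zeta_m$.) Applied to $m=n$ this gives $\mathfrak{d}_{\BK/\BQ}=(\Phi'_n(\zeta))$, and applied to $m=p_i^{r_i}$ it gives $\mathfrak{d}_{\BQ(\zeta_{p_i^{r_i}})/\BQ}=(\Phi'_{p_i^{r_i}}(\zeta_{p_i^{r_i}}))$.

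\medskip

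The plan is then to show that, as ideals of $\fO=\Z[\zeta]$,
\[
\big(\Phi'_n(\zeta)\big)=\prod_{i=1}^s\big(\Phi'_{p_i^{r_i}}(\zeta_{p_i^{r_i}})\big)\cdot\fO,
\]
from which the claimed equality up to a unit $u\in\fO^\times$ follows immediately. For this I will invoke the standard multiplicativity property of the different in a compositum of linearly disjoint extensions whose discriminants are pairwise coprime. Concretely: the subfields $\BK_i:=\BQ(\zeta_{p_i^{r_i}})$ are pairwise linearly disjoint over $\BQ$ (because $[\BK:\BQ]=\phi(n)=\prod_i\phi(p_i^{r_i})=\prod_i[\BK_i:\BQ]$), and their discriminants $\mathrm{disc}(\BK_i/\BQ)$ are powers of the distinct primes $p_i$, hence pairwise coprime. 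Under these hypotheses the tower/base-change formula for the different gives
\[
\mathfrak{d}_{\BK/\BQ}=\prod_{i=1}^s \mathfrak{d}_{\BK_i/\BQ}\cdot\fO,
\]
which is exactly the desired ideal identity.

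\medskip

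The only non-routine point in this plan is the multiplicativity of the different in the compositum, so that is the step I would spell out most carefully. The key ingredients are: (i) conductor-discriminant compatibility $\mathrm{disc}(\BK/\BQ)=\prod_i \mathrm{disc}(\BK_i/\BQ)^{[\BK:\BK_i]}$, which holds precisely because of linear disjointness plus coprime discriminants (so each prime $p_i$ is unramified in the compositum of the other $\BK_j$'s, and by the tower formula for the different $\mathfrak{d}_{\BK/\BQ}=\mathfrak{d}_{\BK/\BK_i}\cdot\mathfrak{d}_{\BK_i/\BQ}\fO$, applied iteratively); (ii) $\BK/\BK_j$ is unramified at the primes above $p_i$ for $i\neq j$, which lets one conclude that $\mathfrak{d}_{\BK/\BK_i}=\prod_{j\neq i}\mathfrak{d}_{\BK_j/\BQ}\fO$ after iterating. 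Once the ideal equality is established, comparing generators yields $\Phi'_n(\zeta)=u\,\prod_i\Phi'_{p_i^{r_i}}(\zeta_{p_i^{r_i}})$ with $u\in\fO^\times$, completing the proof.
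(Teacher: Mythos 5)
Your proof is correct, but it proceeds along a genuinely different route than the paper's own argument. You recognize both sides of the identity as (local) generators of the different ideal $\mathfrak{d}_{\BK/\BQ}$: since $\fO=\Z[\zeta]$ is monogenic over $\Z$, $\mathfrak{d}_{\BK/\BQ}=(\Phi'_n(\zeta))$, and similarly for each prime-power cyclotomic subfield $\BK_i=\BQ(\zeta_{p_i^{r_i}})$; you then invoke the compositum formula $\mathfrak{d}_{\BK/\BQ}=\prod_i\mathfrak{d}_{\BK_i/\BQ}\,\fO$, which holds because the $\BK_i$ are linearly disjoint with pairwise coprime discriminants (equivalently, because $\Z[\zeta_n]\cong\bigotimes_i\Z[\zeta_{p_i^{r_i}}]$). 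The paper instead gives a hands-on computation: it first reduces to $n$ square-free, writes $\Phi'_n(\zeta)=\prod_{(a,n)=1,\,a\neq1}(\zeta-\zeta^a)$, sorts the factors $(1-\zeta^{a-1})$ according to $\gcd(a-1,n)$ to express them (up to units) as powers $(1-\zeta_i)^{m_i}$ of prime-power cyclotomic elements, determines the exponents $m_i=p_i-2$ by comparing norms (discriminants), and then lifts to the general case via $\Phi_n(x)=\Phi_q(x^{n/q})$ for $q$ the radical of $n$. The trade-off is the usual one: your argument is shorter and more conceptual, but it leans on the ramification theory of the different and the structure of the compositum ring of integers, whereas the paper's version is elementary and self-contained (essentially only needing the formula for the discriminant of $\Phi_n$) at the cost of more explicit bookkeeping. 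Both are sound and yield the same equality of principal ideals, hence the same conclusion up to a unit.
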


\begin{proof} We shall use the symbol $\sim$ to mean equality up to multiplication by a unity. 
If $n$ is a prime power there is nothing to show, so we assume $n$ is divisible by two distinct primes.

We consider first the case of $n=p_1p_2\cdots p_s$ square-free (and $s\geq 2$). Then
\be
\begin{split}
\Phi^\prime_n(\zeta)&= \prod_{(a,n)=1\atop a\neq 1}(\zeta-\zeta^a)=\zeta^{\phi(n)-1}
\prod_{(a,n)=1\atop a\neq 1}(1-\zeta^{a-1})\sim\\
&\sim
\prod_{(a,n)=1\atop
(a-1,n)=p_1\dots\widehat{p_i}\dots p_s}(1-\zeta_i)\sim \prod_i (1-\zeta_i)^{m_i}
\end{split}
\ee
where $\zeta_{i}$ is a primitive $p^i$-root of unity and $m_i$ are certain integers to be determined.
To determine the $m_i$ it is enough to compute the norm of the \text{lhs} which is the discriminant of the cyclotomic polynomial. Hence
\be
\pm \prod_i \frac{p_i^{\phi(n)}}{p_i^{\phi(n)/(p_i-1)}}\sim \prod_i p_i^{m_i \phi(n)/(p_i-1)}
\ee
so $m_i=p_i-2$. On the other hand $\Phi^\prime_{p_i}(\zeta_i)\sim (1-\zeta_i)^{p_i-2}$, so the \textbf{Lemma} is proven for $n$ square-free. Now let $n=p_1^{r_1}p_2^{r_2}\cdots p_s^{r_s}$ and $q=p_1p_2\cdots p_s$ its radical (which is square-free by definition). One has
\be
\Phi_n(x)=\Phi_q(x^{n/q}).
\ee
Hence $\Phi_n^\prime(x)= (n/q) x^{n/q-1} \,\Phi^\prime_q(x^{n/q})$ and
\be
\Phi^\prime_n(\zeta_n)\sim \frac{n}{q}\, \Phi_q^\prime(\zeta_q)\sim \big(\prod_i p_i^{r_i-1}\big)\,
\prod_i \Phi_{p_i}^\prime(\zeta_{p_i})\sim \prod_i \Phi^\prime_{p_i^{r_i}}(\zeta_{p_i^{r_i}}).
\ee
\end{proof}

\end{document}